\def\colorful{0}
\newcommand{\new}[1]{{\blue #1}}
\newcommand{\anote}[1]{\footnote{{\bf [Ankit: {#1}\bf ]}}}
\newcommand{\todo}[1]{{\textbf{ [{\red{Todo}}: {#1}]}}}
\newcommand{\light}[1]{ {\grey #1}} 
\newcommand{\Snote}[1]{\textcolor{red}{Stefan: #1}}
\newcommand{\stefan}[1]{\textcolor{red}{Stefan: #1}}
\newcommand{\Samnote}[1]{\textcolor{green}{Sam: #1}}
\newcommand{\ankit}[1]{\textcolor{purple}{Ankit: #1}}
\newcommand{\new}[1]{{#1}}
\newcommand{\anote}[1]{}
\newcommand{\todo}[1]{}
\newcommand{\light}[1]{}
\newcommand{\Snote}[1]{}
\newcommand{\stefan}[1]{}
\newcommand{\Samnote}[1]{}
\newcommand{\ankit}[1]{}
\definecolor{deepred}{rgb}{0.67, 0.0, 0.0}
\definecolor{lightblue}{rgb}{0, 0.34, 0.81}
\definecolor{teal}{rgb}{0.04, 0.28, 0.28}
\DeclareMathOperator{\E}{\mathbb E}
\DeclareMathOperator{\pE}{\widetilde{\mathbb E}}
\def\P{\mathbb P}
\def\R{\mathbb R}
\def\I{\mathbb I}
\def\N{\mathbb N}
\def\eps{\epsilon}
\newcommand{\tO}{\widetilde{O}}
\newcommand\numberthis{\addtocounter{equation}{1}\tag{\theequation}}
\newcommand{\1}{\mathbbm{1}}
\let\vec\mathbf
\newcommand{\bI}{\vec{I}}
\newcommand{\sosBound}{\hyperref[eq:sosBound]{\mathsf{R_{sos}}}}
\newcommand{\poly}{\mathrm{poly}}
\newcommand{\polylog}{\mathrm{polylog}}
\newcommand{\cN}{\mathcal{N}}
\newcommand{\cT}{\mathcal{T}}
\newcommand{\cE}{\mathcal{E}}
\newcommand{\cF}{\mathcal{F}}
\newcommand{\cP}{\mathcal{P}}
\newcommand{\cI}{\mathcal{I}}
\newcommand{\cQ}{\mathcal{Q}}
\newcommand{\cH}{\mathcal{H}}
\newcommand{\cS}{\mathcal{S}}
\newcommand{\cA}{\mathcal{A}}
\newcommand{\cC}{\mathcal{C}}
\newcommand{\cX}{\mathcal{X}}
\newcommand{\cM}{\mathcal{M}}
\newcommand{\cD}{\mathcal{D}}
\newcommand{\cB}{\mathcal{B}}
\newcommand{\nstar}{{n}^*}
\newcommand{\priv}{{\mathrm{priv}}}
\newcommand{\eff}{{\mathrm{eff}}}
\newcommand{\robust}{\textup{\texttt{robust}}}
\newcommand{\trace}{\operatorname{tr}}
\newcommand{\op}{\mathrm{op}}
\newcommand{\fr}{\mathrm{Fr}}
\crefname{subfact}{Fact}{Facts}
\crefname{sublemma}{Lemma}{Lemmata}
\crefname{equation}{Equation}{Equations}
\crefname{lemma}{Lemma}{Lemmata}
\crefname{claim}{Claim}{Claims}
\crefname{fact}{Fact}{Facts}
\crefname{theorem}{Theorem}{Theorems}
\crefname{proposition}{Proposition}{Propositions}
\crefname{corollary}{Corollary}{Corollaries}
\crefname{remark}{Remark}{Remarks}
\crefname{definition}{Definition}{Definitions}
\crefname{question}{Question}{Questions}
\crefname{condition}{Condition}{Conditions}
\crefname{figure}{Figure}{Figures}
\crefname{observation}{Observation}{Observations}
\crefname{algorithm}{Algorithm}{Algorithms}
\newtheorem{theorem}{Theorem}[section]
\newtheorem*{theorem*}{Theorem}
\newtheorem{lemma}[theorem]{Lemma}
\newtheorem*{lemma*}{Lemma}
\newtheorem{conjecture}[theorem]{Conjecture}
\newtheorem{claim}[theorem]{Claim}
\newtheorem{proposition}[theorem]{Proposition}
\newtheorem{corollary}[theorem]{Corollary}
\theoremstyle{definition}
\newtheorem{fact}[theorem]{Fact}
\newtheorem{definition}[theorem]{Definition}
\newtheorem*{definition*}{Definition}
\newtheorem{problem}[theorem]{Problem}
\newtheorem{observation}[theorem]{Observation}
\newtheorem{algorithm}[theorem]{Algorithm}
\newtheorem{remark}[theorem]{Remark}
\theoremstyle{definition}
\newcommand\MYcurrentlabel{xxx}
\newcommand{\MYstore}[2]{%
  \global\expandafter \def \csname MYMEMORY #1 \endcsname{#2}%
}
\newcommand{\MYload}[1]{%
  \csname MYMEMORY #1 \endcsname%
}
\newcommand{\MYnewlabel}[1]{%
  \renewcommand\MYcurrentlabel{#1}%
  \MYoldlabel{#1}%
}
\newcommand{\MYdummylabel}[1]{}
\newcommand{\torestate}[1]{%
  \let\MYoldlabel\label%
  \let\label\MYnewlabel%
  #1%
  \MYstore{\MYcurrentlabel}{#1}%
  \let\label\MYoldlabel%
}
\newcommand{\restatetheorem}[1]{%
  \let\MYoldlabel\label
  \let\label\MYdummylabel
  \begin{theorem*}[Restatement of \cref{#1}]
    \MYload{#1}
  \end{theorem*}
  \let\label\MYoldlabel
}
\newcommand{\restatelemma}[1]{%
  \let\MYoldlabel\label
  \let\label\MYdummylabel
  \begin{lemma*}[Restatement of \cref{#1}]
    \MYload{#1}
  \end{lemma*}
  \let\label\MYoldlabel
}
\newcommand{\restateprop}[1]{%
  \let\MYoldlabel\label
  \let\label\MYdummylabel
  \begin{proposition*}[Restatement of \cref{#1}]
    \MYload{#1}
  \end{proposition*}
  \let\label\MYoldlabel
}
\newcommand{\restatefact}[1]{%
  \let\MYoldlabel\label
  \let\label\MYdummylabel
  \begin{fact*}[Restatement of \cref{#1}]
    \MYload{#1}
  \end{fact*}
  \let\label\MYoldlabel
}
\newcommand{\restatedefinition}[1]{%
  \let\MYoldlabel\label
  \let\label\MYdummylabel
  \begin{definition*}[Restatement of \cref{#1}]
    \MYload{#1}
  \end{definition*}
  \let\label\MYoldlabel
}
\newcommand{\restate}[1]{%
  \let\MYoldlabel\label
  \let\label\MYdummylabel
  \MYload{#1}
  \let\label\MYoldlabel
}
\newcommand{\iid}{i.i.d.\ }
\definecolor{Red}{rgb}{1,0,0}
\definecolor{Blue}{rgb}{0,0,1}
\definecolor{DGreen}{rgb}{0,0.55,0}
\definecolor{Purple}{rgb}{.75,0,.25}
\definecolor{Grey}{rgb}{.5,.5,.5}
\def\red{\color{Red}}
\def\grey{\color{Grey}}
\tikzstyle{operator}=[circle, radius=0.2cm, text centered, draw=black]
\tikzstyle{arrow}=[thick, ->, >=stealth]
\definecolor{navyblue}{rgb}{0.2, 0.4, 0.8}
\newcommand{\norm}[1]{\lVert#1\rVert}
\newcommand{\Norm}[1]{\left\lVert#1\right\rVert}
\newcommand{\Ind}{\mathbbm{1}}
\newcommand{\iprod}[1]{\langle#1\rangle}
\newcommand{\Paren}[1]{\left(#1\right)}
\newcommand{\Brac}[1]{\left[#1\right]}
\newcommand{\floor}[1]{\lfloor #1 \rfloor}
\newcommand{\Iprod}[1]{\left\langle #1 \right\rangle}
\newcommand{\abs}[1]{\lvert #1 \rvert}
\newcommand{\Set}[1]{\left\{#1\right\}}
\newcommand{\card}[1]{\vert#1\rvert}
\newcommand{\proves}{\vdash}
\newcommand{\cG}{\mathcal{G}}
\newcommand{\sse}{\subseteq}
\newcommand{\Gcycmerge}[1]{\cG_{#1}^{\mathsf{merged\_cycles}}}
\newcommand{\Gcycmergedisjoint}[1]{\cG_{#1}^{\mathsf{merged\_disjoint\_cycles}}}
\newcommand{\degdagger}{\deg^\dagger}
\newcommand{\e}{\varepsilon}
\title{SoS Certificates for Sparse Singular Values and Their Applications: Robust Statistics, Subspace Distortion, and More}
\author{}
\date{}
\begin{document}
\maketitle
\vspace{-0.4in}
\begin{center}
\renewcommand*{\thefootnote}{\fnsymbol{footnote}}
\begin{tabular}{ccc}
		{\large{Ilias Diakonikolas}}\footnotemark[2] & \hspace*{.1in} & {\large{Samuel B. Hopkins}}\footnotemark[3] \\
		{\large{UW Madison}} & \hspace*{.1in} & {\large{MIT}} \\
		{\large{\texttt{ilias@cs.wisc.edu}}} &  & {\large{\texttt{samhop@mit.edu}}} \vspace{.25in}\\
		{\large{Ankit Pensia}}\footnotemark[4]  & \hspace*{.1in}& {\large{Stefan Tiegel}}\footnotemark[5] \\
		{\large{Simons Institute, UC Berkeley}} & \hspace*{.1in}  & {\large{ETH Z\"urich}} \\
		{\large{\texttt{ankitp@berkeley.edu}}}  & \hspace*{.1in}& {\large{\texttt{stefan.tiegel@inf.ethz.ch}}} \\
			\end{tabular}
\footnotetext[2]{Supported by NSF Medium Award CCF-2107079 and an H.I. Romnes Faculty Fellowship.}
\footnotetext[3]{Supported by NSF CAREER award no. 2238080 and MLA@CSAIL.}
\footnotetext[4]{Most of this work was done while the author was supported by IBM Herman Goldstine Fellowship. }
\footnotetext[5]{Supported by the European Union’s Horizon research and innovation programme (grant agreement no. 815464). Most of this work was done while the author was visiting MIT.}
\vspace{0.4in}

\today
\end{center}

\thispagestyle{empty}

\begin{abstract}
    We study \emph{sparse singular value certificates} for random rectangular matrices.
    If $M$ is an $n \times d$ matrix with \new{independent} Gaussian entries, we give a new family of polynomial-time algorithms which can certify upper bounds on the maximum of $\|M u\|$, where $u$ is a unit vector with at most $\eta n$ nonzero entries 
    for a given $\eta \in (0,1)$.
    This \new{basic algorithmic primitive lies} %
    at the heart of a wide range of problems across algorithmic statistics and theoretical computer science, including robust mean and covariance estimation, certification of distortion of random subspaces of $\R^n$, certification of the $2 \rightarrow p$ norm of a random matrix, and sparse principal component analysis.
    
    Our algorithms certify a bound which is asymptotically smaller than the naive one, given by the maximum singular value of $M$, for nearly the widest-possible range of $n,d,$ and $\eta$. \new{Efficiently} certifying such a bound %
    for a range of $n,d$ and $\eta$ which is larger by {\em any} 
    polynomial factor \new{than what is achieved by our algorithm} 
    would violate lower bounds 
    in the statistical query and low-degree polynomials models. %
    Our certification algorithm \new{makes essential use of} the Sum-of-Squares hierarchy. \new{To prove the correctness of our algorithm,} we develop a new combinatorial connection between the \emph{graph matrix} approach \new{to analyze} random matrices 
    with dependent entries, and the Efron-Stein decomposition of functions 
    of independent random variables.

    As applications \new{of our certification algorithm}, 
    we obtain new efficient algorithms for a wide range of well-studied algorithmic tasks.
    In algorithmic robust statistics, we obtain new algorithms for robust mean and covariance estimation with tradeoffs between breakdown point and sample complexity, which are nearly matched by \new{statistical query and low-degree polynomial lower bounds} (that we establish).
    We also obtain new polynomial-time guarantees for certification of $\ell_1/\ell_2$ distortion of random subspaces of $\R^n$ (also with nearly matching lower bounds), sparse principal component analysis, and certification of the $2\rightarrow p$ norm of a random matrix.
\end{abstract}

\newpage

\thispagestyle{empty}
\setcounter{tocdepth}{2}
\tableofcontents
\thispagestyle{empty}

\newpage

\setcounter{page}{1}

\section{Introduction} %
\label{sec:introduction}
\label{sec:intro}
Let $M \in \R^{d \times n}$ be a rectangular matrix, with $n \geq d$.
We study the following maximization problem over sparse vectors in $\R^n$ in the case that $M$ is a random matrix.
\begin{problem}[Maximum $\eta$-Sparse Singular Value (\textsf{SSV})]
\label{prob:sparse-sing-vect}
What is the maximum of $\|M u\|$, where $u \in \R^n$ is an $\eta$-sparse unit vector \new{(i.e., supported on at most $\eta n$ coordinates)}?
\end{problem}
\noindent In \Cref{prob:sparse-sing-vect}, $\|Mu\|$ refers to the Euclidean norm of the vector $Mu$.
Without restricting $u$ to be $\eta n$-sparse, $\|Mu\|$ would be maximized with $u$ as the maximal right singular vector of $M$.
For this reason, \Cref{prob:sparse-sing-vect} is a \emph{sparse} version of the problem of finding a maximum singular value.
Unlike the maximum singular value, even approximating \textsf{SSV} is hard in the worst case \cite{chan2016approximability}.\footnote{No constant-factor approximation exists assuming the Small-Set Expansion Hypothesis (SSEH); 
under the assumption $P \neq NP$ there is no fully polynomial-time approximation scheme.}

\looseness=-1\new{Despite these worst-case hardness results}, 
a remarkable range of average-case algorithmic tasks can be efficiently 
reduced to \emph{average-case} versions of the \textsf{SSV} problem. 
These include sparse principal component analysis (sparse PCA)~\cite{zou2006sparse,amini2008high,d2008optimal, 
johnstone2009consistency,deshpande2014information,deshp2016sparse,
holtzman2020greedy,ding2024subexponential,d2020sparse}, 
computing the distortion of a random subspace of $\R^n$ 
(a task closely related to compressed sensing and error-correcting codes over the reals)~\cite{lindenstrauss1977dimension,kashin1977diameters,
guruswami2024certifying}, 
computing $2 \rightarrow p$ norms of a random matrix \cite{barak2012hypercontractivity}, and \new{outlier}-robust mean and covariance estimation~\cite{DiaKKLMS16-focs, DiaKKLMS17, LaiRV16, SteCV18, KotSte17, HopLi18, KotSS18, DiaKP20, DiaKKPP22-colt, DiaKan22-book} -- see \Cref{ssec:intro-apps,sec:applications}.
Consequently, average-case versions of \textsf{SSV} have been intensely studied \new{(either explicitly or implicitly)}
in the context of these downstream \new{applications} 
over the last two decades.

There are several natural average-case %
\new{formulations} of \Cref{prob:sparse-sing-vect} \new{in the literature}.
In this paper, we focus on \new{the task of \emph{certification}}: 
given $M \sim \cN(0,1)^{d \times n}$\footnote{Our results extend to distributions beyond Gaussian; we require only that the entries of $M$ are mean $0$ and jointly subgaussian.
} 
and $\eta \in (0,1]$, the goal is to \new{compute} an upper bound on the maximum $\eta$-sparse singular value of $M$.
Certification algorithms for \textsf{SSV} \new{translate} 
to strong algorithmic guarantees for a range of \new{related algorithmic tasks, including the ones mentioned in the preceding paragraph.}

This certification variant of \textsf{SSV} \new{exhibits} 
a rich algorithmic landscape, 
with the best known polynomial-time guarantees 
arising from an array of different \new{methods}  
depending on the relationship among $n,d,$ and $\eta$. 
Lower bounds proved in restricted but powerful computational models (e.g., Sum of Squares, the Statistical Query (SQ) model, and Low-Degree Polynomial tests) have been established that match known algorithmic guarantees  
\new{only in fairly restricted settings} (depending on $n,d$, and $\eta$).
We focus here on arguably the most important setting in which existing algorithms and lower bounds are far apart, with numerous downstream consequences.

To set the stage, we start by pointing out 
that the maximum (non-sparse) singular value $\|M\|_{\text{op}}$ is an efficiently-computable upper bound on the maximum sparse singular value.
However, if $\eta \ll 1$ and $n \gg d$, the maximum sparse singular value is much smaller than $\|M\|_{\text{op}}$, with high probability.
\new{Motivated by this observation,} we will call an \textsf{SSV}-certifying algorithm \emph{nontrivial} if it certifies a bound which is $o(\|M\|_{\text{op}})$ with high probability.
In this paper we study the question:
\begin{center}
    \begin{quote}
        \emph{For which $n,d,$ and $\eta$ is there a polynomial-time nontrivial $\eta$-\textsf{SSV} certifying algorithm for $M \sim \cN(0,1)^{d \times n}$?}
    \end{quote}
\end{center}

\looseness=-1\paragraph{Brief overview of results.}
Prior to this work, the state-of-the-art polynomial-time algorithms for $\eta$-\textsf{SSV} certification were nontrivial 
when $n \gg \eta \, d^2$ or $\eta \ll 1/\sqrt{d}$.
On the other hand, known computational lower bounds 
(against SQ algorithms and Low-Degree Polynomial tests) %
suggest that no polynomial-time algorithm 
is nontrivial unless $n \gg \eta^2 \, d^2$. \new{It turns out that this 
quadratic gap in $\eta$ between known algorithms and lower bounds is of critical 
importance, in part due to its implications for a range of fundamental algorithmic tasks.}
Our main contribution is a family of polynomial-time
\textsf{SSV} certifying algorithms which almost match the previously known 
lower bound \cite{mao2021optimal}. Specifically, our algorithms are nontrivial whenever 
$n \gg \eta^2 \, d^{2+\eps}$, for any constant $\eps>0$. 
\new{Moreover, by setting $\eps$ to be sub-constant,}  
we obtain a quasi-polynomial time nontrivial certifying algorithm, 
whenever $n \gg \eta^2 \, d^2$ \new{(Theorem~\ref{thm:main-intro})}.
Our algorithm makes essential use of the sum-of-squares 
hierarchy (SoS), the \emph{graph matrix} approach to analyzing random matrices with dependent entries \cite{AhnMP16}, and the Efron-Stein decomposition of functions of independent random variables \cite{efron1981jackknife}. %

Our new \textsf{SSV} certification algorithms have a number of 
downstream consequences. Specifically, we give new polynomial-time guarantees for: 
robust covariance estimation, robust mean estimation, 
certifying distortion of random subspaces, planted sparse vector, 
and certifying upper bounds on $2\rightarrow p$ norms of random matrices, 
and sparse principal component analysis.
We discuss the applications to algorithmic robust statistics and subspace distortion in this introduction, and defer extensive discussion of the others until \Cref{sec:applications}.

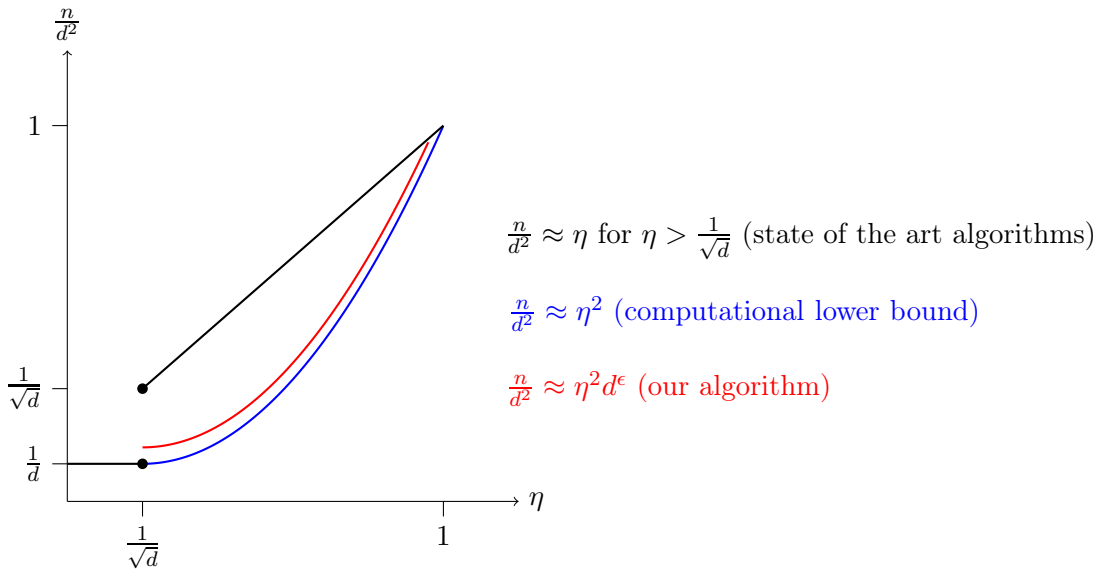
\begin{figure}[ht]
\centering
\begin{tikzpicture}

\draw[->] (0,0) -- (6,0) node[right] {$\eta$};
\draw[->] (0,0) -- (0,6) node[above] {$\frac{n}{d^2}$};

\draw (1,0) -- (1,-0.2) node[below] {$\frac{1}{\sqrt{d}}$};
\draw (5,0) -- (5,-0.2) node[below] {$1$};
\draw (0,0.5) -- (-0.2,0.5) node[left] {$\frac{1}{d}$};
\draw (0,1.5) -- (-0.2,1.5) node[left] {$\frac{1}{\sqrt{d}}$};
\draw (0,5) -- (-0.2,5) node[left] {$1$};

\draw[thick,blue,domain=1:5,samples=100] plot ({\x},{0.28125*(\x)^2 - 0.5625*(\x) + 0.78125});

\draw[thick,red,domain=1:4.8,samples=100] plot ({\x},{0.28125*(\x)^2 - 0.5625*(\x) + 1});

\node[black] at (9.75,3.5) {$\tfrac{n}{d^2} \approx \eta$ for $\eta > \tfrac{1}{\sqrt{d}}$ (state of the art algorithms)};
\node[blue] at (9,2.5) {$\tfrac{n}{d^2} \approx \eta^2$ (computational lower bound)};
\node[red] at (8,1.5) {$\tfrac{n}{d^2} \approx \eta^2 d^{\eps}$ (our algorithm)};

\draw[thick] (0,0.5) -- (1,0.5);
\draw[thick] (1,1.5) -- (5,5);

\fill[black] (1,0.5) circle (2pt);
\fill[black] (1,1.5) circle (2pt);

\end{tikzpicture}
\caption{Minimum value of $\tfrac n {d^2}$ as a function of $\eta$ for polynomial-time $\eta$-SSV refutation algorithms. Note the discontinuity in the state of the art guarantees preceding our work.}
\label{fig:eta_relationship}
\end{figure}

\subsection{Main Algorithmic Result: Sparse Singular Value Certificates}
\label{ssec:intro-main-result}
In what follows, $M$ is a $d \times n$ matrix with independent standard Gaussian entries.
We always consider $n \geq d$ and $n \leq d^{O(1)}$.
With high probability, the maximum singular value of $M$ satisfies $\|M\|_{\text{op}} = \Theta(\sqrt n)$, so we adopt the normalization $\tfrac 1 {\sqrt n} \cdot M$; then nontriviality corresponds to certifying the bound $o(1)$.

We use $\tO(\cdot)$ to hide logarithmic factors in $n$ and $d$, and $f(n) \ll g(n)$ indicates that there is a constant $C$ such that $f(n) \leq g(n) / (\log n)^C$.
``With high probability'' indicates probability $1-o(1)$.

\begin{definition}[$\eta$-Sparse Singular Value Certification]
    An algorithm \emph{certifies} the bound $\textsf{ALG} \geq 0$ on the maximum sparse singular value if, given any $d \times n$ matrix $M$, the algorithm outputs an upper bound on the maximum $\eta$-sparse singular value, and with high probability over $M \sim \cN(0,1)^{d \times n}$ that bound is at most $\textsf{ALG}$.
\end{definition}

\paragraph{Summary of prior work.} 
We start by reviewing the state of the art.
First of all, it is not hard to show by standard concentration of measure that the \emph{true} value of the $\eta$-\textsf{SSV} of $M/\sqrt{n}$ is $o(1)$ with high probability whenever $\eta \ll 1$ and $n \gg d$.

We are aware of two families of nontrivial polynomial-time certification algorithms for sparse singular values -- recall that we call such an algorithm ``nontrivial'' if it certifies a bound which is asymptotically smaller than the maximum singular value $\|M\|_{\text{op}}$.
The first uses the eigenvalues of the fourth moment matrix of the columns of the matrix $M$---that is, if the columns are $M_1,\ldots,M_n$, a certificate can be constructed from the eigenvalues of the matrix $\sum_{i \leq n} (M_i^{\otimes 2})(M_i^{\otimes 2})^\top$.
(Often, this is phrased as a ``degree $4$ SoS proof'' -- see \Cref{sec:prelims}.)
\begin{fact}[Fourth moment certificate, \cite{barak2012hypercontractivity}]
\label{thm:t-th-moment-cert}
  There is a polynomial-time algorithm which certifies that the maximum sparse singular value of $\tfrac 1 {\sqrt n} M$ is $o(1)$ if $n \gg \eta \cdot d^2$.
\end{fact}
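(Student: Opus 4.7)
The plan is to use a degree-$4$ Sum-of-Squares certificate. The starting point is the SoS-provable inequality
\[
\|Mu\|_2^4 \;=\; \left(\textstyle\sum_{a=1}^d (Mu)_a^2\right)^2 \;\leq\; d \cdot \textstyle\sum_{a=1}^d (Mu)_a^4,
\]
which holds as a degree-$4$ SoS identity because $(y_a^2 - y_b^2)^2 \geq 0$. The right-hand side equals $d \cdot \vectext(uu^\top)^\top \, \bT \, \vectext(uu^\top)$, where $\bT \in \R^{n^2 \times n^2}$ is the flattening of the fourth-moment tensor $\bT_{(i,j),(k,l)} = \sum_{a=1}^d M_{ai} M_{aj} M_{ak} M_{al}$. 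This matrix is (up to reindexing) the same spectral object as $\sum_{i=1}^n M_i^{\otimes 2}(M_i^{\otimes 2})^\top$ in the statement: both encode the fourth moments of $M$.

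Next, for an $\eta$-sparse unit $u$ with support $S \subseteq [n]$, $|S| \leq \eta n$, the vector $w := \vectext(uu^\top)$ is unit-norm and supported on $S \times S$, and $uu^\top$ is rank-one PSD as an $n \times n$ matrix. Therefore the SoS value is upper bounded by $d \cdot \max_{|S| \leq \eta n} \|\bT|_{S \times S, S \times S}\|_{\op}$, where the principal $|S|^2 \times |S|^2$ submatrix $\bT|_{S \times S, S \times S}$ equals the fourth-moment matrix of the rows of the Gaussian submatrix $M_S \in \R^{d \times |S|}$. This is a polynomial-time computable SDP quantity.

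To bound the sparse spectral norm, I would apply matrix concentration (e.g., matrix Bernstein) to the rank-one PSD sum $\bT|_{S \times S} = \sum_{a=1}^d (M^a_S)^{\otimes 2}((M^a_S)^{\otimes 2})^\top$, controlling both the expectation $\E[\bT|_{S \times S}] = d\big(\vectext(I_{|S|})\vectext(I_{|S|})^\top + I_{|S|^2} + P_{|S|}\big)$ (operator norm $\Theta(d|S|)$, dominated by the rank-one $\vectext(I_{|S|})\vectext(I_{|S|})^\top$ term of norm $d|S|$) and the fluctuation. A union bound over the $\binom{n}{\eta n}$ supports adds only polylogarithmic overhead since the per-$S$ tail probability is $\exp(-\Omega(|S|))$ and $\log\binom{n}{\eta n} = O(\eta n \log(1/\eta))$. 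Together with a careful use of the rank-one PSD structure of $w$---which restricts $w^\top(\bT - \E \bT) w$ to the symmetric, rank-one subspace---this yields $\|Mu\|_2^4 \leq \tO(d^2 \eta n)$, hence $\|Mu/\sqrt n\|_2^2 = o(1)$ whenever $n \gg \eta d^2$.

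The main technical obstacle is the sharp sparse spectral concentration of the heavy-tailed rank-one sum: the per-summand operator norm $\|(M^a_S)^{\otimes 2}\|^2 \approx |S|^2 = (\eta n)^2$ is large relative to $\|\E[\bT|_{S \times S}]\|_{\op}$, so a naive matrix-Bernstein estimate introduces an $\eta^2 n^2$-type fluctuation term that would only give $n \gg d^3 \eta^3$ for the larger-$\eta$ regime. Reaching the tight $n \gg \eta d^2$ threshold requires leveraging the low effective rank of $\E\bT|_{S\times S}$ and the PSD/rank-one constraint on $\vectext(uu^\top)$ to restrict the operator-norm computation to the correct low-dimensional subspace.
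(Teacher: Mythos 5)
Your approach is genuinely different from the paper's (and from the one in~\cite{barak2012hypercontractivity}), and it has a gap that cannot be repaired by sharper matrix concentration or by a more careful use of the rank-one PSD structure. You apply Cauchy--Schwarz over the row index $a\in[d]$, via $\|Mu\|_2^4\le d\,\|Mu\|_4^4$, and then need to certify a bound on $\sum_{a\le d}(Mu)_a^4=\sum_{a\le d}\iprod{(M^a)_S,u}^4$. This is the fourth power of the $(2\to4)$-norm of the $d\times|S|$ Gaussian submatrix $M_S$, built from only $d$ vectors each of dimension $|S|=\eta n$. Its true maximum over unit $u$ is already at least $\|M_S\|_{\mathrm{op}}^4/d\approx(\eta n)^2/d$ (take $u$ to be the top right singular vector of $M_S$ and use $\|y\|_4^4\ge\|y\|_2^4/d$). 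Any SoS certificate value must exceed this true maximum, so your target $\sum_a(Mu)_a^4\le\tO(d\,\eta n)$ (equivalently $\|Mu\|_2^4\le\tO(d^2\eta n)$) is impossible whenever $(\eta n)^2/d>d\,\eta n$, i.e.\ $\eta n>d^2$---a regime perfectly allowed under $n\gg\eta d^2$ with $\eta=o(1)$ (e.g.\ $\eta=1/\log d$, $n=d^3$). What degree-$4$ SoS actually certifies here is of order $d+(\eta n)^2$, which gives $\|Mu/\sqrt n\|^4\lesssim d^2/n^2+d\eta^2$, i.e.\ $o(1)$ only when $\eta\ll1/\sqrt d$. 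Your route therefore recovers Fact~\ref{thm:cert-inner-products}, not Fact~\ref{thm:t-th-moment-cert}; the ``technical obstacle'' you flag at the end is not a nuisance but the reflection of the quadratic form genuinely being that large.

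The proof the paper has in mind applies Cauchy--Schwarz over the \emph{sample/column} index $i$ instead, after dualizing: with $v$ a unit vector in $\R^d$ and $w_i=\mathbf 1[i\in S]$, one writes $\tfrac1n\sum_{i\in S}\iprod{X_i,v}^2\le\sqrt\eta\,\sqrt{\tfrac1n\sum_{i\le n}\iprod{X_i,v}^4}$, spending the sparsity budget to extract the $\sqrt\eta$ factor. The remaining fourth-moment object $\tfrac1n\sum_{i\le n}\iprod{X_i,v}^4$ is built from all $n$ columns $X_i\in\R^d$ (a $d^2\times d^2$ matrix $\sum_i X_i^{\otimes2}X_i^{\otimes2\top}$ with many low-dimensional summands), and after projecting off the direction $\vectext(I_d)$ its restricted spectral norm concentrates at $O(n+d^2)$, giving $\sqrt{\eta(1+d^2/n)}=o(1)$ once $n\gg\eta d^2$ and $\eta=o(1)$. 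The decisive difference is which of $n$ or $d$ indexes the summands of the fourth-moment tensor: the paper averages $n$ many $d$-dimensional vectors (good when $n\gg d^2$ after discounting by $\eta$), whereas you average only $d$ many $|S|$-dimensional vectors (good only when $|S|\ll\sqrt d$, i.e.\ the very-sparse regime of Fact~\ref{thm:cert-inner-products}).
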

The second type of certificate we are aware of (which can also be phrased as an SoS proof) uses pairwise inner products of the rows $\iprod{M_i, M_j}$, and works only when $\eta$ is very small.
\begin{fact}[Certificate from pairwise inner products, folklore]
\label{thm:cert-inner-products}
  There is a polynomial-time algorithm which certifies that the maximum sparse singular value of $\tfrac 1 {\sqrt n} M$ is $o(1)$ if $\eta \ll 1/\sqrt{d}$.
\end{fact}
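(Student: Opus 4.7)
The plan is to certify the squared sparse singular value $u^\top A u$ of $A := M^\top M / n$ using an elementary diagonal / off-diagonal decomposition. Since the decomposition is deterministic, any resulting upper bound valid for all $\eta n$-sparse unit $u$ gives an honest certificate, and the algorithm itself is just the computation of two scalars.

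First I would write $A = D + E$, where $D = \diag(\|M_1\|^2/n, \ldots, \|M_n\|^2/n)$ and $E_{ij} = \iprod{M_i, M_j}/n$ for $i \neq j$ (zero on the diagonal). For any unit vector $u$ supported on at most $s = \eta n$ coordinates, the pointwise bound $|E_{ij}| \leq \|E\|_{\max}$ and Cauchy--Schwarz give
\[
    u^\top A u \;=\; u^\top D u + u^\top E u \;\leq\; \|D\|_{\op} \|u\|_2^2 + \|E\|_{\max} \|u\|_1^2 \;\leq\; \|D\|_{\op} + s\,\|E\|_{\max},
\]
where in the last step I used $\|u\|_1 \leq \sqrt{s}\,\|u\|_2 = \sqrt{s}$ for an $s$-sparse unit vector. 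The certifying algorithm then simply computes $\|D\|_{\op}$ (the largest diagonal entry of $A$) and $\|E\|_{\max}$ in time $O(n^2 d)$ and outputs the square root of $\|D\|_{\op} + s\,\|E\|_{\max}$. Because the displayed inequality is deterministic, the output is a valid upper bound on the maximum $\eta$-sparse singular value of $M/\sqrt{n}$ for every realization of $M$.

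Next I would verify that this certificate is $o(1)$ with high probability under $M \sim \cN(0,1)^{d \times n}$. Standard $\chi^2$ concentration shows each $\|M_i\|^2/n$ equals $(d/n)(1 + o(1))$ with probability $1 - n^{-\omega(1)}$, so a union bound gives $\|D\|_{\op} \leq (d/n)(1 + o(1))$. For the off-diagonal entries, conditional on $M_j$ the variable $\iprod{M_i, M_j}/n$ is a centered Gaussian with standard deviation $\|M_j\|/n = O(\sqrt{d}/n)$, so a Gaussian tail bound followed by a union bound over the $\binom{n}{2}$ pairs yields $\|E\|_{\max} = O(\sqrt{d \log n}/n)$ with high probability. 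Combining, the certificate is at most $(d/n)(1 + o(1)) + \eta \cdot O(\sqrt{d \log n})$, whose square root is $o(1)$ as soon as $n \gg d$ and $\eta \ll 1/\sqrt{d \log n}$, which is exactly the stated regime (up to logarithmic factors absorbed into ``$\ll$'').

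There is no genuinely hard step here, which is why the result is folklore: the entire argument is a one-line matrix inequality together with concentration of independent Gaussian inner products. The only conceptual content is to separate the contributions of $D$ and $E$ and to measure $E$ in the $\ell_\infty \to \ell_\infty$ norm so that sparsity enters only through the cheap estimate $\|u\|_1^2 \leq s\,\|u\|_2^2$. The same certificate can be recast as a degree-$2$ SoS proof by introducing $\{0,1\}$-valued support-indicator variables, but no semidefinite program needs to be solved to actually produce the bound.
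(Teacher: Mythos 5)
Your proof is correct and is the standard folklore argument the paper alludes to. The paper itself gives no proof of this Fact; it only cites it as folklore and describes it as using pairwise inner products (and remarks it can be cast as a degree-$2$ SoS proof), which is exactly what you do. The diagonal/off-diagonal split of $A = M^\top M/n$, the estimate $|u^\top E u| \leq \|E\|_{\max}\|u\|_1^2 \leq s\,\|E\|_{\max}$ for $s$-sparse unit $u$, and the Gaussian concentration bounds $\|D\|_{\op} = O(d/n)$, $\|E\|_{\max} = O(\sqrt{d\log n}/n)$ are all right, and combining them gives a certificate of order $d/n + \eta\sqrt{d\log n}$, which is $o(1)$ in the stated regime (recall the paper's $\ll$ already hides polylog factors, so $\eta \ll 1/\sqrt{d}$ absorbs the $\sqrt{\log n}$). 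The only cosmetic point worth noting is that the paper's phrase ``inner products of the rows $\iprod{M_i,M_j}$'' is a slip; these are inner products of the columns of the $d\times n$ matrix $M$, consistent with the $M_i$ notation used in the preceding Fact, and your proof treats them correctly as columns.
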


\noindent For $\eta  \geq 1/\sqrt{d}$, \Cref{thm:t-th-moment-cert} and \Cref{thm:cert-inner-products} leave a wide range of $n$ for which polynomial-time nontrivial certification remains open---if e.g., we fix $\eta = 1/\sqrt{d}$, the range from $n \approx d$ to $n \approx d^{1.5}$ is open.

\medskip 

\paragraph{Main algorithmic contribution.} 
Our main result is a \new{new} polynomial-time certification algorithm which closes this gap.
Up to arbitrarily-small polynomial factors in $d$, our certification algorithm matches or improves upon the guarantees of both \Cref{thm:t-th-moment-cert} and \Cref{thm:cert-inner-products}.
We construct a new family of random matrices out of the columns $M_1,\ldots,M_n$ and use their eigenvalues as certificates.
Our certificates can be phrased as Sum of Squares proofs.
Our main result is:
\begin{theorem}[Main result; see \Cref{thm:sparse_sing_val_full} for the full version]
\label{thm:main-intro}
  For every $\eps > 0$, there is an $(nd)^{O(1/\eps)}$-time algorithm which certifies that the maximum $\eta$-sparse singular value of $\tfrac 1 {\sqrt n} M$ is $o(1)$ 
  if $n \gg \eta^2 \, d^{2+\eps}$ and $\eta \leq o(1)$.
  In particular, choosing $\eps = 1/\log(d)$, there is an algorithm running in time $(nd)^{O(\log d)}$ which performs this task if $n \gg \eta^2 \, d^2$.
\end{theorem}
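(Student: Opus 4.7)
The plan is to exhibit, for each $\eps > 0$, a Sum-of-Squares proof of degree $t = O(1/\eps)$ that certifies $\|Mu\|^2 \leq o(n)$ from the SoS-compatible constraints $\|u\|^2 = 1$ and $|\mathrm{supp}(u)| \leq \eta n$ (the latter encoded by Boolean variables $w_i$ satisfying $\sum_i w_i \leq \eta n$ and $u_i(1-w_i) = 0$). The degree-$t$ SoS semidefinite program is solvable in time $(nd)^{O(t)} = (nd)^{O(1/\eps)}$ by standard machinery, so all of the content is in producing the proof and showing that, with high probability over $M \sim \cN(0,1)^{d\times n}$, its value is $o(n)$.

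To design the certificate, the starting point is the identity
\[
\|Mu\|^{2t} = \Iprod{u^{\otimes t},\, (M^\top M)^{\otimes t}\, u^{\otimes t}},
\]
together with the fact that, for $\eta n$-sparse $u$, the tensor $u^{\otimes t}$ is supported on at most $(\eta n)^t$ indices of $[n]^t$. The plan is then to SoS-bound the right-hand side by the spectral norm of a suitable variant of $(M^\top M)^{\otimes t}$ in which the diagonal and lower-order contributions have been subtracted off; after centering, this object becomes a \emph{graph matrix} whose entries are products of off-diagonal column inner products $\iprod{M_i, M_j}$ indexed by combinatorial shapes on roughly $2t$ vertices. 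Taking a $t$-th root of the resulting inequality yields an SoS certificate of $\|Mu\|^2 \leq o(n)$. The case $t = 2$ recovers \Cref{thm:t-th-moment-cert}; the intuition is that larger $t$ exposes more cancellations in the random products and is precisely what will let us push the feasible regime from $n \gg \eta\, d^2$ down to $n \gg \eta^2\, d^{2+\eps}$.

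The main obstacle, and the technical heart of the argument, is proving the required spectral bound on this graph matrix, because off-the-shelf graph-matrix norm bounds lose polynomial factors in $d$ in our regime: they treat all shapes uniformly and do not exploit cancellations in the underlying trace moments. The plan is to decompose the graph matrix via the Efron-Stein decomposition of its entries as polynomials in the independent Gaussian entries of $M$, and to bound each Efron-Stein component as a separate, smaller graph matrix whose effective shape reflects the Efron-Stein level. A clean combinatorial correspondence between Efron-Stein levels and graph-matrix shapes---the new ingredient flagged in the introduction---is what one needs to make this decomposition quantitatively sharp, uniformly over all shapes arising at degree $t = O(1/\eps)$. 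Once this correspondence is established, summing the component bounds and taking a $t$-th root should give the desired $o(1)$ estimate, after which standard conversion of the spectral bound into an SoS identity in the variables $u$ and $w$ completes the proof.
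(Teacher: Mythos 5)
Your proposal has the right toolkit---Sum-of-Squares at degree $O(1/\eps)$, graph matrices, an Efron--Stein decomposition to expose cancellations that off-the-shelf norm bounds miss---but the starting identity is a dead end, and fixing it requires the one ingredient the paper hinges on that you do not mention.

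The issue is that bounding $\|Mu\|^{2t} = \langle u^{\otimes t},\,(M^\top M)^{\otimes t}\,u^{\otimes t}\rangle$ by the operator norm of a (centered) variant of $(M^\top M)^{\otimes t}$ restricted to the sparse block $S^t$ is tautological. Since $\bigl\|\bigl((M^\top M)_{S,S}\bigr)^{\otimes t}\bigr\|_\op = \|(M^\top M)_{S,S}\|_\op^t$, taking the $t$-th root of that spectral bound returns exactly the quantity $\|(M^\top M)_{S,S}\|_\op$ you set out to bound; the Kronecker power of a matrix has a trivial spectral theory (its eigenvalues are products of eigenvalues of $M^\top M$), and subtracting $\E M^\top M = dI$ or removing the diagonal entries does not break this structure enough to help, because the problematic rank-$d$ bulk of $M^\top M$ is unchanged. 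Your sanity check also flags this: the degree-$4$ fourth-moment certificate (\Cref{thm:t-th-moment-cert}) uses the eigenvalues of the $d^2\times d^2$ matrix $\sum_i (X_i\otimes X_i)(X_i\otimes X_i)^\top$, not the $n^2\times n^2$ Kronecker square $(M^\top M)^{\otimes 2}$, so ``$t=2$ recovers \Cref{thm:t-th-moment-cert}'' does not hold under your identity.

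The missing move is the passage from operator norm to Schatten-$p$ norm: relax $\bigl\|\frac{1}{n}\sum_{i\in S}X_iX_i^\top\bigr\|_\op^p \le \trace\bigl(\bigl(\frac{1}{n}\sum_{i\in S}X_iX_i^\top\bigr)^p\bigr)$. Unlike a $t$-th power of a max, the trace of the $p$-th power \emph{is} a genuine degree-$p$ polynomial $P(w)$ in the Boolean indicators $w_i = \1(i\in S)$, expanding over \emph{cyclic} products $\prod_{j\le p}\iprod{X_{i_j},X_{i_{j+1}}}$ rather than the disjoint-edge products of the Kronecker power. This has two consequences your proposal misses. First, the combinatorial shape behind each matrix representation is a $p$-cycle (and its vertex merges), not a disjoint matching; this is what gives each admissible graph matrix a nontrivial vertex separator and hence a genuinely small spectral norm. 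Second, the quadratic-form representations take the form $(w^{\otimes q})^\top A_Q\, w^{\otimes q}$ with $q$ \emph{varying} with the equality pattern $Q$, and with $\|w^{\otimes q}\|^2 = (\eta n)^q$, so sparsity enters multiplicatively through the vector norm rather than by restricting a fixed Kronecker power to a block. Once you replace your starting identity with the Schatten-$p$ trace, the rest of your plan---partition by index-equality patterns, Efron--Stein decomposition of each $P_Q$ over the graph vertices with an induction on $p$ for the lower-order pieces, and graph-matrix norm bounds for the top Efron--Stein piece---matches the paper's argument.
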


\paragraph{Almost matching lower bounds.} 
Our main theorem almost matches lower bounds proved in restricted computational models.
In particular,
\cite{mao2021optimal,chen2022well} give evidence in the form of lower bounds against low-degree polynomial tests that when $n \ll \eta^2 d^2$, polynomial-time algorithms cannot distinguish between $M \sim \cN(0,1)^{d \times n}$ and $M$ drawn from a natural distribution on $d \times n$ matrices for which the $\eta$-\textsf{SSV} is $\Omega(\|M\|_{\text{op}})$.
We present a closely related construction in \Cref{sec:low-degree-lower-bounds}.

\paragraph{Extensions of \Cref{thm:main-intro}.}
It is natural to ask how small the $o(1)$ can be made, and how close this is to the true $\eta$-\textsf{SSV}, which is roughly $\sqrt{\eta} + \sqrt{d/n}$.
Our certifying algorithm actually certifies a bound of roughly $\eta^{1/4} + (\eta^2 d^2 / n)^{1/8}$; this more quantitative result is important for several of our applications.
Note that even for large $n$, this bound is never smaller than $\eta^{1/4}$, while the true $\eta$-\textsf{SSV} scales as $\sqrt{\eta}$.
To what extent the bound $\eta^{1/4} + (\eta^2 d^2 / n)^{1/8}$ is likely to be the smallest certifiable by polynomial time algorithms is a subtle issue, see \Cref{rem:phase-transition}.

\Cref{thm:main-intro} also extends to matrices $M$ drawn from more general probability distributions than $\cN(0,1)^{d \times n}$.
Using the techniques of \cite{diakonikolas2024sos}, \Cref{thm:main-intro} extends unchanged to $M$ whose entries are drawn from an $nd$-dimensional \new{centered} subgaussian distribution.

\subsection{Algorithmic Applications of Main Result}
\label{ssec:intro-apps}

We discuss two of the main applications of \Cref{thm:main-intro}: to several problems in algorithmic robust statistics and to certification of the distortion of random subspaces; we define these problems below.
We defer discussion of two other important applications, sparse principal component analysis and certification of the $2 \rightarrow p$ norm of a random matrix, to \Cref{sec:applications}, to avoid an avalanche of definitions.

\subsubsection{Algorithmic Robust Statistics}
Algorithmic robust statistics aims to perform statistical estimation and inference in the face of gross outliers via computationally efficient algorithms.
In this paper, we study robust statistics in the most general model of outliers, the \emph{strong contamination model}, defined below: 
\begin{definition}[Strong Contamination Model~\cite{DiaKKLMS16-focs}]
Given a \emph{contamination parameter} $\eta \in (0,1/2)$ 
and a distribution $\cD$ on uncorrupted samples, 
an algorithm takes samples from $\cD$ with \emph{$\eta$-contamination} 
as follows: 
(i) The algorithm specifies the number $n$ of samples it requires, and 
$n$ i.i.d.\ samples from $\cD$ are drawn but not yet shown to the algorithm. 
These samples are called inliers. 
(ii) An omniscient adversary inspects 
the entirety of the $n$ i.i.d.\ samples, 
before deciding to replace any subset of $\lceil \eta n \rceil$ 
samples with arbitrarily corrupted points, 
and returning the modified set of $n$ points to the algorithm.    
We call the resulting set of samples ``$\eta$-corrupted''.
\end{definition}

\paragraph{Discussion on contamination rate.} 
Over the past decade, the literature on algorithmic robust statistics 
primarily focused on understanding the computational complexity of robust estimation, 
when the fraction of outliers $\eta$ is a universal constant (independent of the dimension 
$d$).
A more ambitious goal is to perform a {\em fine-grained} 
algorithmic analysis to understand the complexity of such tasks 
{\em as a function of $\eta$}---and in particular when $\eta$ is 
``mildly sub-constant''. Roughly speaking, while handing 
the case when $\eta \ll 1/\sqrt{d}$ is typically 
 straightforward (e.g., achievable by standard estimators), 
obtaining optimal algorithms for $\eta \approx 1/d^c$, 
where $c>0$ is an arbitrarily small universal constant, turns out to be 
highly non-trivial, especially if the goal is to use fewer samples than would be required by efficient algorithms if $\eta$ were a universal constant.
The latter goal is important, since efficient algorithms 
may require prohibitively many samples to perform some fundamental robust estimation tasks in high dimensions when $\eta$ is a universal constant; 
but, it may still be possible to improve 
on the robustness guarantees of standard estimators, which break down when $\eta \gg 1/\sqrt{d}$.

An additional concrete motivation to study such mildly 
subconstant contamination rates is a recently-discovered 
technical connection to the study of differentially 
private (DP) estimators. Indeed, inspired by developments 
in differential privacy~\cite{narayanan2022private}, 
recent work in algorithmic robust statistics~\cite{CanonneH0LN23} 
has provided surprising tradeoffs between contamination rate and number of samples required to perform \emph{robust mean testing}. In the 
context of differentially private (DP) estimation, a 
recent line of 
work~\cite{DwoLei09,GeoHop22,HopKMN23,AsiUZ23} 
has established a number of reductions from robust to 
differentially private 
estimators. A high-level message coming out of these 
results is that to obtain computationally efficient
\emph{approximate} DP estimators with optimal 
tradeoffs between privacy and sample complexity, 
one needs to understand the robust setting 
when the contamination parameter is 
$\eta = 1/d^c$, for $c>0$.

\smallskip

The two prototypical problems in this field are 
robust mean and robust covariance estimation 
of a high-dimensional distribution $\cD$. 
Despite significant algorithmic progress on both of these tasks,
some basic settings of interest have remained tantalizingly open. 
Here we study robust mean and covariance estimation 
in the setting that $\cD$ is a subgaussian distribution. 
Our main technical contribution is a set of new, near-optimal 
\new{polynomial-time tradeoffs between sample complexity and contamination rate} for the case that $\cD$ is an unknown Gaussian.
These algorithms transfer to the general subgaussian case, 
via the techniques of~\cite{diakonikolas2024sos}.

\paragraph{Robust covariance estimation.}
We start with the task of robust covariance estimation for an unknown 
Gaussian $\cN(0, \Sigma)$ %
in {\em relative spectral norm}: 
given a desired accuracy parameter $\alpha>0$, 
the goal is to compute an estimate $\hat{\Sigma}$ of the target covariance $\Sigma$
such that $(1-\alpha) \hat{\Sigma} \preceq \Sigma \preceq (1+\alpha) \hat{\Sigma}$. 
In the $\eta$-corrupted setting, this task is information-theoretically possible 
only if $\alpha \geq C \eta$, for a sufficiently large universal constant $C>1$. 
To avoid clutter in the relevant expressions, we focus \new{in this introduction} on the setting that 
$\alpha = 0.1$ and $\eta \leq \eta_0$, for a sufficiently small universal constant $\eta_0$. 
For this parameter setting, the information-theoretically optimal sample complexity of our 
task is $n=\Theta(d)$ (both upper and lower bound); see, e.g., \cite{10.1214/17-AOS1607}.
Interestingly, information-computation tradeoffs (for SQ algorithms and low-degree tests)
strongly suggest a gap between the information-theoretic and computational sample complexity 
of the problem. For the case where $\eta = \Omega(1)$,~\cite{DiaKS17} gave 
an SQ lower bound suggesting that $n \gg d^2$ samples are necessary for polynomial-time 
algorithms. A low-degree testing lower bound of \cite{mao2021optimal}, reinterpreted in this context, suggests that $n \gg d^2 \eta^2 + d$ are needed by efficient algorithms.
(In \Cref{prop:low-degree-cov-estimation-gaussian}, we generalize their result to allow $\alpha = o(1)$.)  
Previous polynomial-time algorithms require $n= \Omega(\eta\, d^2)$ samples~\cite{KotSS18}.
Our main algorithmic result for this task 
is the following (see~\cref{thm:cov_est_full} for a more detailed statement). 

\begin{theorem}[Robust Covariance Estimation] \label{thm:cov-intro}
Let $\eta \in(0,\eta_0)$ for a sufficiently small constant $\eta_0>0$. 
There exists an algorithm with the following guarantee: 
Given access to $\eta$-corrupted samples from $\cN(0, \Sigma)$, and a parameter $\eps>0$, 
the algorithm draws $n =  \tilde{\Omega} (\eta^2 d^{2+2\eps}+d^{1+\eps})$ corrupted samples, 
runs in $n^{O(1/\eps)}$ time and outputs an estimate $\hat{\Sigma}$ that with high 
probability satisfies $0.9  \Sigma \preceq \hat{\Sigma} \preceq 1.1 \Sigma$.  
\end{theorem}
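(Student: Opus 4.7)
The plan is to instantiate the Sum-of-Squares framework for robust covariance estimation (following the paradigm of \cite{KotSS18}) using \Cref{thm:main-intro} as the main certificate. By the affine equivariance of the SoS program, we analyze the case $\Sigma = I_d$ without loss of generality. The algorithm solves a low-degree SoS relaxation whose variables are weights $w_1, \ldots, w_n \in \{0,1\}$ encoding a candidate inlier set of size at least $(1-\eta)n$, together with a candidate covariance $\hat\Sigma$ satisfying $\hat\Sigma \cdot \sum_i w_i = \sum_i w_i X_i X_i^\top$. The axioms are of two types: (i) standard moment/spectral axioms on the $w$-selected samples (e.g., that their weighted empirical covariance agrees with $\hat\Sigma$); and, crucially, (ii) the SSV certificate of \Cref{thm:main-intro}, expressed as a degree-$O(1/\eps)$ SoS proof, asserting that for any $\eta$-sparse $\{0,1\}$-valued weighting $u$, one has $\bigl\|\sum_i u_i X_i X_i^\top\bigr\|_{\op} \leq o(n)$.

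The heart of the argument is an identifiability step. Let $w^\star$ denote the indicator of the true inliers, and let $w$ be any weighting satisfying the axioms; since both select at least $(1-\eta)n$ samples, $|w \triangle w^\star| \leq 2\eta n$. Setting $\hat\Sigma_{w^\star} := \frac{1}{|w^\star|}\sum_{i \in w^\star} X_i X_i^\top$, we decompose
\begin{equation*}
\hat\Sigma - I \;=\; \bigl(\hat\Sigma_{w^\star} - I\bigr) \;+\; \frac{1}{|w|}\sum_{i \in w \setminus w^\star} X_i X_i^\top \;-\; \frac{1}{|w|}\sum_{i \in w^\star \setminus w} X_i X_i^\top \;\pm\; o(1)\cdot I,
\end{equation*}
where the $o(1) I$ absorbs normalization discrepancies between $|w|$ and $|w^\star|$. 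The first term is $O(\sqrt{d/n}) = o(1)$ by standard Gaussian empirical-covariance concentration, which requires $n \gtrsim d^{1+\eps}$. The third, supported on a size-$\leq 2\eta n$ subset of true inliers, is bounded directly by \Cref{thm:main-intro} applied to the inlier submatrix with sparsity parameter $2\eta$, giving $o(1)$ once $n \gg \eta^2 d^{2+2\eps}$. The middle, adversarial-outlier term is controlled by coupling the program's spectral axioms with the SSV axiom (ii), so that the $w$-selected outlier contribution is bounded in terms of $v^\top \hat\Sigma v$ itself, and is therefore self-dominated.

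Finally, we round by setting $\hat\Sigma := \pE[\hat\Sigma_w]$ for any pseudo-expectation satisfying the axioms. Carried out as an SoS proof, the identifiability step translates into $\|\pE[\hat\Sigma_w] - I\|_{\op} \leq 0.1$; undoing the whitening recovers $0.9\,\Sigma \preceq \hat\Sigma \preceq 1.1\,\Sigma$. The overall runtime is $n^{O(1/\eps)}$ because the SSV certificate of \Cref{thm:main-intro} is produced as a degree-$O(1/\eps)$ SoS proof, which caps the degree of the entire relaxation.

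The main obstacle will be carrying out the composition in axiom (ii): \Cref{thm:main-intro} produces a certificate about the full data matrix $M$, while the program's other axioms constrain only the $w$-selected subset. Using the two jointly to bound the adversarial-outlier term in \emph{relative} (rather than merely additive) spectral norm requires self-bounding arguments in the spirit of \cite{KotSS18}, together with care to verify that the whitening reduction to $\Sigma = I_d$ commutes consistently with all SoS axioms when translating the argument back to the general-$\Sigma$ setting.
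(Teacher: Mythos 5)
Your high-level plan — a KotSS18-style SoS relaxation with a certificate constraint derived from the main SSV theorem, followed by a decomposition of the covariance error and a pseudo-expectation rounding step — is aligned with the paper's proof (see \Cref{thm:cov_est_full} and \Cref{lem:deterministic-covariance}). However, there are two genuine gaps in the way you propose to encode and use axiom (ii).

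First, you cannot take the SSV certificate of \Cref{thm:main-intro} ``about the full data matrix $M$'' as an SoS axiom, because the observed data is $\eta$-corrupted and the certificate simply does not exist for it: an adversary can plant $\eta n$ columns of enormous norm, making the $\eta$-sparse singular value of $M$ arbitrarily large, so no SoS proof of the claimed upper bound exists and your program would be infeasible (or the axiom vacuously false). You flag ``composition'' as the main obstacle and suggest self-bounding arguments, but the actual fix is structural: one introduces auxiliary SoS variables $x_i'$ for candidate clean samples, imposes $(1-w_i)x_i' = (1-w_i)x_i$, and makes the \emph{existence of a valid SoS proof about the $x_i'$} itself a polynomial constraint (encoding the proof coefficients as additional SoS variables). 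Feasibility then follows because the true inliers $x_i^*$ satisfy the resilience certificate with high probability — this is exactly the paper's system $\cA$ in \Cref{sec:cov_est_full}.

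Second, the absolute form of the SSV certificate, $\bigl\|\sum_i u_i X_i X_i^\top\bigr\|_{\op} \leq o(n)$, is not sufficient for estimation in \emph{relative} spectral norm, and your appeal to ``affine equivariance'' is inconsistent with using an absolute bound. If $\Sigma$ is small in some direction $v$, then $v^\top\hat\Sigma v$ is also small, so an outlier contribution of order $o(n)\cdot\|v\|^2$ can still swamp it. The paper instead uses the alternate resilience form \Cref{thm:resilience_Gauss_full_alternate}, which produces the SoS bound
$\bigl(\tfrac 1 n \sum_i \tilde w_i \iprod{x_i', v}^2\bigr)^p \leq \alpha^p \iprod{v,\Sigma' v}^p$
with $\iprod{v,\Sigma'v}$ on the right rather than $\|v\|^2$; this \emph{relative} form (combined with a substitution $v \mapsto \Sigma^{1/2}v$ in the analysis) is what actually makes the outlier term ``self-dominated'' as you want, and it requires working with the Schatten-$p$ formulation with $v$ unnormalized, not the unit-vector form of \Cref{thm:main-intro}. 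With these two ingredients in place, your decomposition of $\hat\Sigma - I$ into a concentration term, an adversarial-outlier term, and a discarded-inlier term becomes essentially the argument in \Cref{lem:deterministic-covariance}.
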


\noindent \new{\Cref{thm:cov-intro} achieves a near-optimal tradeoff between sample 
complexity and corruption rate $\eta$ within the class of efficient algorithms, as evidenced by the low-degree testing lower bounds discussed above.
Conceptually, 
our algorithmic result 
allows for smooth interpolation between the non-robust and robust settings, 
essentially recovering the sample complexity of outlier-free covariance estimation when $\eta \ll 1/\sqrt{d}$ (as opposed to existing algorithms whose sample complexity is $\eta d^2 + d$).}

\paragraph{Robust covariance-aware mean estimation.}
The next task we study is that of robust {\em covariance-aware} mean estimation---which served as one of the main motivations for this work. 
Typically, robust mean estimation algorithms measure their accuracy with respect to 
the Euclidean norm. Here the goal is to instead compute an estimate 
$\widehat{\mu}$ such that $\|\Sigma^{-1/2} (\widehat{\mu} - \mu)\|_2$ is small, 
where $\Sigma$ is the (unknown) target covariance.\footnote{If $\Sigma$ is not invertible, 
then we use \new{the pseudo-inverse} $(\Sigma^{\dagger})^{1/2}$ instead.} As above, we will focus on the 
simplified regime where the desired accuracy $\alpha$ is a small constant (e.g., 
$\alpha = 1/10$) and the contamination parameter $\eta$ at most a sufficiently 
small universal constant. In this regime, the sample complexity of robust 
covariance aware mean estimation is $O(d)$.\footnote{This follows from the sample bound for robust covariance estimation, 
as the covariance-aware mean task can be solved via robust covariance estimation followed by Euclidean mean estimation.}
On the other hand, known efficient algorithms with non-trivial error guarantees 
require 
$n = \Omega(\eta d^2)$ samples~\cite{KotSS18}. 

Prior to this work,  our understanding of this question was fairly limited. For example,  
it was in principle possible that a computationally efficient 
algorithm with $O(d)$ sample complexity exists. 
Here we provide {\em the first evidence} of information-computation tradeoffs 
for this task (for SQ algorithms and low-degree tests). 
Specifically, we show that any efficient algorithm from these families
requires $n = \tilde{\Omega}(\eta^2 d^2)$ samples 
(see Theorem~\ref{thm:low-degree-hardness-cov-aware-mean} 
for the corresponding testing task). 
This lower bound is new even for $\eta = \Omega(1)$, ruling 
out efficient algorithms with sample complexity 
sub-quadratic in $d$ when $\eta = \Omega(1)$. 

Our main positive result is an efficient algorithm 
whose sample complexity nearly matches our low-degree lower bound, 
as a function of $\eta$ and $d$.
Specifically, by combining our robust covariance estimator from Theorem~\ref{thm:cov-intro} 
with known tools, we establish the following 
(see Theorem~\ref{thm:cov_mean_est_full} for a more detailed statement): 
\begin{theorem}[Robust Covariance-Aware Mean Estimation] \label{thm:mean-aware-intro}
Let $\eta \in(0,\eta_0)$ for a sufficiently small constant $\eta_0>0$. 
There exists an algorithm with the following guarantee: 
Given access to $\eta$-corrupted samples from $\cN(0, \Sigma)$, and a parameter 
$\eps>0$, the algorithm draws $n =  \tilde{\Omega} (\eta^2 d^{2+2\eps}+d^{1+\eps})$ corrupted samples, runs in $n^{O(1/\eps)}$ time and outputs an estimate
$\hat{\mu}$ such that 
$\|\Sigma^{-1/2}(\hat{\mu} - \mu)\|_2 \leq O(\new{\sqrt{\eta}})$.
\end{theorem}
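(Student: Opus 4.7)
The plan is a three-stage reduction: first, apply \Cref{thm:cov-intro} to obtain a spectrally accurate covariance estimate $\hat{\Sigma}$; second, whiten the corrupted samples by $\hat{\Sigma}^{-1/2}$, reducing to robust mean estimation against a (sub-)Gaussian whose covariance is within a constant spectral factor of the identity; third, invoke an off-the-shelf efficient robust mean estimator on the whitened data and un-whiten. The genuinely new ingredient is \Cref{thm:cov-intro}; the remainder is a routine black-box composition.

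In more detail, let $T=\{x_1,\ldots,x_n\}$ be the $\eta$-corrupted input with $n=\tilde\Omega(\eta^2 d^{2+2\eps}+d^{1+\eps})$. Since \Cref{thm:cov-intro} is stated for mean-zero inliers, I would first symmetrize by pairing samples arbitrarily and forming differences $z_i := x_{2i-1}-x_{2i}$. Clean pairs yield $z_i \sim \cN(0,2\Sigma)$, and the fraction of corrupted pairs is at most $2\eta$; so after adjusting $\eta$ by a constant, \Cref{thm:cov-intro} produces $\hat{\Sigma}$ with $0.9\,\Sigma \preceq \hat{\Sigma} \preceq 1.1\,\Sigma$ with high probability.

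Next, form the whitened samples $y_i := \hat{\Sigma}^{-1/2} x_i$. For clean $i$, $y_i \sim \cN(\hat{\Sigma}^{-1/2}\mu,\,\hat{\Sigma}^{-1/2}\Sigma\hat{\Sigma}^{-1/2})$, and the spectral closeness of $\hat{\Sigma}$ to $\Sigma$ yields $\tfrac{1}{2} I \preceq \hat{\Sigma}^{-1/2}\Sigma\hat{\Sigma}^{-1/2} \preceq 2 I$, while the contamination rate is unchanged under the linear transformation. I would then feed $\{y_i\}$ into any standard efficient robust mean estimator for Gaussians with covariance within a constant spectral factor of $I$ (e.g., filtering-based algorithms \cite{DiaKKLMS16-focs,DiaKan22-book}); these achieve $\ell_2$ error $O(\sqrt{\eta})$ using $\tilde\Omega(d)$ samples, well within our sample budget, producing $\hat\nu$ with $\|\hat\nu - \hat{\Sigma}^{-1/2}\mu\|_2 \leq O(\sqrt{\eta})$.

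Finally, output $\hat\mu := \hat{\Sigma}^{1/2}\hat\nu$. Then
\begin{align*}
\|\Sigma^{-1/2}(\hat\mu-\mu)\|_2
= \|\Sigma^{-1/2}\hat{\Sigma}^{1/2}(\hat\nu-\hat{\Sigma}^{-1/2}\mu)\|_2
\leq \|\Sigma^{-1/2}\hat{\Sigma}^{1/2}\|_\op \cdot O(\sqrt{\eta}) = O(\sqrt{\eta}),
\end{align*}
using $\|\Sigma^{-1/2}\hat{\Sigma}^{1/2}\|_\op^2 = \|\Sigma^{-1/2}\hat{\Sigma}\Sigma^{-1/2}\|_\op \leq 1.1$ by the spectral accuracy of $\hat{\Sigma}$. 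The heavy technical content lives entirely inside \Cref{thm:cov-intro}; the main obstacle I anticipate is merely the bookkeeping of constants in the symmetrization step and verifying that the chosen off-the-shelf mean estimator truly delivers $O(\sqrt{\eta})$ error (not a worse rate like $O(\sqrt{\eta\log(1/\eta)})$) under a covariance that is only approximately, not exactly, $I$---which is known to hold for the standard filter. The degenerate case $\Sigma \not\succ 0$ is handled by restricting to $\operatorname{range}(\Sigma)$ and using the pseudo-inverse, as flagged in the theorem statement.
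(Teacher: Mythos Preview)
Your proposal is correct and matches the paper's proof essentially step for step: symmetrize via pairwise differences, apply \Cref{thm:cov-intro} to the symmetrized data to obtain $\hat{\Sigma}$, whiten the original samples, run an off-the-shelf robust mean estimator for near-isotropic (sub)Gaussians, and un-whiten. The paper formalizes the ``nearly-isotropic'' mean-estimation step via the stability framework (Facts~\ref{fact:stab_iso_sub_g}--\ref{fact:stab_nearly_iso_sub_g}), which is precisely the bookkeeping you flag as the only remaining concern.
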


\new{\Cref{thm:cov_mean_est_full} achieves a non-trivial (dimension-independent) 
error guarantee with near-optimal computational sample complexity, 
up to the $d^{2\eps}$ factor. In addition to being interesting on its own merits, 
we believe that it may lead to efficient approximate-DP estimators, 
improving on the sample complexity of state-of-the-art 
algorithms~\cite{BroHS23, KDH23} (see also \Cref{rem:dp}).}

\paragraph{Robust mean estimation in Euclidean norm.}
Interestingly, we are also able to get fine-grained algorithmic improvements even for robust mean estimation in Euclidean norm. Specifically, we show (see Theorem~\ref{thm:cov_mean_est_full}): 

\begin{theorem}[Robust Mean Estimation] \label{thm:mean-intro}
Let $\eta \in(0,\eta_0)$ for a sufficiently small constant $\eta_0>0$. 
There exists an algorithm with the following guarantee: 
Given access to $\eta$-corrupted samples from $\cN(0, \Sigma)$, and 
parameters $\eps, \delta>0$, the algorithm draws 
$n =  \tilde{\Omega} (\eta^{2-\delta} d^{2+2\eps}+d^{1+\eps})$
corrupted samples, runs in $n^{O(1/\eps)}$ time and outputs an estimate
$\hat{\mu}$ such that 
$\|\hat{\mu} - \mu\|_2 \leq O(\eta^{1/2 + \Omega(\delta)}  \sqrt{\|\Sigma\|_\op})$. 
\end{theorem}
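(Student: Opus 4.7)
The plan is to plug the quantitative SSV certificate of Theorem~\ref{thm:main-intro} into a standard Sum-of-Squares filtering template for robust mean estimation, as in~\cite{KotSS18,HopLi18}. First I would run the robust covariance estimator of Theorem~\ref{thm:cov-intro} on the corrupted samples with the given $\eps$, producing $\hat\Sigma$ with $0.9\,\Sigma \preceq \hat\Sigma \preceq 1.1\,\Sigma$ (this step already uses $\tilde\Omega(\eta^2 d^{2+2\eps}+d^{1+\eps})$ samples, comfortably within the budget). Setting $\tilde y_i = \hat\Sigma^{-1/2} y_i$ reduces the problem to robust Euclidean mean estimation on an $\eta$-corrupted, nearly-isotropic subgaussian dataset; the final Euclidean error in the original space is at most $\sqrt{\|\hat\Sigma\|_\op}\le O(\sqrt{\|\Sigma\|_\op})$ times the error of the whitened estimate.

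Next I would solve a degree-$O(1/\eps)$ SoS program with indicator variables $w_1,\dots,w_n\in\{0,1\}$ and pseudo-samples $z_1,\dots,z_n\in\R^d$, constrained by $\sum_i w_i\ge (1-\eta)n$ and $w_i(z_i-\tilde y_i)=0$, together with the SSV identity of Theorem~\ref{thm:main-intro} applied to the centered pseudo-matrix $Z-\bar z\mathbf{1}^\top$ (with $\bar z=\tfrac1n\sum_i z_i$): for every $\eta$-sparse unit vector $u$,
\[
\|(Z-\bar z\mathbf{1}^\top)u\|^2 \;\le\; \alpha^2 n,
\qquad\text{where}\qquad \alpha=\tO\bigl(\eta^{1/4}+(\eta^2 d^2/n)^{1/8}\bigr).
\]
Substituting $n=\tilde\Omega(\eta^{2-\delta} d^{2+2\eps}+d^{1+\eps})$ gives $\alpha=\tO(\eta^{\delta/8})$ for $\delta\le 2$ (and $\tO(\eta^{1/4})$ otherwise). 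The algorithm outputs $\hat\mu=\hat\Sigma^{1/2}\,\pE[\bar z]$.

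For the analysis, the (unknown) true indicator $w^\star$ together with the whitened clean samples $\tilde x_i=\hat\Sigma^{-1/2}x_i$ give a feasible integral solution, because Theorem~\ref{thm:main-intro}, extended to subgaussian columns via~\cite{diakonikolas2024sos}, certifies the SSV bound on the clean nearly-isotropic sample matrix with high probability. A standard SoS symmetric-difference argument shows $|T|\le 2\eta n$ for $T=\{i:w_i\ne w_i^\star\}$; applying the SSV identity to $u=\mathbf{1}_T/\sqrt{|T|}$ on both the pseudo-sample matrix and the whitened clean-sample matrix, and rearranging
$\pE[\bar z]-\tfrac1n\sum_i \tilde x_i=\tfrac1n\sum_{i\in T}(z_i-\tilde x_i)$
to pull out the centering terms, yields
\[
\bigl\|\pE[\bar z]-\tfrac1n\textstyle\sum_i \tilde x_i\bigr\|\;\le\;O(\alpha\sqrt\eta)\;=\;\tO(\eta^{1/2+\delta/8}).
\]
Unwhitening by $\hat\Sigma^{1/2}$ multiplies by $O(\sqrt{\|\Sigma\|_\op})$, and combining with Gaussian concentration of $\tfrac1n\sum_i x_i$ around $\mu$ produces the claimed bound with $\Omega(\delta)=\delta/8$.

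The main obstacle I anticipate is marrying the combinatorial graph-matrix/Efron--Stein certificate underlying Theorem~\ref{thm:main-intro} to the SoS filtering argument: one must show that the certificate still goes through as a polynomial identity in the pseudo-samples $z_i$ (including the centering by the variable $\bar z$), and that it composes cleanly with the ``swap'' lemma used to switch between corrupted and clean samples in the pseudo-expectation analysis. Verifying that the subgaussian extension of Theorem~\ref{thm:main-intro} survives whitening by $\hat\Sigma^{-1/2}$ with only constant-factor loss in $\alpha$ is a secondary technical check that should follow from the black-box reduction of~\cite{diakonikolas2024sos}.
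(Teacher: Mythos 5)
Your whitening-and-unwhitening scaffolding matches the paper, but the middle step is genuinely different, and arguably you are doing substantially more work than necessary. The paper's Algorithm~\ref{alg:robust_cov_mean_est} does \emph{not} set up a fresh SoS program for the mean at all. Instead, it (i) symmetrizes the data to make it mean-zero, (ii) runs the robust covariance estimator of \Cref{thm:cov_est_full} with a \emph{polynomially small} accuracy $\alpha \asymp \eta^{\delta/4}$ (not merely the constant-factor estimate of \Cref{thm:cov-intro}), (iii) whitens by $\hat\Sigma^{-1/2}$, and then (iv) calls an off-the-shelf stability-based robust mean estimator for near-isotropic subgaussian data (Facts~\ref{fact:stab_nearly_iso_sub_g} and~\ref{fact:stability_implies_robust_est}). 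The improved error comes entirely from the improved isotropy: after whitening, the residual covariance distortion is $\|\Pi\Pi^\top - I\|_\op \le O(\alpha)$, and this feeds into stability as the term $\sqrt{\eta\alpha}$, which dominates $\eta\sqrt{\log(1/\eta)}$ when $\alpha \gg \eta$. Setting $\alpha = \eta^{\delta/4}$ gives error $O(\sqrt{\eta\alpha}) = O(\eta^{1/2+\delta/8})$ in the Mahalanobis metric, and the sample cost for that covariance accuracy is $\tilde\Omega(\eta^2 d^{2+O(\eps)}/\alpha^4) = \tilde\Omega(\eta^{2-\delta}d^{2+O(\eps)})$.

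Your route instead uses only the constant-factor covariance estimate and pushes the real work into a \emph{new} SoS program with pseudo-samples $z_i$, the centered-matrix SSV constraint, and a recentering/swap analysis, obtaining error $O(\alpha_{\mathrm{SSV}}\sqrt\eta)$ with $\alpha_{\mathrm{SSV}} \asymp \eta^{\delta/8}$. This lands on the same exponent $\eta^{1/2+\delta/8}$, so the two routes are quantitatively equivalent, but yours is a longer road and incurs more technical debt that you gesture at but do not resolve: the SSV certificate has to be encoded as an ``exists an SoS proof in $(w',v)$'' subconstraint on the $z_i$ (exactly as in the last line of system $\cA$ in Algorithm~\ref{alg:robust_cov_est}); feasibility of the integral solution needs the SSV bound for $\hat\Sigma^{-1/2}(x_i-\bar x)$ \emph{uniformly} over the data-dependent whitening matrix $\hat\Sigma$ (the paper handles this with the uniform-over-$\Pi$ stability statement in Fact~\ref{fact:stab_nearly_iso_sub_g}; you would need an analogous uniform SSV certificate); and the $\mathbf 1_T/\sqrt{|T|}$ recentering argument has to be executed inside the pseudo-expectation via $w_i' = 1 - w_iw_i^\star$ with the $2\eta$-sparsity budget, absorbing the self-referential $(\bar z - \bar{\tilde x})$ term. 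None of these are fatal, but they are exactly the bookkeeping the paper sidesteps by buying a sharper covariance estimate up front and then calling existing near-isotropic machinery. The observation worth keeping from your proof is that one \emph{can} trade covariance-estimation accuracy for SSV-constrained SoS filtering and land in the same place; the observation you missed is that after whitening you do not need SoS at all.
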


\new{The above result gives an efficient robust mean estimation for Gaussians 
with unknown identity-bounded covariance that achieves $\ell_2$-error 
$o(\sqrt{\eta})$ using sub-quadratic sample complexity, assuming $\eta$ is mildly 
sub-constant. Prior work~\cite{DiaKS19} gave an SQ lower bound suggesting that 
any efficient algorithm achieving error $o(\sqrt{\eta})$ requires 
quadratic sample complexity, assuming that $\eta = \Omega(1)$.}

\new{
\begin{remark}[Connections to Private Mean Estimation] \label{rem:dp}
\looseness=-1
Recent work gave efficient black-box reductions between robust estimation and differentially private estimation for various statistical tasks. 
Our low-degree lower bound for covariance aware mean estimation (Theorem~\ref{thm:low-degree-hardness-cov-aware-mean}) implies the first information-computation tradeoff for the approximate DP 
version of this task (see Proposition~\ref{prop:private-lower-bound}). 
In particular, this suggests 
that the recent efficient approximate-DP estimators~\cite{BroHS23, KDH23} 
cannot be improved to yield the information-theoretically 
optimal sample size in polynomial time.  
\end{remark}
}

\subsubsection{Subspace Distortion and Planted Sparse Vectors}
We turn to two applications involving random $d$-dimensional subspaces of $\R^n$.
In both the \emph{planted sparse vector problem} and the \emph{distortion certification} problem, the goal is to certify that a random subspace does not contain any sparse or approximately-sparse vectors.
These problems get harder as the number of nonzeros and subspace dimension increase.
Since the subspace distortion problem is strictly harder, we state our result in that context.

The subspace distortion problem is defined as follows: given a random $d$-dimensional subspace $V$ of $\R^n$, produce a certificate that it does not contain any vectors which are \emph{analytically sparse} in the sense of having very small $\ell_1$ norm compared to their $\ell_2$ norm.
Any sparse vector is analytically sparse, but the converse does not hold.
Formally, we define \emph{distortion} of a subspace as follows.

\begin{definition}[Distortion of a Subspace]
\label{def:distortion}
  The distortion $\Delta(V)$ of a subspace $V \subseteq \R^n$ is the maximum of the ratio $\sqrt{n} \|x\|_2 / \|x\|_1$ over all nonzero $x$ in $V$.
\end{definition}
Roughly speaking, a subspace has distortion less than $\Delta$ if it contains no $1/\Delta^2$-sparse vectors.
Random subspaces of dimension as large as $\Omega(n)$ have distortion $O(1)$ with high probability, since they contain no vectors which are (close to) $o(n)$-sparse, while $\R^n$ itself has distortion $\sqrt{n}$.
Subspaces with small distortion have wide-ranging applications: compressed sensing, error-correcting codes over the reals, and approximate nearest-neighbor search \cite{candes2005decoding,guruswami2008euclidean,kashin2007remark,donoho2006compressed,indyk2006stable}.

As the dimension of the subspace decreases, certifying bounds on the distortion becomes computationally easier.
The state of the art is captured by the following fact.

\begin{fact}[\cite{pmlr-v23-spielman12,barak2012hypercontractivity,barak2014rounding}]
    There is a polynomial time which takes a random $d$-dimensional subspace of $\R^n$ and with high probability outputs a certificate that the distortion is at most $O(\min(d/\sqrt{n}, d^{1/4}))$.
\end{fact}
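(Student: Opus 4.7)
The plan is to construct two separate certificates, one yielding the bound $O(d/\sqrt n)$ and one yielding the bound $O(d^{1/4})$, and output the minimum. Throughout, I represent $V$ as the column span of a matrix $A \in \R^{n\times d}$ with orthonormal columns, so every $x \in V$ has the form $x = Ay$ with $\|x\|_2 = \|y\|_2$; certifying $\Delta(V) \leq D$ then reduces to lower bounding $\|Ay\|_1$ over all unit $y$.

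For the $O(d^{1/4})$ certificate, I would reduce distortion to the $2\to 4$ operator norm via Cauchy--Schwarz: since $\|x\|_2^2 \leq \|x\|_1\,\|x\|_\infty \leq \|x\|_1\,\|x\|_4$, one has $\|x\|_1 \geq \|x\|_2^2 / \|x\|_4$ and therefore $\Delta(V) \leq \sqrt{n}\cdot \|A\|_{2\to 4}$. Writing $\|A\|_{2\to 4}^4 = \max_{\|y\|_2 = 1} \iprod{y^{\otimes 2},\, M_A\, y^{\otimes 2}}$ with $M_A = \sum_{i=1}^n (a_i a_i^\top)^{\otimes 2}$ a $d^2 \times d^2$ matrix built from the rows of $A$, the degree-$4$ SoS relaxation replaces the maximum over rank-one symmetric unit tensors by the maximum over arbitrary symmetric unit tensors, which equals the operator norm of $M_A$ on the symmetric subspace. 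This is efficiently computable, and a matrix-concentration analysis for random subspaces yields the desired bound, as in \cite{barak2012hypercontractivity,barak2014rounding}.

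For the $O(d/\sqrt n)$ certificate, I would instead use LP duality applied to the parametric program $\min_y \|Ay\|_1$ subject to a linear constraint fixing one direction of $y$. A feasible dual vector $w$ with $\|w\|_\infty \leq 1$ and $A^\top w$ proportional to the constraint direction certifies a lower bound on $\|Ay\|_1$ over that slice; patching together such dual witnesses across an appropriate net of directions, with the witnesses themselves constructed from pairwise inner products among the rows of $A$ in the spirit of \cite{pmlr-v23-spielman12} and analogous to \Cref{thm:cert-inner-products}, yields the $O(d/\sqrt n)$ bound with high probability on a random subspace.

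The main obstacle is the spectral analysis of $M_A$ on the symmetric subspace. Naively, $\|\E[M_A]\|_\op$ is dominated by a rank-one contribution along $\mathrm{vec}(I_d)$ that is too large. The key observation is that this direction corresponds exactly to the SoS constraint $\|y\|_2^2 = 1$, because $\iprod{y^{\otimes 2}, \mathrm{vec}(I_d)} = \|y\|_2^2$, so it does not inflate the relaxation value. After projecting $M_A$ onto the orthogonal complement of $\mathrm{vec}(I_d)$, matrix-Bernstein-type concentration then delivers the target operator-norm bound and completes the argument.
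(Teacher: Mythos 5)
Your proposal reverses the attributions and the quantitative chains do not deliver the stated bounds. In the paper's framing, both terms of the $\min$ come from $\eta$-SSV certificates combined with the well-spreadness-to-distortion reduction (\Cref{def:well_spreadness} and \cite[Lemma 2.11]{guruswami2010almost}): a certificate that the $\eta$-SSV of the row matrix is $o(1)$ shows $(\eta n, 1/2)$-well-spreadness and hence distortion $O(1/\sqrt{\eta})$. Taking $\eta \approx n/d^2$ with \Cref{thm:t-th-moment-cert} (the fourth moment / $2 \to 4$ certificate of \cite{barak2012hypercontractivity}) yields $O(d/\sqrt n)$, while taking $\eta \approx 1/\sqrt d$ with \Cref{thm:cert-inner-products} (the pairwise inner products / Gershgorin argument, in the spirit of \cite{pmlr-v23-spielman12}) yields $O(d^{1/4})$. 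You have these the other way around.

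Beyond the attribution swap, the quantitative reduction you propose for the $O(d^{1/4})$ term does not reach that bound. The chain $\|x\|_2^2 \leq \|x\|_1 \|x\|_\infty \leq \|x\|_1\|x\|_4$ gives $\Delta(V) \leq \sqrt n \max_x \|x\|_4/\|x\|_2$; writing $x = Gy$ with $G \in \R^{n\times d}$ Gaussian, $\|x\|_2^2 \approx n\|y\|^2$ and the Barak et al.\ certificate gives $\|x\|_4^4 \leq O(n+d^2)\|y\|^4$, so this chain yields $\Delta \leq O(n+d^2)^{1/4} = O(n^{1/4} + \sqrt d)$, which is $O(\sqrt d)$ in the regime $n \leq d^2$ — much worse than $d^{1/4}$. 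The sharper interpolation $\|x\|_2 \leq \|x\|_1^{1/3}\|x\|_4^{2/3}$, i.e.\ $\|x\|_1 \geq \|x\|_2^3/\|x\|_4^2$, gives $\Delta \leq \sqrt n \max \|x\|_4^2/\|x\|_2^2$ and, with the same certificate, the bound $O(1 + d/\sqrt n)$: this is exactly why the $2 \to 4$ route produces the $d/\sqrt n$ term, not $d^{1/4}$. Finally, the LP-duality sketch you give for the other term is too vague to constitute a certificate, and it is not how the $O(d^{1/4})$ bound is obtained in the cited works; that bound comes from a Gershgorin/row-sum bound on the Gram matrix $\{\iprod{G_i,G_j}\}_{i,j\in S}$ showing $\|\sum_{i\in S} G_iG_i^\top\| \lesssim d + |S|\sqrt d$, hence $o(n)$ whenever $|S| \ll n/\sqrt d$, which is again routed through well-spreadness.
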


The two terms in the $\min$ correspond to \Cref{thm:t-th-moment-cert} and \Cref{thm:cert-inner-products}, respectively.
We show the following result:

\begin{theorem}[This work; see~\cref{thm:distortion_full} for the full version]
\label{thm:distortion-us}
  For every $\eps > 0$, there is a polynomial-time algorithm which takes a random $d$-dimensional subspace of $\R^n$ and with high probability outputs a certificate that the distortion is at most $\tilde{O}(d^{1/2 + \eps} / n^{1/4})$.
  Furthermore, there is a quasipolynomial time algorithm which outputs with high probability a certificate that the distorition is at most $\tilde{O}(d^{1/2} / n^{1/4})$.
\end{theorem}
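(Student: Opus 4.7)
My plan is to reduce certification of the distortion of $V$ to certification of the maximum $\eta$-sparse singular value of an associated $d \times n$ Gaussian matrix, and then invoke \Cref{thm:main-intro}.

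First I would represent $V$ as the column span of a matrix $M \in \R^{n \times d}$ with i.i.d.\ standard Gaussian entries; by rotational invariance this samples from the Haar measure on $d$-dimensional subspaces. For any $x = Mv \in V$ and any $S \subseteq [n]$ with $|S| \leq \eta n$, one has $\|x_S\|_2 = \|M_S v\|_2 \leq \sigma_{\max}(M_S) \cdot \|v\|_2$, and the maximum of $\sigma_{\max}(M_S)$ over all such $S$ equals (up to the normalizing $\sqrt n$) the maximum $\eta$-sparse singular value of $M^\top \in \R^{d \times n}$. Applying \Cref{thm:main-intro} to $M^\top$ therefore yields, in polynomial time, an SoS certificate that $\|x_S\|_2 \leq \sigma(\eta) \cdot \|x\|_2$ for all $|S| \leq \eta n$, where $\sigma(\eta) = \tO\bigl(\eta^{1/4} + (\eta^2 d^2 / n)^{1/8}\bigr)$, valid whenever $n \gg \eta^2 \, d^{2+\eps}$.

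Next I would convert this SSV bound into a distortion bound. The analytic content is classical: taking $S$ to be the top $\eta n$ coordinates of $x$ by magnitude, the SSV bound forces the $(\eta n)$-th largest entry to be at most $\sigma\|x\|_2/\sqrt{\eta n}$ (since $\eta n$ times its square lies below $\|x_S\|_2^2 \leq \sigma^2 \|x\|_2^2$). Then the elementary inequality $\|y\|_1 \geq \|y\|_2^2/\|y\|_\infty$ applied to the tail $x_{S^c}$ yields $\|x\|_1 \geq (1-\sigma^2)\sqrt{\eta n}\|x\|_2/\sigma$, giving $\Delta(V) \leq O(\sigma/\sqrt{\eta})$ whenever $\sigma \leq 1/2$. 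I then optimize by taking $\eta = \Theta\bigl((n/d^{2+\eps})^{1/2}\bigr)$---the largest value for which the prerequisite $n \gg \eta^2 d^{2+\eps}$ of \Cref{thm:main-intro} is satisfied. A direct calculation shows $\sigma(\eta) = O(1)$ and $\sigma(\eta)/\sqrt\eta = O\bigl((d^{2+\eps}/n)^{1/4}\bigr) = \tO(d^{1/2+\eps/4}/n^{1/4})$, matching the polynomial-time claim after rescaling $\eps$. Setting $\eps = 1/\log d$ shrinks the $d^{\eps/4}$ factor to a constant, yielding the quasipolynomial-time bound $\tO(d^{1/2}/n^{1/4})$.

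The main obstacle is that the ordering-based argument above is not directly expressible in SoS. To formalize the reduction as an SoS proof---which is what \Cref{thm:distortion-us} requires---I will instead use the H\"older interpolation $\|y\|_2^3 \leq \|y\|_1 \, \|y\|_4^2$ (provable in constant-degree SoS) and translate the SSV certificate into a certificate that $\|Mv\|_4^4 \leq C \|v\|_2^4$ with $C = \tO(d/n^{3/2})$, via a dyadic level-set decomposition: partition the coordinates of $Mv$ by the dyadic scale of $|(Mv)_i|$ and apply the SSV bound at each sparsity level, using the trivial $\sigma \leq \|M\|_\op$ bound for the levels whose sparsity exceeds $(n/d^{2+\eps})^{1/2}$. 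Summing over $O(\log n)$ levels incurs only a polylogarithmic loss, absorbed into the $\tO$; the resulting certificate gives distortion at most $\sqrt{nC} = \tO(d^{1/2}/n^{1/4})$, as claimed. The most delicate step will be ensuring that each dyadic level set can be bounded uniformly using the SSV certificate without coupling issues between the level sets, which I plan to handle by an inductive peeling that removes the top level before applying the next SSV bound.
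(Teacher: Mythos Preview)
Your first approach is essentially the paper's proof. Represent the subspace as the column span of a Gaussian $X \in \R^{n\times d}$, use the SSV/resilience certificate from \Cref{thm:main-intro} (equivalently \Cref{thm:resilience_Gauss_full}) to show that $\sum_{i\in S}\iprod{X_i,v}^2 \le n/2$ for all $|S|\le \eta^* n$, observe that together with the elementary check $\|X^\top X - nI\|\le n/5$ this makes the subspace $(\eta^* n,\tfrac12)$-well-spread, and then invoke the classical fact (\cite[Lemma~2.11]{guruswami2010almost}) that a $(t,\epsilon)$-well-spread subspace has distortion at most $\sqrt{n}/(\sqrt{t}\,\epsilon^2)$, yielding $\tau = 4/\sqrt{\eta^*}$. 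Choosing $\eta^*$ as large as the resilience certificate allows gives the claimed bound.

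The gap in your write-up is a misconception about what the theorem requires. You correctly derive the well-spreadness-to-distortion step analytically, then abandon it because ``the ordering-based argument above is not directly expressible in SoS.'' But \Cref{thm:distortion-us} does \emph{not} ask for an SoS proof of the distortion bound; the algorithm merely outputs a number $\tau$ that is always a valid upper bound. Only the resilience inequality (the SSV part) is certified via SoS---its role is to make the search over subsets $S$ efficient. The conversion from that certificate to a distortion bound is ordinary mathematics in the \emph{analysis} of the algorithm, exactly as the paper does by citing \cite{guruswami2010almost}. So your first approach is complete once you (i) add the well-conditionedness check relating $\|x\|$ to $\sqrt{n}\,\|v\|$ (otherwise the SSV bound controls $\|x_S\|$ only in terms of $\|v\|$, not $\|x\|$), and (ii) drop the SoS worry.

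Consequently the whole second approach (H\"older plus a dyadic level-set SoS bound on $\|Mv\|_4^4$) is unnecessary. It is also doubtful as stated: partitioning coordinates into dyadic level sets depends on $v$, so the ``inductive peeling'' would have to be carried out for every $v$ simultaneously inside SoS, and it is unclear how to do this without re-introducing the sparse-singular-value problem at each level. You can simply delete that part.
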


\paragraph{Consequences for planted sparse vector.}
Since any $\eta$-sparse unit vector $v$ has $\|v\|_1 \leq \sqrt{n \cdot \eta}$, a subspace which contains an $\eta$-sparse vector has distortion at least $1/\sqrt{\eta}$.
Thus, for every $\eps > 0$ we in particular obtain in polynomial time a certificate that a $d$-dimensional random subspace contains no $\eta$-sparse vector as long as $d \ll (\sqrt{n} / \eta)^{1-\eps}$.
This \emph{planted sparse vector} problem was extensively studied in
\cite{pmlr-v23-spielman12,demanet2014scaling,barak2014rounding,hopkins2016fast,qu2014finding,mao2021optimal,guruswami2024certifying}.
Our algorithm improves on the state of the art from \cite{pmlr-v23-spielman12,barak2014rounding}.
Interestingly, \cite{mao2021optimal} already obtained a similar improvement for a \emph{recovery} variant of the problem, where the goal is to find an $\eta$-sparse vector hidden in an otherwise random subspace; their algorithm works works when $d \ll \sqrt{n} / \eta$.
However, their algorithm does not produce a certificate that a random subspace contains no sparse vector.

\paragraph{Discussion.}
The recent work \cite{guruswami2024certifying} gives a hierarchy of \emph{subexponential-time} algorithms which trade off running time and dimension while certifying constant distortion.
\Cref{thm:distortion-us} can be viewed as an alternative tradeoff, between distortion and dimension, while maintaining polynomial running time.
It is an interesting open problem to identify a full three-way tradeoff among dimension, distortion, and running time.

Per the low-degree lower bound of \cite{mao2021optimal}, the tradeoff we obtain among $d,n,$ and $\eta$ is conjecturally optimal for polynomial-time algorithms for planted sparse vector, up to our usual $d^{\eps}$ factor.\footnote{An important exception is if the sparse vector is promised to have entries in $\{ \pm 1, 0\}$, in which case a \new{careful} application of the LLL lattice basis-reduction algorithm can efficiently detect and recover the sparse vector in a much larger-dimensional subspace \cite{zadik2022lattice,diakonikolas2022non}.}
We refer to~\Cref{sec:distortion} for a more in-depth discussion.

\paragraph{Organization.}
In \Cref{sec:techniques} we give an overview of the proof of \Cref{thm:main-intro}.
\Cref{sec:prelims} contains preliminaries.
In \Cref{sec:sos_certificate_schatten_p} we prove \Cref{thm:main-intro} in the case that $M$ is Gaussian.
\Cref{sec:transfer-lemma} we extend the proof to the subgaussian case.
\Cref{sec:applications} contains all the downstream applications of \Cref{thm:main-intro}, and \Cref{sec:low-degree-lower-bounds} contains lower bounds.

\section{Techniques}
\label{sec:techniques}
We give an overview of the proof of Theorem~\ref{thm:main-intro}.
We ultimately prove Theorem~\ref{thm:main-intro} by constructing an SoS proof.
But, to avoid technical definitions related to SoS in this overview, we will present an outline of an algorithm which can certify the upper bound $O(\eta^{1/4} + \eta^{1/4} \cdot (d^2 / n)^{1/8} \cdot d^{\eps})$ on the $\eta$-\textsf{SSV} of a $d \times n$ Gaussian matrix by computing the eigenvalues of a family of associated random matrices.
Translating such an algorithm into a sum of squares proof is by this point standard \cite{barak2012hypercontractivity,ge2015decomposing,hopkins2015tensor,barak2016noisy,raghavendra2017strongly}.
Once we have a sum of squares proof for the Gaussian setting, to obtain the guarantees of \Cref{thm:main-intro} for every subgaussian distribution, we can use the Gaussian-to-subgaussian transfer technique of \cite{diakonikolas2024sos}.
At the end of this section we give a worked-out example of our techniques to which the reader can jump at any point.

So, let $X_1,\ldots,X_n \sim \cN(0,I_d)$ be the columns of $M$.
We slightly rephrase the $\eta$-\textsf{SSV} in terms of the operator norm of the matrix induced by a subset of the columns $M$.
By Cauchy-Schwarz,

\begin{align}
\label{eq:techniques-thing-to-bound}
\max_{u : \eta\text{-sparse unit}} \|Mu\|^2
= \max_{S \subseteq [n], |S| = \eta n} \Norm{ \frac 1 n \sum_{i \in S} X_i X_i^\top}_{\text{op}} 
= \max_{S \subseteq [n], |S| = \eta n} \max_{\|v\|=1} \frac 1 n \sum_{i \in S} \iprod{X_i,v}^2  \, .
\end{align}
Our goal is to find a family of matrices whose entries are functions of $X_1,\ldots,X_n$ and whose maximum eigenvalues can be combined to upper-bound the latter.
Our goal is to certify the bound of  $O(\sqrt{\eta} + \sqrt{\eta} \cdot (d^2/n)^{1/4} \cdot d^\eps)$ on \Cref{eq:techniques-thing-to-bound}  for arbitrarily small $\eps > 0$.

\paragraph{Prior state-of-the-art: fourth moments.}
The prior state-of-the-art certificate is captured by the following Cauchy-Schwarz inequality:
\[
\frac 1 n \sum_{i \in S} \iprod{X_i,v}^2 \leq \sqrt{\eta } \cdot \sqrt{ \frac 1 n \sum_{i \leq n} \iprod{X_i,v}^4} \, .
\]
By analyzing the eigenvalues of the random matrix $\sum_{i \leq n} (X_i \otimes X_i)(X_i \otimes X_i)^\top$ (after projecting off of a particular one-dimensional subspace), one can with high probability certify the bound $\max_{\|v\|=1} \frac 1 n \sum_{i \leq n} \iprod{X_i,v}^4 \leq O(1 + d^2/n)$, leading to \Cref{thm:t-th-moment-cert} \cite{barak2012hypercontractivity}.

\paragraph{Our approach: from operator norm to Schatten $p$ norm.}
We start with a different inequality to bound \Cref{eq:techniques-thing-to-bound}.
Fix an even integer $p \in \N$---ultimately, we will take $p = O(1/\eps)$, and our algorithm will run in time exponential in $p$.
Recall that the Schatten $p$ norm of a symmetric matrix $M$ is given by $\|M\|_p = (\trace M^p)^{1/p} \geq \|M\|_{\text{op}}$.
So,
\begin{align}
\label{eq:techniques-giant-inner-product}
\Cref{eq:techniques-thing-to-bound} & \leq \max_{|S| = \eta n} \Norm{ \sum_{i \in S} X_i X_i^\top }_p \notag \\ 
&= \max_{|S| = \eta n} \Paren{ \frac 1 {n^p} \sum_{i_1,\ldots,i_p \in S} \iprod{X_{i_1},X_{i_2}} \iprod{X_{i_2},X_{i_3}} \ldots \iprod{X_{i_{p-1}},X_{i_p}} \iprod{X_{i_p},X_{i_1}}}^{1/p} \, .
\end{align}
At this point, it is notationally convenient (and in-line with our \new{eventual} SoS proof approach) to view \Cref{eq:techniques-giant-inner-product} as maximizing a degree-$p$ polynomial in variables $w_1,\ldots,w_n$, representing the $0/1$ indicators $\mathbf{1}(i \in S)$.
That is, it would be enough to certify the inequality $\max_{w \in \{0,1\}^n, \sum w_i = \eta n} P(w) \leq O(\eta^{p/2}) + \eta^{p/2} \cdot (d^2 / n )^{p/4} \cdot d^{O(1)}$, where
\begin{align}
\label{eq:techniques-poly-expansion}
P(w) = \frac 1 {n^p} \sum_{i_1,\ldots,i_p \in [n]} \prod_{j \leq p} w_{i_j} \cdot \prod_{j \leq p} \iprod{X_{i_j},X_{i_{j+1}}} \, ,
\end{align}
and where we view the index $i_{p+1}$ as identified with the index $i_1$.

\paragraph{First attempt: matrix representation of $P$.}
So that the next steps do not appear to come out of nowhere, let us see why the most natural idea at this point fails.
The natural idea is that we could certify an upper-bound on $P$ using the maximum eigenvalue of the following matrix, indexed by $[n]^{p/2} \times [n]^{p/2}$:
\[
A[(i_1,\ldots,i_{p/2}),(i_{p/2+1},\ldots,i_p)] = \frac 1 {n^p} \cdot \prod_{j \leq p} \iprod{X_{i_j},X_{i_{j+1}}}  \, ,
\]
because 
\[
P(w) = (w^{\otimes p/2})^\top A w^{\otimes p/2} \leq \|A\|_{\text{op}} \cdot \|w\|^p = \|A\|_{\text{op}} \cdot (\eta n)^{p/2} \, .
\]
A straightforward \emph{lower bound} on $\|A\|_{\text{op}}$ is $\|A\|_F / \sqrt{\text{side length}} = \|A\|_F / n^{p/4}$, where $\|A\|_F$ is the Frobenius norm of $A$.
The most favorable situation for us would be if the order of magnitude of $A$'s maximum eigenvalue matches this lower bound, as would occur if $A$ were a random matrix with independent random entries.

Let us estimate the bound we would get in this optimistic scenario: a typical entry of $A$ is roughly $d^{p/2}/n^{p}$ in magnitude, so $\|A\|_F \approx d^{p/2}/n^{p/2}$, so we would be hoping for $\|A\|_{\text{op}}\approx d^{p/2} / n^{(3p)/4}$.
This would give us $P(w) \leq \eta^{p/2} (d^2 / n)^{p/4}$, slightly better than the bound we are hoping for!

Unfortunately, there is a clear obstacle to the hoped-for logic above: even though $A$ has side-length $n^{p/2}$, it actually has rank at most $d^{p/2}$, so its maximum eigenvalue has to be at least $\|A\|_F / d^{p/4}$---since $d < n$, this is substantially worse than the bound we wanted.
(One can see that the rank is at most $d^{p/2}$ by factoring $A$ as $BB^\top$ where $B$ has rows of the form $X_{i_1} \otimes X_{i_2} \otimes \ldots \otimes X_{i_{p/2}}$.)

\paragraph{Decomposing $P$, step 1: grouping by index-equality pattern.}
We will resuscitate the above approach by breaking $P$ into $p^{O(p)}$ terms.
This happens in two steps.
We first group terms based on the equality patterns among $i_1, \ldots, i_p$ and then apply a version of the Efron-Stein decomposition~\cite{efron1981jackknife} to each group.
The highest-order term within each group acts like we were hoping the matrix $A$ above would; the lower order terms we will be able to bound in a recursive fashion.
In particular, the decomposition will isolate low-rank components of the matrix, which cause too large spectral norms, and relates them to lower-dimensional instance of the problem (i.e., for values $p' < p$).

Given a partition $Q$ of $[p]$, say that $(i_1,\ldots,i_p) \in [n]^p$ \emph{respects} $Q$ if $i_j = i_k$ if and only if $j,k$ are in the same member of $Q$.
Let
\begin{align}
\label{eq:techniques-all-distinct}
P_Q(w) = \frac 1 {n^p} \sum_{\substack{i_1,\ldots,i_p \in [n] \\ \text{respects } Q}} \prod_{j \leq p} w_{i_j} \cdot \prod_{j \leq p} \iprod{X_{i_j},X_{i_{j+1}}} \;.
\end{align}
Then, since $P(w) = \sum_Q P_Q(w)$ and there are at most $p^{O(p)}$ distinct choices of $Q$, it will be enough to show that for each fixed $Q$ we can with high probability certify \new{the inequality} 
$P_Q(w) \leq O(\eta^{p/2}) + \eta^{p/2} \cdot (d^2 / n )^{p/4} \cdot d^{O(1)}$.

\paragraph{Decomposing $P$, step 2: Efron-Stein.}
For each partition $Q$, we further decompose the polynomial $P_Q$ in the style of the Efron-Stein decomposition \cite{efron1981jackknife}.
The important ideas in our proof are already captured by the case that $Q = \Set{\{1\},\{2\},\ldots,\{p\}}$, i.e., all of $i_1,\ldots,i_p$ are distinct, so we begin with this term.
In this case, there is one term in the Efron-Stein decomposition for each $S \subseteq [p]$.
We first define an auxiliary polynomial in $w$, given by
\[
P_Q^{\subseteq S}(w) = \frac 1 {n^p} \sum_{\substack{i_1,\ldots,i_p \in [n]\, \\ \text{all distinct}}} \prod_{j \leq p} w_{i_j} \cdot \E_{ \{X_{i_j} \}_{j \notin S}} \left [ \prod_{j \leq p} \iprod{X_{i_j},X_{i_{j+1}}} \right ] \;.
\]
That is, we take the expectation over those $X_{i_j}$ for which $j \notin S$.
Then, to do an inclusion-exclusion, we let $P_Q^{=S}(w) = \sum_{T \subseteq S} (-1)^{|S| - |T|} P_Q^{\subseteq S}(w)$.
Abusing notation and identifying $Q$ with $[p]$, this gives us
\[
P_Q(w) = P_Q^{=Q}(w) - \sum_{S \subsetneq [p]} (-1)^{|S| - |T|}  P_Q^{\subseteq S}(w) \, .
\]
We claim that $P_Q^{=[p]}(w)$ has a matrix representation that behaves like the optimistic heuristic we used above.
Further, note that for $S \subsetneq [p]$, $P_Q^{\sse S}$ \emph{almost} equals
\begin{align*}
    P_Q^{\sse S} \approx \frac {(\eta n)^{p-\card{S}}} {n^p} \sum_{\substack{i_1,\ldots,i_{\card{S}} \in [n]\, \\ \text{all distinct}}} \prod_{j \leq \card{S}} w_{i_j} \cdot \prod_{j \leq \card{S}} \iprod{X_{i_j},X_{i_{j+1}}} \,,
\end{align*}
which corresponds to a scaled version of Equation \cref{eq:techniques-all-distinct}, but containing only products of at most $\card{S}$ terms.
The almost equality can be turned into an exact one by slightly adjusting the scaling factor (see the ``squared inner product'' example a the end of this section for a similar argument), and we can bound this term by induction.
To show this equality, we crucially exploit the fact that $w_i^2 = w_i$.

When $Q \neq \Set{\{1\},\{2\},\ldots,\{p\}}$ is not the ``all-distinct'' partition, we proceed analogously, but in this case there is one term in the Efron-Stein decomposition for each  subset $S$ of the ``representatives'' of $Q$ (which we denote by $S \sse Q$).
The inductive form will still hold but requires more thought (see \cref{lem:main_induction} for full details).

\paragraph{A matrix for each $P_Q^{=Q}$.}
For every partition $Q$ (not necessarily the ``all-distinct'' one), we will find a matrix $A_{Q}$ whose entries are degree-$p$ polynomials in $X_1,\ldots,X_n$ such that for some $q_{Q} \leq p/2$, we have
\[
P_Q^{=Q}(w) = (w^{\otimes q_{Q}})^\top \cdot A_{Q} \cdot w^{\otimes q_{Q}} \, \, \text{ for every $w \in \{0,1\}^n$ with $\sum_{i\leq n} w_i = \eta n$} \,.
\]
Then, by computing operator norms $\|A_{Q}\|_{\text{op}}$, one obtains the certifiable bound, for any $X_1,\ldots,X_n$,
\begin{align}
\label{eq:techniques-matrix-expansion}
\max_{\substack{w \in \{0,1\}^n \\ \sum_{i \leq n} w_i = \eta n}} \sum_{Q} P_Q^{=Q}(w) \leq \sum_{Q} (\eta n)^{q_{Q}} \cdot \|A_{Q}\|_{\text{op}} \, .
\end{align}
The advantage over our ``first attempt'' is that we have used the fact that we only need a bound on $P(w)$ for those $w$ with $0/1$ entries and $\sum_{i \leq n} w_i = \eta n$ to allow us to bound $P_Q^{=Q}$ using eigenvalues of a larger family of random matrices---in particular, our random matrices now have dimension $n^{q} \times n^{q}$ for various choices of $q \leq p/2$.
Further, we show that these matrix representations of $P_Q^{=Q}$ are spectrally much more well-behaved than the na\"ive ones of $P_Q$.

Now the goal is to show that the right-hand side of \Cref{eq:techniques-matrix-expansion} is at most $O(\eta^{p/2}) + \eta^{p/2} \cdot (d^2 / n )^{p/4} \cdot d^{O(1)}$, with high probability over $X_1,\ldots,X_n$.
We use the \emph{graph matrix} approach to random matrices developed in \cite{AhnMP16} to obtain nearly-sharp bounds on the high-probability values of the operator norms $\|A_{Q}\|_{\text{op}}$, to show the bound we want on \Cref{eq:techniques-matrix-expansion}.
This step is the technical heart of our proof, and by far the most challenging: we use the structure of our decomposition of $P$, and its interplay with the graph matrix formalism, to argue a tradeoff between $q_{Q}$ and $\|A_{Q}\|_{\text{op}}$.
To do so, we develop a careful combinatorial accounting scheme to track the structure of dependencies among the entries of the matrices $A_{Q}$.

While the details are more complex, we offer the following intuition (and the example below) for why $P_Q^{=Q}$ should have a more tame spectral representation than $P_Q$ itself.
First, note that the na\"ive matrix representation of $P_Q$ is not even centered and the $P_Q^{\sse \emptyset}$ term in the Efron Stein decomposition removes exactly this expectation term.
Further, even after centering, there are still shared biases among the entries, causing a low-rank structure (for instance, if $\norm{x_1}$ is abnormally large, all entries containing $x_1$ tend to be slightly larger in magnitude).
A key observation is that this low-rank structure exactly corresponds to parts of the polynomial depending on strictly fewer than $p$ variables, with the exact rank scaling proportionally with the number of variables.
Thus, the Efron-Stein decomposition is grouping these low-rank terms together (and we eventually apply an induction to them).

\paragraph{Example: the Squared-Inner-Product Polynomial.}
To illustrate the ideas above, we present a simple example---the bounds here do not go directly into our main argument, but the technique we use here is a special case of the technique we develop for the Schatten $p$-norm polynomial described above.
Let $X_1,\ldots,X_n \sim \cN(0,I_d)$ be $d$-dimensional Gaussian vectors, with $n \gg d$.
Our goal will be to give an algorithm which can with high probability certify
\[
\max_{w \in \{0,1\}^n, \sum_{i \leq n} w_i = \eta n} \sum_{i, j \leq n} w_i w_j \cdot (\iprod{X_i,X_j}^2 - \E \iprod{X_i, X_j}^2) \leq \tilde{O}(\eta d n^{1.5}) \, .
\]
(We will see below why this is a reasonable bound to hope for.)

First, following the strategy of partitioning the indices $i,j$ into equality patterns, we can split into two terms depending on whether $i=j$ or $i \neq j$.
The $i \neq j$ term is more interesting; we will ignore the $i = j$ term.
So we can look at $\sum_{i \neq j} w_i w_j ( \iprod{X_i, X_j}^2 - \E \iprod{X_i,X_j}^2)$.

\subparagraph{Optimistic heuristic.}
The first thing we might try again is to write
\[
\sum_{i \neq j} w_i w_j (\iprod{X_i,X_j}^2 - \E \iprod{X_i, X_j}^2) = w^\top A w \leq \|w\|^2 \|A\|_{\text{op}} = \eta n \cdot \|A\|_{\text{op}} \, ,
\]
where $A \in \R^{n \times n}$ has entries $A_{ij} = (\iprod{X_i,X_j}^2 - \E \iprod{X_i, X_j}^2)$ for $i \neq j$ and $\text{diag}(A) = 0$.
Now, a typical entry of $A$ is $\approx \pm d$.
The entries are not independent, but if they were, we would expect that the maximum eigenvalue of $A$ is around $d \sqrt{n}$.
Thus, an optimistic bound would be of order $\eta n \cdot d \sqrt{n} = \eta d n^{1.5}$.

\subparagraph{Failure of the optimistic heuristic.}
However, the failure of independence of the entries of $A$ actually leads to an eigenvalue larger than $d \sqrt{n}$, meaning that the above argument is too optimistic.
To see that a too-large eigenvalue must exist, we can examine $\E {\bf 1}^\top A v$, where ${\bf 1}$ is the all-$1$'s vector, and $v_i = \sum_{k \leq d} (X_i(k)^2 - 1) = \|X_i\|^2 - d$.\footnote{This choice is inspired by our earlier heuristic: When $\norm{x_i}$ is large, all entries of the $i$-th row tend to be slightly larger.}
Using that $\E \iprod{X_i,X_j}^2 = d$, we gt
\[
\E {\bf 1}^\top A v = \sum_{i \neq j \leq n} \sum_{k \leq d} \E\Brac{ (\iprod{X_i,X_j}^2 - d) X_j(k)^2} \geq \sum_{i \neq j \leq n} \sum_{k \leq d} \E \Brac{X_i(k)^2 X_j(k)^4 - X_j(k)^2} = \Omega(n^2 d) \, .
\]
Since $\|{\bf 1}\| = \sqrt{n}, \|v\| \approx \sqrt{dn}$, this witnesses an eigenvalue of $A$ of magnitude $\Omega(n\sqrt{d}) \gg d \sqrt{n}$.

There is still hope to recover the optimistic bound, though, as long as the eigenvector(s) associated with this larger eigenvalue are not too aligned with the indicator of any set of $\eta n$ coordinates.
At this point, the standard approach is to expand the matrix $A$ in the basis of \emph{graph matrices}, which roughly corresponds to taking an expansion of each entry $\iprod{X_i,X_j}^2 - \E \iprod{X_i,X_j}^2$ in Hermite polynomials, to try and separate this ``bad'' eigenvector out from the rest.

This is what we will do, but when we apply these techniques to the Schatten-$p$ matrix, the difficulty is a combinatorial explosion of $p^{O(p)}$ different matrices.
The Efron-Stein decomposition gives us a way to group these matrices into like terms.
We illustrate the idea on $A$; for simplicity we avoid the graph matrix terminology and focus on Efron-Stein.

\subparagraph{Efron-Stein isolates the ``bad'' part of $A$.}

So let us apply Efron-Stein to the polynomial $p(w) = w^\top A w$.
We obtain four different polynomials.
\begin{align*}
p^{\sse \Set{i,j}}(w) & = \sum_{i \neq j} w_i w_j (\iprod{X_i,X_j}^2 - d) = p(w)\\
p^{\sse \Set{j}}(w) & = \sum_{i \neq j} w_i w_j \E_{X_i} (\iprod{X_i,X_j}^2 - d) = \sum_{i \neq j} w_i w_j (\|X_j\|^2 - d) \\
p^{\sse \Set{i}}(w) & = \sum_{i \neq j} w_i w_j \E_{X_j} (\iprod{X_i,X_j}^2 - d) = \sum_{i \neq j} w_i w_j (\|X_i\|^2 - d) \\
p^{\sse \emptyset}(w) & = \sum_{i \neq j} w_i w_j \E_{X_i,X_j} (\iprod{X_i,X_j}^2 - d) = 0 \, .
\end{align*}
So we can decompose $p(w) = w^\top A w$ as
\footnote{The polynomials $p^{\sse \Set{i}}(w)$ and $p^{\sse \Set{j}}(w)$ are identical in this case, but might not be for more general polynomial.}
\[
w^\top A w = (p(w) - p^{\sse \Set{i}}(w) - p^{\sse \Set{j}}(w)) + p^{\sse \Set{i}}(w) + p^{\sse \Set{j}}(w) \, .
\]
Observe that $p^{\sse \Set{i}}$ and $p^{\sse \Set{j}}$ exactly correspond to a quadratic form of the $w$ with the ``bad'' matrix ${\bf 1}v^\top$, so the first polynomial $p(w) - p^{\sse \Set{j}}(w) - p^{\sse \Set{i}}(w)$ is removing this ``bad'' piece of $w^\top A w$.
Indeed, it turns out that if we write down a matrix $A'$ such that $w^\top A' w = p(w) - p^{\sse \Set{j}} - p^{\sse \Set{i}}(w)$, then we will be able to show $\|A'\|_{\text{op}} \leq \tilde{O}(d \sqrt{n})$, recovering the optimistic bound from before.\footnote{This bound easily follows using the graph matrix approach.}

But what about $p^{\sse \Set{j}}(w)$ and $p^{\sse \Set{i}}(w)$?
We can certify bounds on them by factoring out $\sum_i w_i$ and $\sum_j w_j$, respectively.
That is, we can bound
\[
p^{\sse \Set{j}}(w) = \sum_{i \neq j} w_i w_j (\|X_j\|^2 - d) = \sum_{i,j} w_i w_j (\|X_j\|^2 - d) - \sum_i w_i^2 (\|X_i\|^2 - d) = \Paren{\sum_{i} w_i - 1} \cdot \iprod{w, v}
\]
and similarly for $p^{\sse \Set{i}}$.
The final expression is at most $\eta n \cdot \|w\| \cdot \|v\| \leq (\eta n)^{1.5} \cdot \|v\|$.
So, by computing $\|A'\|_{\text{op}}$ and $\|v\|$, we can in polynomial time certify the bound
\[
w^\top A w \leq \eta n \cdot \|A\|_{\text{op}} + (\eta n)^{1.5} \cdot \|v\| \approx \eta n^{1.5} d + \eta^2 n^{1.5} d 
\]
which, since $\eta \leq 1$, is exactly what we wanted.

\section{Preliminaries}
\label{sec:prelims}
\paragraph{Linear algebra background.} For a vector $x$,  $\|x\|$, unless specified,  corresponds to the Euclidean norm.
For a matrix $H$ and even $p$,
the $p$-Schatten norm of $H$ is denoted by $\|H\|_p := \trace( (H^\top H)^{p/2})^{1/p}$, which also equals $(\sum_i \lambda_i ^p)^{1/p}$ where $\lambda_i$'s are the singular values of the matrix $H$.
The operator norm is denoted by $\|H\|_\op$ and corresponds to $\|H\|_\infty$.
We use $I_d$ to denote the $d\times d$ identity matrix, omitting the subscript when it is clear from context. 

\paragraph{Sub-gaussian distributions.}
We say that a distribution $D$ over $\R^d$ with mean $\mu$ is $s$-subgaussian, if for all $k \in \N$ and every unit vector $v \in \R^d$ it holds that $\E_{x \sim D} \iprod{x-\mu,v}^k \leq (C s \sqrt{k})^k$ for a constant $C$.
Similarly, a distribution $D$ over $\R^d$ with mean $\mu$ and covariance $\Sigma$ is termed $s$-\emph{hypercontractive}-subgaussian, if for all $k \in \N$ and every unit vector $v \in \R^d$ it holds that $\E_{x \sim D} \iprod{x-\mu,v}^k \leq (C s \sqrt{k} \sqrt{v^\top \Sigma v})^k$ for a constant $C$; observe that $s\gtrsim 1$ for this to hold.
We refer the reader to \cite{Vershynin18} for further details.

\subsection{Sum-of-Squares Proofs and Pseudo-Expectations}

In this section, we will introduce sum-of-squares proofs and pseudo-expectations.
We will introduce what is necessary for this work, we refer to~\cite{barak2014sum,barak2016proofs} for more background.
We say a polynomial $h$ is a \emph{sum of squares}, if it can be written as a sum of squared polynomials.
All polynomials are over the real numbers in this work.

\paragraph{Sum-of-squares proofs.}

Consider a system of $m \in \N$ polynomial inequalities in $N$ formal variables $X = (X_1, \ldots, X_N)$:
\[
    \cA = \Set{q_1(X) \geq 0, \ldots, q_m(X) \geq 0} \,.
\]
Let $p$ be another polynomial in $X$.
We say that there is a \emph{sum-of-squares proof} (SoS proof) that $\cA$ implies that $p(X) \geq 0$, if we can write $p(X) = \sum_{S \sse [m]} b_S(X) \prod_{i \in S} q_i(X)$, where each $b_S$ is a polynomial that is a sum of squares.
We say that the sum-of-squares proof has degree $t$, if each summand has degree at most $t$.
We denote this by $\cA \proves_t^{X} p \geq 0$.
Whenever the set of variables is clear from context, we may omit the superscript.
We write $\cA \proves_t^{X} p \geq p'$ if $\cA \proves_t^{X} p - p' \geq 0$.
We frequently use (most often implicitly) that sum-of-squares proof can naturally be composed, that is, if $\cA \proves_t^{X} p_1 \geq p_2$ and $\cA \proves_{t'}^{X} p_2 \geq p_3$ then it also holds that $\cA \proves_{\max\Set{t,t'}}^{X} p_1 \geq p_3$.

\paragraph{Pseudo-expectations.}

Pseudo-expectations are the convex duals of sum-of-squares proofs.
We give the following definition.
\begin{definition}%
\label{def:pE}
For every even $d \in \N$, a \emph{degree-$d$ pseudo-expectation} $\pE$ is a linear operator on the space of degree-$d$ polynomials such that $\pE 1 = 1$ and $\pE p^2(X) \geq 0$ for all polynomials $p$ of degree at most $d/2$.
\end{definition}
Given a system of polynomial inequalities $\cA = \Set{q_1 \geq 0, \ldots, q_m \geq 0}$, we say that a pseudo-expectation $\pE$ \emph{satisfies} $\cA$ if for all $S \sse [m]$ and all polynomials $h$ that are sum of squares and such that $\deg(h \cdot \prod_{i \in S} q_i) \leq d$ it holds that $\pE h \cdot \prod_{i \in S} q_i \geq 0$.
We say that $\pE$ \emph{approximately satisfies} $\cA$ if under the same conditions on h and $S$,
it instead holds that $\pE h \cdot \prod_{i \in S} q_i \geq -2^{-N^{\Omega(d)}} \norm{h}_2 \prod_{i \in S} \norm{q_i}_2$, where the norms are the norms of the coefficient vectors of the polynomials (in the monomial basis).
The normalization and positivity constraint are satisfied exactly also in the approximate case.

Pseudo-expectations and sum-of-squares proofs interact in the following way (see \Cref{fact:sos-duality,fact:strong_duality_alternate} for further connections).
\begin{fact}
\label{fact:pE_and_sos_proofs}
Let $\pE$ be a degree-$d$ pseudo-expectation that satisfies $\cA$ and suppose that $\cA \proves_t p \geq 0$.
Let $h$ be an arbitrary sum of squares polynomial such that $\deg(h) + t \leq d$, then it holds that $\pE h \cdot p \geq 0$. 
In particular, it holds that $\pE p \geq 0$ if $t \leq d$.
If $\pE$ only approximately satisfies $\cA$ it holds that $\pE p \geq - 2^{-N^{\Omega(d)}} \norm{h}_2$.\footnote{Strictly speaking, we need the additional condition that the bit-complexity of the sum-of-squares proof is suitably bounded. This will be the case in all proofs we consider and we will not mention this explicitly.
See ~\cite{o2017sos,raghavendra2017bit} for a more in-depth discussion.}
\end{fact}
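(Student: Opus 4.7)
The plan is to directly unfold the definition of a degree-$t$ SoS proof and compose with the positivity conditions in the definition of pseudo-expectation. By $\cA \proves_t p \geq 0$ there exist sum-of-squares polynomials $b_S$, one for each $S \sse [m]$, such that
\[
p = \sum_{S \sse [m]} b_S \prod_{i \in S} q_i, \qquad \deg\Paren{b_S \cdot \prod_{i \in S} q_i} \leq t.
\]
Multiplying through by $h$ gives $h \cdot p = \sum_S (h \cdot b_S) \prod_{i \in S} q_i$. Since a product of two SoS polynomials is SoS, each $h \cdot b_S$ is itself a sum of squares, and the degree of each summand is at most $\deg(h) + t \leq d$.

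Next I would apply linearity of $\pE$ together with the hypothesis that $\pE$ satisfies $\cA$. By the definition of ``satisfies'', each term $\pE\brac{(h \cdot b_S) \prod_{i \in S} q_i}$ is nonnegative, so summing over $S$ gives $\pE\brac{h \cdot p} \geq 0$, which is the first claim. The ``in particular'' part follows by taking $h \equiv 1$, which is trivially SoS of degree $0$, so that the degree hypothesis becomes $t \leq d$.

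For the approximate version I would rerun the same decomposition, but replace each per-term inequality with its error-bearing version, so that each summand contributes at least $-2^{-N^{\Omega(d)}} \norm{h \cdot b_S}_2 \prod_{i \in S} \norm{q_i}_2$. The bookkeeping step is then to bound $\norm{h \cdot b_S}_2 \leq \poly(N) \cdot \norm{h}_2 \cdot \norm{b_S}_2$ (valid because polynomials of degree at most $d$ live in a space of dimension $N^{O(d)}$), sum over the at most $2^m$ subsets $S$, and observe that both blow-ups are absorbed into the super-polynomially small factor $2^{-N^{\Omega(d)}}$, using the bounded bit-complexity assumption cited in the footnote. This leaves a residual error of the stated form $-2^{-N^{\Omega(d)}} \norm{h}_2$.

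The only real obstacle is this final polynomial-in-$N$ bookkeeping for the approximate case; the core argument in the exact case is essentially a one-line computation, plugging the SoS certificate of $p$ into $\pE$ and invoking linearity plus term-by-term nonnegativity. No tools beyond the two definitions of ``SoS proof'' and ``pseudo-expectation satisfies $\cA$'' are required.
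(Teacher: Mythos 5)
Your argument is correct. The paper states this fact without giving a proof (it is a standard consequence of the two definitions, deferred to the cited references), so there is no paper proof to compare against; your unwinding of the SoS certificate for $p$, multiplying through by $h$, observing that each $h\cdot b_S$ is again a sum of squares of the right degree, and applying linearity together with the defining nonnegativity of $\pE$ on such products is exactly the intended argument. One small imprecision in the approximate case: the multiplicative blow-up in $\norm{h\cdot b_S}_2 \leq C\,\norm{h}_2\norm{b_S}_2$ is $C = N^{O(d)}$ (the dimension of the relevant space of monomials), not merely $\poly(N)$ as written; this does not affect the conclusion, since $N^{O(d)}$ and the $2^m$ factor from summing over $S$ are both absorbed into $2^{-N^{\Omega(d)}}$ by adjusting the constant hidden in $\Omega(\cdot)$, under the bit-complexity hypothesis in the footnote.
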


As long as the bit-complexity of the constraints in $\cA$ is polynomially bounded (in $m+N$), we can efficiently find pseudo-expectations approximately satisfying $\cA$.
We call a system $\cA$ feasible, if there exists an instantiation of the formal variables $X$ that satisfies all constraints.
\begin{fact}[\cite{parrilo2000structured,lasserre2001new}]
\label{fact:pE_efficient_optimization}
Let $\cA = \Set{q_1 \geq 0, \ldots, q_m \geq 0}$  over formal variables $X = (X_1, \ldots, X_N)$ such that the bit-complexity of all $q_i$ is at most $(m+N)^{O(1)}$.
Then, we can find a degree-$d$ pseudo-distribution that approximately satisfies $\cA$ in time $(m+N)^{O (d)}$.\footnote{Technically, we also need that $\cA$ contains a constraint of the form $\norm{X}_2^2 \leq B$ for some large $B$, e.g., $B = (m+N)^{\Omega(1)}$. In all our applications we can add this without changing the proofs and will not mention it explicitly.}
\end{fact}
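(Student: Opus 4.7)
The plan is to cast the search for a degree-$d$ pseudo-expectation as a semidefinite program of size $(m+N)^{O(d)}$ and invoke the ellipsoid method. A degree-$d$ pseudo-expectation is a linear functional on polynomials of degree $\leq d$, so it is uniquely specified by its values $y_\alpha := \pE X^\alpha$ on monomials $X^\alpha$ of total degree $\abs{\alpha} \leq d$. This gives $D := \binom{N+d}{d} = (m+N)^{O(d)}$ real unknowns, and the normalization $\pE 1 = 1$ becomes the single linear equation $y_0 = 1$.

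Next, I would translate each positivity condition from \Cref{def:pE} into a linear matrix inequality in $y$. The baseline condition $\pE p^2 \geq 0$ for every $p$ of degree $\leq d/2$ is equivalent to the moment matrix $M(y)$, indexed by monomials of degree $\leq d/2$ with $M(y)_{\alpha,\beta} = y_{\alpha+\beta}$, being PSD. Similarly, for each subset $S \sse [m]$ such that $\prod_{i \in S} q_i$ together with a degree-bounded SOS multiplier $h$ fits inside degree $d$, the constraint $\pE h \cdot \prod_{i \in S} q_i \geq 0$ is equivalent to PSD-ness of a localizing matrix whose entries are linear combinations of the $y_\alpha$ with coefficients read off from $\prod_{i \in S} q_i$. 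The bit-complexity hypothesis on the $q_i$ ensures these coefficients have polynomial bit-complexity, and there are at most $(m+N)^{O(d)}$ relevant subsets $S$, each producing a localizing matrix of size at most $D \times D$. In total this is an SDP with polynomially many variables and PSD blocks of total size polynomial in $D$.

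Finally, I would approximately solve this SDP to additive precision $2^{-N^{\Omega(d)}}$ via the ellipsoid method. The ingredients are standard: (i) the auxiliary constraint $\norm{X}_2^2 \leq B$ from the footnote supplies an a priori bounding ball, since positivity forces $\abs{y_\alpha} \leq B^{\abs{\alpha}/2}$; (ii) a separation oracle for each PSD constraint is furnished by computing a minimum eigenvector of the current candidate matrix; (iii) the polynomial-bit-complexity assumption makes all constraint coefficients well behaved. Standard ellipsoid-method analysis then returns a vector $y$ satisfying each PSD constraint up to additive operator-norm slack $2^{-N^{\Omega(d)}}$ in time $(m+N)^{O(d)} \cdot \polylog(1/\text{precision}) = (m+N)^{O(d)}$. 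The main step requiring care, and the only real obstacle, is propagating this operator-norm slack into the stated bound $\pE h \cdot \prod_{i \in S} q_i \geq -2^{-N^{\Omega(d)}} \norm{h}_2 \prod_{i \in S} \norm{q_i}_2$ claimed by the fact: one expresses $\pE h \cdot \prod_{i \in S} q_i$ as a quadratic form $v^\top M_S v$ on the appropriate localizing matrix $M_S$, where $\norm{v}$ is controlled by the $\ell_2$-norms of the coefficient vectors of $h$ and of $\prod_{i \in S} q_i$, and then applies Cauchy--Schwarz. This last calculation is routine but is where all the quantitative claims of the fact are cashed out.
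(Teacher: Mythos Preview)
The paper does not prove this fact; it is stated as a cited result from \cite{parrilo2000structured,lasserre2001new} and used as a black box. Your sketch is the standard reduction to semidefinite programming that underlies those references, and is correct in outline, so there is nothing to compare against.
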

In all of our applications the additional slack factor of $2^{-N^{\Omega(d)}}$ times the norm of the coefficient vectors introduced by approximately satisfying pseudo-expectations will be negligible.
We thus assume instead that we can find a pseudo-expectation that \emph{exactly} satisfies $\cA$ (in the same time).

We will also use the following closely-related fact
\begin{fact}[\cite{barak2014sum}]
    \label{fact:check_sos_proof}
    Let $\e \geq 2^{-n^{100}}$.
    Let $p$ be a polynomial and $\cA$ be as above.
    We can decide in time $n^{O(t)}$ whether $\cA \proves_{O(t)}^X p(X) \geq -\e$.%
    \footnote{Again, formally we require that there exists an SoS proof of polynomial bit complexity. We will also pretend $\e = 0$ throughout this paper, all error terms introduced by this will be negligible.}
\end{fact}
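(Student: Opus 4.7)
The plan is to reduce the decision problem to the feasibility of a semidefinite program of size $n^{O(t)}$ and then invoke standard efficient SDP solvers. By definition, $\cA \proves_{O(t)}^X p(X) \geq -\e$ iff there exist polynomials $b_S$ for $S \subseteq [m]$, each a sum of squares, such that $p(X) + \e = \sum_{S \subseteq [m]} b_S(X) \prod_{i \in S} q_i(X)$ and each summand has degree at most $O(t)$. I would first parameterize each candidate SoS polynomial $b_S$ via the Gram matrix representation $b_S(X) = v_S(X)^\top Q_S v_S(X)$, where $v_S(X)$ is the vector of monomials in $X$ of degree at most $(O(t) - \deg \prod_{i \in S} q_i)/2$ and $Q_S \succeq 0$. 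The dimension of each $Q_S$ is at most $n^{O(t)}$, and there are at most $2^m$ choices of $S$; since $m \leq \poly(n)$ appears only through constraints of polynomial bit-complexity (and one can restrict $S$ to those tuples with $\prod_{i \in S} q_i$ of degree at most $O(t)$), the total number of SDP variables remains $n^{O(t)}$.

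Next, I would translate the polynomial identity $p(X) + \e = \sum_S v_S(X)^\top Q_S v_S(X) \cdot \prod_{i \in S} q_i(X)$ into linear constraints by matching coefficients of each monomial of degree at most $O(t)$ on both sides. Together with the PSD constraints $Q_S \succeq 0$, this gives an SDP whose feasibility is exactly equivalent to the existence of an SoS proof of the desired degree. Because the coefficients of $p$, the $q_i$, and $\e$ have bit-complexity at most $\poly(n, \log(1/\e)) \leq n^{O(1)}$, the resulting SDP has bit-complexity $n^{O(t)}$.

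Then I would invoke a standard efficient SDP solver (the ellipsoid method applied to the dual, or an interior-point method as in Grötschel--Lovász--Schrijver, analogous to \Cref{fact:pE_efficient_optimization}) to decide feasibility up to additive error much smaller than $\e$ in time polynomial in the instance size, i.e., $n^{O(t)}$. If the SDP is feasible, we output ``yes''; otherwise we output ``no''. Note that this is exactly dual to the construction of pseudo-expectations: by SoS/pseudo-expectation duality, infeasibility of the SDP is certified by a pseudo-expectation $\pE$ satisfying $\cA$ with $\pE p < -\e$.

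The main obstacle, as usual for algorithmic SoS, is numerical precision: SDP solvers return only approximate solutions, and one must ensure that whenever a true SoS proof exists, one can also be extracted at the prescribed degree with polynomially bounded bit-complexity. This is exactly the assumption on $\e \geq 2^{-n^{100}}$ in the statement, which leaves enough slack to absorb the inverse-exponential errors inherent to ellipsoid-based SDP solvers, and it is the point where one appeals to the bit-complexity analyses of~\cite{o2017sos,raghavendra2017bit} referenced earlier in the excerpt. Provided one is willing to treat these issues as in the rest of the paper, the argument above yields the claimed $n^{O(t)}$ decision procedure.
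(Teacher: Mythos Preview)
The paper does not prove this statement: it is stated as a Fact with a citation to \cite{barak2014sum} and used as a black box. Your sketch is essentially the standard argument behind that fact (reduce to an SDP of size $n^{O(t)}$ via Gram-matrix parameterization of the SoS multipliers, impose coefficient-matching as linear constraints, and appeal to ellipsoid/interior-point solvers plus the bit-complexity analyses of \cite{o2017sos,raghavendra2017bit}), so it is in line with what the citation points to. One small point worth tightening: you write that there are at most $2^m$ subsets $S$, which would be too many if $m$ is polynomial in $n$; the right accounting is that in a degree-$O(t)$ proof only products $\prod_{i\in S} q_i$ of total degree $\leq O(t)$ can appear, so the number of relevant $S$ is $m^{O(t)} = n^{O(t)}$, which keeps the SDP at the claimed size.
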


We will need the following version of the Cauchy-Schwarz inequality for pseudo-expectations.
\begin{fact}
\label{fact:pE_cauchy_schwarz}
Let $\pE$ a degree-$d$ pseudo-distribution and $p$ be a polynomial of degree at most $d/2$.
Then, it holds that $(\pE p(X ))^2 \leq \pE [p^2(X)]$.
\end{fact}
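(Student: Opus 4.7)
The plan is to mimic the standard proof of the Cauchy--Schwarz inequality for honest expectations, with the only subtlety being to verify that every polynomial we feed into $\pE$ has the right degree. The key input from \Cref{def:pE} is that $\pE$ is linear, that $\pE[1] = 1$, and that $\pE[q^2] \geq 0$ for any polynomial $q$ of degree at most $d/2$.

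I would proceed as follows. First, for an arbitrary scalar $\lambda \in \R$, consider the polynomial $q_\lambda(X) = p(X) - \lambda$. Since $p$ has degree at most $d/2$, so does $q_\lambda$, and hence $q_\lambda^2$ has degree at most $d$. By the positivity axiom of a degree-$d$ pseudo-expectation we then have $\pE[q_\lambda^2] \geq 0$. Expanding and using linearity together with $\pE[1] = 1$ yields
\begin{equation*}
0 \;\leq\; \pE\bigl[(p(X)-\lambda)^2\bigr] \;=\; \pE[p^2(X)] \;-\; 2\lambda \,\pE[p(X)] \;+\; \lambda^2.
\end{equation*}

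Thus the quadratic $\lambda \mapsto \lambda^2 - 2\pE[p]\,\lambda + \pE[p^2]$ is nonnegative for every real $\lambda$. Its discriminant must therefore be nonpositive, i.e.\ $4(\pE[p])^2 - 4\,\pE[p^2] \leq 0$, which is exactly the claim $(\pE[p(X)])^2 \leq \pE[p^2(X)]$. Equivalently, and perhaps more transparently, one can specialize to $\lambda = \pE[p(X)]$ directly, which gives $\pE[p^2] - (\pE[p])^2 \geq 0$ in a single line.

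There is really no obstacle to speak of; the only thing worth checking is the degree bookkeeping, namely that $(p - \lambda)^2$ has degree at most $d$ so that the positivity axiom of a degree-$d$ pseudo-expectation applies. The hypothesis $\deg(p) \leq d/2$ is exactly what makes this go through. If one wanted to derive the statement in the approximately-feasible setting of the second half of \Cref{def:pE}, the same argument yields $(\pE[p])^2 \leq \pE[p^2] + 2^{-N^{\Omega(d)}}\,\|p\|_2^2$, with the error absorbing into the usual slack; for the clean statement of the fact, exact feasibility as stated suffices.
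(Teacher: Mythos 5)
Your proof is correct and is the standard argument; the paper states this as a Fact without giving a proof. Your degree bookkeeping is exactly right (for $\deg p\le d/2$, $(p-\lambda)^2$ has degree $\le d$ so the positivity axiom applies), and either the discriminant argument or the direct choice $\lambda = \pE[p]$ closes it.
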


We shall also use the following duality between SoS proofs and Pseudoexpectations.
\begin{fact}[Duality Between SoS Proofs and Pseudoexpectations~\cite{JosHen16-archimedian}]
\label{fact:sos-duality}
We say that a system of polynomial inequalities $\cA$ over formal variables $X=(X_1,\dots,X_N)$ is Archimedean if for some real $B > 0$ it contains the inequality $\sum_i X_i^2 \leq B$.
For every Archimedian system $\cA$ and every polynomial $p$ and every degree $t$, exactly one of the following holds.
\begin{enumerate}[leftmargin=*]
\item For every $\eps > 0$, there is an SoS proof of  $\cA
\proves_{m}^X
p(X) + \eps \geq 0$.
\item There is a degree-$m$ pseudoexpectation satisfying $\cA$ but $\pE [p(X)] < 0$.
\end{enumerate}
\end{fact}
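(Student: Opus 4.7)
The plan is to prove this as a finite-dimensional convex (indeed semidefinite) duality statement. Fix the degree $m$ and let $V = \R[X]_{\leq m}$ be the vector space of polynomials in $X_1,\dots,X_N$ of degree at most $m$. Let $C_m \subseteq V$ be the convex cone of polynomials admitting a degree-$m$ SoS proof from $\cA$: concretely, $q \in C_m$ iff $q = \sum_S h_S \prod_{i \in S} q_i$ with each $h_S$ SoS and each summand of degree at most $m$. Membership in $C_m$ is semidefinite-representable, and the dual cone $C_m^* \subseteq V^*$, consisting of linear functionals that are nonnegative on $C_m$, is, by unwinding the definitions, exactly the set of (unnormalized) pseudo-expectations satisfying $\cA$; a genuine pseudo-expectation is one that in addition satisfies $L(1) = 1$. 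The first technical step is to verify that $C_m$ is \emph{closed} in $V$, and this is where the Archimedean hypothesis is essential: the constraint $B - \sum_i X_i^2 \geq 0$ yields a uniform a priori bound on the size of the Gram matrices that can appear in any degree-$m$ SoS decomposition, and a standard compactness argument then rules out limit points of $C_m$ outside $C_m$.

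With closedness in hand, the dichotomy follows from Hahn--Banach separation. The easy direction is immediate: if a pseudo-expectation $\pE$ as in (2) exists and simultaneously an SoS identity $p + \eps = \sum_S h_S \prod_{i \in S} q_i$ exists for every $\eps > 0$, then applying $\pE$ to both sides gives $\pE(p) \geq -\eps$; letting $\eps \to 0^+$ contradicts $\pE(p) < 0$. For the converse, suppose option (1) fails: there exists $\eps_0 > 0$ with $p + \eps_0 \notin C_m$. Since $C_m$ is a closed convex cone in the finite-dimensional $V$, Hahn--Banach produces $L \in V^*$ with $L \geq 0$ on all of $C_m$ and $L(p + \eps_0) < 0$. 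By the dual-cone identification, this $L$ is an unnormalized pseudo-expectation satisfying $\cA$.

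The final step is normalization to $L(1) = 1$. One needs $L(1) > 0$; the Archimedean hypothesis again plays the decisive role by ruling out the degenerate case $L(1) = 0$: if $L(1) = 0$, then combining $L(B - \sum_i X_i^2) \geq 0$ with the positivity of $L$ on squares and the pseudo-expectation Cauchy--Schwarz inequality (\Cref{fact:pE_cauchy_schwarz}) forces $L$ to vanish on all of $V$, which contradicts $L(p + \eps_0) < 0$. Rescaling by $L(1)$ then produces the desired genuine pseudo-expectation $\pE$ with $\pE(p) < 0$. The main obstacle in the whole argument is exactly this pair of issues --- closedness of $C_m$ and non-degeneracy of the separating functional --- both of which rest non-trivially on the Archimedean bound. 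This is also the reason that option (1) carries the ``for every $\eps > 0$'' slack rather than asserting an exact SoS proof at $\eps = 0$: the closure $\overline{C_m}$ is the correct object from the perspective of convex duality, and Archimedeanness is precisely what ensures $C_m = \overline{C_m}$ so that the duality is clean.
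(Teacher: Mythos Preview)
The paper does not give its own proof of this fact: it is stated with a citation to Josz--Henrion and then used as a black box. So there is nothing in the paper to compare against directly; your write-up is a from-scratch argument.

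As a sketch of the standard finite-dimensional conic duality argument, your outline is broadly correct --- the incompatibility of (1) and (2) is exactly as you say, and the normalization step (ruling out $L(1)=0$ via the ball constraint and the PSDness of the moment matrix) is the right idea, though carrying it out fully requires an inductive argument over degrees rather than a single application of \Cref{fact:pE_cauchy_schwarz}. The main soft spot is your treatment of closedness. You assert that the ball constraint gives a uniform a priori bound on the Gram matrices appearing in a degree-$m$ SoS decomposition and hence that $C_m$ is closed, calling this ``a standard compactness argument''; this is the non-obvious step and deserves an actual argument, not a label. In fact the cited reference does \emph{not} proceed by proving closedness of the SoS cone; it instead shows that the moment relaxation has a strictly feasible point (Slater's condition) when the ball constraint is present, and then invokes SDP strong duality. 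Your final paragraph also reads oddly: if $C_m = \overline{C_m}$ as you claim, then the $\eps$-slack in (1) is equivalent to $p \in C_m$ and the slack would be cosmetic, whereas your sentence suggests the slack is essential \emph{because} of the closure issue. A cleaner way to organize the hard direction is to show that the Archimedean constraint forces $1$ to lie in the interior of $C_m$; then $p \in \overline{C_m}$ (which is what the failure of (2) gives via biduality) plus $1 \in \mathrm{int}(C_m)$ immediately yields $p + \eps \cdot 1 \in C_m$ for every $\eps > 0$, with no separate closedness argument needed.
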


We also need the following strong duality fact that does not impose Archimedean property over the system of polynomial inequalities, it is used in~\cref{sec:transfer-lemma}.
In the following, the formal variables are $(w_1,\dots,w_n, v_1,\dots,v_d)$ and $\mathcal{B}$ denotes the system of polynomial inequalities. 
\begin{fact}[Strong Duality for Our Constraint System]
    \label{fact:strong_duality_alternate}
    Let $n,t \in \N, \eta \in [0,1]$ be such that $\eta n$ is an integer.
    Let $\cB = \{\forall i \in [n] \colon w_i^2 = w_i \,, \sum_{i=1}^n w_i = \eta n\}$ be a system of polynomial (in)equalities over $w$ and $v$; observe that there are no constraints on the variable $v$.
    Then for every $t\in \N$ and every polynomial $q$ in variables $w$ and $v$, exactly one of the following two statements is true:
    \begin{enumerate}[label=(\roman*)]
        \item For every $\e > 0$, there exists an SoS proof that $\cB \proves_t^{w,v} q(w, v) + \e \geq 0$.
        \item There is a degree-$t$ pseudo-expectation $\pE$ satisfying $\cB$ such that $\pE q(w, v) < 0$.
    \end{enumerate}
\end{fact}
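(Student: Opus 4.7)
The plan is to apply a Hahn--Banach separation argument in the finite-dimensional vector space $V$ of polynomials in $(w,v)$ of degree at most $t$, handling the key obstacle that $\cB$ fails to be Archimedean (the variable $v$ is unconstrained) by exploiting the trivial feasibility of $\cB$. The easy direction, that (i) precludes (ii), is immediate from \cref{fact:pE_and_sos_proofs}: any degree-$t$ pseudo-expectation $\pE$ satisfying $\cB$ has $\pE[q+\e] \geq 0$ for every $\e > 0$, hence $\pE[q]\geq 0$ in the limit.

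For the nontrivial direction, assume (i) fails, so there is some $\e_0 > 0$ with no degree-$t$ SoS proof of $q + \e_0 \geq 0$ from $\cB$. Let $C \sse V$ denote the convex cone of polynomials admitting such a proof; concretely, $C$ is the Minkowski sum of the closed cone of degree-$\leq t$ sums of squares with the linear subspace generated by the ideal $\langle w_i^2 - w_i,\ \sum_i w_i - \eta n\rangle$ truncated to degree $\leq t$, and a standard Gram-matrix argument shows that $C$ is closed in $V$. Since $q + \e_0 \notin C$, the Hahn--Banach theorem in finite dimensions yields a linear functional $L: V \to \R$ with $L|_C \geq 0$ and $L(q + \e_0) < 0$, which combined with $1 = 1^2 \in C$ gives $L(q) < -\e_0 L(1) \leq 0$. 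By construction $L$ satisfies $L(f^2) \geq 0$ for every $f$ of degree $\leq t/2$, and it annihilates the ideal (since each generator and its negation are available as multipliers of arbitrary polynomials), so $L$ satisfies every condition of a pseudo-expectation satisfying $\cB$ \emph{except} possibly the normalization $\pE[1] = 1$.

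If $L(1) > 0$, we simply rescale to $\pE := L/L(1)$ and are done. The principal non-Archimedean subtlety is the case $L(1) = 0$, which we resolve by perturbing $L$ with a Dirac pseudo-expectation at a trivially feasible point: fix $w^* \in \{0,1\}^n$ with $\sum_i w^*_i = \eta n$ (e.g., the indicator of $\{1, \ldots, \eta n\}$) and $v^* = 0 \in \R^d$, and set $\pE_0[h] := h(w^*, v^*)$. Then $\pE_0$ is a valid degree-$t$ pseudo-expectation that exactly satisfies $\cB$, so for every $\lambda > 0$ the functional $L + \lambda \pE_0$ remains non-negative on $C$, satisfies $(L + \lambda \pE_0)(1) = \lambda > 0$, and evaluates to $L(q) + \lambda \pE_0[q]$ on $q$; since $L(q) < 0$, this value is strictly negative for all sufficiently small $\lambda > 0$. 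Normalizing produces a pseudo-expectation satisfying $\cB$ with $\pE[q] < 0$, establishing (ii).

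The principal obstacle is precisely the non-Archimedean nature of $\cB$, which blocks a direct appeal to \cref{fact:sos-duality}. Two places this matters are the closedness of $C$ in $V$ (which we obtain from the finite-dimensional truncation together with the equality structure of the constraints, so that the ideal contribution is a full linear subspace rather than merely a cone) and the normalization of the separating functional (which we handle by exploiting the concrete feasibility of $\cB$ at an explicit point, independent of the $v$-coordinate). Both steps go through cleanly for this specific system; the second step is the genuinely new ingredient compared with \cref{fact:sos-duality}.
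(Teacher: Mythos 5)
Your proof takes a genuinely different route from the paper's, and it contains a real gap at a pivotal step. The paper does not attempt to show that the cone of provably-nonnegative polynomials is closed; instead it verifies Slater's condition for the dual (pseudo-expectation) SDP by exhibiting a strictly feasible point, namely the uniform distribution over feasible Boolean $w$ tensored with an independent Gaussian $v$, then invokes the characterization of the relative interior of the moment cone (Marshall/Laurent) together with a Hermite decomposition to handle the unconstrained $v$-variables. Strict dual feasibility gives primal attainment, which yields the dichotomy without ever addressing closedness.

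The gap in your argument is the claim that ``a standard Gram-matrix argument shows that $C$ is closed.'' The Gram-matrix argument shows closedness of the pure SoS cone $\Sigma_t$: along a convergent sequence of SoS polynomials of bounded degree, the leading coefficients control the traces of the Gram matrices, so one can pass to a convergent subsequence. But your $C$ is the Minkowski sum $\Sigma_t + I_t$ of this closed cone with a linear subspace, and such sums are \emph{not} automatically closed in finite dimensions. Equivalently, closedness of $C$ is closedness of the image of $\Sigma_t$ under the quotient map $V \to V/I_t$, and images of closed cones under linear maps can fail to be closed (this is precisely why spectrahedral shadows need not be closed). This is not a formality: if $q+\e_0 \in \overline{C}\setminus C$, Hahn--Banach cannot strictly separate $q+\e_0$ from $C$, and your construction of $L$ collapses. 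Proving closedness for this specific $C$ — finite Boolean variety in $w$ times the unconstrained $\R^d$ in $v$ — is a substantive task; it is not a corollary of the Gram bound and would need an ideal-theoretic argument exploiting the product structure of the variety, roughly paralleling the work the paper does via Hermite decomposition. Your perturbation trick for the degenerate normalization $L(1)=0$ (adding $\lambda\pE_0$ for a Dirac at a feasible point) is correct and a nice observation, but it repairs a different, downstream issue and does not touch the closedness problem. If you want to pursue the Hahn--Banach route, you must either supply a real proof that $\bar\Sigma_t$ is closed in $V/I_t$, or fall back to separating from $\overline{C}$ and then argue, using the structure of $\cB$, that $q+\e \in \overline{C}$ for all $\e>0$ already implies $q+\e \in C$ for all $\e>0$ — which is again essentially the content of the Slater-based argument the paper gives.
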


\begin{proof}
    The fact is implied by the following statement
    \[
        S \coloneqq \sup \Set{C \mid \cB \proves_t^{w,v} q(w, v)\geq C} = \inf \Set{\pE q \mid \pE \text{ of degree-$t$, satisfying $\cB$}} \eqqcolon I \,.
    \]
    Indeed, suppose this statement holds.
    If (i) is true then $S \geq 0$ and hence $I \geq 0$, implying that (ii) cannot hold.
    On the other hand, if (ii) holds, then $I < 0$ and hence $S < 0$ so (i) cannot hold (since 1.\ would imply $S \geq 0$).

    The optimization problems implicit in this statement are dual semi-definite programs and we always have $S \leq I$  by weak duality (see, e.g., \cite{laurent2009sums}).
    We will show that in fact strong duality holds, implying $S = I$.
    Note that both programs are feasible, it is thus sufficient to exhibit a strictly feasible point for the program on the right-hand side, i.e., find a small ball in the relative interior of the feasible region.\footnote{For a domain $\mathcal{X}$, the relative interior of  $\mathcal{X}$ is defined to be the set of all points $x \in \cX$ such that the intersection of a small enough ball around $x$ and the affine hull of $\cX$ belongs to $\cX$.}

    We represent pseudo-expectations by their moment-matrices.
    The condition that they satisfy $w_i^2 = w_i$ and $\sum_{i=1}^n w_i = \eta n$ at degree-$t$ correspond to various affine constraints on these matrices (they also need to be positive semi-definite and the entry corresponding to the constant 1 polynomial needs to be 1).
    For a moment-matrix $M$ and any vector $\vec{p}$, $\vec{p}^\top M \vec{p} = \pE p^2$ for the polynomial $p$ with coefficient vector $\vec{p}$.
    Note that the condition that $M$ satisfies our constraints imply that there exists several vectors $\vec{p}$ such that $\vec{p}^\top M \vec{p} = 0$, i.e., the matrix $M$ might have 0 eigenvalues.
    In particular, this holds for all polynomials $p$, such that $p(w,v) = \sum_{i=1}^n (w_i^2 - w_i) \lambda_i(w,v) + (\sum_{i=1}^n w_i - \eta n) \lambda_{n+1}(w_i,v)$ for polynomials $\lambda_1, \ldots, \lambda_{n+1}$ such that  all summands have degree at most $\ell$.
    It can be checked that a matrix is in the relative interior, if and only if all eigenvectors corresponding to zero eigenvalues are of this form \cite[Theorem 6.5]{laurent2009sums} (this result first appeared in \cite{marshall2003optimization}).

    We claim that the pseudo-expectation, denoted by $\pE_{w,v}$, which chooses $w_1, \ldots, w_n$ such that $\sum_{i=1}^n w_i = \eta n$ uniformly at random and $v \sim \cN(0,I_d)$ independent of $w$, is strictly feasible.
    
    We denote by $\pE_{w,v}$ the pseudo-expectation described above and by $\pE_{w}$ its restriction to only the $w$ variables.
    We first argue that $\pE_{w}$ is a strictly feasible solution for the setting when the vectors/polynomials that only depend on the $w$ variables.
    Indeed, this follows directly from \cite[Lemma 6.1]{raghavendra2017bit}.
    Their notion of $d$-completeness is exactly the eigenvalue property described above.

    Next, consider a polynomial $p$ in variables both $w$ and $v$ such that $\pE_{w,v} p^2(w,v) = 0$.
    Consider the Hermite decomposition of $p$, where $\alpha$ indicates a multi-index over $[d]$:
    \[
        p(w,v) = \sum_{\card{\alpha} \leq t} \hat{p}_\alpha(w) h_\alpha(v) \,,
    \]
    where $\hat{p}_\alpha(w) = \E_{z \sim \cN(0,I_d)} p(w,z) h_\alpha(z)$.
    Note that since $p$ has degree at most $d$, $\hat{p}$ has degree at most $t - \card{\alpha}$ by orthogonality of Hermite polynomials.
    Using orthogonality of the Hermite polynomials again, it follows that
    \[
        0 = \pE_{w,v} p^2(w,v) = \sum_{\card{\alpha} \leq t} \pE_{w} \hat{p}^2_\alpha(w) = 0\,.
    \]
    Thus, for all $\alpha$, we must have that $\pE_{w} \hat{p}^2_\alpha(w_i) = 0$.
    Leveraging the previously established fact for $\pE_w$,
    there must exist $\lambda_1^{(\alpha)}, \ldots, \lambda_n^{(\alpha)}, \lambda_{n+1}^{(\alpha)}$ such that
    \[
      \hat{p}_\alpha(w) = \sum_{i=1}^n (w_i^2 - w_i) \lambda^{(\alpha)}_i(w) +  \Paren{\sum_{i=1}^n w_i - \eta n} \lambda^{(\alpha)}_{n+1}(w) 
    \]
    and all summands have degree at most $t - \card{\alpha}$.
    It follows that
    \[
        p(w,v) = \sum_{i=1}^n (w_i^2 - w_i) \cdot \Paren{\sum_{\card{\alpha} \leq t} h_\alpha(v) \lambda_i(w)^{(\alpha)}} +  \Paren{\sum_{i=1}^n w_i - \eta n} \cdot \Paren{ \sum_{\card{\alpha} \leq t} h_\alpha(v)\lambda_{n+1}^{(\alpha)}(w) }
    \]
    and all summands have degree at most $t$.
\end{proof}

\subsection{Moments and Polynomials}

\paragraph{Wick's Theorem.}
Let $g \sim \cN(0,I_d)$ be a $d$-dimensional standard Gaussian.
\begin{theorem}[Wick's theorem]
\label{thm:wick_moment}
  Let $u_1,\ldots,u_n \in \R^d$ with $n$ even. Let $\cM_n$ be the set of matchings on $[n]$.
  Then
  \[
  \E \prod_{i \leq n} \iprod{g,u_i} = \sum_{M \in \cM_n} \prod_{\{i,j\} \in M} \iprod{u_i,u_j} \, .
  \]
\end{theorem}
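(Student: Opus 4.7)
I would prove Wick's theorem by induction on $n$, using Gaussian integration by parts (Stein's lemma) to peel off one factor at a time. The base case $n=2$ is immediate: $\E \iprod{g,u_1}\iprod{g,u_2} = u_1^\top \E[gg^\top] u_2 = \iprod{u_1,u_2}$, and the right-hand side is the same, since $\cM_2$ has a unique matching.

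For the inductive step, I would expand $\iprod{g,u_1} = \sum_{\ell=1}^d u_1(\ell)\, g_\ell$ and apply Stein's lemma, which asserts that $\E[g_\ell f(g)] = \E[\partial_\ell f(g)]$ for every polynomial $f$. Taking $f(g) = \prod_{i=2}^n \iprod{g,u_i}$, the partial derivative satisfies $\partial_\ell f(g) = \sum_{j=2}^n u_j(\ell) \prod_{i \in \{2,\ldots,n\}\setminus\{j\}} \iprod{g,u_i}$. Summing against $u_1(\ell)$ and collapsing the $\ell$-sum via $\sum_\ell u_1(\ell) u_j(\ell) = \iprod{u_1,u_j}$ yields the recursion
\[
\E \prod_{i=1}^n \iprod{g,u_i} = \sum_{j=2}^n \iprod{u_1,u_j}\, \E \prod_{i \in [n]\setminus\{1,j\}} \iprod{g,u_i}.
\]
Applying the inductive hypothesis to each $(n-2)$-fold expectation on the right gives
\[
\E \prod_{i=1}^n \iprod{g,u_i} = \sum_{j=2}^n \iprod{u_1,u_j} \sum_{M' \in \cM_{[n]\setminus\{1,j\}}} \prod_{\{i,k\}\in M'} \iprod{u_i,u_k}.
\]
The final combinatorial step is to observe that every perfect matching $M \in \cM_n$ is uniquely determined by the partner $j$ of vertex $1$ together with a perfect matching $M'$ of the remaining $n-2$ elements, so the double sum above is exactly $\sum_{M \in \cM_n} \prod_{\{i,k\}\in M} \iprod{u_i,u_k}$.

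The main (and only mild) obstacle is justifying the use of Stein's lemma cleanly; since $f$ is a polynomial in the entries of $g$ with Gaussian tails, integration by parts applies without technicalities. As a sanity check one can also verify the identity via multilinear/symmetric reduction to the case $u_i = e_{k_i}$, where both sides reduce to counting perfect matchings of $[n]$ compatible with the index assignment $i \mapsto k_i$ (on the left by independence of coordinates of $g$ and the identity $\E g_\ell^{2m} = (2m-1)!!$, on the right by definition). I would state the Stein-based proof as the primary argument since it requires no notational overhead beyond what is already used.
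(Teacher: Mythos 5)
The paper states Wick's theorem without proof, treating it as a classical fact, so there is no in-paper argument to compare against. Your proposed proof is correct and is the standard derivation: the base case $n=2$ is right, Stein's lemma (Gaussian integration by parts) gives exactly the stated recursion $\E \prod_{i=1}^n \iprod{g,u_i} = \sum_{j\geq 2}\iprod{u_1,u_j}\,\E\prod_{i\neq 1,j}\iprod{g,u_i}$, and the bijection between matchings of $[n]$ and pairs (partner of $1$, matching of the remaining $n-2$ elements) closes the induction. The side remark about justifying Stein's lemma for polynomials is also handled correctly — polynomial growth against a Gaussian density poses no integrability issue. In short, a complete and clean proof of a fact the paper only cites.
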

Wick's theorem admits the following useful corollary:

\begin{corollary}
    \label{cor:wicks-with-norm}
    Let $u_1,\ldots,u_n \in \R^d$ with $n$ even. Let $\cM_n$ be the set of matchings on $[n]$.
    Let $k$ be even.
    Then
    \[
    \E \|g\|^k \cdot \prod_{i \leq n} \iprod{g,u_i} = O(d + k)^{k/2} \cdot \sum_{M \in \cM_n} \prod_{\{i,j\} \in M} \iprod{u_i,u_j} \, .
    \]
\end{corollary}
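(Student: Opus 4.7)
The plan is to reduce the corollary to \Cref{thm:wick_moment} by separating the radial and angular components of $g$. Writing $g = r\theta$ with $r = \|g\|$ and $\theta = g/\|g\|$, it is standard that $r$ and $\theta$ are independent, $\theta$ is uniformly distributed on $S^{d-1}$, and $r^2 \sim \chi^2_d$. Since $\prod_{i \leq n} \iprod{g, u_i} = r^n \prod_i \iprod{\theta, u_i}$ is homogeneous of degree $n$ in $g$, the independence yields
\[
\E \|g\|^k \prod_i \iprod{g, u_i} = \E r^{k+n} \cdot \E \prod_i \iprod{\theta, u_i} \, .
\]

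My next step would be to eliminate the angular expectation in favor of the Wick matching sum. Applying the same factorization in the special case $k=0$ gives $\E \prod_i \iprod{g, u_i} = \E r^n \cdot \E \prod_i \iprod{\theta, u_i}$, which together with \Cref{thm:wick_moment} lets me solve for the angular expectation,
\[
\E \prod_i \iprod{\theta, u_i} = \frac{1}{\E r^n} \sum_{M \in \cM_n} \prod_{\{i,j\} \in M} \iprod{u_i, u_j} \, .
\]
Substituting this into the previous display collapses the left-hand side of the corollary into an \emph{exact} identity: it equals $(\E r^{k+n}/\E r^n)$ times the Wick matching sum $\sum_{M \in \cM_n} \prod_{\{i,j\} \in M} \iprod{u_i, u_j}$.

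The final step is the numerical estimate of this chi-squared moment ratio. Using the closed form $\E r^{2m} = \prod_{j=0}^{m-1}(d + 2j)$, the ratio telescopes to $\prod_{j=n/2}^{(k+n)/2-1}(d + 2j)$, a product of $k/2$ factors, each bounded by $d + n + k$. This gives $\E r^{k+n}/\E r^n \leq (d + n + k)^{k/2}$, which is $O(d + k)^{k/2}$ in the regime $n \lesssim d + k$ that is relevant for the applications (otherwise an analogous bound with an $n$-dependent constant holds).

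There is no substantial technical obstacle here; the proof is essentially a one-line reduction once one spots the polar decomposition. The only point worth flagging is that a direct approach via expanding $\|g\|^k = \sum_{j_1, \ldots, j_{k/2} \in [d]} \prod_{\ell} g_{j_\ell}^2$ and applying Wick's Theorem to the enlarged product of $n + k$ linear factors would require delicate combinatorial bookkeeping to collect the contributions of different matching types back into the target matching sum on $[n]$; the polar decomposition sidesteps this entirely by exploiting the rotational symmetry of the Gaussian.
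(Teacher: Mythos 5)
Your proof is correct and follows essentially the same route as the paper: both exploit the independence of $\|g\|$ and $g/\|g\|$ together with the formula $\E\|g\|^{2\ell} = d(d+2)\cdots(d+2\ell-2)$ to reduce to Wick's theorem, arriving at the exact identity that the left-hand side equals $(\E\|g\|^{n+k}/\E\|g\|^n)$ times the Wick matching sum. You are also right to flag that the telescoped ratio $(d+n)(d+n+2)\cdots(d+n+k-2)$ is $O(d+k)^{k/2}$ only when $n = O(d+k)$ — the paper states the bound without this caveat, but the caveat is satisfied in the single place the corollary is invoked (within the proof of \cref{lem:main_induction}, where $n + k = \deg(u) \le 2p \le 2d$).
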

\begin{proof}
We will exploit that the norm of $g$ is independent of $\tfrac g {\norm{g}}$.
In particular, it holds that
$$
    \frac{\E \norm{g}^n}{\E \norm{g}^{n+k}} \cdot \E \|g\|^k \cdot \prod_{i \leq n} \iprod{g,u_i} = \frac{\E \norm{g}^n}{\E \norm{g}^{n+k}} \cdot \E \|g\|^{n+k} \cdot \E \prod_{i \leq n} \iprod{\tfrac g {\norm{g}},u_i} = \E \prod_{i \leq n} \iprod{g,u_i} \,.
$$
Using Wick's theorem and that $\E \norm{g}^{2l} = d \cdot (d+2) \cdot \ldots \cdot (d+2l - 2)$ it follows by rearranging that
\begin{align*}
\E \|g\|^k \cdot \prod_{i \leq n} \iprod{g,u_i} &= \frac{\E \norm{g}^{n+k}}{\E \norm{g}^{n}} \cdot  \sum_{M \in \cM_n} \prod_{\{i,j\} \in M} \iprod{u_i,u_j} \\
&= \frac{d \cdot (d+2) \cdot \ldots \cdot (d+n+k - 2)}{d \cdot (d+2) \cdot \ldots \cdot (d+n - 2)} \cdot  \sum_{M \in \cM_n} \prod_{\{i,j\} \in M} \iprod{u_i,u_j} \\
&= O(d+k)^{k/2} \cdot \sum_{M \in \cM_n} \prod_{\{i,j\} \in M} \iprod{u_i,u_j} \,.
\end{align*}
\end{proof}

\paragraph{Hermite Polynomials.} Hermite polynomials form a complete orthogonal basis of the vector space $L^2(\R,\cN(0,1))$ of all functions $f:\R \to \R$ such that $\E_{X\sim \cN(0,1)}[f^2(X)]< \infty$. The \emph{probabilist's} Hermite polynomials $H_{e_k}$ for $k\in \N$ satisfy $\int_\R H_{e_k}(x) H_{e_m}(x) e^{-x^2/2} dx/\sqrt{2\pi} = k!   \mathbf{1}(k=m)$. 
We will use the \emph{normalized probabilist's} Hermite polynomials, $h_k(x) := H_{e_k}(x)/\sqrt{k!}$, $k\in \N$ for which $\int_\R h_k(x) h_{m}(x) e^{-x^2/2} dx/\sqrt{2\pi} = \mathbf{1}(k=m)$.

\begin{fact}[Properties of Hermite Polynomials] 
\label{fact:hermite-poly}
We have the following properties:
\begin{enumerate}[label=(H.\arabic*),ref=\thefact\,(H.\arabic*)]
    \item \label[subfact]{item:her-unit-variance} For any $x$, $\E_{X \sim \cN(x, 1)} [h_j(X)] = x^i/\sqrt{i!}$.
    \item \label[subfact]{item:her-ornstein} For $\sigma^2 \leq 1$, let $\rho^2 := 1 - \sigma^2$.
    Then $\E_{X \sim \cN(x, \sigma^2)} [h_j(X)]
    =\rho^i h_j(x/\rho)$.
    \item \label[subfact]{item:her-near-zero} There exists a constant $c$, such that for any $|x|^2 \leq 2i$, $h_i^2(x) \leq c e^{x^2/2} / \sqrt{i}$. In particular, for all $|x| \leq 1$, $h_i^2(x) \leq 3C/\sqrt{i}$~\cite[Theorem 1 (i)]{BonCla90}.
    \item     \label[subfact]{fact:hermite_monomial_expansion}
    Let $\ell \geq 1$, then
    \[
        x^\ell = \sum_{m=0}^{\floor{\frac{\ell}{2}}} \frac{\ell!}{2^m m! \sqrt{(\ell-m)!}} h_{\ell - 2m}(x) \,.
    \]
\end{enumerate}    
\end{fact}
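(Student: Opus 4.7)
The plan is to derive all four parts from the basic generating function identity for the normalized probabilist's Hermite polynomials,
\[
e^{xt - t^2/2} \;=\; \sum_{n \geq 0} \frac{h_n(x)}{\sqrt{n!}} \, t^n,
\]
which follows from the standard generating function for $H_{e_n}$ via the normalization $h_n = H_{e_n}/\sqrt{n!}$. Parts (H.1), (H.2), and (H.4) then each reduce to a routine coefficient-matching computation, while (H.3) can be quoted directly from Bonami-Claerbout; none of the steps poses a substantive obstacle.

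For (H.1), I would take $X \sim \cN(x, 1)$, apply $\E$ to both sides of the generating function identity, and use the Gaussian MGF to evaluate the left-hand side as $\E[e^{Xt - t^2/2}] = e^{xt + t^2/2 - t^2/2} = e^{xt} = \sum_n x^n t^n / n!$. Matching the coefficient of $t^n$ on both sides yields $\E[h_n(X)] = x^n/\sqrt{n!}$. For (H.2), the analogous computation with $X \sim \cN(x, \sigma^2)$ gives $\E[e^{Xt - t^2/2}] = e^{xt - \rho^2 t^2/2}$ with $\rho^2 = 1 - \sigma^2$. Substituting $s = \rho t$ turns this into $e^{(x/\rho) s - s^2/2}$, whose coefficient of $s^n$ is $h_n(x/\rho)/\sqrt{n!}$; unwinding the substitution, the coefficient of $t^n$ is $\rho^n h_n(x/\rho)/\sqrt{n!}$, which matches $\E[h_n(X)]/\sqrt{n!}$.

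For (H.4), I would multiply the generating function identity by $e^{t^2/2} = \sum_{k \geq 0} t^{2k}/(2^k k!)$ to obtain
\[
e^{xt} \;=\; \Bigl(\sum_{k \geq 0} \frac{t^{2k}}{2^k k!}\Bigr) \cdot \Bigl(\sum_{n \geq 0} \frac{h_n(x)}{\sqrt{n!}}\, t^n\Bigr),
\]
and then equate the coefficient of $t^\ell$ on both sides using the substitution $n = \ell - 2k$ with $k$ ranging over $0,\ldots,\lfloor \ell/2 \rfloor$; multiplying through by $\ell!$ produces the claimed monomial expansion.

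For (H.3), I would simply cite \cite[Theorem 1(i)]{BonCla90} for the first inequality; the ``in particular'' clause then follows because $|x| \leq 1$ implies $|x|^2 \leq 1 \leq 2i$ for every $i \geq 1$, and $e^{x^2/2} \leq e^{1/2} < 3$, so the first bound specializes to $h_i^2(x) \leq 3c/\sqrt{i}$. The only mildly tedious part of the whole argument is carefully tracking factorials and the $\sqrt{n!}$ normalization in the coefficient comparisons for (H.2) and (H.4), but this is entirely mechanical; there is no genuine conceptual obstacle.
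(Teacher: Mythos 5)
Your generating-function derivation is the standard argument for all four parts, and the paper does not supply its own proof of this Fact (it is stated as known, with a citation only for item (H.3)), so there is no alternative route to compare against. Parts (H.1), (H.2), and (H.3) are correct as you have them; in (H.1) and (H.2) the paper's statement uses both $j$ and $i$ for the same index, which you implicitly and correctly resolve.

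For (H.4), however, you assert without writing out the final coefficient that the computation ``produces the claimed monomial expansion.'' If you actually carry out the coefficient match, you get
\[
x^\ell \;=\; \sum_{m=0}^{\lfloor \ell/2 \rfloor} \frac{\ell!}{2^m\, m!\, \sqrt{(\ell-2m)!}}\; h_{\ell-2m}(x),
\]
i.e.\ the factorial under the square root is $(\ell-2m)!$, whereas the paper writes $\sqrt{(\ell-m)!}$. The paper's version is a typo: the normalization $h_j = H_{e_j}/\sqrt{j!}$ forces the square-root argument to match the subscript $\ell-2m$. (The two later uses of this Fact in the paper only need $|c_r| \le \ell!$ and the fact that only indices of the form $\ell-2m$ appear, both of which hold for either version, so the typo is harmless downstream.) You should not claim your derivation reproduces the statement verbatim; rather, it reproduces the corrected statement. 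Otherwise the argument is sound.
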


\paragraph{Moment matching.} We shall use the following consequence of the duality between moments and non-negative polynomials:
\begin{fact}[{\cite[Lemma 3.148]{BlePT12}}]
 \label{fact:moment-matching}
 There    exists a distribution $F$ over $[-B,B]$ with the first three moments equal to $x_1,x_2,x_3$ if and only if the following matrices are PSD\footnote{The result in \cite[Lemma 3.4.18]{BlePT12} is stated for $B=1$ but the general case follows by appropriate normalization.}:
\begin{align}
    \begin{bmatrix}
        1 + \frac{x_1}{B} & \frac{x_1}{B}+ \frac{x_2}{B^2} \\
        \frac{x_1}{B}+ \frac{x_2}{B^2} &  \frac{x_2}{B^2} + \frac{x_3}{B^3} 
    \end{bmatrix} \,\, \text{and}\,\,
    \begin{bmatrix}
        1 - \frac{x_1}{B} & \frac{x_1}{B}- \frac{x_2}{B^2} \\
        \frac{x_1}{B}- \frac{x_2}{B^2} &  \frac{x_2}{B^2} - \frac{x_3}{B^3} 
    \end{bmatrix} \,.
\end{align}
Moreover, if such a distribution $F$ exists, then one can assume that $F$ is supported on a set of 4 elements~\cite[Lemma 4.5]{DiaKPPS21}. 
\end{fact}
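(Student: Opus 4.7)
The plan is to prove the two directions of the equivalence separately, after the convenient rescaling $Y = X/B \in [-1,1]$ and $m_k = x_k/B^k$ (with $m_0 = 1$). Under this normalization, the two matrices become the moment matrices of $(1, Y)^\top (1, Y)$ weighted by the non-negative functions $(1+Y)$ and $(1-Y)$, respectively, and the problem reduces to characterizing the first three moments of a probability distribution on $[-1,1]$.

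The forward direction is a routine calculation. Since $1 \pm Y \geq 0$ on $[-1,1]$, the polynomials $(1 \pm Y)(c_0 + c_1 Y)^2$ are non-negative on the support of $F$ for every $(c_0, c_1) \in \R^2$. Expanding in the monomial basis and taking expectations, I would directly verify the identity
\[
\E_{Y \sim F/B}\left[(1+Y)(c_0 + c_1 Y)^2\right] = \begin{pmatrix} c_0 & c_1 \end{pmatrix} M_1 \begin{pmatrix} c_0 \\ c_1 \end{pmatrix},
\]
and analogously for $M_2$ with $(1-Y)$. Non-negativity of these quadratic forms for every choice of $(c_0, c_1)$ then yields $M_1, M_2 \succeq 0$.

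For the reverse direction, the plan is to invoke the truncated Hausdorff moment problem on $[-1,1]$. The key classical ingredient is the Markov-Lukacs representation: every polynomial of degree at most $3$ that is non-negative on $[-1,1]$ admits a decomposition $(1+Y) q(Y)^2 + (1-Y) r(Y)^2$ with $q, r$ linear. Given $M_1, M_2 \succeq 0$, I would define the linear functional $L$ on degree-$\leq 3$ polynomials by $L(Y^k) = m_k$ and use the identity from the forward direction to conclude that $L\big((1 \pm Y)(\cdot)^2\big) \geq 0$; hence $L$ is non-negative on the entire cone of polynomials non-negative on $[-1,1]$. The Riesz-Haviland theorem then furnishes a representing Borel probability measure on $[-1,1]$ with moments $m_1, m_2, m_3$, which pulls back via $X = BY$ to the desired distribution $F$. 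The moreover statement on four-point support follows from a standard flat-extension / Gauss-quadrature argument as invoked in \cite{DiaKPPS21}.

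The main obstacle is the Markov-Lukacs step: one must carefully verify that at degree $\leq 3$ on $[-1,1]$ no pure sum-of-squares component, independent of the weights $1 \pm Y$, is needed, so that $M_1$ and $M_2$ alone form a complete certificate (and that the linear $q, r$ admitted in the decomposition correspond exactly to $2 \times 2$ moment matrices rather than any larger Hankel block). This is classical but is essentially the only non-routine ingredient; everything else is expansion and bookkeeping.
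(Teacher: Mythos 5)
The paper does not prove this fact; it cites \cite{BlePT12} for the equivalence and \cite{DiaKPPS21} for the four-point support bound, so there is no internal proof to compare against. Your proof is a correct self-contained argument and, given that \cite{BlePT12} is a text on moment problems and semidefinite optimization, almost certainly coincides with the approach taken there: rescale to $[-1,1]$, observe that the two Hankel-type matrices are the localizing moment matrices for the weights $1+Y$ and $1-Y$, get necessity by positivity of $\E[(1\pm Y)q(Y)^2]$, and get sufficiency from the odd-degree Markov--Luk\'acs representation $p = (1+Y)\sigma_0 + (1-Y)\sigma_1$ with $\sigma_0,\sigma_1$ sums of squares of linear polynomials, which reduces $L(p)\geq 0$ on nonnegative cubics to $M_1,M_2\succeq 0$.

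One small imprecision worth tightening: the classical Riesz--Haviland theorem concerns the \emph{full} moment sequence, so it does not apply verbatim to a functional defined only on $\R[Y]_{\leq 3}$. What you actually need is the truncated version, which for a compact interval follows from a duality/closedness argument: the cone of degree-$\leq 3$ truncated moment functionals of measures on $[-1,1]$ is closed (by weak-* compactness, since $L(1)=1$ bounds total mass), so it equals its bidual, and the dual cone is exactly the nonnegative cubics on $[-1,1]$. Equivalently one can invoke Tchakaloff's theorem or the Curto--Fialkow truncated Riesz--Haviland result. This same compactness/Tchakaloff route also yields the four-point support bound directly (any functional in the moment cone for $\R[Y]_{\leq 3}$, a $4$-dimensional space, is a convex combination of at most $4$ point evaluations), so the ``moreover'' does not really require a separate flat-extension argument.
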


\section{Certificates for Sparse Singular Values and Operator Norm Resilience}
\label{sec:deriving_final_certificates}
In this section we will derive certificates for sparse singular values of random matrices (cf.~\cref{thm:main-intro}).
For many of the applications,
a slightly different (but equivalent) formulation will be more convenient:
operator norm induced by a small subset of the rows of the input matrix (see below for the exact definition).
We refer to this as \emph{operator norm resilience}.
We will derive our certificates for Gaussian data in this section and, using the tools developed in~\cite{diakonikolas2024sos}, transfer these to subgaussian data in~\cref{sec:transfer-lemma}.

As discussed before, we obtain our certificates by certifying upper bounds on the Schatten-$p$ norm of matrices of the form $\sum_{i=1}^n w_i x_i x_i^\top$, where the $w_i$ select a small subset of the indices.
More specifically, we will use the following theorem that we will prove in~\cref{sec:sos_certificate_schatten_p}.
\begin{theorem}[SoS bound on Schatten-$p$ norm]
\torestate{
\label{thm:sos-schatten-p}
Let $n,d,p \in \N$  be such that $p \geq 2, d \leq n \leq d^2$ and let $\eta \in [0,1]$.
Further, let $x_1,\ldots,x_n \sim \cN(0,I_d)$ and let $w_1,\ldots,w_n$ be indeterminates.
Then, 
\begin{align*}
\E_{x_1,\ldots,x_n} &\inf \left \{  B \in \R \text{ s.t. } \Set{\{w_i^2 = w_i\}_{i \in [n]}, \sum_{i \leq n} w_i \leq \eta n }\proves_{O(p)}^{w} \left \| \frac 1 n \sum_{i \leq n} w_i x_i x_i^\top \right \|_p^p \leq B \right \} \\
&\leq \Paren{ p^p \cdot \log^p(n)}^{O(1)} \cdot d^{5/2} \cdot \max\Set{\frac d n, \frac{\eta d}{\sqrt{n}}}^{p/2} \, .
\end{align*}
}
\end{theorem}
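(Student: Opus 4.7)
Since $p$ is even, we have the identity $\|M\|_p^p = \trace(M^p)$ for $M = \tfrac{1}{n}\sum_i w_i x_i x_i^\top$, which expands as the degree-$p$ polynomial
\[
P(w) \;=\; \frac{1}{n^p}\sum_{i_1,\ldots,i_p \in [n]} \prod_{j=1}^p w_{i_j} \prod_{j=1}^p \iprod{x_{i_j},x_{i_{j+1}}},
\]
with cyclic indexing $i_{p+1}=i_1$. My plan is to produce, for each fixed realization of $x_1,\ldots,x_n$, a degree-$O(p)$ SoS proof that $P(w) \leq B(x)$ using the Booleanity and sum constraints on $w$, and then to bound $\E B(x)$. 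To this end I first group the tuples $(i_1,\ldots,i_p)$ by their equality pattern: for each partition $Q$ of $[p]$ let $P_Q(w)$ be the restriction of $P$ to tuples respecting $Q$, so $P = \sum_Q P_Q$ and there are at most $p^{O(p)}$ choices of $Q$. Next, for each fixed $Q$ apply an Efron--Stein decomposition over the $x$-variables: for $S \sse Q$ let $P_Q^{\sse S}(w)$ denote the polynomial obtained by taking the expectation over $\{x_i\}_{i \notin S}$, and set $P_Q^{=S}(w) = \sum_{T \sse S}(-1)^{|S|-|T|} P_Q^{\sse T}(w)$, so that $P_Q = \sum_{S \sse Q} P_Q^{=S}$.

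\textbf{Top-level term via a matrix representation.} For the ``top'' term $P_Q^{=Q}$ I will construct a matrix $A_Q$ with rows and columns indexed by $[n]^{q_Q}$ for some $q_Q \leq |Q|/2$, whose entries are degree-$p$ polynomials in the $x_i$'s, satisfying the SoS identity
\[
\{w_i^2 = w_i\}_{i\in[n]} \;\proves_{O(p)}^{w}\; P_Q^{=Q}(w) \;=\; (w^{\otimes q_Q})^\top A_Q\, w^{\otimes q_Q}.
\]
Since $\{w_i^2=w_i\}\proves \|w\|^2 = \sum_i w_i \leq \eta n$, this gives $P_Q^{=Q}(w) \leq (\eta n)^{q_Q} \|A_Q\|_\op$ in SoS. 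To control $\|A_Q\|_\op$ I will expand each entry of $A_Q$ in the Hermite basis in the $x_i$'s; this exhibits $A_Q$ as a sum of graph matrices in the sense of \cite{AhnMP16}. The key structural fact is that after the Efron--Stein subtraction, every Hermite coefficient appearing in $A_Q$ has strictly positive degree in every block of $Q$, which is exactly what kills the low-rank ``bias'' directions illustrated in the squared-inner-product example of Section~2. Applying the graph matrix operator-norm tail bound and summing over the $p^{O(p)}$ shapes will yield $\E \|A_Q\|_\op \leq (p\log n)^{O(p)} n^{-p/2} d^{\alpha_Q}$ with a $d$-exponent $\alpha_Q$ chosen so that $(\eta n)^{q_Q}\cdot \|A_Q\|_\op$ is at most the target $d^{5/2}\max(d/n,\eta d/\sqrt n)^{p/2}$ up to polylog factors.

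\textbf{Lower-order terms by induction on $p$.} For $S \subsetneq Q$, the polynomial $P_Q^{=S}(w)$ depends only on the $x_i$'s and $w_i$'s associated to representatives of $S$, and an inclusion-exclusion computation (combined with $w_i^2=w_i$ and with a renormalization of the $n$-denominator) shows that it equals, up to a factor $(\eta n)^{|Q|-|S|}$ produced using $\sum_i w_i \leq \eta n$ in SoS, a polynomial of the same Schatten-$p'$ form on a shorter cycle of length $p' \leq |S| < p$. Hence the contribution of each $P_Q^{=S}$ can be controlled by the inductive hypothesis applied at $p'$. Summing the top-level contribution and the inductive contributions over all $S \sse Q$ and all partitions $Q$ produces the claimed bound on the Schatten norm in expectation; the factor $p^{O(p)}\log^{O(p)} n$ absorbs the combinatorial count of partitions and the graph-matrix polylogs, while the $d^{5/2}$ is sufficient slack to swallow the worst-case $d$-exponent after taking the maximum over shapes.

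\textbf{Principal obstacle.} The main technical difficulty will lie squarely in the operator-norm analysis of the $A_Q$'s. Two things must go right simultaneously: the matrix representation of $P_Q^{=Q}$ must be compatible with the SoS constraints $w_i^2=w_i$ (so that we may replace $\|w\|^{2q_Q}$ by $(\eta n)^{q_Q}$), and the Efron--Stein subtraction must precisely cancel the low-rank graph-matrix shapes that otherwise cause $\|A_Q\|_\op$ to exceed the ``random-entry'' heuristic value $\|A_Q\|_F / n^{q_Q/2}$. Showing both at once requires a careful combinatorial bookkeeping that matches each Hermite-index profile surviving in $A_Q$ to a graph-matrix shape whose norm can be read off of its block structure, and that tracks how the exponent $q_Q$ trades against $\|A_Q\|_\op$ as $Q$ varies. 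This combinatorial correspondence --- between Efron--Stein components of $P$ and admissible graph-matrix shapes --- is the technical heart of the theorem, and once it is established the rest of the proof amounts to a disciplined induction and a union bound.
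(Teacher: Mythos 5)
Your high-level architecture matches the paper's: split by index-equality pattern, apply an Efron--Stein-style decomposition, bound the top term via a graph-matrix spectral certificate, and bound the remaining terms by an induction. But the treatment of the inductive step has a genuine gap.

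You claim that for $S \subsetneq Q$, the lower-order polynomial "equals \ldots a polynomial of the same Schatten-$p'$ form on a shorter cycle of length $p' \leq |S|$." This is not what happens. When you take the expectation over $\{x_i\}_{i\notin S}$ and apply Wick's theorem to an inner product involving a removed vertex, you must sum over all matchings of the edges incident to that vertex; each matching reconnects those edges, and the resulting graph is generically a vertex merge of a \emph{disjoint union of several cycles}, not a single shorter cycle. (For instance, removing one vertex of a $p$-cycle that was merged with a non-adjacent vertex can split the cycle into two.) The number of cycles can grow by as much as $\deg(u)/2 - 1$ per removed vertex. This matters quantitatively: the final bound depends polynomially on the number $r$ of cycles, and the balancing of $d$- and $n$-exponents at each inductive step is exactly where both hypotheses $d \leq n$ and $n \leq d^2$ are used. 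So the inductive hypothesis must be stated for a two-parameter family of graph polynomials (tracking both the total cycle length $p$ and the number of cycles $r$), not for a single parameter $p'$; this is the content of the paper's $\Gcycmergedisjoint{p,r}$ formalism and \Cref{lem:sos-graph-poly}, \Cref{lem:main_induction}, and \Cref{lem:union-of-cycles-merge}. Without that strengthening, the induction does not close.

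Two smaller points. First, you decompose the lower-order terms via $P_Q^{=S}$, whereas the natural form for the Wick-then-induct argument is $P_Q^{\subseteq S}$ (the expectation over complements), since $P_Q^{\subseteq S}$ directly reassembles into graph polynomials on the vertex set $S$; the $=S$ version introduces an extra alternating sum you would have to unroll. Second, the spectral step cannot be carried out uniformly as you describe: when the merged graph has only one vertex (e.g., a single vertex with $p$ self-loops), there is no non-degenerate bipartition $L,R$ of the vertex set and no quadratic form $(w^{\otimes q})^\top A w^{\otimes q}$ in which to embed $P_G^{=V}$; the paper handles that case separately, bounding $\sum_i w_i \|x_i\|^{2p}$ directly using $\sum_i w_i \leq \eta n$ and a scalar norm bound, and the bipartition $L,R$ generally has $|L| \neq |R|$ rather than $|L|=|R|=q_Q$.
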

In particular, 
since $n \leq d^2$, 
when $n = (p \cdot \log d)^{O(1)} \cdot\max\Set{\tfrac d {\alpha^2},\tfrac{\eta^2 d^2}{\alpha^4}}$ the above is at most $d^{3/2}n^{1/2} \alpha^p$.\footnote{The condition that $n \geq d$ could potentially be removed by including another term in the maximum.}

From \Cref{thm:sos-schatten-p}, the operator norm resilience certificates and the sparse singular values certificate follow rather easily.
For ease of notation, we define the following quantity which will appear in our bounds:
\begin{align}
    \label{eq:sosBound}
    \sosBound :=\max\Set{(p \cdot \log n)^{O(1)} \cdot d^{5/(2p)} \cdot \max\Set{\sqrt{\frac d n}, \frac{\sqrt{\eta d}}{n^{1/4}}}, O(1)\sqrt{\eta}} \,.
\end{align}

For technical reasons related to the application to covariance estimation, it turns out that the following form of certificate is helpful.
In particular, it will be helpful to not require the unit norm constraint on the vector-valued variable $v$.\footnote{Ideally, we would like to remove the $p$-th powers also, but it is unclear to us how to do it within SoS; observe that \Cref{fact:sos_square_root} is inapplicable towards that goal.}

\begin{theorem}
  \label{thm:resilience_Gauss_full_alternate}
  Let $p,n,d \in \N$ be such that $p \geq 2$ is a power of 2 and $d \leq n$.
  Let $x_1, \ldots, x_n$ be \iid samples from $\cN(0,I_d)$.
  Then for $\sosBound$ defined in \Cref{eq:sosBound},
  \begin{align*}
    \E_{x_1,\ldots,x_n} &\inf \left \{  B \in \R \text{ s.t. } \Set{\{w_i^2 = w_i\}_{i \in [n]}\,, \sum_{i \leq n} w_i \leq \eta n }\proves_{O(p)}^{w,v} \Paren{\tfrac 1 n \sum_{i=1}^n w_i \iprod{x_i,v}^2}^p \leq B^p \norm{v}^{2p} \right \} \\
    &\leq \sosBound\, .
    \end{align*}
  Further, with probability at least $1-\delta$, the following sum-of-squares proof exists
  \begin{align*}
    \Biggl\{w_i^2 &= w_i \,, \sum_{i=1}^n w_i \leq \eta n \Biggr\} \proves_{O(p)}^{w_i, v} \Paren{\tfrac 1 n \sum_{i=1}^n w_i \cdot \iprod{x_i, v}^2}^p 
    \leq \biggl[ \sosBound^p +\Paren{\frac {O(1)\log(1/\delta)} n}^p \biggr] \cdot \norm{v}^{2p}\,.
  \end{align*}
\end{theorem}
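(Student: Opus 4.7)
The plan is to bootstrap \Cref{thm:sos-schatten-p}, which bounds the Schatten-$p$ norm of $M := \tfrac{1}{n}\sum_i w_i x_i x_i^\top$ in SoS, into the desired bound on the quadratic form $v^\top M v$. The key SoS identity is iterated Cauchy--Schwarz: for any symmetric $M$, the expression $\|v\|^2 \cdot v^\top M^2 v - (v^\top Mv)^2$ equals $\|v\|^2 \|Mv\|^2 - \langle v, Mv\rangle^2$, which is manifestly a sum of squares (of the $2\times 2$ minors $v_i(Mv)_j - v_j(Mv)_i$). Squaring and iterating $\log_2 p$ times---allowed since $p$ is a power of $2$---yields a degree-$O(p)$ SoS proof that $(v^\top M v)^p \leq \|v\|^{2p-2} \cdot v^\top M^p v$. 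For the last step I would use the factorization $M^p = (M^{p/2})^\top M^{p/2}$ (valid since $M$ is symmetric and $p$ is even) together with row-wise Cauchy--Schwarz to get $v^\top M^p v = \|M^{p/2} v\|^2 \leq \|M^{p/2}\|_F^2 \cdot \|v\|^2 = \|M\|_p^p \cdot \|v\|^2$, also as a degree-$O(p)$ SoS proof with $w$-polynomial coefficients. Chaining the two, feeding in \Cref{thm:sos-schatten-p}, and applying Jensen's inequality for the concave map $x\mapsto x^{1/p}$ yields the expected bound matching the first summand in $\sosBound$.

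The $O(\sqrt{\eta})$ summand in $\sosBound$ is obtained separately via a degree-$4$ Cauchy--Schwarz argument using the Booleanity constraint $w_i^2 = w_i$:
\[
\Paren{\tfrac{1}{n}\sum_i w_i \langle x_i, v\rangle^2}^2 \leq \tfrac{1}{n^2}\Paren{\sum_i w_i^2}\Paren{\sum_i \langle x_i, v\rangle^4} = \tfrac{1}{n^2}\Paren{\sum_i w_i}\Paren{\sum_i \langle x_i, v\rangle^4} \leq \tfrac{\eta}{n}\sum_i \langle x_i, v\rangle^4.
\]
Invoking the fourth-moment SoS certificate of \Cref{thm:t-th-moment-cert} (which with high probability gives $\tfrac{1}{n}\sum_i \langle x_i, v\rangle^4 \leq O(1)\|v\|^4$) and taking $p$-th powers yields $(v^\top M v)^p \leq O(\sqrt{\eta})^p \|v\|^{2p}$ in SoS. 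Combining this with the Schatten-$p$-based bound by taking the $\max$ produces the full $\sosBound$.

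For the high-probability refinement, I would apply $p$-th moment Markov to the infimal SoS-bound directly, exploiting that \Cref{thm:sos-schatten-p} applies to every even exponent. Running the argument at exponent $pq$ for $q = \Theta(\log(1/\delta))$ and then re-extracting a degree-$p$ certificate should produce a tail of the form $\Pr[B \geq \sosBound + O(\log(1/\delta)/n)] \leq \delta$, giving the additive $(O(\log(1/\delta))/n)^p$ term upon taking $p$-th powers (and using $(a+b)^p \leq 2^{p-1}(a^p + b^p)$ to separate the two contributions). The main obstacle I anticipate is precisely this step: obtaining the $1/n$ scaling on the $\log(1/\delta)$ deviation (rather than the $1/\sqrt{n}$ one would get from Gaussian Lipschitz concentration of $\|M\|_p$ as a random variable, or the $\mathrm{poly}(1/\delta)$ one would get from naive Markov) while preserving a valid SoS proof at degree $O(p)$. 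The remaining steps are essentially mechanical applications of standard SoS manipulations on top of \Cref{thm:sos-schatten-p,thm:t-th-moment-cert}.
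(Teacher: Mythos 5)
Your proof of the in-expectation bound is correct and follows essentially the same route as the paper. The iterated Cauchy--Schwarz that you describe ($(v^\top Mv)^2 \leq \|v\|^2\,v^\top M^2 v$, iterated $\log_2 p$ times and capped off by the row-wise Cauchy--Schwarz $v^\top M^p v = \|M^{p/2}v\|^2 \leq \|M^{p/2}\|_F^2\|v\|^2 = \|M\|_p^p\|v\|^2$) is precisely the content of the paper's \cref{clm:SoS_op_to_schatten_p}, just packaged with a slightly different induction target. The Jensen step to remove the $p$-th power and the degree-four Booleanity-plus-fourth-moment argument for the $O(\sqrt{\eta})$ term (invoked when $n\geq d^2$) likewise match what the paper does, which performs a case split on $n\lessgtr d^2$ and quotes \cite{barak2012hypercontractivity} for the latter regime.

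However, the high-probability tail is a genuine gap, and you have correctly flagged it yourself. The degree-escalation idea — running \cref{thm:sos-schatten-p} at exponent $pq$ with $q=\Theta(\log(1/\delta))$ and then "re-extracting" a degree-$p$ certificate — does not work for two reasons. First, an SoS proof at degree $O(pq)$ does not in general yield an SoS proof at degree $O(p)$ of the same inequality; degree-reduction is exactly the thing that is hard, so you would end up with a quasi-polynomial (in $1/\delta$) certificate where a genuinely degree-$O(p)$ one is required. Second, even absent the degree issue, the Markov-on-$\E[B^{pq}]$ argument gives a deviation term that scales with the $q$-th root of the $(pq\log n)^{O(pq)}d^{5/2}$ prefactor, which does not collapse to the desired additive $O(\log(1/\delta)/n)$ term. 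The paper's actual mechanism is quite different: it defers the statement to \cref{thm:op_resilience_alternate_subgaussian}, where it uses strong duality (\cref{fact:strong_duality_alternate}) to convert the best SoS bound into a supremum over pseudo-expectations, then a H\"older/Cauchy--Schwarz linearization with an auxiliary nonnegative scalar variable $r$ to express that supremum as a \emph{canonical Gaussian process} indexed by a set $T$ of moment tensors, and finally invokes Talagrand's generic-chaining comparison (\cref{thm:Gaussian-cov-resilience}, adapted from \cite{diakonikolas2024sos}). The $\log(1/\delta)/n$ dependence falls out of the diameter bound $\mathrm{diam}(T)\leq 1/\sqrt{n}$ for the induced index set, not from moment tensorization. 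So to complete your proof you would need to replace the Markov step with this duality-plus-chaining machinery.
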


We can easily derive a cleaner version with the unit norm constraint from this, which we state in \cref{thm:resilience_Gauss_full} below.
\begin{theorem}
  \label{thm:resilience_Gauss_full}
  Let $p,n,d \in \N$ be such that $p \geq 2$ and $d \leq n$.
  Let $x_1, \ldots, x_n$ be \iid samples from $\cN(0,I_d)$.
  Then for $\sosBound$ defined in \Cref{eq:sosBound},
  \begin{align*}
    \E_{x_1,\ldots,x_n} &\inf \left \{  B \in \R \text{ s.t. } \Set{\{w_i^2 = w_i\}_{i \in [n]}\,, \sum_{i \leq n} w_i \leq \eta n \,, \norm{v}^2 = 1}\proves_{O(p)}^{w,v} \tfrac 1 n \sum_{i=1}^n w_i \iprod{x_i,v}^2 \leq B \right \} \\
    &\leq \sosBound,
    \end{align*}
  Furthermore, with probability at least $1-\delta$, the following sum-of-squares proof exists
  \begin{align*}
    \Biggl\{w_i^2 &= w_i \,, \sum_{i=1}^n w_i \leq \eta n \,, \Norm{v}^2 = 1 \Biggr\} \proves_{O(p)}^{w_i, v} \tfrac 1 n \sum_{i=1}^n w_i \cdot \iprod{x_i, v}^2 \leq \sosBound +  \frac{O(1)\log(1/\delta)}{n}\,.
  \end{align*}
\end{theorem}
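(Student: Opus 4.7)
The plan is to derive \cref{thm:resilience_Gauss_full} from \cref{thm:resilience_Gauss_full_alternate} by adjoining the constraint $\|v\|^2=1$ and then taking an SoS $p$-th root of the resulting inequality. If the given $p$ is not a power of $2$, I first replace it with the next power of $2$ $p'\in[p,2p]$; this inflates $\sosBound$ and the proof degree only by constant factors, which are absorbed into $(p\log n)^{O(1)}$ and $O(p)$ respectively.

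To begin, I apply \cref{thm:resilience_Gauss_full_alternate} to obtain, with both the expectation and high-probability guarantees, an SoS proof of $f^p\leq B^p\|v\|^{2p}$ under $\{w_i^2=w_i,\,\sum_iw_i\leq\eta n\}$, where $f:=\tfrac1n\sum_iw_i\iprod{x_i,v}^2$ and $B=\sosBound$ in the expectation case, or $B=\sosBound+O(\log(1/\delta)/n)$ in the high-probability case (using the inequality $a^p+b^p\leq(a+b)^p$ for $a,b\geq 0$ to absorb the two summands inside the bracket appearing in the alternate theorem). Adjoining the constraint $\|v\|^2=1$ then collapses the $\|v\|^{2p}$ factor, yielding an SoS proof of $f^p\leq B^p$ under the enriched constraint system $\cB'=\{w_i^2=w_i,\sum_iw_i\leq\eta n,\|v\|^2=1\}$.

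For the SoS $p$-th root step, I would use the weighted AM-GM polynomial identity
\[
f^p+(p-1)B^p - pB^{p-1}f\;=\;(f-B)^2\cdot \sum_{j=0}^{p-2}(p-1-j)\,f^jB^{p-2-j}.
\]
The second factor, viewed as a univariate polynomial in $f$ of degree $p-2$ with $B>0$ a positive constant, is nonnegative on $\R$ by AM-GM itself, hence is SoS in $f$ (every nonnegative univariate polynomial is SoS), so the whole right-hand side is SoS in $f$. Substituting $f=\tfrac1n\sum_iw_i\iprod{x_i,v}^2$ preserves the SoS property and produces an SoS proof
\[
\cB'\;\proves_{O(p)}^{w,v}\; pB^{p-1}f\;\leq\; f^p+(p-1)B^p.
\]
Combining with $f^p\leq B^p$ yields $pB^{p-1}f\leq pB^p$, and scaling the resulting SoS certificate by the positive constant $1/(pB^{p-1})$ gives the claimed SoS proof of $f\leq B$ at degree $O(p)$.

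The main potential obstacle is purely bookkeeping: I need to ensure the AM-GM identity is applied at a degree compatible with the $O(p)$ proof degree coming from the alternate theorem (which it is, since $(f-B)^2$ has degree $2p$ in $(w,v)$ and the SoS factor $\sum_j(p-1-j)f^jB^{p-2-j}$ contributes another $(2p-4)$) and that the bound $B$ appears in the exact form stated in the theorem. The only degenerate case, $B=0$, does not arise because $\sosBound\geq\Omega(\sqrt{\eta})>0$ whenever the stated conclusion is non-vacuous. The same argument covers both parts of the theorem, since the expectation and high-probability statements of the alternate theorem have identical algebraic form and differ only in the value of $B$.
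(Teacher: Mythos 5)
Your proposal follows essentially the paper's own argument: invoke \cref{thm:resilience_Gauss_full_alternate} with the next power of two $p'\in[p,2p]$, collapse $\|v\|^{2p'}$ using the constraint $\|v\|^2=1$, and then extract an SoS $p'$-th root; the paper performs the last step by citing \cref{fact:sos_square_root}, whose proof in the appendix is likewise an AM-GM argument, so your re-derivation via the explicit factorization $f^{p'}+(p'-1)B^{p'}-p'B^{p'-1}f=(f-B)^2\sum_{j=0}^{p'-2}(p'-1-j)f^jB^{p'-2-j}$ is the same idea in a different dress. The one imprecision worth fixing is the claim that the factor $Q(f)=\sum_{j=0}^{p'-2}(p'-1-j)f^jB^{p'-2-j}$ ``is nonnegative on $\R$ by AM-GM itself'': AM-GM only yields $f^{p'}+(p'-1)B^{p'}\geq p'B^{p'-1}f$ for $f\geq0$, so the nonnegativity of $Q$ is established only on $[0,\infty)$ by that reasoning. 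For $f<0$ you should note separately that, since $p'$ is even (which holds after your reduction to a power of two), every term of $f^{p'}+(p'-1)B^{p'}-p'B^{p'-1}f$ is individually nonnegative, whence $Q(f)\geq0$ on all of $\R$ and is thus SoS as a univariate polynomial in $f$. This is precisely the role of the ``further'' clause in \cref{fact:sos_square_root} that drops the hypothesis $X\geq0$ when $p$ is even, and once that observation is added your argument is complete.
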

The above version follows by using the constraint that $\norm{v}^2 = 1$ and noting that for every constant $C > 0$, it holds that $\Set{X \geq 0 \,, X^p \leq C^p} \proves_{p}^X X \leq C$ (\Cref{fact:sos_square_root}).
Note that we do not require $p$ to be a power of 2 anymore, since we can invoke~\cref{thm:resilience_Gauss_full_alternate} with the next largest power of 2.
Technically, this leads to an additional constant in the resulting bound, but for the ease of presentation, we incorporate that constant in the original definition of $\sosBound$ in \Cref{eq:sosBound} itself. 

We now give the proof of~\cref{thm:resilience_Gauss_full_alternate} from \Cref{thm:sos-schatten-p}.
\begin{proof}[Proof of~\cref{thm:resilience_Gauss_full_alternate}]
  Let $p, n, d$ be as in the theorem statement.
    We prove only the first statement, the second one follows by the methodology of \cite{diakonikolas2024sos} and is deferred to \Cref{thm:op_resilience_alternate_subgaussian}.

  First consider the case when $n\geq d^2$. In this regime, \cite{barak2012hypercontractivity} already gives a gives a degree-four SoS bound of $O(1)\sqrt{\eta}$ on $\sum_iw_i \langle x_i,v\rangle^2$; see the discussion in \Cref{sec:techniques}.
  Since $\sosBound$ is sufficiently larger than $\sqrt{\eta}$, the setting of $n\geq d^2$ is handled.

  Consider now the regime of $n \leq d^2$, which was the focus of \cref{thm:sos-schatten-p}. It implies that 
  \begin{align*}
    \E_{x_1,\ldots,x_n} &\inf \left \{  B \in \R \text{ s.t. } \Set{\{w_i^2 = w_i\}_{i \in [n]}, \sum_{i \leq n} w_i \leq \eta n }\proves_{O(p)}^{w} \left \| \frac 1 n \sum_{i \leq n} w_i x_i x_i^\top \right \|_p^p \leq B \right \} \\
    &\leq \Paren{ p^p \cdot \log^p(n)}^{O(1)} \cdot d^{5/2} \cdot \max\Set{\frac d n, \frac{\eta d}{\sqrt{n}}}^{p/2} \, .
    \end{align*}
  Applying \cref{clm:SoS_op_to_schatten_p} with $M = \tfrac 1 n \sum_{i=1}^n w_i x_i x_i^\top$ (which is symmetric), it follows that
  \[
    \proves_{O(p)}^{w,v} \Paren{\tfrac 1 n \sum_{i=1}^n w_i \iprod{x_i, v}^2}^p \leq \Norm{ \tfrac 1 n \sum_{i=1}^n w_i x_i (x_i)^\top }_p^p \norm{v}^{2p}\,.
  \]
  This finishes the proof.
\end{proof}

\paragraph{Certificates for sparse singular values.}

We can now derive our results for certifying upper bounds on sparse singular values of a matrix.
To this end, consider the following constraint system $\cA$ in vector-valued variables $u,w$ (in dimension $n$) and vector-valued variable $v$ (in dimension $d$).
Let $\eta$ be such that $\eta n$ is an integer.
Note that $\cA$ encodes that $u$ is an $\eta n$-sparse $n$-dimensional unit vector and $v$ is a (dense) $d$-dimensional unit vector.
\[
  \cA \coloneqq \Set{w_i^2 = w_i \,, \sum_{i=1}^n w_i \leq \eta n \,, w_iu_i = u_i \,, \norm{u}^2 = 1, \norm{v}^2 = 1} \,.
\]
\begin{theorem}[Formal version of \Cref{thm:main-intro}]
  \label{thm:sparse_sing_val_full}
  Let $p,n,d \in \N$ be such that $p \geq 2$ and $d \leq n$.
  Let $M \in \R^{n \times d}$ be a matrix with \iid entries drawn from $\cN(0,1)$ and let $\cA$ be as above.
  Then for $\sosBound$ defined in \Cref{eq:sosBound},
  \begin{align*}
    \E_{M} \inf \left \{  B \in \R \text{ s.t. } \cA \proves_{O(p)}^{w,u,v} \tfrac 1 {\sqrt{n}} u^\top M v\leq B \right \} \leq \sqrt{\sosBound}
  \end{align*}
  Moreover, with probability at least $1-\delta$, the following sum-of-squares proof exists:
  \begin{align*}
    \cA \proves_{O(p)}^{w_i,u,v} \tfrac 1 {\sqrt{n}} u^\top M v \leq \sqrt{\sosBound}  +  \sqrt{O(1)\frac{\log(1/\delta)}{n}}\,.
  \end{align*}
\end{theorem}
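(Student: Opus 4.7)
The plan is to reduce the statement to the operator-norm resilience certificate of \cref{thm:resilience_Gauss_full} via an SoS Cauchy--Schwarz argument, followed by a short SoS ``square root'' step. Writing the rows of $M$ as $x_1,\ldots,x_n \sim \cN(0,I_d)$, we have $u^\top M v = \sum_{i=1}^n u_i \iprod{x_i,v}$, and the constraint $w_i u_i = u_i$ from $\cA$ lets me insert the sparsity selector to get $u^\top M v = \sum_{i=1}^n u_i \cdot w_i \iprod{x_i,v}$. This rewriting is exactly the form that couples with the resilience certificate.

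Next, I would invoke the degree-$O(1)$ SoS Cauchy--Schwarz identity $(\sum_i a_i b_i)^2 \leq (\sum_i a_i^2)(\sum_i b_i^2)$ with $a_i = u_i$ and $b_i = w_i \iprod{x_i,v}$. Using $\norm{u}^2 = 1$ and $w_i^2 = w_i$ (both encoded in $\cA$), this produces the SoS proof $\cA \proves_{O(1)}^{u,w,v} (u^\top M v)^2 \leq \sum_i w_i \iprod{x_i,v}^2$. Dividing by $n$ and applying \cref{thm:resilience_Gauss_full} (whose input constraints are exactly the $v$- and $w$-parts of $\cA$), I obtain that with probability at least $1-\delta$ there is a degree-$O(p)$ SoS proof under $\cA$ of $\tfrac{1}{n}(u^\top M v)^2 \leq R$, where $R := \sosBound + O(\log(1/\delta)/n)$.

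Finally, I need to pass from $S^2 \leq R$ to $S \leq \sqrt R$, where $S := \tfrac{1}{\sqrt n} u^\top M v$. Because $S$ is not a priori non-negative, the SoS square root fact \cref{fact:sos_square_root} does not apply directly; the work-around is to combine the trivial SoS identity $(\sqrt R - S)^2 \geq 0$ with the inequality $R - S^2 \geq 0$, yielding $2R - 2\sqrt R \cdot S \geq 0$, i.e., $\sqrt R(\sqrt R - S) \geq 0$. Dividing by the positive real constant $\sqrt R$ delivers $S \leq \sqrt R \leq \sqrt{\sosBound} + \sqrt{O(1)\log(1/\delta)/n}$, which is the claimed high-probability bound. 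The expectation statement then follows from the same chain: pointwise in $M$, the infimal certifiable $B$ is at most $\sqrt{\tilde R(M)}$, where $\tilde R(M)$ is the infimal certifiable bound in \cref{thm:resilience_Gauss_full}, so Jensen's inequality (concavity of $\sqrt{\,\cdot\,}$) together with the expectation part of that theorem gives $\E_M \sqrt{\tilde R(M)} \leq \sqrt{\E_M \tilde R(M)} \leq \sqrt{\sosBound}$.

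All the substantial work is already encapsulated in \cref{thm:resilience_Gauss_full}; here the main obstacle is the mild technical one of extracting an SoS square root of a non-sign-definite expression, which is handled by the $(\sqrt R - S)^2 \geq 0$ manipulation above.
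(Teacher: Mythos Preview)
Your proposal is correct and follows essentially the same route as the paper: insert $w_i$ via the constraint $w_iu_i=u_i$, apply SoS Cauchy--Schwarz to bound $\tfrac{1}{n}(u^\top M v)^2$ by $\tfrac{1}{n}\sum_i w_i\iprod{x_i,v}^2$, and then invoke \cref{thm:resilience_Gauss_full}. The one unnecessary detour is your manual ``$(\sqrt R - S)^2 \geq 0$'' argument: \cref{fact:sos_square_root} already states that for even $p$ (in particular $p=2$) the constraint $X\geq 0$ can be dropped, so $\{X^2 \leq C^2\}\proves_2 X \leq C$ applies directly---indeed, your manipulation is exactly the proof of that case.
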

\begin{proof}
  Since $\Set{X^2 \leq C^2} \proves_2^X X \leq C$, it is enough to show that (after taking expectations over $M$)
  \begin{align*}
    \cA \proves_{O(p)}^{w_i,u,v} \tfrac 1 {n} \Paren{u^\top M v}^2 \leq \sosBound\,.
  \end{align*}
  Let $M_i \in \R^d$ be the rows of $M$.
  By the constraint that $w_i u_i = u_i$ and SoS Cauchy-Schwarz (cf.\ \cref{fact:sos_cs}), it holds that
  \begin{align*}
    \cA \proves_{O(1)}^{w_i, u, v} \tfrac 1 n \Paren{u^\top M v}^2 &= \tfrac 1 n \Paren{\sum_{i=1}^n u_i \iprod{M_i, v}}^2 = \tfrac 1 n \Paren{\sum_{i=1}^n u_i \cdot w_i \iprod{M_i, v}}^2 \\
    &\leq \norm{u}^2 \Paren{\tfrac 1 n \sum_{i=1}^n w_i \iprod{M_i, v}^2} = \tfrac 1 n \sum_{i=1}^n w_i \iprod{M_i, v}^2 \,.
  \end{align*}
    The result then follows from~\cref{thm:resilience_Gauss_full}.

\end{proof}

\subsection{Sum-of-Squares Certificate for Schatten-\texorpdfstring{$p$}{p} Norm}
\label{sec:sos_certificate_schatten_p}

In this section, we will show the following theorem.
\restatetheorem{thm:sos-schatten-p}
In particular, since $n \leq d^2$, when $n = (p \cdot \log d)^{O(1)} \cdot\max\Set{\tfrac d {\alpha^2},\tfrac{\eta^2 d^2}{\alpha^4}}$ the above is at most $d^{3/2}n^{1/2} \alpha^p$.
The condition that $d \leq n$ could potentially be avoided at the cost of having another term in the maximum.
The condition that $n \leq d^2$ seems inherent to our techniques, and we refer the reader to the later subsections for further intuition.

Our strategy for showing~\cref{thm:sos-schatten-p} was outlined in~\cref{sec:techniques}. Specifically, 
we break the polynomial $\| \sum_{i \leq n} w_i x_i x_i^\top \|_p^p$ into several parts by expanding the $p$-th power and grouping terms based on equalities among the index patterns in the resulting sum.
We then construct SoS upper bounds for each part.
In this subsection (\cref{sec:sos_certificate_schatten_p}), we will formally define the parts, state the upper bounds that we will prove on them (cf.~\cref{lem:sos-graph-poly}), and see how these imply~\cref{thm:sos-schatten-p}.
The upper bounds are then proved using the Efron Stein decomposition and graph matrices in~\cref{sec:high_level_graph_poly,sec:induction_subset_terms,sec:bound_equals_term,sec:graph_matrix_combo}.

\paragraph{Graph polynomials and vertex merges.}

We use graphs to describe the individual parts.
In particular, we will eventually express the Schatten-$p$ norm as a sum of what we call \emph{graph polynomials} defined as follows:
\begin{definition}[Graph polynomial $P_G$]
    Let $G = (V,E)$ be an undirected graph.
    We define a polynomial associated to $G$ as follows:
    \[
    P_G(w,x) = \sum_{\substack{f \, : \, V \rightarrow [n] \\ \text{ injective}}} \prod_{v \in V} w_{f(v)} \cdot \prod_{\{v,w\} \in E} \iprod{x_{f(v)}, x_{f(w)}} \, .
    \]
\end{definition}

We remark that the graph $G$ may contain self-loops and the edges may have multiplicities.
In particular, when $G$ consists of a single self-loop (of multiplicity 1), it holds that $P_G(w,x) = \sum_{i=1}^n w_i \norm{x_i}^2$.
We use the convention that a loop contributes 2 to the degree of a vertex, i.e., each ``endpoint'' contributes 1.
Most often we will not write the arguments ($x$ and $w$) of the polynomial.
Further, we also allow $G$ to be the empty graph, in which case we use the convention that $P_G(w,x) = 1$.

The following types of graphs with arise in our decomposition into graph polynomials.
\begin{definition}[Vertex merge]
  Let $G = (V,E)$ be a graph.
  We say a (multi-)graph $H$ is a \emph{vertex merge} of $G$ if $H$ can be obtained by sequentially identifying pairs of vertices of $G$ while maintaining the set of edges.
  Note that $H$ may be a multigraph, and it may contain self-loops.
\end{definition}
\begin{definition}[Graphs obtained by vertex merging (disjoint) cycles]
  For a $p\in \N$ and $r \in \N$, denote by $\Gcycmerge{p}$ and (respectively, $\Gcycmergedisjoint{p,r}$) the class of multigraphs obtained by starting with a $p$-cycle (respectively, a disjoint union of at most $r$ cycles whose lengths sum to $p$) and performing an arbitrary number of vertex merges.
\end{definition}
Clearly, $\Gcycmerge{p} = \Gcycmergedisjoint{p,1}$.
For $p = 1$, we define the $1$-cycle to be a single vertex that has a self-loop.
See \Cref{fig:graph_polys} for an illustration of the vertex merging process.
The graph polynomial corresponding to the four-cycle (\Cref{fig:graph_poly_1}) is
\[
  P_{C_4} = \sum_{\substack{i_1,i_2,i_3,i_4 \in [n]\,, \\ \text{all distinct}}} w_{i_1}w_{i_2}w_{i_3}w_{i_4} \iprod{x_{i_1},x_{i_2}} \iprod{x_{i_2},x_{i_3}} \iprod{x_{i_3},x_{i_4}} \iprod{x_{i_4},x_{i_1}} \,,
\]
and the graph polynomial corresponding to the graph $G$ in \cref{fig:graph_poly_2} is 
\[
  P_G = \sum_{\substack{i_1,i_2,i_3 \in [n]\,, \\ \text{all distinct}}} w_{i_1}w_{i_2}w_{i_3} \iprod{x_{i_1},x_{i_2}}^2 \iprod{x_{i_2},x_{i_3}}^2 \,. 
\]

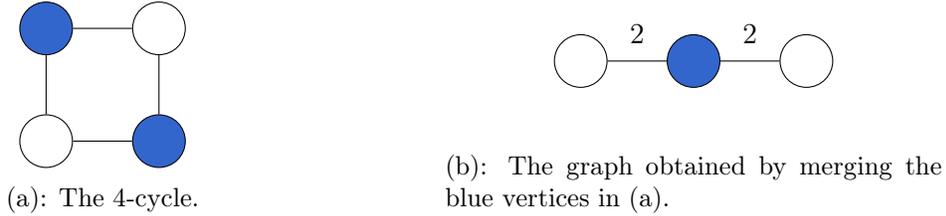
\begin{figure}[ht]
    \centering
    \begin{subfigure}[b]{0.4\textwidth}
        \label{fig:four_cycle}
        \centering
        \begin{tikzpicture}[baseline,every node/.style={circle,draw,minimum size=0.7cm}]
            \node (A) at (0,-1) [fill=navyblue] {};
            \node (B) at (1.5,-1) {};
            \node (C) at (1.5,-2.5) [fill=navyblue] {};
            \node (D) at (0,-2.5) {};

            \draw (A) -- (B);
            \draw (B) -- (C);
            \draw (C) -- (D);
            \draw (D) -- (A);

        \end{tikzpicture}
        \caption{(a): The 4-cycle.}
        \label{fig:graph_poly_1}
    \end{subfigure}
    \hspace{1cm}
    \begin{subfigure}[b]{0.4\textwidth}
    \label{fig:four_cycle_contracted}
        \centering
        \begin{tikzpicture}[baseline,every node/.style={circle,draw,minimum size=0.7cm}]
            \node (A) at (-0.5,1) {};
            \node (B) at (1,1) [fill=navyblue] {};
            \node (C) at (2.5,1) {};

            \draw (A) -- node[above,draw=none] {2} (B);
            \draw (B) -- node[above,draw=none] {2} (C);
        \end{tikzpicture}
        \caption{(b): The graph obtained by merging the blue vertices in (a).}
        \label{fig:graph_poly_2}
    \end{subfigure}
    \caption{Illustration of vertex merging process.}
    \label{fig:graph_polys}
\end{figure}

\paragraph{Breaking the Schatten-\texorpdfstring{$p$}{p} norm into graph polynomials.}

With this in place, we start by expanding out the Schatten-$p$ norm.
In particular, let $\cQ$ be the set of all partitions of $[p]$, and for any $Q \in \cQ$, we say that $(i_1, \ldots, i_p) \in [n]^p$ \emph{respects} $Q$ if $i_j = i_k$ if and only if $j$ and $k$ are in the same part of $Q$.
Using the convention that $i_{p+1} = i_1$ it follows that
\begin{align*}
      \left \| \frac 1 n \sum_{i \leq n} w_i x_i x_i^\top \right \|_p^p = \frac 1 {n^p} \sum_{i_1, \ldots, i_p = 1}^n \prod_{j=1}^p w_{i_j} \cdot \prod_{j=1}^p \iprod{x_{i_j},x_{i_{j+1}}} = \frac 1 {n^p} \sum_{Q \in \cQ} \sum_{\substack{i_1, \ldots, i_p \\ \text{respecting }Q}}^n \prod_{j=1}^p w_{i_j} \cdot \prod_{j=1}^p \iprod{x_{i_j},x_{i_{j+1}}} \,.
\end{align*}
Note that we can represent each $Q$ as a graph whose vertex set are the parts of $Q$ and the edges are given by $(i_j, i_{j+1})$ (abusing notation to denote by $i_j$ not a number in $[n]$, but the part of $Q$ it belongs to).
Note that there is a natural bijection between $\cQ$ and $\Gcycmerge{p}$.
Further, for a given $Q$ and its associated graph $G$, we can express all $(i_1, \ldots, i_p)$ respecting $Q$ as injective functions from the vertex set of $G$ to $[n]$.
Thus, 
  \[
  \frac 1 {n^p} \sum_{i_1, \ldots, i_p = 1}^n \prod_{j=1}^p w_{i_j} \prod_{j=1}^p \iprod{x_{i_j},x_{i_{j+1}}}  = \frac 1 {n^p} \sum_{G \in \Gcycmerge{p}} P_G(w,x)\,.
  \]

Constructing SoS upper bounds on the individual graph polynomials is the heart of our argument.
In particular, we will prove the following lemma in the subsequent sections by combining the Efron-Stein decomposition of our graph polynomials, bounds on the operator norms of certain graph matrices, and an induction on $p$.
\begin{lemma}
\torestate{
\label{lem:sos-graph-poly}
Let $n,d,p \in \N$ be such that $p \geq 2, d \leq n \leq d^2$ and let $\eta \in [0,1]$.
Let $x_1,\ldots,x_n \sim \cN(0,I_d)$ and let $w_1,\ldots,w_n$ be indeterminates.
Let $G \in \Gcycmergedisjoint{p,r}$, then
\begin{align*}
&\E_{x_1,\ldots,x_n} \inf \left \{  B \in \R \text{ s.t. } \Set{\{w_i^2 = w_i\}_{i \in [n]}, \sum_{i \leq n} w_i \leq \eta n} \proves_{O(p)}^{w}
P_G(w,x) \leq B \right \} \\
&\leq\qquad (p \cdot \log n)^{O(p)} \cdot dn^{r/4} \cdot (dn)^{p/2} \cdot (\eta^2 n)^{\card{V} / 4} \cdot d^{r/2}  \, .
\end{align*}
}
\end{lemma}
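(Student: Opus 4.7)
I will prove the lemma by strong induction on the total edge count $p$ of $G$. The strategy has three ingredients: an Efron--Stein decomposition of $P_G$ in the $x$-variables, a spectral bound on the ``top'' Efron--Stein component via graph matrices, and an inductive reduction of the ``lower'' Efron--Stein components to graph polynomials on smaller graphs in $\Gcycmergedisjoint{p', r'}$.

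For the decomposition, for each $S \subseteq V$ define the marginalized polynomial
\[
  P_G^{\subseteq S}(w, x) := \sum_{\substack{f : V \to [n] \\ \text{injective}}} \prod_{v \in V} w_{f(v)} \cdot \E_{\{x_{f(v)}\}_{v \notin S}} \prod_{\{u,v\} \in E} \iprod{x_{f(u)}, x_{f(v)}}\,,
\]
so that $P_G = P_G^{\subseteq V}$, and inclusion--exclusion gives $P_G = P_G^{=V} + \sum_{S \subsetneq V}(-1)^{|V|-|S|} P_G^{\subseteq S}$ with $P_G^{=V} := \sum_{T \subseteq V}(-1)^{|V|-|T|} P_G^{\subseteq T}$. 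The key structural fact is that $P_G^{=V}$ admits a matrix representation: there is a symmetric matrix $A_G$ with rows and columns indexed by tuples in $[n]^{\lceil |V|/2 \rceil}$, whose entries are polynomials in $x$ of degree at most $2|E|$, such that $P_G^{=V}(w,x)$ equals $(w^{\otimes \lceil |V|/2 \rceil})^\top A_G \, w^{\otimes \lceil |V|/2 \rceil}$ as a polynomial identity modulo $\{w_i^2 = w_i\}$. Since the constraints imply $\|w^{\otimes \lceil |V|/2 \rceil}\|^2 \leq (\eta n)^{|V|/2}$, this yields the SoS bound $P_G^{=V} \leq \|A_G\|_\op \cdot (\eta n)^{|V|/2}$.

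Next I bound $\E \|A_G\|_\op$ using the graph matrix formalism. Expanding each entry of $A_G$ in the Hermite basis realizes $A_G$ as a sum of graph matrices in the sense of Ahn--Medarametla--Potechin, associated to shapes derived from $G$. The Efron--Stein projection onto the top component forces every vertex to carry a nontrivial Hermite label, which eliminates exactly the low-rank graph matrices that doom the naive matrix representation described in \Cref{sec:techniques}. Applying the graph matrix operator-norm bound, after a careful accounting of shape parameters (vertex separators, edge multiplicities, and the contribution of the $r$ cycle components), yields the claimed scaling in $d$, $n$, $p$, $r$, and $|V|$. For the lower-order terms $P_G^{\subseteq S}$ with $S \subsetneq V$, Wick's theorem computes the Gaussian expectation over $\{x_{f(v)}\}_{v \notin S}$ as a weighted sum (with scalar coefficients that are powers of $d$) over perfect matchings of the half-edges incident to $V \setminus S$. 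Each matching produces a graph polynomial $P_{G'}$ on a multigraph $G'$ with vertex set $S$, and one verifies that $G' \in \Gcycmergedisjoint{p', r'}$ with $p' < p$ and a controlled number of components $r'$. The extra $w$-sum over $V \setminus S$ is bounded by $(\eta n)^{|V \setminus S|}$ using $w_i^2 = w_i$ and $\sum_i w_i \leq \eta n$, and invoking the induction hypothesis on each $P_{G'}$ produces the desired bound. Summing all contributions and absorbing the $p^{O(p)}$ combinatorial factors from the subset sum and matching enumeration into $(p \log n)^{O(p)}$ completes the induction.

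The main obstacle is the graph-matrix spectral analysis of $A_G$: the exponents of $d$, $n$, and $\eta$ in the statement emerge from a delicate interplay between the cycle structure of $G$ (parametrized by $p$, $r$, and $|V|$), the shape parameters of the induced graph matrices, and the Efron--Stein vanishing that rules out the ``bad'' low-rank pieces identified in the techniques section. A secondary technical point is verifying closure of the class $\Gcycmergedisjoint{\cdot,\cdot}$ under Wick pairings: deleting a vertex of even degree and pairing its incident half-edges must again yield a multigraph obtainable by vertex-merges from a disjoint union of cycles, so that the induction closes. The base case ($|V| = 1$, so $G$ consists of a single vertex with $p$ self-loops) reduces to bounding $\sum_i w_i \|x_i\|^{2p}$, which follows from standard $\chi^2$ concentration together with $\sum_i w_i \leq \eta n$.
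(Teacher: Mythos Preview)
Your proposal is correct and follows essentially the same strategy as the paper: the Efron--Stein decomposition $P_G = P_G^{=V} + \sum_{S \subsetneq V} (\pm 1)\, P_G^{\subseteq S}$, a graph-matrix spectral bound on $P_G^{=V}$ (yielding $\|A_G\|_\op \cdot (\eta n)^{|V|/2}$), and an inductive reduction of each $P_G^{\subseteq S}$ via Wick's theorem to graph polynomials on fewer vertices in $\Gcycmergedisjoint{p',r'}$. The paper carries out precisely these steps, packaging the spectral bound and the Wick reduction as separate lemmas and doing the exponent accounting you flag as the main obstacle explicitly in the inductive step (this is where $d \le n \le d^2$ is used); one minor point you gloss over is that the mixed signs in the decomposition force you to carry a two-sided bound through the induction, which the paper handles by proving the stronger inequality $(P_G)^2 \le B^2$.
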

We defer its proof to later subsections, but at this point offer the following intuition for the dependence of the bound on the various parameters.
In particular, consider the graph polynomial corresponding to $C_p$, the cycle on $p$ vertices.
In this case, $r = 1$ and $\card{V} = p$ and (using the convention that $i_{p+1} = i_1$)
\[
  P_{C_p} = \sum_{\substack{i_1, \ldots, i_p \in [n]\,, \\ \text{all distinct}}} \prod_{j=1}^p w_{i_j} \cdot \prod_{j=1}^p \iprod{x_{i_j}, x_{i_{j+1}}} \,.
\]
Each summand, ignoring the $w_{i_j}$ variables, is roughly equal to $\pm d^{p/2}$.
Thus, heuristically we expect this polynomial of the order $(\eta n d)^{p/2}$ for a concrete instantiation of the $w_{i_j}$ variables.
Note that this is smaller by roughly a factor of $n^{p/4}$ than the bound claimed above.
Yet, some extra slack factor is indeed necessary as illustrated by the following example:
Consider the case when $G$ corresponds to a union of $p$ self-loops.
In this case $\card{V} = p$ and $r = p$.
It holds that
\[
  P_G = \sum_{\substack{i_1, \ldots, i_p \in [n]\,, \\ \text{all distinct}}} \prod_{j=1}^p w_{i_j} \cdot \prod_{j=1}^p \norm{x_{i_j}}^2 \,,
\]
which is of the order $(\eta n d)^p$ and is only slightly smaller than the bound given by \cref{lem:sos-graph-poly}, which would be $\poly(n,d) \cdot (\sqrt{\eta} n d)^p$.
Lastly, consider the graph $G$ obtained by starting with $C_p$ and merging all the vertices.
We obtain a single self-loop of multiplicity $p$.
In this case $\card{V} = r = 1$.
And $P_G = \sum_{i=1}^n w_i \norm{x_i}^{2p}$ which is on the order of $(\eta n) d^p$.
This is only a bit smaller than the bound $\poly(n,d) \cdot (dn)^{p/2}$ given by \cref{lem:sos-graph-poly}.

With~\cref{lem:sos-graph-poly} in hand, we can readily proof~\cref{thm:sos-schatten-p}.
\begin{proof}[Proof of~\cref{thm:sos-schatten-p}]
  By the preceding discussion, it is enough to bound
  \[
    \frac 1 {n^p} \sum_{G \in \Gcycmerge{p}} P_G(w,x) = \frac 1 {n^p} \sum_{t=1}^p \sum_{\substack{G \in \Gcycmerge{p}\,,\\ \card{V(G)} = t}} P_G(w,x)\,.
  \]
  Using that there are at most $p^{O(p)}$ graphs in $\Gcycmerge{p}$, it follows from~\cref{lem:sos-graph-poly} that in expectation over $x_1, \ldots, x_n$ there is an SoS proof that the above is at most
  \begin{align*}
    &\frac{p^{O(p)}}{n^p} \sum_{t = 1}^p (p \cdot \log n)^{O(p)} \cdot dn^{1/4} \cdot (dn)^{p/2} \cdot (\eta^2 n)^{t / 4} \cdot d^{1/2} \\
    &\qquad \leq (p \cdot \log n)^{O(p)} \cdot d^{3/2}n^{1/4} \cdot \Paren{\frac d n}^{p/2} \cdot \sum_{t = 1}^p (\eta^2 n)^{t / 4} \,.
  \end{align*}
  We make a case distinction.
  Suppose that $\eta^2 n \geq 1$, then since $t \leq p$ it holds that $\sum_{t = 1}^p (\eta^2 n)^{t / 4} \leq p \cdot (\eta \sqrt{n})^{p/2}$.
  If $\eta^2 n \leq 1$, then the same sum is at most $p \cdot \sqrt{\eta}n^{1/4}$.
  Thus, the total expression is at most
  \[
    (p \cdot \log n)^{O(p)} \cdot d^{3/2}n^{1/2} \cdot \Paren{\frac d n}^{p/2} \cdot \max\Set{1, (\eta \sqrt{n})^{p/2}} = (p \cdot \log n)^{O(p)} \cdot d^{3/2}n^{1/2} \cdot \max\Set{\frac d n, \frac{\eta d}{\sqrt{n}}}^{p/2} \,,
  \]
  which implies the claim since $n \leq d^2$ and hence $d^{3/2} n^{1/2} \leq d^{5/2}$.
  
\end{proof}

In the remaining subsections, we shall give the proof of \Cref{lem:sos-graph-poly}.

\subsection{Sum-of-Squares Bound on Graph Polynomials}
\label{sec:high_level_graph_poly}

In this section, we will start the proof~\cref{lem:sos-graph-poly} to give SoS upper bounds on graph polynomials.
We will use the Efron Stein decomposition (defined below) to decompose them into several parts.
For one of these parts, we will be able to construct a good-enough spectral certificate,  and for the remaining parts, we will show their bounds are a function of smaller graph polynomials, which we will show to be small via induction.
We will describe the decomposition and induction in this section and defer the proof of the spectral upper bound to~\cref{sec:bound_equals_term,sec:graph_matrix_combo} and the proof of some auxiliary properties used in the induction to~\cref{sec:induction_subset_terms}.

We start by recalling the Efron Stein decomposition and show how it induces the decomposition of graph polynomials.

\paragraph{Efron-Stein decomposition and its interaction with graph polynomials.}

We first recall the Efron-Stein decomposition of a function $f$ on a product space.

\begin{definition}[Efron-Stein decomposition]
Let $X_1,\ldots,X_n$ be independent random variables assuming values in a set $\Omega$ and suppose $f \, : \, \Omega^n \rightarrow \R$ is a real-valued function.
For $S \subseteq [n]$, let $f^{\subseteq S} \, : \, \Omega^{|S|} \rightarrow \R$ be given by
\[
f^{\subseteq S}(y) = \E_{X_{\overline{S}}} f(X_{\overline{S}},y) \, .
\]
We also define
\[
f^{=S}(y) = \sum_{T \subseteq S} (-1)^{|S \setminus T|} f^{\subseteq T}(y) \, .
\]
\end{definition}
For a thorough treatment of the Efron-Stein decomposition, see, e.g.,~\cite{Odo14}.
We will use the following facts: The first one just uses the definition of $f^{=}$ and that $f^{\sse [n]} = f$.
\begin{fact}
  $f = f^{=[n]} - \sum_{T \subsetneq [n]} (-1)^{\card{[n] \setminus T}} f^{\subseteq T}$ \,.
\end{fact}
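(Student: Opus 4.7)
The plan is to apply the definition of $f^{=S}$ with $S = [n]$ and split off the $T = [n]$ summand. First I would observe that $f^{\subseteq [n]}(y) = f(y)$ holds by the definition of $f^{\subseteq S}$, since the expectation $\E_{X_{\overline{[n]}}}$ is taken over the empty set of variables and is therefore trivial.

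Next I would instantiate the definition
\[
f^{=S}(y) = \sum_{T \subseteq S} (-1)^{|S \setminus T|} f^{\subseteq T}(y)
\]
at $S = [n]$ and separate out the $T = [n]$ term (for which the sign is $(-1)^0 = 1$), obtaining
\[
f^{=[n]} = f^{\subseteq [n]} + \sum_{T \subsetneq [n]} (-1)^{|[n] \setminus T|} f^{\subseteq T}.
\]
Substituting $f^{\subseteq [n]} = f$ from the first observation and rearranging the two terms yields the stated identity
\[
f = f^{=[n]} - \sum_{T \subsetneq [n]} (-1)^{|[n] \setminus T|} f^{\subseteq T}.
\]

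There is no real obstacle to overcome here: this is just a tautological unfolding of the two definitions, and the only content is the trivial observation that taking expectation over the empty index set is the identity operation. I would present the argument in one or two lines rather than as a separate displayed proof.
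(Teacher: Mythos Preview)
Your proposal is correct and matches the paper's approach exactly: the paper does not even display a separate proof for this fact, merely remarking that it ``just uses the definition of $f^{=}$ and that $f^{\subseteq [n]} = f$,'' which is precisely the one-line unfolding you describe.
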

\begin{fact}
  \label{fact:ES_remove_strict_subset_func}
  Suppose $f \colon \Omega^n \rightarrow \R$ is a function that does not depend on $X_i$.
  Then $f^{=[n]} = 0$.
\end{fact}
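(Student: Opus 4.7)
The plan is to prove this by the standard pairing/cancellation argument that exploits the signed inclusion-exclusion structure of the Efron-Stein decomposition. The key observation I would first establish is that if $f$ does not depend on $X_i$, then for any $T \subseteq [n]$ with $i \notin T$, the two quantities $f^{\subseteq T}$ and $f^{\subseteq T \cup \{i\}}$ are equal as functions (after the usual identification of $f^{\subseteq T}(y)$ with $f^{\subseteq T}(y_T)$, restricting $y \in \Omega^n$ to coordinates in $T$). This is because $f^{\subseteq T \cup \{i\}}(y)$ averages $f$ over $X_{\overline{T \cup \{i\}}} = X_{\overline{T}} \setminus X_i$ while holding the $i$-th coordinate equal to $y_i$; since $f$ ignores its $i$-th argument, this value is unchanged if we also average over $X_i$, giving precisely $f^{\subseteq T}(y)$.

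Given this equality, I would split the defining sum
\[
f^{=[n]}(y) = \sum_{T \subseteq [n]} (-1)^{|[n] \setminus T|} f^{\subseteq T}(y)
\]
according to whether $i \in T$ or not, and pair each $T$ with $i \notin T$ with its companion $T \cup \{i\}$. The signs attached to these two terms are $(-1)^{|[n] \setminus T|}$ and $(-1)^{|[n] \setminus T| - 1}$, which differ by a factor of $-1$, while by the preceding step the two functions $f^{\subseteq T}$ and $f^{\subseteq T \cup \{i\}}$ coincide. Hence every pair cancels and the entire sum is zero, proving $f^{=[n]} = 0$. There is no real obstacle here; the only subtlety is just being careful about the implicit identification of $f^{\subseteq T}$, viewed as a function on $\Omega^{|T|}$, with the corresponding function on $\Omega^n$ that ignores the coordinates outside $T$, so that the cancellation makes sense pointwise.
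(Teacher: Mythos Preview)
Your proposal is correct and follows essentially the same approach as the paper: both arguments pair each $T$ not containing $i$ with $T\cup\{i\}$, observe $f^{\subseteq T}=f^{\subseteq T\cup\{i\}}$ since $f$ ignores $X_i$, and cancel the pair using the opposite signs in the inclusion-exclusion sum.
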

\begin{proof}
  Pair up the sets $T \sse [n]$ in such a way that one set in the pair does not contain $i$ and the other set is the first union $\Set{i}$.
  Note that this partitions all sets.
  Consider one such pair $(T, T\cup \Set{i})$.
  Since $f$ does not depend on $X_i$, it holds that $f^{\sse T} = f^{\sse T \cup \Set{i}}$.
  The fact now follows by noting that $f^{\sse T}$ and $f^{\sse T\cup \Set{i}}$ appear with opposite signs in the definition of $f^{=[n]}$.
\end{proof}

Fix $G \in \Gcycmergedisjoint{p,r}$.
We use the Efron-Stein decomposition to define a decomposition of the graph polynomial $P_G$. 
Note that this is not quite the same as the vanilla Efron-Stein decomposition of $P_G$ itself -- if $G = (V,E)$, our decomposition has $2^{|V|}$ pieces, even though $P_G$ depends on the $n$ random variables $x_1,\ldots,x_n$ so its Efron-Stein decomposition would have $2^n$ pieces.
In particular, for every $S \subseteq V$ that is non-empty, the function $P_G^{=S}$ below \emph{does} depend on all of $x_1, \ldots, x_n$.
Note that in this sense the definition below abuses the notation of the Efron Stein decomposition.
As we will never make reference to any ``standard'' Efron Stein decomposition, this should not lead to confusion.

\begin{definition}
  Let $G = (V,E)$ be a graph and $P_G(x,w)$ be the associated graph polynomial.
  Suppose $x_1,\ldots,x_n \sim \cN(0,I_d)$.
  For $S \subseteq V$, we define the polynomials
  \[
  P_G^{\subseteq S}(w,x) = \sum_{\substack{f \, : \, V \rightarrow [n] \\ \text{ injective}}} \prod_{v \in V} w_{f(v)} \cdot \Paren{\prod_{\{v,w\} \in E} \iprod{x_{f(v)}, x_{f(w)}}}^{\subseteq f(S)}
  \]
  and
  \[
    P_G^{=S}(w,x) = \sum_{\substack{f \, : \, V \rightarrow [n] \\ \text{ injective}}} \prod_{v \in V} w_{f(v)} \cdot \Paren{\prod_{\{v,w\} \in E} \iprod{x_{f(v)}, x_{f(w)}}}^{=f(S)} \, .
  \]
\end{definition}
Clearly, $P_G = \sum_{S \subseteq V} P_G^{=S}$ and $P_G = P_G^{= V} - \sum_{S \sse V} (-1)^{\card{V \setminus S}} P_G^{\subsetneq S}$.
Note that all parts of the decomposition are polynomials in all of the $w_i$'s and $x_i$'s.

\paragraph{Example of Efron-Stein Decomposition for simple graph polynomials.}

Performing Efron-Stein decomposition of graph polynomials naturally gives rise to other graph polynomials, corresponding to graphs on fewer vertices.
For intuition, let us consider again the graph polynomial corresponding to $C_4$:
\[
  P_{C_4} = \sum_{\substack{i_1,i_2,i_3,i_4 \in [n]\,, \\ \text{all distinct}}} w_{i_1}w_{i_2}w_{i_3}w_{i_4} \iprod{x_{i_1},x_{i_2}} \iprod{x_{i_2},x_{i_3}} \iprod{x_{i_3},x_{i_4}} \iprod{x_{i_4},x_{i_1}} \,.
\]
Let $S$ be the set of two non-adjacent vertices.
Then it, e.g., holds that our constraints (on $w$) imply that 
\begin{align*}
  P_{C_4}^{\sse S} &= \sum_{\substack{i_1,i_2,i_3,i_4 \in [n]\,, \\ \text{all distinct}}} w_{i_1}w_{i_2}w_{i_3}w_{i_4} \E_{x_{i_1}, x_{i_3}} \iprod{x_{i_1},x_{i_2}} \iprod{x_{i_2},x_{i_3}} \iprod{x_{i_3},x_{i_4}} \iprod{x_{i_4},x_{i_1}} \\
  &= \sum_{\substack{i_1,i_2,i_3,i_4 \in [n]\,, \\ \text{all distinct}}} w_{i_1}w_{i_2}w_{i_3}w_{i_4} \iprod{x_{i_2},x_{i_4}}^2 = (\eta n)^2 P_{G'} \,,
\end{align*}
where $G'$ is the graph consisting of two nodes connected by a single edge of multiplicity 2.

\paragraph{Proof of~\cref{lem:sos-graph-poly}}

We will use an induction on the number of vertices in the graph $G$ to show~\cref{lem:sos-graph-poly} (restated below).
\restatelemma{lem:sos-graph-poly}
In particular, in the decomposition $P_G = P_G^{= V} - \sum_{S \sse V} (-1)^{\card{V \setminus S}} P_G^{\subsetneq S}$, $P_G^{= V}$ is the part that we will bound using a spectral certificate in~\cref{sec:bound_equals_term,sec:graph_matrix_combo}, and for each $P_G^{\subsetneq S}$ we will show that it corresponds to a function of graph polynomials on graphs on fewer vertices and use our inductive hypothesis.
We defer the proof of some auxiliary lemmas used in the induction to~\cref{sec:induction_subset_terms}.

\begin{proof}[Proof of~\cref{lem:sos-graph-poly}]
For technical reasons, we will show instead that
\begin{align*}
  &\E_{x_1,\ldots,x_n} \inf \left \{  B \in \R \text{ s.t. } \Set{\{w_i^2 = w_i\}_{i \in [n]}, \sum_{i \leq n} w_i \leq \eta n} \proves_{O(p)}^{w}
  P_G(w,x)^2 \leq B^2 \right \} \\
  &\leq (p \cdot \log n)^{O(p)} \cdot dn^{r/4} \cdot (dn)^{p/2} \cdot (\eta^2 n)^{\card{V} / 4} \cdot d^{r/2}  \, .
\end{align*}
Note that the second statement implies the first using that $\Set{X^2 \leq C^2} \proves_2 X \leq C$ (cf.~\cref{fact:sos_square_root}).
We use an induction on $p$.

\paragraph{Base case.}

For $p = 1$, we start with a 1-cycle, i.e., a single vertex with a self-loop.
The bound we want to show is $(\log n)^{O(1)} \cdot dn^{1/4} \cdot (dn)^{1/2} \cdot (\eta^2 n)^{1/ 4} d^{1/2} = (\log n)^{O(1)} \sqrt{\eta} n d^2$.
Clearly, there are no vertex merges that we can perform.
Thus, $\Gcycmergedisjoint{1}$ only contains the 1-cycle.
In this case, $P_G$ takes the form $P_G(w,x) = \sum_{i=1}^n w_i \norm{x_i}^2$.
Since in this case our constraints imply that $P_G(w,x) \geq 0$, it suffices to certify an upper bound on $P_G$ (as this will imply an upper bound on $P_G^2$).
Note that
\[
  \Set{w_i^2 = w_i \,, \sum_{i=1}^n w_i \leq \eta n} \proves_{O(1)}^{w} P_G(w,x) = \sum_{i=1}^n w_i \norm{x_i}^2 \leq \eta n \cdot \max_{i \in [n]} \;\norm{x_i}^2 \,.
\]
Thus,
\begin{align*}
  \E_{x_1,\ldots,x_n} &\inf \left \{  B \in \R \text{ s.t. } \Set{\{w_i^2 = w_i\}_{i \in [n]}, \sum_{i \leq n} w_i \leq \eta n} \proves_{O(1)}^{w_i}
P_G(w,x) \leq B \right \} \\
&\leq \eta n \cdot \E_{x_1, \ldots, x_n} \max_{i \in [n]}\; \norm{x_i}^2\,.
\end{align*}
Let $Z = \max_{i \in [n]}\; \norm{x_i}$.
Note that $Z$ is a 1-Lipschitz function of the standard Gaussian vector $(x_1, \ldots, x_n) \sim N(0,I_{d\cdot n})$ and hence, $Z - \E Z$ is 1-subgaussian (see, e.g.,~\cite[Theorem 2.26]{Wainwright19}).
Since further, all $\norm{x_i}$ are 1-subgaussian with mean $\E \norm{x_1}$, it holds that $\E Z \leq \E \norm{x_1} + \sqrt{2\log n} \leq 5 \sqrt{d}$, as $n \leq d^2$.
And thus,
\[
  \E_{x_1, \ldots, x_n} \max_{i \in [n]}\; \norm{x_i}^2 = \E \Paren{Z - \E Z}^2 + \Paren{\E Z}^2 \leq 100 d \,.
\]

It follows that the whole expression is at most $O(\eta n d) \leq (\log n)^{O(1)} \sqrt{\eta} n d^2$.

\paragraph{Inductive step.}

Let $p \geq 2$ and assume the statement holds for all $G' \in \Gcycmergedisjoint{p',r}$ for all $1 \leq p' < p$ and $r \leq p'$ simultaneously.
Let $G = (V,E)$.
We distinguish the cases $\card{V} = 1$ and $\card{V} \geq 2$.
Note that in the first case, $G$ is a single vertex with $p$ self-loops.
Thus, $P_G = \sum_{i=1}^n w_i \norm{x_i}^{2p}$.
Again, our constraints imply that $P_G \geq 0$, so it is sufficient to certify an upper bound on $P_G$.
Since $\card{V} = 1, r \geq 1$, the bound we aim for is at least $(p \cdot \log n)^{O(p)} \cdot d^{3/2} \cdot \sqrt{n} \cdot\sqrt{\eta} \cdot (dn)^{p/2}$.
Analogously to the base case it follows that
\[
  \Set{w_i^2 = w_i \,, \sum_{i=1}^n w_i \leq \eta n} \proves_{O(1)}^{w} P_G(w,x) = \sum_{i=1}^n w_i \norm{x_i}^{2p} \leq \eta n \cdot \max_{i \in [n]} \;\norm{x_i}^{2p} \,.
\]
It remains to take expectations over $x_1, \ldots, x_n$.

As before, it follows that
\[
  \E \max_{i \in [n]} \,\norm{x_i}^{2p} = \E Z^{2p} \leq 4^p \E(Z - \E Z)^{2p} + 4^p \Paren{\E Z}^{2p} \leq O(d)^p \cdot p^{O(p)} \,.
\]

Thus, the final bound is $p^{O(p)} \cdot \eta n \cdot d^p$.
This compares favorably to the bound we aimed for since $d \leq n \leq d^2$.

\paragraph{$G$ contains at least two vertices.}

Next, we consider the case that $\card{V} \geq 2$.
We use the decomposition
\[
  P_G = P_G^{= V} - \sum_{S \subsetneq V} (-1)^{\card{V \setminus S}} P_G^{\subseteq S} \,.
\]
Suppose our constraints on $w_i$ imply an SoS upper bound of $\tilde{B}$ on each of the summands squared in expectation over $x_1, \ldots, x_n$ (at degree $O(p)$).
Since there are at most $2^p$ summands, using SoS triangle inequality (cf.~\cref{fact:sos_triangle}) this implies that
\[
  \Set{w_i^2 = w_i \,, \sum_{i=1}^n w_i \leq \eta n} \proves_{O(p)}^{w} P_G^2 \leq 4^p \cdot \Paren{P_G^{=V}}^2 + 4^p \cdot \sum_{S \subsetneq V} \Paren{P_G^{\subseteq S}}^2 \leq 8^p \cdot \tilde{B}^2 \,.
\]

We will bound the $P_G^{=V}$ term using a spectral certificate obtained via graph matrices and the remaining terms via our inductive hypothesis.
Indeed, we will show the following bound in~\cref{sec:bound_equals_term,sec:graph_matrix_combo} which shows that the first term is bounded in the way we would like.
Note that the bound obtained via the lemma below is better by a factor of $d^r$ than what we actually need.
\begin{lemma}
  \torestate{
  \label{lem:bound-on-equals-term}
      Let $p \geq 2, d \leq n \leq d^2$, $G = (V,E) \in \Gcycmergedisjoint{p,r}$ such that $\card{V} \geq 2$, and let $P_G(x,w)$ be its associated graph polynomial.
      Let $x_1,\ldots,x_n \sim \cN(0,I_d)$.
      Then,
      \begin{align*}
      &\E_{x_1,\ldots,x_n} \inf \left \{  B \in \R \text{ s.t. } \Set{\{w_i^2 = w_i\}_{i \in [n]}, \sum_{i \leq n} w_i = \eta n }\proves_{O(p)}^{w} \Paren{P_G^{=V}(w,x)}^2 \leq B^2 \right \} \\
      &\leq (p \cdot \log n)^{O(p)} \cdot dn^{r/4} \cdot (dn)^{p/2} \cdot (\eta^2 n)^{\card{V} / 4}\, .
      \end{align*}
  }
\end{lemma}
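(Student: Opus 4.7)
The plan is to exhibit $P_G^{=V}$ as a quadratic form $(w^{\otimes a})^\top M w^{\otimes b}$ whose matrix $M$ is a function only of $x$, to bound $\E_x \|M\|_{\op}$ via the graph matrix formalism, and then to combine the two ingredients through a sum-of-squares Cauchy--Schwarz. Concretely, fix any balanced partition $V = V_1 \sqcup V_2$ with $|V_1|=a$, $|V_2|=b$, $a+b=|V|$, and for $I_1 \in [n]^a, I_2 \in [n]^b$ let $f_{I_1,I_2}\colon V\to [n]$ be the assignment reading $I_1$ off $V_1$ and $I_2$ off $V_2$. Define
\[
M[I_1, I_2] = \mathbbm{1}[f_{I_1,I_2} \text{ is injective}] \cdot \Bigl(\prod_{\{u,v\}\in E}\langle x_{f_{I_1,I_2}(u)}, x_{f_{I_1,I_2}(v)}\rangle\Bigr)^{=f_{I_1,I_2}(V)},
\]
so that $P_G^{=V}(w,x) = (w^{\otimes a})^\top M\, w^{\otimes b}$ by construction.

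The SoS Cauchy--Schwarz step is then routine: under $\{w_i^2 = w_i, \sum_i w_i = \eta n\}$ one has $\|w^{\otimes a}\|^2 = (\sum_i w_i)^a \leq (\eta n)^a$ and likewise for $b$, yielding a degree-$O(p)$ SoS proof of $(P_G^{=V})^2 \leq \|w^{\otimes a}\|^2 \cdot \|M\|_{\op}^2 \cdot \|w^{\otimes b}\|^2 \leq (\eta n)^{|V|}\cdot \|M\|_{\op}^2$. Hence one may take $B = (\eta n)^{|V|/2}\|M\|_{\op}$, reducing the lemma to the estimate
\[
\E_x\|M\|_{\op} \leq (p\log n)^{O(p)}\cdot d\, n^{r/4}\cdot (dn)^{p/2} \cdot n^{-|V|/4}.
\]

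For this operator-norm bound, I would expand each entry of $M$ in the Hermite basis of the coordinates of $x_1,\ldots,x_n$, yielding a decomposition $M = \sum_\tau c_\tau M_\tau$ into graph matrices indexed by shapes $\tau$ in the sense of \cite{AhnMP16}. The decisive structural consequence of the Efron--Stein projection $(\cdot)^{=f(V)}$ is that in every shape that survives the expansion, \emph{every} vertex of $V$ is incident to at least one non-trivial Hermite edge; equivalently, the constant Hermite polynomial $h_0$ never appears at a vertex of $V$. This non-degeneracy rules out exactly the low-rank ``bad'' directions that made the na\"ive matrix representation of $P_G$ itself too expensive (cf.\ the worked example at the end of \Cref{sec:techniques}). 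The norm of each $M_\tau$ is then controlled by the standard graph-matrix bound in terms of $n^{(|V(\tau)|-|S_{\min}(\tau)|)/2}$ and a Hermite-weight factor in $d$ and $p$, and one sums over the $p^{O(p)}$ admissible shapes.

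The main obstacle, and the technical heart of the proof, is the combinatorial accounting required to assemble these shape-by-shape norm bounds into the single inequality above: one must show that for every admissible $\tau$ the exponent pattern in $(d, n)$ arising from its vertex count, its minimum separator, and its Hermite weights satisfies a uniform inequality compatible with the target exponents $d\,n^{r/4}(dn)^{p/2}n^{-|V|/4}$. Hermite weight and separator size trade off against each other, and the Efron--Stein non-degeneracy isolated above is precisely the ingredient that forces the gain of the $n^{-|V|/4}$ factor relative to the na\"ive representation of $P_G$. I defer this shape-by-shape analysis, together with the summation over shapes, to \Cref{sec:bound_equals_term,sec:graph_matrix_combo}.
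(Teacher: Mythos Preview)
Your proposal is essentially the paper's approach: represent $P_G^{=V}$ as a quadratic form $(w^{\otimes a})^\top M\, w^{\otimes b}$, use SoS Cauchy--Schwarz to reduce to $\E\|M\|_{\op}$, expand $M$ into graph matrices, observe that the Efron--Stein projection kills shapes with isolated circle vertices, and defer the shape-by-shape combinatorics. One point of imprecision worth flagging: you cannot take \emph{any} balanced partition $V=V_1\sqcup V_2$; the paper's choice of $L,R$ is dictated by the alternating structure of the original cycles (before vertex merges), and this specific choice is what makes the minimum-separator/flow bound in the deferred combinatorics come out to the stated exponent. An arbitrary balanced split need not yield $f(G_0)\geq\sum_i\lfloor\ell_i/2\rfloor$, which is the seed of the whole accounting.
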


For our induction, the following lemma is the key ingredient.
We prove it in~\cref{sec:induction_subset_terms}.
We use the following notation.
For a vertex $v$, we use $\deg(v)$ to denote its degree, $\degdagger(v)$ to denote its degree counting only edges that are \emph{not} loops, and $s(v)$ to denote the number of self-loops.
Note that $\deg(v) = \degdagger(v) + 2s(v)$.
\begin{lemma}
  \torestate{
  \label{lem:main_induction}
  Let $p \in \N$ and let $G = (V,E) \in \Gcycmergedisjoint{p,r}$, i.e., a vertex merge of a union of $r$ cycles whose lengths sum to $p$.
  Let $S \subsetneq V$ be a strict subset of the vertices of $G$ and $u$ be a vertex in $V \setminus S$.
  There exists a family $\cF$ of at most $\deg(u)^{O(\deg(u))}$ graphs $G'$, each on vertex set $V' = V \setminus \{u\}$, and each a vertex merge of a union of at most $r +  \tfrac{\degdagger(u)} 2 -1$ cycles with lengths summing to $p - \deg(u) / 2$, such that
  \[
    \Set{w_i^2 = w_i \,, \sum_{i=1}^n w_i = \eta n} \proves_{O(1)}^{w} P_{G}^{\subseteq S} = O(d)^{s(u)} \cdot (\eta n - |V'|) \cdot \sum_{G' \in \cF} P_{G'}^{\subseteq S}\,.
  \]
  Further, for each such graph $G'$, each vertex $v$ has the same degree and ``dagger-degree'' as in $G$.
  }
\end{lemma}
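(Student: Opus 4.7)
The proof plan is to integrate out the variable $x_{f(u)}$ from $P_G^{\sse S}$, which is legitimate since $u \in V \setminus S$ (so $x_{f(u)}$ is among the variables averaged out in the Efron--Stein projection $^{\sse S}$), and to recognize the result as a weighted sum of graph polynomials on graphs with one fewer vertex. By independence of $x_1,\ldots,x_n$, I perform this integration first. The only factors in $\prod_{\{a,b\} \in E} \iprod{x_{f(a)}, x_{f(b)}}$ involving $x_{f(u)}$ are $\|x_{f(u)}\|^{2s(u)}$ (from the $s(u)$ self-loops at $u$) and $\prod_{j=1}^{\degdagger(u)} \iprod{x_{f(u)}, x_{f(v_j)}}$ (from the $\degdagger(u)$ non-loop half-edges at $u$, where $v_1,\ldots,v_{\degdagger(u)}$ are the neighbors of $u$ listed with multiplicity). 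Applying \cref{cor:wicks-with-norm} yields
\[
\E_{x_{f(u)}}\Bigl[\|x_{f(u)}\|^{2s(u)} \prod_{j} \iprod{x_{f(u)}, x_{f(v_j)}}\Bigr] = O(d)^{s(u)} \sum_{M \in \cM_{\degdagger(u)}} \prod_{\{i,j\} \in M} \iprod{x_{f(v_i)}, x_{f(v_j)}},
\]
so for each matching $M$ I define a graph $G'_M$ on $V' = V \setminus \{u\}$ whose edges are the edges of $G$ not incident to $u$, together with one edge $\{v_i,v_j\}$ per matched pair (a self-loop when $v_i = v_j$).

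Second, I peel off $f(u)$ from the outer injective sum. Writing $\sum_{f: V \to [n] \text{ inj}} = \sum_{f': V' \to [n] \text{ inj}} \sum_{f(u) \in [n] \setminus f'(V')}$, the factor $w_{f(u)}$ contributes $\sum_{i \in [n] \setminus f'(V')} w_i = \sum_i w_i - \sum_{v \in V'} w_{f'(v)}$, which under our constraints simplifies (as a degree-$O(1)$ SoS identity) to $(\eta n - |V'|) \prod_{v \in V'} w_{f'(v)}$, using $\sum_i w_i = \eta n$ together with the Booleanity identity $w_{f'(v_0)} \cdot \prod_{v \in V'} w_{f'(v)} = \prod_{v \in V'} w_{f'(v)}$ for any $v_0 \in V'$. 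Collecting yields
\[
P_G^{\sse S} = O(d)^{s(u)} \cdot (\eta n - |V'|) \cdot \sum_{G'_M \in \cF} P_{G'_M}^{\sse S}, \qquad |\cF| \leq (\degdagger(u) - 1)!! \leq \deg(u)^{O(\deg(u))}.
\]

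Third, I verify the structural claims about each $G'_M$. The edge count is direct: $|E(G)| - s(u) - \degdagger(u) + \degdagger(u)/2 = p - \deg(u)/2$, using $\deg(u) = \degdagger(u) + 2 s(u)$. For the cycle count, I write $G$ as a vertex-merge of disjoint cycles $C_{\ell_1},\ldots,C_{\ell_r}$ and, for each $i$, let $k_i$ (resp.\ $s_i$) be the number of vertices (resp.\ adjacent pairs of vertices) of $C_{\ell_i}$ mapped to $u$, so that $\sum k_i = \deg(u)/2$ and $\sum s_i = s(u)$. Removing $u$ decomposes each $C_{\ell_i}$ with $k_i \geq 1$ into $k_i - s_i$ non-loop ``arcs'', each becoming a path in $G'_M$ (minus its Wick edges) with two dangling endpoints, plus $s_i$ self-loop arcs that disappear. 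I then consider the auxiliary $2$-regular multigraph $H$ on the $\degdagger(u)$ dangling endpoints, with ``arc edges'' pairing endpoints of the same path and ``Wick edges'' from $M$. $H$ decomposes into at most $\degdagger(u)/2$ alternating cycles, each giving exactly one closed walk in $G'_M$; combined with the at most $r-1$ cycles $C_{\ell_i}$ that avoid $u$ entirely (at least one must touch $u$ since $\deg(u) \geq 1$), this gives at most $r + \degdagger(u)/2 - 1$ cycles as required. Degree preservation at each $v \in V'$ follows by a half-edge count: removing $u$ strips $d_{uv}$ half-edges at $v$ (where $d_{uv}$ is the number of non-loop $uv$-edges), and $M$ restores exactly $d_{uv}$ half-edges at $v$, so $\deg_{G'_M}(v) = \deg_G(v)$ (with the analogous identity for $\degdagger$ recorded by the lemma).

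\textbf{The main obstacle} is the cycle-count bound $r' \leq r + \degdagger(u)/2 - 1$, where the appearance of $\degdagger(u)$ rather than $\deg(u)$ is essential: self-loops at $u$ contribute only the scalar $O(d)^{s(u)}$ and take no part in re-gluing paths, so only the $\degdagger(u)$ non-loop half-edges at $u$ enter the alternating-cycle analysis. The ``$-1$'' reflects that $u$ must lie on at least one of the original cycles. This combinatorial accounting (cycles in $H$ plus untouched original cycles) is the technical heart of the lemma and is what drives the recursive improvement in the proof of \cref{lem:sos-graph-poly}.
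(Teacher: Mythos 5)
Your proof is correct and follows essentially the same route as the paper: (i) integrate out $x_{f(u)}$ using Wick's theorem with norm (\cref{cor:wicks-with-norm}), obtaining a sum over matchings and the scalar $O(d)^{s(u)}$; (ii) peel off $f(u)$ from the injective sum and use the Booleanity identity $w_i^2 = w_i$ together with $\sum_i w_i = \eta n$ to produce the factor $(\eta n - |V'|)$; (iii) verify the combinatorial structure of each $G'_M$. The paper isolates step (iii) as a separate lemma (\cref{lem:union-of-cycles-merge}), proven by tracking the underlying partition $S_1,\dots,S_N$ of the disjoint cycles directly, showing the resulting graph after deleting $S_u$ and re-gluing via $M$ is again $2$-regular. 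Your auxiliary $2$-regular multigraph $H$ on the $\degdagger(u)$ dangling half-edges, with alternating arc/Wick edges, is a clean reformulation of that count; since its cycles alternate edge types they have even length, giving at most $\degdagger(u)/2$ new cycles, and the ``$-1$'' from the destroyed original cycle is handled the same way in both arguments. One small point worth being careful about, which the paper shares: the claim that the dagger-degree of every $v \in V'$ is \emph{exactly} preserved can fail, since the Wick matching may pair two half-edges at the same $v$, creating a self-loop and lowering $\degdagger(v)$. This does not harm the downstream induction in \cref{lem:sos-graph-poly} because dagger-degree can only \emph{decrease}, so the cycle-count bound $r - t + \tfrac12 \sum_i \degdagger(u_i)$ evaluated with the \emph{original} dagger-degrees remains a valid (possibly loose) upper bound; but it would be cleaner to state monotonicity rather than equality.
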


Let $S \subsetneq V$.
Let $u_1, \ldots, u_t$ be the vertices in $V \setminus S$.
We apply~\cref{lem:main_induction} iteratively for $u_1, \ldots, u_t$.
Let $r' = r -t + \tfrac 1 2 \sum_{i=1}^t \degdagger(u_i)$, where the degrees are taken to be in the original graph $G$ -- note that this does not matter, since~\cref*{lem:main_induction} guarantees that the degrees do not change except for the vertex we contract.
Let $p' = p - \tfrac 1 2 \sum_{i=1}^t \deg(u_i)$ and $s = \sum_{i=1}^t s(u_i)$.
We obtain a family $\cH$ of at most $\prod_{i=1}^t \deg(u_i)^{\deg(u_i)} \leq p^{O(\sum_{i=1}^t \deg(u_i))} = p^{O(p)}$ graphs, each on vertex set $S$, and each a vertex merge of a union of at most $r'$ cycles with lengths summing to $p'$, i.e., in $\Gcycmergedisjoint{p',r'}$, such that\footnote{The increase in degree of the SoS proof to at most $O(t)$ follows by standard composition of SoS proofs.
Note that $t \leq p$.}
\begin{equation*}
  \Set{w_i^2 = w_i \,, \sum_{i=1}^n w_i = \eta n} \proves_{O(t)}^{w_i} P_G^{\sse S} = O(d)^{s} \cdot \prod_{i=1}^t(\eta n - \card{V} - i)\cdot \sum_{G' \in \cH} P_{G'}^{\sse S} \,.
\end{equation*}
Since $\mathcal{H}$ contains at most $p^{O(p)}$ terms, it again follows by SoS triangle inequality that
\begin{equation}
  \Set{w_i^2 = w_i \,, \sum_{i=1}^n w_i = \eta n} \proves_{O(t)}^{w_i} \Paren{P_G^{\sse S}}^2 \leq p^{O(p)} \cdot O(d)^{2s} \cdot \prod_{i=1}^t(\eta n - \card{V} - i)^2\cdot \sum_{G' \in \cH} \Paren{P_{G'}^{\sse S}}^2 \,. \label{eq:induction}
\end{equation}
We can apply our inductive hypothesis to each term in the sum.
In particular, we can certify an upper bound of (note that when $S = \emptyset$, $G'$ will be the empty graph and the upper bound is trivially 1)
\[
  (p \cdot \log n)^{O(p)} \cdot dn^{r'/2} \cdot (dn)^{p'} \cdot (\eta^2 n)^{\card{S} / 2} \cdot d^{r'} = \frac{(p \cdot \log n)^{O(p)} \cdot dn^{r/2} \cdot (dn)^{p} \cdot (\eta^2 n)^{\card{V} / 2} \cdot d^{r} }{n^{(r-r')/2} \cdot (dn)^{(p-p')} \cdot (\eta^2 n)^{(\card{V} - \card{S})/2} \cdot d^{(r - r')}} \,.
\]
Recall that the numerator of the right-hand side is the bound we aim to show.
Plugging this back into~\cref{eq:induction}, the bound we can certify on $(P_G^{\sse S})^2$ is our target bound, multiplied by (this also uses that $\cH$ contains at most $p^{O(p)}$ graphs, $O(1)^s = p^{O(p)}$, $\prod_{i=1}^t(\eta n - \card{V} - i) \leq (\eta n)^t$, and $r' - r \geq - t$)
\begin{align*}
  &\frac{d^{2s} \cdot (\eta n)^{2t}}{n^{(r-r')/2} \cdot (dn)^{(p-p')} \cdot (\eta^2 n)^{(\card{V} - \card{S})/2} d^{(r - r')}} \\
  &= d^{2s + (p'-p) + (r'-r)} \times n^{3t/2 + (r'-r)/2 + (p'-p)} \times \eta^{2t} \\
  &\leq d^{2s + (p'-p) + (r'-r)} \times n^{3t/2 + (r'-r)/2 + (p'-p)}\,.
\end{align*}
We start with some simplifying observations about how the changes in $p$ and $r$ balance out, using that $\deg(u_i) = 2 s(u_i) + \degdagger(u_i)$.
\begin{align*}
  (p' - p) + (r' - r) &= -\tfrac 1 2 \sum_{i=1}^t \deg(u_i) - t + \tfrac 1 2 \sum_{i=1}^t \degdagger(u_i) = -s - t \,.
\end{align*}
Thus, the exponent of $d$ is equal to $2s + \Brac{(p'-p) + (r' -r)} = s-t$.
And the exponent of $n$ is equal to 
\begin{align*}
  \tfrac 1 2 \Brac{3t + 2(p'-p) + (r' -r)} &= \tfrac 1 2 \Brac{2t - s +  (p'-p)} = \tfrac 1 2 \Brac{2t - 2s - \frac 1 2 \sum_{i=1}^t \degdagger(u_i) }\notag \\
  &= \Paren{t-s} - \frac{1}{4} \sum_{i=1}^t \degdagger(u_i)\,.
\end{align*}
We make a case distinction based on whether $s \leq t$ or $s \geq t$.
We start with the case when $s \geq t$.
In this case, we upper bound the exponent of $n$ by $t-s$.
Then the expression involving $n$ and $d$ is at most $(d/n)^{(s-t)} \leq 1$ since $n \geq d$ and $s \geq t$.
If $s \leq t$, note that there are at least $t-s$ vertices such that $\degdagger(u_i) \geq 2$.
Thus, we can upper bound the exponent of $n$ by $\tfrac{t-s}{2}$.
Thus, the whole expression involving $n$ and $d$ is at most $(n/d^2)^{(t-s)/2} \leq 1$ since $n \leq d^2$.

\end{proof}

\subsection{Leading up to a Bound on \texorpdfstring{$P_G^{=V}$}{Equals Term} via Graph Matrices}
\label{sec:bound_equals_term}

We prove~\cref{lem:bound-on-equals-term} (restated below) in this subsection and the next.
\restatelemma{lem:bound-on-equals-term}
Its proof is based on showing an upper bound on the spectral norm of an appropriate matrix representation for the polynomial $P_G^{=V}$.
The proof is split into two parts.
We will first define the matrix representation and show that it can be expressed as a sum of not too many graph matrices (defined below).
Each graph matrix in the sum can (a) be obtained in a specific way from the graph $G$ underlying the polynomial $P_G$, we call such graph matrices \emph{admissible} (see below for a precise definition), or (b) we can roughly speaking relate its spectral norm to that of an admissible graph matrix.
The crucial property of these admissible graph matrices is that they satisfy appropriate norm bounds (cf.~\cref{lem:norm_bound_admissible_g_m}).\footnote{Why this is true will unfortunately only become clear in~\cref{sec:graph_matrix_combo}. We nevertheless choose this way of presenting our proof since the arguments in showing that the decomposition exists and in showing the norm bounds are very different, and can be read largely independently.}
It follows that each term in the sum has not too large spectral norm.
We encapsulate the facts we need about the graph matrix decomposition in~\cref{lem:graph_matrix_decomposition}.
Assuming both of these lemmas, we first finish the proof of~\cref{lem:bound-on-equals-term}.
We will then prove \cref{lem:graph_matrix_decomposition} in the remainder of this subsection.
We prove \cref{lem:norm_bound_admissible_g_m} in \cref{sec:graph_matrix_combo}.

\paragraph{Matrix representations of graph polynomials.}

We represent graph polynomials as quadratic forms as follows.
For sets $L, R$ and tuples $\alpha \in [n]^{\card{L}},\beta \in [n]^{\card{R}}$, we say that $(\alpha, \beta)$ is a \emph{valid labelling} if all entries of $\alpha$ and $\beta$ (combined) are distinct.
Further, for $(\alpha,\beta)$ that is a valid labelling, we define $f_{\alpha,\beta} \colon L \cup R \rightarrow [n]$ as $f(v) = \alpha_v$, if $v \in L$ and $f(v) = \beta_v$ if $v \in R$.
Recall that for a graph $G = (V,E)$, the corresponding graph polynomial $P_G$ was defined as
\[
    P_G(w,x) = \sum_{\substack{f \, : \, V \rightarrow [n] \\ \text{ injective}}} \prod_{v \in V} w_{f(v)} \cdot \prod_{\{v,w\} \in E} \iprod{x_{f(v)}, x_{f(w)}} \, .
\]
We use the following definition.
\begin{definition}[Matrix representation of graph polynomials]
  Let $G = (V,E)$ be a graph and let $L \cup R = V$ be a partition of the vertices of $G$ such that both $L$ and $R$ are non-empty, i.e., in particular $\card{V} \geq 2$.
  Consider the matrix $M_G \in \R^{n^{\card{L}} \times n^{\card{R}}}$ indexed by tuples $\alpha \in [n]^{\card{L}},\beta \in [n]^{\card{R}}$ that is defined as
  \[
    (M_G)_{(\alpha,\beta)} =
    \begin{dcases}
      \hfill \prod_{\{v,w\} \in E} \iprod{x_{f_{\alpha,\beta}(v)}, x_{f_{\alpha,\beta}(w)}} \,, &\quad \text{if $(\alpha,\beta)$ is a valid labelling,} \\
      \hfill 0 \,, &\quad \text{otherwise.}
    \end{dcases}
  \]
\end{definition}
Note that for a graph polynomial $G$ we have $P_G(w,x) = \Paren{w^{\otimes \card{L}}}^\top M_G w^{\otimes \card{R}}$.
Further, if $M_G$ is a matrix representation of $P_G$ (for some choice of $L$ and $R$), let $M_G^{=S}$ be the matrix with entries $(M_G)_{\alpha,\beta}^{=f_{\alpha,\beta}(S)}$.
Then $P_G(w,x)^{=S} = \Paren{w^{\otimes \card{L}}}^\top M_G^{=S}w^{\otimes \card{R}}$.

Note that $M_G$ depends on the choice of $L$ and $R$.
We suppress this dependence in the notation as the choice will be clear from context.

\paragraph{Reducing~\cref{lem:bound-on-equals-term} to a graph matrix decomposition and norm bounds.}

Let $M_G^{=V}$ be a matrix representation of $P_G^{=V}$.
We describe below how to choose $L$ and $R$, but recall that $\card{L} + \card{R} = \card{V}$.
We will use the following SoS upper bound on $(P_G^{=V}(w,x))^2$, where the first inequality uses the fact that $\proves_4^{x,y} (x^\top A y)^2 \leq \norm{A}^2 \norm{x}^2 \norm{y}^2$ (cf.~\cref{fact:spectral_upper_bound}):
  \begin{align*}
    \proves_{O(p)}^{w} (P_G^{=V})^2 = \Paren{(w^{\otimes\card{L}})^\top M_G^{=V} w^{\otimes\card{R}}}^2 &\leq \Norm{M_G^{=V}}^2 \norm{w^{\otimes \card{L}}}^2 \norm{w^{\otimes \card{L}}}^2 = \Norm{M_G^{=V}}^2 \norm{w}^{2(\card{L} + \card{R})} \\
    &= \Norm{M_G^{=V}}^2 \cdot (\eta n)^{\card{V}} \,.
  \end{align*}
  Taking expectations over $x_1, \ldots, x_n$, this implies that
  \begin{align*}
    &\E_{x_1,\ldots,x_n} \inf \left \{  B \in \R \text{ s.t. } \Set{\{w_i^2 = w_i\}_{i \in [n]}, \sum_{i \leq n} w_i = \eta n }\proves_{O(p)}^{w} \Paren{P_G^{=V}(w,x)}^2 \leq B^2 \right \} \\
    &\leq (\eta n)^{\card{V}/2} \cdot \E_{x_1, \ldots, x_n} \Norm{M_G^{=V}} \, .
    \end{align*}
  Thus, it is enough to show that
  \[
    \E_{x_1, \ldots, x_n} \Norm{M_G^{=V}} \leq (p \cdot \log n)^{O(p)} \cdot d\cdot d^{p/2} \cdot n^{r/4} \cdot n^{\tfrac{2p - \card{V}} 4}
  \]
  In the remainder of this section, we drop the superscript $V$ and write $M_G^{=V} = M_G^=$.

  This directly follows by triangle inequality from the following lemma.
  We will show most of its proof in this subsection and the remaining part (containing everything related to graph matrix norm bounds) in~\cref{sec:graph_matrix_combo}.
  \begin{lemma}[Graph Matrix Decomposition]
    \label{lem:graph_matrix_decomposition}
      There exists a choice of $L$ and $R$ such that the following holds:
      There exists a sequence $M_1, \ldots, M_N \in \R^{n^{\card{L}} \times n^{\card{R}}}$ of graph matrices (cf.~\cref{eq:graph_matrix_def}) and coefficients $c_1, \ldots, c_N \in \R$ such that
      \begin{enumerate}
          \item $M_G^= =\sum_{i=1}^N c_i M_i$.
          \item $N \leq p^{O(p)}$ and $\forall i \in [N] \colon \abs{c_i} \leq p^{O(p)}$.
          \item \label{test} For all $M_i$ it holds that
          \[
            \E_{x_1, \ldots, x_n} \norm{M_i} \leq (p \cdot \log n)^{O(p)} \cdot d \cdot d^{p/2} \cdot n^{r/4} \cdot n^{\tfrac{2p - (\card{V})} 4} \,.
          \]
      \end{enumerate}
    \end{lemma}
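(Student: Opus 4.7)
The plan is to obtain the decomposition by expanding each entry of $M_G^=$ in the Hermite basis. First I would unfold each inner product as $\iprod{x_i, x_j} = \sum_{k=1}^d x_i(k) x_j(k)$, so that $\prod_{\{v,w\} \in E} \iprod{x_{f(v)}, x_{f(w)}}$ becomes a sum over coordinate assignments $\kappa \colon E \to [d]$ of scalar monomials in the coordinate variables $\{x_{f(v)}(k)\}_{v \in V, k \in [d]}$. Rewriting each such monomial via the Hermite-to-monomial conversion of \Cref{fact:hermite_monomial_expansion} yields an expansion into products $\prod_{v \in V} \prod_{k \in [d]} h_{\sigma(v,k)}(x_{f(v)}(k))$ indexed by Hermite-degree arrays $\sigma$. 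Crucially, the Efron--Stein projection $``=f(V)"$ acts diagonally in this basis: it simply retains those $\sigma$ for which $\sum_{k} \sigma(v,k) \geq 1$ for every $v \in V$, i.e.\ every vertex of $G$ must contribute a nonconstant Hermite polynomial in some coordinate.

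Next, I would fix $L, R$ to be any balanced partition of $V$ (say $\card{L} = \lceil \card{V}/2 \rceil$) and group the resulting expansion by \emph{shape}. A shape $\tau$ records the vertex set $V$, the multiset of Hermite degrees $\{\sigma(v,k) : k \in [d]\}$ at each vertex $v$, and the multiplicity pattern by which $\kappa$ aligns edges of $G$ across coordinates; concrete indices $k$ are abstracted out. Summing the entries of a given shape over all injective labellings $(\alpha,\beta)$ produces a graph matrix $M_\tau$ in the standard sense of \cite{AhnMP16}. Since $\card{E} \leq p$ and each surviving $\sigma(v,k)$ lies in $\{0,1,\dots,p\}$ with total weight at most $2p$, the number of distinct shapes is at most $p^{O(p)}$, and the coefficients $c_\tau$ --- products of the Hermite-conversion constants from \Cref{fact:hermite_monomial_expansion} --- are likewise bounded by $p^{O(p)}$. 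This delivers parts (1) and (2) of the lemma.

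The remaining and most delicate step is to verify that every shape $\tau$ arising above is \emph{admissible} in the sense required by \Cref{lem:norm_bound_admissible_g_m}, so that part~(3) applies. Admissibility has two sources here: (i) the vertex/edge skeleton of $\tau$ is inherited from $G \in \Gcycmergedisjoint{p,r}$, so $\tau$ is itself a well-structured multigraph of bounded edge weight; and (ii) the Efron--Stein $=V$ condition forces every vertex of $\tau$ to be ``active'' (carry strictly positive Hermite weight), which is precisely the property that suppresses the low-rank, large-eigenvalue behavior flagged by the $\mathbf{1}^\top A v$ example in \Cref{sec:techniques}. The main obstacle I anticipate is combinatorial: a single edge of $G$ may split into several contributions under $\kappa$ while distinct edges may fuse into aligned Hermite factors, so carefully encoding how these operations interact with the Efron--Stein activity constraint --- and confirming that every resulting shape satisfies the formal admissibility hypotheses of \Cref{lem:norm_bound_admissible_g_m} --- is the technical heart of the argument. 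Once this bookkeeping is in place, \Cref{lem:norm_bound_admissible_g_m} supplies the norm bound of part~(3) and completes the decomposition.
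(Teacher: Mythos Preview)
Your high-level strategy matches the paper's: expand the inner products coordinatewise, group by equality pattern of the coordinate indices, convert monomials to Hermite products, observe that the Efron--Stein projection $(\cdot)^{=V}$ kills exactly those terms in which some circle vertex is isolated, and then feed the surviving graph matrices into \Cref{lem:norm_bound_admissible_g_m}. Parts (1) and (2) go through essentially as you describe.

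There are, however, two concrete gaps in how you set up part~(3). First, the choice of $L$ and $R$ is \emph{not} arbitrary. \Cref{lem:norm_bound_admissible_g_m} bounds only \emph{admissible} graph matrices in the sense of \Cref{def:circle_and_diamond_merge}, and admissibility is defined relative to a specific base graph $G_0$: alternating circle/diamond cycles in which the circle vertices alternate between $L$ and $R$. The paper accordingly derives $L,R$ from the pre-merge cycle decomposition $C_1,\dots,C_r$, assigning circles alternately to $L_0,R_0$ and then updating under each vertex merge. A generic balanced bipartition of $V$ would not place the resulting shapes in the admissible class, so the flow-based norm bound (which starts from $f(G_0)=\sum_i\lfloor\ell_i/2\rfloor$) would not apply.

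Second, admissibility in \Cref{def:circle_and_diamond_merge} requires that merged circles do not share a diamond neighbor (the ``independent set'' condition on the partition $S$). When $G$ was obtained by merging adjacent vertices of the original cycles, the shapes you produce fail this condition and are not directly admissible. The paper handles this via a separate inductive reduction (\Cref{sec:reduction_to_ind_set}): one peels off a vertex in a non-singleton connected component of a part of $S$, relates the resulting graph matrices to those of a smaller instance (either adding an isolated diamond or increasing an edge multiplicity), and controls the norm change using \Cref{fact:graph_matrix_bound}. Your outline does not mention this case distinction, and without it the admissibility claim---and hence the invocation of \Cref{lem:norm_bound_admissible_g_m}---is incomplete.
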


\subsubsection{Graph Matrices and Admissible Graph Matrices}
\label{sec:graph_matrices_and_admissible_graph_matrices}
  
In order to proceed, we introduce graph matrices and what we call \emph{admissible} graph matrices.
The former is a by now standard notion (cf.~\cite{AhnMP16}), the latter is a structured subset of them useful for our analysis, see \cref{fig:admissible_gm} for an example.
For the former, we do not introduce the full framework but only what is necessary for us.

\paragraph{Graph matrices.}

Let $H = (V(H), E(H))$ be a graph of two types of vertices, ``circle vertices'' and ``diamond vertices''.
We assign a weight to each vertex.
All circle vertices are assigned weight 1 and all diamond vertices are assigned weight $\log_n (d)$.
For a set of vertices $S$, we denote by $w(S)$ the sum of weights of all vertices in $S$.
Every edge $e$ has one circle and one diamond endpoint, denoted by $e_\circ$ and $e_\diamond$.
Denote by $V_\diamond$ the set of all diamond vertices.
Let $L,R$ be a partition of all the circle vertices.
Edges can have multiplicity larger than 1.
Let $x_1, \ldots, x_n \in \R^d$ and let $h_i \colon \R \rightarrow \R$ be the normalized $i$-th probabilists' Hermite polynomial.
We call the following matrix $M \in \R^{n^{\card{L}} \times n^{\card{R}}}$ (based on the graph $H$) a \emph{graph matrix}\footnote{Note that the dimension of our definition is slightly larger than the one used in~\cite{AhnMP16}. However, all additional entries are 0, so the spectral norm bounds we use still apply.}
  \begin{equation}
    \label{eq:graph_matrix_def}
    M_{(\alpha,\beta)} =
    \begin{dcases}
      \hfill \sum_{\substack{g \colon V_\diamond \rightarrow [d] \\ \text{injective}}} \prod_{e \in E(H)} h_{\ell(e)} \Paren{x_{f_{\alpha,\beta}(e_\circ)}(g(e_\diamond)) } \,, &\quad \text{if $(\alpha,\beta)$ is a valid labelling,} \\
      \hfill 0 \,, &\quad \text{otherwise.}
    \end{dcases} \,.
  \end{equation}
  We call the graph $H$ its graph representation.
  Note that graph representations of graph matrices and graph polynomials are \emph{not} the same (but will be related).
  
  \paragraph{Admissible graph matrices.}

  \emph{Admissible} graph matrices are graph matrices derived from less complex ones via the process outlined below.
  The key property we will use from this is that we can derive norm bounds for the more complex ones from norm bounds for the simpler ones.
  We will describe how to derive the graph representations of the more complex ones from the graph representation of the simpler one.

  We start describing the starting point:
  Let $p \in \N$.
  Let $G_0$ be the following graph on $2p$ vertices.
  It has $p$ ``circle'' vertices and $p$ ``diamond'' vertices (subdivided into two subsets $L,R$).
  The graph $G_0$ is a union of $r$ cycles which alternate between circle and diamond vertices, and where the circle vertices alternate between $L$ and $R$ vertices.
  If a cycle contains an odd number of circle vertices, we put the last two circles on the same side.
  The cycles contain $\ell_1, \ldots, \ell_r$ circle vertices each (such that $\sum_{i=1}^r \ell_i = p$).
  There could be cycles that contain only one circle and diamond vertex.
  In this case, we identify the two parallel edges with a single edge.
  They are arranged such that $|L| = \lceil p/2 \rceil$ and $|R| = \lfloor p/2 \rfloor$. 
  
  We use the following definition (see \cref{fig:admissible_gm} for an illustration of the described process).
  \begin{definition}[Admissible Circle- and Diamond-Merged Graphs]
    \torestate{
      \label{def:circle_and_diamond_merge}
      Let $G_0$ be as above and denote by $(\text{circles})$ and $(\text{diamonds})$ the set of circle and diamond vertices, respectively.
      Let $S$ be a partition of $(\text{circles})$ and $T$ be a partition of $(\text{diamonds})$.
      Let $G(S,T)$ be as defined below.
      We call the graph matrices represented by $G(S,T)$ \emph{admissible}, where we additionally allow each edge to have multiplicity larger than 1, as long as the total sum of multiplicities is at most $O(p)$.

      \textbf{Admissible Circle-Merged Graph:} Let $S$ be such that in every part $s$ of $S$, all circles have distance larger than 2, i.e., they do not share a diamond neighbor.
      We call the graph $G(S)$ that is obtained from $G_0$ via the following procedure a \emph{circle-merged graph}.
      \begin{quote}
          For each $s \in S$, replace the circles in $s$ with a single ``super-circle'' vertex, whose neighbors are the union of the neighbors of the circle vertices in $s$.
          If there are edges of multiplicity larger than 1, replace them with an edge of multiplicity 1.
          If $s \cap L = \emptyset$, add the super-circle vertex to $R$, otherwise add it to $L$.
          The vertex set of the new graph consists of super-circles and diamonds.
      \end{quote}
  
      \textbf{Admissible Circle- and Diamond-Merged Graph:}
      Let $S$ be such that in every part $s$ of $S$, all circles have distance larger than 2, i.e., they do not share a diamond neighbor.
      We call a graph $G(S,T)$ that is obtained from $G_0$ via the following procedure an \emph{admissible circle- and diamond-merged graph}.
      \begin{enumerate}
          \item \emph{(Circle Merging)} Obtain $G(S)$ from $G_0$ as above. 
          \item \emph{(Diamond Merging)} For each $t \in T$, replace the diamonds in $t$ with a single ``super-diamond'' vertex.
              The edge set stays the same up to relabelling endpoints.
              If there are edges of multiplicity larger than 1, replace them with an edge of multiplicity 1.
              The vertex set of the new graph consists of super-circles and super-diamonds.
          \item \emph{(Edge Removal)} Remove edges arbitrarily from $G(S,T)$ subject to the following constraints:
          \begin{enumerate}
              \item We only remove edges that when forming $G(S,T)$ first had multiplicity at least 2.
              \item No super-circle becomes isolated.
          \end{enumerate}
      \end{enumerate}
    }
  \end{definition}

\begin{figure}[ht]
    \centering
    
    \begin{subfigure}[t]{0.4\textwidth}
        \centering
        \begin{tikzpicture}[baseline,every node/.style={draw, minimum size=0.7cm}]
            \node[circle, minimum size=0.6cm] (B) at (0,-1) [fill=navyblue] {};
            \node[circle, minimum size=0.6cm] (A) at (0,1) [fill=navyblue] {};
            \node[circle, minimum size=0.6cm] (D) at (4,-1) [fill=red] {};
            \node[circle, minimum size=0.6cm] (C) at (4,1) [fill=red] {};

            \node[draw, fit=(A)(B), inner sep=0.3cm] {};
            \node[draw, fit=(C)(D), inner sep=0.3cm] {};
            
            \node[draw, diamond, aspect=2](A1) at (2,1.5) {};
            \node[draw, diamond, aspect=2](B1) at (2,0.5) {};
            \node[draw, diamond, aspect=2](C1) at (2,-0.5) {};
            \node[draw, diamond, aspect=2](D1) at (2,-1.5) {};

            \draw (A) -- (A1);
            \draw (A1) -- (C);
            \draw (C) -- (B1);
            \draw (B1) -- (B);
            \draw (B) -- (C1);
            \draw (C1) -- (D);
            \draw (D) -- (D1);
            \draw (D1) -- (A);
        \end{tikzpicture}
        \caption{Initial graph matrix representation.}
    \end{subfigure}
    \hspace{1cm}
    \begin{subfigure}[t]{0.4\textwidth}
        \centering
        \begin{tikzpicture}[baseline,every node/.style={draw, minimum size=0.7cm}]
            \node[circle, minimum size=0.6cm] (B) at (0,0) {};
            \node[circle, minimum size=0.6cm] (D) at (4,0) {};

            \node[draw, fit=(B), inner sep=0.3cm] {};
            \node[draw, fit=(D), inner sep=0.3cm] {};
            
            \node[draw, diamond, aspect=2](A1) at (2,1.5) [fill=navyblue] {};
            \node[draw, diamond, aspect=2](B1) at (2,0.5) [fill=navyblue] {};
            \node[draw, diamond, aspect=2](C1) at (2,-0.5) [fill=red] {};
            \node[draw, diamond, aspect=2](D1) at (2,-1.5) [fill=green] {};

            \draw (B) -- (A1);
            \draw (A1) -- (D);
            \draw (D) -- (B1);
            \draw (B1) -- (B);
            \draw (B) -- (C1);
            \draw (C1) -- (D);
            \draw (D) -- (D1);
            \draw (D1) -- (B);
        \end{tikzpicture}
        \caption{Graph matrix representation after merging circles according to the partition given by the colors.}
    \end{subfigure}
    
    \vspace{1cm}
    
    \begin{subfigure}[t]{0.4\textwidth}
        \centering
        \begin{tikzpicture}[baseline,every node/.style={draw, minimum size=0.7cm}]
            \node[circle, minimum size=0.6cm] (B) at (0,0) {};
            \node[circle, minimum size=0.6cm] (D) at (4,0) {};

            \node[draw, fit=(B), inner sep=0.3cm] {};
            \node[draw, fit=(D), inner sep=0.3cm] {};
            
            \node[draw, diamond, aspect=2](A1) at (2,1) {};
            \node[draw, diamond, aspect=2](C1) at (2,0) {};
            \node[draw, diamond, aspect=2](D1) at (2,-1) {};

            \draw (B) -- (A1) [color=red];
            \draw (A1) -- (D) [color=red];
            \draw (B) -- (C1);
            \draw (C1) -- (D);
            \draw (D) -- (D1);
            \draw (D1) -- (B);
        \end{tikzpicture}
        \caption{Graph matrix representation after merging the diamonds according to the partition given by the colors. The red edges originally had multiplicity 2 after merging diamonds.}
    \end{subfigure}
    \hspace{1cm}
    \begin{subfigure}[t]{0.4\textwidth}
        \centering
        \begin{tikzpicture}[baseline,every node/.style={draw, minimum size=0.7cm}]
            \node[circle, minimum size=0.6cm] (B) at (0,0) {};
            \node[circle, minimum size=0.6cm] (D) at (4,0) {};

            \node[draw, fit=(B), inner sep=0.3cm] {};
            \node[draw, fit=(D), inner sep=0.3cm] {};
            
            \node[draw, diamond, aspect=2](A1) at (2,1) {};
            \node[draw, diamond, aspect=2](C1) at (2,0) {};
            \node[draw, diamond, aspect=2](D1) at (2,-1) {};

            \draw (A1) -- (D);
            \draw (B) -- (C1);
            \draw (C1) -- (D);
            \draw (D) -- (D1);
            \draw (D1) -- (B);
        \end{tikzpicture}
        \caption{Final graph matrix representation after removing one of the red edges.}
    \end{subfigure}
    
    \caption{Example of a sequence of circle and diamond merges and edge removals to obtain an admissible graph matrix.}
    \label{fig:admissible_gm}
\end{figure}

In~\cref{sec:graph_matrix_combo} we will show the following norm bound on admissible graph matrices.
\begin{lemma}[Norm Bound for Admissible Graph Matrices]
  \torestate{
  \label{lem:norm_bound_admissible_g_m}
  Let $d \leq n \leq d^2$ and let $M = M(G(S,T))\in \R^{n^{\card{L}}\times n^{\card{R}}}$ be an admissible graph matrix.
  Then,
  \[
    \E_{x_1, \ldots, x_n} \norm{M_i} \leq (p \cdot \log n)^{O(p)} \cdot d^{p/2} \cdot n^{r/4} \cdot n^{\tfrac{2p - (\card{L} + \card{R})} 4} \,.
  \]
  }
\end{lemma}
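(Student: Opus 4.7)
The plan is to reduce \Cref{lem:norm_bound_admissible_g_m} to the standard graph-matrix norm bound obtained via the trace method (as in \cite{AhnMP16} and its refinements), and then carry out a combinatorial verification specific to the structure of admissible graphs. The trace-method bound controls $\E \|M(H)\|_{\mathrm{op}}^{2q}$ for $q \asymp \log n$ via $\E\,\trace((M M^\top)^q)$; by orthogonality of Hermite polynomials, only configurations in which every edge is traversed an even number of times survive, reducing the estimate to counting certain closed walks on $H$. Up to $(p \log n)^{O(p)}$ factors, this yields
\[
\E \|M(H)\|_{\mathrm{op}} \;\lesssim\; n^{(w(V(H)) - w(S^*(H)))/2},
\]
where circles carry weight $1$, diamonds carry weight $\log_n d$, and $S^*(H)$ is a minimum-weight vertex separator of $L$ from $R$ in $H$. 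The standard corrections for vertices unreachable from $L \cup R$ are ruled out in our setting because admissibility forbids isolating a super-circle via edge-removal.

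Applied to $H = G(S, T)$, the plan has two arithmetic steps. First, I would compute the total vertex weight directly: there are $|S|$ super-circles and $|T|$ super-diamonds, so $w(V(G(S,T))) = |S| + |T| \log_n d$, with $|L| + |R| = |S|$. Second, I would lower-bound $w(S^*)$ by exploiting the cycle structure of the starting graph $G_0$: each of the $r$ cycles alternates between circle and diamond vertices, with $L$- and $R$-circles alternating (except for the odd-length correction). Any $L$-$R$ separator must cut each surviving cycle at least twice; cuts can be performed at diamonds (which have lower weight when $d \leq n$), and several cuts can coalesce when the relevant diamonds have been merged into a common super-diamond. Meanwhile, any super-circle absorbing both an original $L$-circle and an original $R$-circle must itself lie in every separator. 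Combining these contributions, the arithmetic should yield
\[
\frac{w(V) - w(S^*)}{2} \;\leq\; \frac{p}{2}\log_n d + \frac{r}{4} + \frac{2p - |L| - |R|}{4},
\]
which, upon exponentiation and using $n^{(p/2)\log_n d} = d^{p/2}$, gives the claimed inequality.

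The main obstacle is the combinatorial accounting across the three admissible operations---circle-merging, diamond-merging, and edge-removal---which interact in non-obvious ways: circle-merging can force super-circles straddling $L$ and $R$ into every separator (raising $w(S^*)$); diamond-merging can let a single super-diamond simultaneously block several cycles (lowering $w(S^*)$); and edge-removal, constrained to act only on edges of pre-removal multiplicity $\geq 2$ and forbidden from isolating a super-circle, preserves a ``dual'' copy of each removed edge in the trace-method expansion so that the separator structure is not destroyed. The cleanest route is likely to induct on the sequence of admissible operations, maintaining as an invariant the target bound on $w(V) - w(S^*)$ and checking that each operation preserves it. The structural key is that the $r$ vertex-disjoint cycles of $G_0$ decouple the analysis into independent contributions prior to merging, with crosslinks introduced by merges accounted for explicitly against $|S|$, $|T|$, and $r$.
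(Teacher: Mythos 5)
Your high-level strategy --- trace-method norm bound in terms of a minimum-weight $(L,R)$-separator, then a combinatorial lower bound on $w(S^*)$ by chasing the effect of circle-merges, diamond-merges, and edge-removals on $G_0$ --- is exactly the paper's strategy, modulo that the paper works with max-flows rather than separators (dual formulations).

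However, there is a concrete gap in the claim that \emph{``the standard corrections for vertices unreachable from $L \cup R$ are ruled out \ldots because admissibility forbids isolating a super-circle via edge-removal.''} This protects only super-circles. Super-\emph{diamonds} can absolutely become isolated: if every edge incident to a super-diamond arose from parallel edges of multiplicity $\geq 2$ after diamond-merging, all of them are eligible for removal, and removing them strands the super-diamond. The graph-matrix norm bound (the paper's \cref{fact:graph_matrix_bound}) carries an additive $+\,w(V_{\mathrm{iso}})$ in the exponent, so each stranded super-diamond costs an extra factor $n^{w/2} = \sqrt{d}$. Dropping this term is not conservative; it makes the claimed bound false for admissible graphs that do contain isolated super-diamonds. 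The paper handles this by building the count of isolated super-diamonds into the flow lower bound (\cref{lem:graph-matrix-tradeoff}), and the two terms exactly cancel in the final exponent arithmetic. Your write-up has no mechanism to absorb this loss.

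Beyond that, the key quantitative step --- showing $f(G(S,T)) \geq \sum_i \lfloor \ell_i/2 \rfloor + w\cdot(\text{isolated super-diamonds}) - w(p - |T|) - \tfrac{3}{2}(p - |S|)$, or the equivalent separator statement --- is stated only as a plan (``induct on the sequence of admissible operations, maintaining an invariant''). The paper's proof splits this into a circle-merging claim (losing at most $1$ unit of flow per merge) and a diamond-merging claim, with the diamond-merging side requiring a delicate case analysis (edges preserved vs.\ lost, super-diamond isolated vs.\ not) and a charging argument (\cref{obs:edge_preservation}, \cref{clm:combinatorics-1}, \cref{clm:combinatorics-2}) to bound the number of case-2 and case-3 classes against $p - |S|$. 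Your heuristic that a separator ``must cut each surviving cycle at least twice'' is not the right bound: the base value $f(G_0) = \sum_i \lfloor \ell_i/2 \rfloor$ scales with cycle lengths, not linearly in $r$, and the loss from merges interacts nonlinearly with $|S|$ and $|T|$ rather than decomposing by cycle. The induction you propose would need to carry essentially the full accounting machinery the paper develops; as stated, the argument does not close.
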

We will assume it for the remainder of this subsection.

\subsubsection{Proof of Graph Matrix Decomposition Lemma (\cref{lem:graph_matrix_decomposition})}

Recall that we obtain $G$, the graph representing the graph polynomial, by starting from a disjoint union of cycles and then merging vertices.
We first show \cref{lem:graph_matrix_decomposition} assuming that we never merge adjacent vertices.
We reduce to this case in~\cref{sec:reduction_to_ind_set}.

\begin{proof}[Proof of \cref{lem:graph_matrix_decomposition}, independent set case]
\looseness=-1
  Let $M_G$ be a matrix representation of $P_G^{=V}$.
  We first describe how to choose $L$ and $R$ (the goal is to obtain a "balanced" matrix that is reasonably close to being square):
  Recall that we obtained $G$ by performing vertex merges on $r$ disjoint cycles $C_1, \ldots, C_r$.
  We first partition the vertices in $C_1, \ldots, C_r$ into sets $L_0$ and $R_0$ such that $\card{L_0} = \card{R_0} \pm 1$ and build $L$ and $R$ from these.
  In particular, for each $C_i$, we can partition its vertices into two (almost) independent sets of sizes that differ by at most 1, by assigning vertices to one of them alternatingly (in case $C_i$ has odd length, we assign the last two to the same set).
  Let $L_0$ and $R_0$ be a union of the parts of these independent sets (i.e., such that $L_0$ and $R_0$ form an independent set in $G$ as well) such that $\card{L_0} = \card{R_0} \pm 1$.
  Note that this implies that $\card{L_0},\card{R_0} \geq \floor{\tfrac p 2}$ which will be important later on.
  We construct $L$ and $R$ from $L_0$ and $R_0$ as follows:
  Every time we perform a vertex merge and identify $v_1$ and $v_2$ we arbitrarily delete one of the two from the set it is contained in.

  Note that although our independent set condition implies that performing the vertex merges does not introduce new self-loops, some of the initial cycles might be of length 1, i.e., a self-loop, there might still be some.
  We will need the following auxiliary graph $H$, from which we construct the graph matrices in the decomposition: Its vertices have two types.
  It contains the vertices in $V$ as "circle vertices" and the edges in $E$ as "diamond vertices".
  The circle vertices are partitioned into $L$ and $R$ as in $G$.
  Further, for every edge in $E(G)$, let $e$ be the corresponding diamond vertex in $H$ and add the edges $\Set{u,e}$ and $\Set{e,v}$ to $E(H)$.
  Note that since the only self-loops in $G$ correspond to initial 1-cycles, the only edges in $H$ that appear with multiplicity larger than 1 are for $\Set{u,e}$ such that $u$ and $e$ do not have any other neighbors. 
  Note that the edges of $H$ both have exactly one endpoint among circle vertices and diamond vertices.
  Further, by construction $H$ has $2 \card{E(G)} = 2p$ edges.

  \paragraph{Constructing the graph matrix decomposition.}

  We will first prove the first two points of the decomposition theorem and show the last one at the end.
  We show that, in the independent set case, all graph matrices in the decomposition are admissible, and thus, the norm bound follows from~\cref{lem:norm_bound_admissible_g_m} (here we also use that $\card{L} + \card{R} = \card{V}$).
  Recall that $M_G$ is the matrix representation of $P_G$ and $M_G^=$ that of $P_G^{=V}$.
  We will construct a decomposition of $M_G$ which will induce the decomposition of $M_G^{=V}$.
  As we will see, in this later decomposition, the terms that do not correspond to admissible graph matrices are removed.

  \paragraph{Step 1: Obtaining few summands with small coefficients.}
  Let $(\alpha,\beta)$ be a valid labelling and consider the corresponding entry in $M_G$.
  We know that $G$ contains $p$ edges (counting multiplicity), label them $e^{(1)}, \ldots, e^{(p)}$.
  We can expand this as follows, where for an edge $e$, we denote its endpoints as $e_1, e_2$ (which could be the same)
  \begin{align*}
    \prod_{e \in E} \iprod{x_{f_{\alpha,\beta}(e_1)},x_{f_{\alpha,\beta}(e_2)}} &= \prod_{e \in E} \sum_{j=1}^d x_{f_{\alpha,\beta}(e_1)}(j) \cdot x_{f_{\alpha,\beta}(e_2)} (j) \\
    &=\sum_{j_1, \ldots, j_p = 1}^d \prod_{\ell = 1}^p x_{f_{\alpha,\beta}(e_1^{(\ell)})}(j_\ell) \cdot x_{f_{\alpha,\beta}(e_2^{(\ell)})} (j_\ell) \,. 
  \end{align*}
  We will group the above sum based on equality patterns among the $j_\ell$.
  We will later show that each equality pattern correspond to a linear combination of roughly $p^p$ graph matrices.
  In particular, let $T$ be a partition of the elements $\Set{1,\ldots, p}$ and let $(j_1,\ldots, j_p) \in [d]^p$.
  We denote $(j_1,\ldots, j_p) \sim T$ if $j_\ell = j_{\ell'}$ if and only if $\ell$ and $\ell'$ belong to the same part in $T$.
  Let $\cT$ be the set of all possible partitions, then each entry of $M_G$ can be written as
  \[
    \sum_{T \in \cT}\sum_{(j_1, \ldots, j_p) \sim T} \prod_{\ell = 1}^p x_{f_{\alpha,\beta}(e_1^{(\ell)})}(j_\ell) \cdot x_{f_{\alpha,\beta}(e_2^{(\ell)})} (j_\ell) \,.
  \]
  For $T\in \cT$, let $M_{G(T)}$ be the matrix whose entries are given by
  \begin{align}
    (M_{G(T)})_{(\alpha,\beta)} =
    \begin{dcases}
      \hfill \sum_{(j_1, \ldots, j_p) \sim T} \prod_{\ell = 1}^p x_{f_{\alpha,\beta}(e_1^{(\ell)})}(j_\ell) \cdot x_{f_{\alpha,\beta}(e_2^{(\ell)})} (j_\ell) \,, &\quad \text{if $(\alpha,\beta)$ is a valid labelling,} \\
      \hfill 0 \,, &\quad \text{otherwise.} \label{eq:almost_graph_matrix}
    \end{dcases} \,.
  \end{align}
  The above implies that
  \[
    M_G = \sum_{T \in \cT} M_{G(T)}\,.
  \]
  For each $T \in \cT$, consider the following auxiliary (multi-)graph $H(T)$ defined as follows:
  Start with $H$ and for a part $t \in T$, identify all diamond vertices $e^{(\ell)}$ such that $\ell \in t$ in $H$ with a single "super-diamond" (and keep all the edges).
  Note that each $\vec{j} = (j_1, \ldots, j_p) \sim T$ assigns a value to each super-diamond.
  Let $\tilde{e}$ denote the edges in $H(T)$ and $\ell_{\tilde{e}}$ their multiplicities.
  Denote by $\tilde{e}_\circ$ and $\tilde{e}_\diamond$ its circle and diamond endpoint, respectively.
  For $\vec{j} = (j_1, \ldots, j_k) \sim T$, let $\tilde{e}_\diamond(\vec{j})$ be the value that $\vec{j}$ assigns to $\tilde{e}_\diamond$.
  With these definitions in place, note that we can write the first case in~\cref{eq:almost_graph_matrix} as
  \[
    \sum_{\vec{j} = (j_1, \ldots, j_p) \sim T} \prod_{\tilde{e} \in E(H(T))} \Paren{x_{f_{\alpha,\beta}(\tilde{e}_\circ)}(\tilde{e}_\diamond(\vec{j}))}^{\ell_{\tilde{e}}} \,.
  \]
  Note that this \emph{almost} corresponds to the definition of graph matrices.
  The only difference is that we have monomials instead of Hermite polynomials.
  We will fix this shortly using that the Hermite polynomials form a basis of all polynomials.
  In particular, if $h_\ell$ is the $\ell$-th Hermite polynomial, graph matrices have entries of the form 
  \[
    \sum_{\vec{j} = (j_1, \ldots, j_p) \sim T} \prod_{\tilde{e} \in E(H(T))} h_{\ell_{\tilde{e}}}\Paren{x_{f_{\alpha,\beta}(\tilde{e}_\circ)}(\tilde{e}_\diamond(\vec{j}))} \,.
  \]
  We next argue that we can represent each $M_{G(T)}$ as a linear combination of at most $p^{O(p)}$ graph matrices in which the absolute value of each coefficient is at most $p^{O(p)}$.
  That is, that we can find a sequence $\Set{(M_i^{(T)}, c_i^{(T)})}_{i=1}^{N^{(T)}}$ of matrix-scalar pairs such that $N^{(T)} \leq p^{O(p)}$, for all $i$, $\abs{c_i^{(T)}} \leq p^{O(p)}$ and
  \[
    M_{G(T)} = \sum_{i=1}^{N^{(T)}}c_i^{(T)} M_i^{(T)} \,.
  \]
  Note that since $\card{\cT} \leq p^{O(p)}$ this directly induces a decomposition of $M_G$ satisfying the first and second property (not too many summands and bounded coefficients):
  \begin{equation}
    \label{eq:decomp_M_G}
    M_G = \sum_{T \in \cT} \sum_{i=1}^{N^{(T)}}c_i^{(T)} M_i^{(T)} \,.
  \end{equation}
  The decomposition of $M_{G(T)}$ directly follows from the following observation and the fact that $H$, and hence also $H(T)$, has $2p$ edges, i.e., $\sum_{\tilde{e} \in E(H(T))}\ell_{\tilde{e}} = 2p$.
  Note that some of the multiplicities of the new edges might be 0, in which case we remove the edge.
  \begin{observation}
  \label{obs:hermite_decomposition}
    Let $0 \leq k \leq 2p$ and $\ell_1, \ldots, \ell_k$ be positive integers such that $\sum_{j=1}^k \ell_j = 2p$, then we can express the polynomial (in formal variables $X$) $\prod_{j=1}^k X_j^{\ell_j}$ as a linear combination of at most $p^{O(p)}$ polynomials of the form $\prod_{j=1}^k h_{\ell'_j}(X_j)$ such that $\sum_{j=1}^k \ell'_j \leq 2p$ and the coefficient of each polynomial is at most $p^{O(p)}$ in absolute value.
    Further, if $\ell_j = 1$, then also $\ell_j' = 1$.
  \end{observation}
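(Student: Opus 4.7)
The plan is to apply the single-variable monomial-to-Hermite expansion from \cref{fact:hermite_monomial_expansion} to each factor $X_j^{\ell_j}$ separately and then multiply out. Concretely, \cref{fact:hermite_monomial_expansion} gives
\[
X_j^{\ell_j} \;=\; \sum_{m_j=0}^{\lfloor \ell_j/2 \rfloor} \frac{\ell_j!}{2^{m_j} m_j! \sqrt{(\ell_j - m_j)!}} \, h_{\ell_j - 2 m_j}(X_j) \,,
\]
and since the Hermite polynomials are polynomials in a single variable, taking the product of these expansions over $j \in [k]$ immediately writes $\prod_j X_j^{\ell_j}$ as a linear combination of terms of the form $\prod_j h_{\ell'_j}(X_j)$ with $\ell'_j = \ell_j - 2 m_j$. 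The resulting $\ell'_j$'s satisfy $\sum_j \ell'_j = 2p - 2 \sum_j m_j \leq 2p$, as required, and the $\ell_j = 1$ case only produces $\ell'_j = 1$ since $\lfloor 1/2 \rfloor = 0$.

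It then remains to verify the two quantitative claims. For the number of terms, the expansion of each $X_j^{\ell_j}$ has at most $\lfloor \ell_j/2 \rfloor + 1 \leq \ell_j + 1$ summands, so the product has at most $\prod_{j=1}^k (\ell_j+1)$ terms. Using $\sum_j \ell_j = 2p$ and $k \leq 2p$, this is bounded by $(2p+1)^{2p} = p^{O(p)}$. For the coefficients, each single-factor coefficient $\ell_j! / (2^{m_j} m_j! \sqrt{(\ell_j-m_j)!})$ is at most $\ell_j!$, so the coefficient of each product term is at most $\prod_j \ell_j!$. Here I would invoke the standard multinomial bound $\prod_j \ell_j! \leq (\sum_j \ell_j)! = (2p)!$ (which follows from $\binom{2p}{\ell_1,\dots,\ell_k} \geq 1$), and $(2p)! \leq (2p)^{2p} = p^{O(p)}$.

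This is essentially a bookkeeping argument, so there is no real obstacle; the only thing to be slightly careful about is the coefficient bound. A naive ``each coefficient is at most $(2p)!$, so the product is at most $((2p)!)^k$'' yields $p^{O(p^2)}$, which is too weak. The key observation that rescues the claim is the multinomial inequality $\prod_j \ell_j! \leq (\sum_j \ell_j)!$, which exactly exploits the constraint $\sum_j \ell_j = 2p$. Everything else -- the $\sum_j \ell'_j \leq 2p$ bound and the $\ell_j = 1 \Rightarrow \ell'_j = 1$ preservation -- is immediate from the structure of the expansion, and removing edges of multiplicity zero (as indicated in the observation's setup) just corresponds to dropping factors where $\ell'_j$ happens to be zero.
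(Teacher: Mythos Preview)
Your proof is correct and follows essentially the same approach as the paper: expand each $X_j^{\ell_j}$ via \cref{fact:hermite_monomial_expansion}, multiply out, and bound the number of terms and the coefficients using $\sum_j \ell_j = 2p$. The only cosmetic difference is that the paper bounds the product of coefficients by $\prod_j \ell_j^{\ell_j} \leq (2p)^{2p}$ rather than invoking the multinomial inequality $\prod_j \ell_j! \leq (2p)!$, but both arguments give $p^{O(p)}$ and the remaining points (term count, $\sum_j \ell_j' \leq 2p$, and $\ell_j = 1 \Rightarrow \ell_j' = 1$) are handled identically.
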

  \begin{proof}
    By~\cref{fact:hermite_monomial_expansion} we can write $X_j^{\ell_j} = \sum_{r = 0}^{\ell_j} c_r h_r(X_j)$ where $\card{c_r} \leq\ell_l! \leq \ell_j^{\ell_j}$.
    Substituting this for every $X_j^{\ell_j}$ and multiplying out terms, we get at most $\prod_{j = 1}^k (\ell_j  + 1) \leq (4p)^k = p^{O(p)}$ terms of the desired form.
    Note that for each term, the coefficient is at most $\prod_{j = 1}^k \ell_j^{\ell_j} \leq (2p)^{\sum_{j=1}^k \ell_j} = p^{O(p)}$ in absolute value.
    Lastly, if $\ell_j = 1$, then $X_j^{\ell_j} = h_{\ell_j}(X_j)$.
    Since this is the only occurence of $X_j$, the final claim follows.
  \end{proof}

  \paragraph{Step 2: Efron-Stein decomposition removes terms with isolated circles.}

  To obtain the final decomposition of $M_G^{=}$, we consider
  \begin{equation}
    \label{eq:decomp_M_G_equals}
    M_G^= = \sum_{T \in \cT} \sum_{i=1}^{N^{(T)}}c_i^{(T)} \Paren{M_i^{(T)}}^{=V} \,,
  \end{equation}
  Where $\Paren{M}^{=V}$ is defined analogously as before (applying the transformation $(\cdot)^{=f_{\alpha,\beta}(V)}$ entry-wise).
  To obtain a spectral norm bound on all of these terms, we make the following observation about how the Efron-Stein Decomposition interacts with our graph matrix decomposition.
  In particular, it keeps exactly those terms in which no circle vertex is isolated.
  \begin{observation}
    \label{obs:efron_stein_isolated_circles}
    Let $M$ be one of the terms in the graph matrix expansion of $M_G$ (cf.~\cref{eq:decomp_M_G}).
    Consider the graph representation of $M$ based on $H(T)$.
    If it contains an isolated circle vertex, then $(M)^{=V} = 0$.
    If it does not contain an isolated circle vertex, then $(M)^{=V} = M$.
  \end{observation}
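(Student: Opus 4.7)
The plan is to analyze $(M)^{=V}$ entry-wise, where each entry of $M$ is a sum over injective $g$ of products of normalized Hermite polynomials $h_{\ell(\tilde{e})}$ ranging over the edges $\tilde{e}$ of the graph representation of $M$ (those with surviving Hermite label, i.e., $\ell(\tilde{e}) \geq 1$). The proof then splits into two cases, each a short cancellation argument.

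\textbf{Case 1: an isolated circle vertex exists.} Let $v_0$ be isolated in the graph representation of $M$. Since edges with Hermite label $0$ contribute the factor $h_0 = 1$ and have been removed, no surviving factor of $M_{(\alpha,\beta)}$ involves $x_{f_{\alpha,\beta}(v_0)}$. I would then invoke precisely the pairing argument of \cref{fact:ES_remove_strict_subset_func}: pair each subset $T \subseteq V$ with $T \triangle \{v_0\}$, observe that $(M_{(\alpha,\beta)})^{\subseteq f_{\alpha,\beta}(T)} = (M_{(\alpha,\beta)})^{\subseteq f_{\alpha,\beta}(T \triangle \{v_0\})}$ since integrating out an irrelevant variable is a no-op, and note that the two paired terms carry opposite signs in the Efron--Stein expansion of $(M_{(\alpha,\beta)})^{=V}$. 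All terms cancel in pairs, giving $(M)^{=V} = 0$.

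\textbf{Case 2: no circle is isolated.} I plan to show that $(M_{(\alpha,\beta)})^{\subseteq f_{\alpha,\beta}(T)} = 0$ for every proper subset $T \subsetneq V$; then only the $T = V$ term in the Efron--Stein expansion survives and equals $M_{(\alpha,\beta)}$, yielding $(M)^{=V} = M$. Given $T \subsetneq V$, pick any $v_0 \in V \setminus T$ and condition on $g$ and on all $x_{f_{\alpha,\beta}(v)}$ for $v \neq v_0$. The entry then factors as $C \cdot \prod_{\tilde{e} :\, \tilde{e}_\circ = v_0} h_{\ell(\tilde{e})}\bigl(x_{f_{\alpha,\beta}(v_0)}(g(\tilde{e}_\diamond))\bigr)$, with $C$ independent of $x_{f_{\alpha,\beta}(v_0)}$. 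Because $g$ is injective on super-diamonds and distinct edges incident to $v_0$ in $H(T)$ go to distinct super-diamonds, these Hermite polynomials are evaluated at \emph{distinct} coordinates of $x_{f_{\alpha,\beta}(v_0)}$, which are independent $\cN(0,1)$ random variables. Since $v_0$ is not isolated, at least one Hermite degree $\ell(\tilde{e})$ is $\geq 1$, and $\E\, h_k(\cN(0,1)) = 0$ for $k \geq 1$ forces the whole expectation over $x_{f_{\alpha,\beta}(v_0)}$ to vanish, whence by Fubini $(M_{(\alpha,\beta)})^{\subseteq f_{\alpha,\beta}(T)} = 0$.

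The only subtlety I would double-check carefully is the structural point that, after the Hermite decomposition of \cref{obs:hermite_decomposition} and the collapsing of multi-edges in $H(T)$ into a single edge of accumulated multiplicity, each edge of the graph representation corresponds to a Hermite polynomial of a \emph{single} Gaussian coordinate rather than a product of Hermite polynomials in the same coordinate. This, together with injectivity of $g$, is what justifies the factorization across distinct coordinates used in Case 2. Once this is in hand, both cases follow in a few lines.
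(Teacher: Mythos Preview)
Your proposal is correct and follows essentially the same argument as the paper: for Case~1 you invoke the pairing argument of \cref{fact:ES_remove_strict_subset_func} exactly as the paper does, and for Case~2 you show $(M)^{\subseteq T}=0$ for every proper $T\subsetneq V$ by picking $v_0\notin T$ and using that the Hermite factors attached to $v_0$ involve distinct independent Gaussian coordinates (via injectivity of the super-diamond labeling) with at least one of positive degree, which is precisely the paper's computation. The structural point you flag about multi-edges in $H(T)$ collapsing to a single edge carrying one Hermite polynomial in one coordinate is exactly what the paper uses implicitly when it writes $\prod_{\tilde{e}:\,\tilde{e}_\circ=i}\E_{\{i\}} h_{\ell_{\tilde{e}}}(\cdot)$, so there is no gap.
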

  \begin{proof}
    We begin with the first case.
    We prove that every entry of $(M)^{=V}$ equals 0.
    To this end, fix $\alpha, \beta$ that form a valid labeling.
    For simplicity, assume that the assigned labels to circle vertices are $1, \ldots, k$ and that vertex 1 is isolated.
    Fix $\vec{j} = (j_1, \ldots, j_p) \sim T$.
    Note that by linearity of the Efron-Stein Decomposition, it is enough to show that (using that $f_{\alpha,\beta}(V) = [k]$)
    \[
        \Paren{\prod_{\tilde{e} \in E(H(T))} h_{\ell_{\tilde{e}}}\Paren{x_{f_{\alpha,\beta}(\tilde{e}_\circ)}(\tilde{e}_\diamond(\vec{j}))}}^{=[k]} = 0 \,.
    \]
    However, since vertex 1 is isolated, the product does not depend on $x_1$ and hence this follows by~\cref{fact:ES_remove_strict_subset_func}.

    For the second case, we again proceed entry-wise.
    Again, fix $\alpha, \beta$ that form a valid labeling and assume the assigned labels to circle vertices are $1, \ldots, k$.
    Again fix $\vec{j} = (j_1, \ldots, j_p) \sim T$.
    We will show that 
    \[
        \Paren{\prod_{\tilde{e} \in E(H(T))} h_{\ell_{\tilde{e}}}\Paren{x_{f_{\alpha,\beta}(\tilde{e}_\circ)}(\tilde{e}_\diamond(\vec{j}))}}^{\subseteq S} = 0
    \]
    whenever $S \subsetneq [k]$.
    The claim than follows since for any function $f$ on $n$ variables, $f = f^{\subseteq [n]}$.
    Fix $S \subsetneq [k]$ and let $i \in [k] \setminus S$.
    Since the circle vertex corresponding to $i$ is not isolated, we now that there exists at least one edge in $H(T)$ incident to it.
    Using that the random variables $x_i(j)$ are independent and that $\E h_\ell(x_i(j)) = 0$ for each $i,j, \ell$ it follows that
    \begin{align*}
        &\Paren{\prod_{\tilde{e} \in E(H(T))} h_{\ell_{\tilde{e}}}\Paren{x_{f_{\alpha,\beta}(\tilde{e}_\circ)}(\tilde{e}_\diamond(\vec{j}))}}^{\subseteq S} = \E_{S^c} \prod_{\tilde{e} \in E(H(T))} h_{\ell_{\tilde{e}}}\Paren{x_{f_{\alpha,\beta}(\tilde{e}_\circ)}(\tilde{e}_\diamond(\vec{j}))} \\
        &= \Paren{\E_{\Set{i}} \prod_{\tilde{e} \in E(H(T))\,, \tilde{e}_\circ = i} h_{\ell_{\tilde{e}}}\Paren{x_{f_{\alpha,\beta}(\tilde{e}_\circ)}(\tilde{e}_\diamond(\vec{j}))} } \Paren{ \E_{S^c \setminus \Set{i}} \prod_{\tilde{e} \in E(H(T))\,, \tilde{e}_\circ \neq i} h_{\ell_{\tilde{e}}}\Paren{x_{f_{\alpha,\beta}(\tilde{e}_\circ)}(\tilde{e}_\diamond(\vec{j}))} } \\
        &= \Paren{ \prod_{\tilde{e} \in E(H(T))\,, \tilde{e}_\circ = i} \E_{\Set{i}}h_{\ell_{\tilde{e}}}\Paren{x_{f_{\alpha,\beta}(\tilde{e}_\circ)}(\tilde{e}_\diamond(\vec{j}))} } \Paren{ \E_{S^c \setminus \Set{i}} \prod_{\tilde{e} \in E(H(T))\,, \tilde{e}_\circ \neq i} h_{\ell_{\tilde{e}}}\Paren{x_{f_{\alpha,\beta}(\tilde{e}_\circ)}(\tilde{e}_\diamond(\vec{j}))} } \\
        &= 0\, .
    \end{align*}
  \end{proof}

  \paragraph{Step 3: Terms without isolated circles are admissible graph matrices.}

  Let $M$ be a matrix in the graph matrix decomposition of $M^=$.
  Recall that we want to show the bound 
  \[
        \E_{x_1, \ldots, x_n} \norm{M} \leq (p \cdot \log n)^{O(p)} \cdot d \cdot d^{p/2} \cdot n^{r/4} \cdot n^{\tfrac{2p - \card{V}} 4} \,.
  \]
  We will show that $M$ is in fact an admissible graph matrix (cf.~\cref{def:circle_and_diamond_merge}) which implies the norm bound via~\cref{lem:norm_bound_admissible_g_m} (recall that we are already assuming that $d \leq n \leq d^2$).

  Let $\tilde{H}(T)$ be the graph representing $M$.
  By~\cref{obs:efron_stein_isolated_circles}, we know that $M^= = 0$ if it contains an isolated circle.
  So we can assume without loss of generality that it does not.
  We show that it satisfies the process outlined in~\cref{def:circle_and_diamond_merge}.
  Indeed, we can first do circle and diamond merges to obtain $H(T)$ and then obtain $\tilde{H}(T)$ from $H(T)$ by the ``Hermite decomposition'' using~\cref{obs:hermite_decomposition}.
  The circle and diamond merging are exactly as described, in particular, by our ``independent set'' condition, we never merge adjacent circle vertices.
  We only have to verify the edge removal step.
  That is, we have to show that we only remove edges that had multiplicity at least 2 in $H(T)$ and that no circle vertices become isolated.
  The second property follows by assumption.
  The first property follows directly by the last part of~\cref{obs:hermite_decomposition}.
  We also remark that clearly the total sum of edge multiplicities is at most $O(p)$.

  This finishes the proof of~\cref{lem:graph_matrix_decomposition} and thus also the one of~\cref{lem:bound-on-equals-term} in the independent set case.
  We show how to handle the general case in the next section.

\end{proof}

  \subsubsection{Reduction to Independent Set Case}
  \label{sec:reduction_to_ind_set}

  Next, we treat the case when the vertices we merged in $C_1, \ldots, C_r$ do not form an independent set.
  This will finish the proof of~\cref{lem:graph_matrix_decomposition}.

  \begin{proof}[Proof of~\cref{lem:graph_matrix_decomposition}, general case]
  Let $S$ be the partition of vertices according to which we merge.
  Let $z$ be the sum of the sizes of "connected components of size $\geq 2$" within any part of $S$.
  When there is no such component, define $z$ to be 1.
  We use an induction on $z$ to show that for all $z$ the same norm bound still holds.
  More specifically, we will show that it holds for all graph matrices that arise in the graph matrix expansion of $M_G^{=V}$.
  Note that at every level of the induction, we show the statement \emph{for all} $p \geq 2$ (more specifically, we only need to consider $p \geq z$) and all graph matrices arising in the graph matrix expansion.

  $z = 1$ is the independent set case shown above.
  Assume we have shown the statement for some $z \geq 1$ we will show it for $z' = z+1$.
  Let $p \geq z'$ be arbitrary.
  Note that since we require $\card{V} \geq 2$, $z' = 2$ implies that $p \geq 3$ (since otherwise everything would be merged to a single vertex).
  Hence, in any case we can assume $p \geq 3$.
  Let $v$ be a vertex in a connected component of size at least 2 in a part of $S$.
  Consider the graph $G'$ consisting of a disjoint union of $r$ cycles $C_1', \ldots, C_r'$ on $p-1 \geq 2$ vertices obtained by removing $v$ from $C_1, \ldots, C_r$ and connecting its two neighbors via an edge, if $v$ only has a single neighbor, we add a self-loop (of multiplicity 1) to this neighbor.
  Let $S'$ be a partition of the vertices of $C_1', \ldots, C_r'$ which is the same as $S$, except that $v$ is removed.
  Let $H$ and $H'$ be the auxiliary graphs corresponding to the two sets of disjoint cycles.
  Note that $H$ is the same as $H'$, except that it as an additional diamond vertex $D$ that has an edge of multiplicity 2 going to a circle vertex.
  Note that our inductive hypothesis applies to the graph matrix expansion of $M_H$.
  As before, let $\cT$ consist of all partitions of the elements $\Set{1,\ldots,p}$ and let $T \in \cT$.
  Similarly, let $\cT'$ consist of all partitions of the elements $\Set{1,\ldots,p-1}$ and let $T' \in \cT'$.
  Let $H(T), H'(T')$ and $M_{G(T)}, M_{G'(T')}$ be defined as before.
  We give a spectral upper bound on every term in the graph matrix expansion of $M_{G(T)}$.
  By induction, we have an upper bound on all graph matrices in the Hermite expansion of $M_{G'(T')}$.
  In what follows, we will identify $D$ with the element $p$.
  We make a case distinction.

  \paragraph{$D$ is a singleton in $T$.}
  Let $T' \in \cT'$ be such that $T' = T \setminus \Set{\Set{D}}$.
  Consider entry $(\alpha,\beta)$ in $M_{G(T)}$.
  Denote by $s$ the circle adjacent to $D$ in $H(T)$.
  Note that $H(T)$ is the same as $H'(T')$ except that $H(T)$ has an additional edge of multiplicity 2 from $s$ to $D$.
  Since $D$ is a singleton in $T$, it holds that
  \begin{align*}
    &\sum_{\vec{j} = (j_1, \ldots, j_p) \sim T} \prod_{\tilde{e} \in E(H(T))} \Paren{x_{f_{\alpha,\beta}(\tilde{e}_\circ)}(\tilde{e}_\diamond(\vec{j}'))}^{\ell_{\tilde{e}}} \\
    &= \sum_{\vec{j}' = (j_1, \ldots, j_{p-1}) \sim T', j_p \in [d]} \Paren{\prod_{\tilde{e} \in E(H'(T'))} \Paren{x_{f_{\alpha,\beta}(\tilde{e}_\circ)}(\tilde{e}_\diamond(\vec{j}))}^{\ell_{\tilde{e}}} } \cdot \Paren{x_{f_{\alpha,\beta}(s)}(j_p)}^2 \,.
  \end{align*}
  Using that $h_2(X) = \tfrac 1 {\sqrt{2}}(X^2 - 1)$, the above is equal to
  \[
    \sum_{\vec{j}' = (j_1, \ldots, j_{p-1}) \sim T', j_p \in [d]} \Paren{\prod_{\tilde{e} \in E(H'(T'))} \Paren{x_{f_{\alpha,\beta}(\tilde{e}_\circ)}(\tilde{e}_\diamond(\vec{j}))}^{\ell_{\tilde{e}}} } \cdot \Paren{\sqrt{2} \cdot h_2\Paren{x_{f_{\alpha,\beta}(s)}(j_p)} + 1} \,.
  \]

  Note that the expansion of the first product in terms of Hermite polynomials, corresponds exactly to the Hermite expansion of $M_{G'(T')}$.
  Thus, we can group the graph matrices in the Hermite expansion of $M_{G(T)}$ as follows: For every term appearing in the expansion of $M_{G'(T')}$, we associate two terms in the expansion of $M_{G(T)}$. One corresponds to the graph matrix that has an extra isolated diamond (this takes care of the 1-term) and one in which this diamond is connected to its circle neighbor (this takes care of the $h_2(\cdot)$-term).
  This enumerates all terms in the expansion of $M_{G(T)}$ and does not count any term twice.
  Fix one such pair.
  Since the second graph matrix has an extra edge, its spectral norm is at most a $(p \log n)^{O(1)}$ factor larger than that of the first one by~\cref{fact:graph_matrix_bound}.
  For the first one, since its graph matrix representation is exactly the same as the one in the expansion of $M_{G(T)}$ plus an additional isolated diamond, its spectral norm is larger by a factor (up to log-factors) of at most $d$.
  By induction, the norm bound we can show on the term in the expansion of $M_{G(T)}$ is
  \[
    (p \cdot \log n)^{O(p)} \cdot d \cdot d^{(p-1)/2} \cdot n^{r/4} \cdot n^{\tfrac{2p -2- \card{V}} 4} \cdot \eta^{\card{V} / 2} = \frac{(p \cdot \log n)^{O(p)} \cdot d \cdot d^{p/2} \cdot n^{r/4} \cdot n^{\tfrac{2p- \card{V}} 4} \cdot \eta^{\card{V} / 2}}{\sqrt{d} \sqrt{n}} \,.
  \]
  Since $\tfrac d {\sqrt{d} \sqrt{n}} \leq 1$, this finishes this part.

  \paragraph{$D$ is not a singleton in $T$.}
  
  Without loss of generality, assume that $p-1$ and $D$ are in the same part, say $t$ and let $s$ be the circle neighbor of $t$, which in $H$ was the circle neighbor of $D$.
  Let $T' \in \cT'$ be such that $T' = (T \setminus t) \cup (t \setminus \Set{D})$, i.e., $T$ restricted to elements $1, \ldots, p-1$.
  Note that there is a natural bijection between $\vec{j}' \sim T'$ and $\vec{j} \sim T$ given as follows:
  We can express every $\vec{j} = (j_1, \ldots, j_p) \sim T$ as $(\vec{j}' = (j_1, \ldots, j_{p-1}), j_{p-1})$, where $\vec{j}' \sim T'$.
  Note that $H(T)$ is the same as $H'(T')$ except that the edge from $s$ to $t$ has multiplicity two larger.
  In the following, let $\ell_{\tilde{e}}$ denote the edge multiplicities in $H'(T')$.
  Using the same identity as above, it follows that we can express entry $(\alpha,\beta)$ of $M_{G(T)}$ as
  \begin{align*}
    &\sum_{\vec{j} = (j_1, \ldots, j_p) \sim T} \Brac{ \prod_{\tilde{e} \in E(H(T)) \setminus \Set{s,t}} \Paren{x_{f_{\alpha,\beta}(\tilde{e}_\circ)}(\tilde{e}_\diamond(\vec{j}))}^{\ell_{\tilde{e}}} } \cdot \Paren{x_{f_{\alpha,\beta}(s)}(t(\vec{j}))}^{\ell_{\tilde{e}} + 2} \\
    &= \sum_{\vec{j}' = (j_1, \ldots, j_{p-1}) \sim T} \Brac{ \prod_{\tilde{e} \in E(H'(T')) \setminus \Set{s,t}} \Paren{x_{f_{\alpha,\beta}(\tilde{e}_\circ)}(\tilde{e}_\diamond(\vec{j}))}^{\ell_{\tilde{e}}} } \cdot \Paren{x_{f_{\alpha,\beta}(s)}(t(\vec{j}'))}^{\ell_{\tilde{e}} + 2} \,.
  \end{align*}
  Note that by~\cref{fact:hermite_monomial_expansion} the Hermite expansion of $x^r$ contains all Hermite polynomials of order $r-2\floor{\tfrac r 2}, r-2\floor{\tfrac r 2} -2, \ldots$.
  Thus, the Hermite expansion of $\Paren{x_{f_{\alpha,\beta}(s)}(t(\vec{j}'))}^{\ell_{\tilde{e}} + 2}$ is the same as that of $\Paren{x_{f_{\alpha,\beta}(s)}(t(\vec{j}'))}^{\ell_{\tilde{e}}}$ except that it contains one more term whose order is of the same parity as the others (and minor changes in coefficients).
  Splitting of this final term an expanding the product in Hermite basis, we obtain the graph matrix expansion of $M'(G(T'))$ up to minor changes in coefficients.
  Thus, all of these terms have bounded spectral norm by induction.
  The graph matrices arising from the final term in the Hermite expansion of $\Paren{x_{f_{\alpha,\beta}(s)}(t(\vec{j}'))}^{\ell_{\tilde{e}} + 2}$ have a correspondant in the Hermite expansion of $M'_{G(T')}$ as well, except that the labels might be larger by 1.
  Using~\cref{fact:graph_matrix_bound} again, we obtain the same norm bound for these terms as well.

\end{proof}

\subsection{Norm Bounds for \emph{Admissible} Graph Matrices}
\label{sec:graph_matrix_combo}

This subsection is devoted to proving \cref{lem:norm_bound_admissible_g_m}.
We first recall the definition of admissible graph matrices.
Note that in this section, we use $G$ for the graph of the graph matrix representation, previously $G$ was used for representing graph polynomials and $H$ for graph matrix representations.
This should not lead to confusion.
Recall that circle vertices have weight 1 and diamond vertices have weight $\log_n d$.

\paragraph{Admissible graph matrices and main lemma.}

Let $p \in \N$.
Let $G_0$ be the following graph on $2p$ vertices.
We start by describing the vertex set $V$, which consists two types of vertices: $p$ ``circle'' vertices and $p$ ``diamond'' vertices.
The circle vertices are further subdivided into two subsets $L,R$.
The graph $G_0$ is a union of $r$ cycles which alternate between circle and diamond vertices, and where the circle vertices alternate between $L$ and $R$ vertices.
If a cycle contains an odd number of circle vertices, we put the last two circles on the same side.
The cycles contain $\ell_1, \ldots, \ell_r$ circle vertices each such that $\sum_{i=1}^r \ell_i = p$.
There could be cycles that contain only one circle and diamond vertex.
In this case, we identify the two parallel edges with a single edge.
They are arranged such that $|L| = \lceil p/2 \rceil$ and $|R| = \lfloor p/2 \rfloor$. 

\restatedefinition{def:circle_and_diamond_merge}

We recall \cref{lem:norm_bound_admissible_g_m}.
\restatelemma{lem:norm_bound_admissible_g_m}
Note that $\card{L} + \card{R} = \card{S}$.
We use the following fact\footnote{The bound in this fact is an in-expectation bound, while Corollary 8.16 gives a high-probability bound. It can readily be obtained by combining the estimates in Corollary 8.16 with Lemma 8.7 which gives bounds on the expectation of higher-order trace powers of $M M^\top$, again choosing $q$ to be roughly $w(S_{\min}) \log n$.}
  \begin{fact}[Adaptation of Corollary 8.16 in~\cite{AhnMP16}]
    \label{fact:graph_matrix_bound}
    Let $M$ be a graph matrix with graph representation $G$.
    Let $S_{\min}$ be a minimum weight vertex separator of $L$ and $R$ and $V_{\mathrm{iso}}$ be the set of isolated vertices in $G$ (either circle or diamond vertices).
    Let $\ell = \sum_{e \in E(G)} \ell(e)$.
    Then, it holds that
    \[
      \E_{x_1, \ldots, x_n} \norm{M} \leq \Paren{w(V) \cdot \log n}^{O(\ell)} n^{\frac 1 2 \Paren{w(V) - w(S_{\min}) + w(V_{\mathrm{iso}})}} \,.
    \]
\end{fact}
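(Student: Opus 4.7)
The plan is the standard trace moment method for graph matrices, essentially Corollary~8.16 of \cite{AhnMP16}. Fix a positive integer $q$ to be chosen later. Combining $\|M\|^{2q} \leq \trace((MM^\top)^q)$ with Jensen's inequality gives $\E\|M\| \leq (\E \trace((MM^\top)^q))^{1/(2q)}$; I will eventually take $q = \Theta(w(V)\log n)$. Substituting the graph-matrix definition, $\E\trace((MM^\top)^q)$ unfolds as a sum indexed by tuples of valid circle labelings $(\alpha_1,\ldots,\alpha_q,\beta_1,\ldots,\beta_q)$ (cyclically shared across consecutive copies via the trace) together with injective diamond labelings $g_i : V_\diamond \to [d]$ for each of the $2q$ copies of $G$, of the expectation $\E \prod_{i=1}^{2q} \prod_{e \in E(G)} h_{\ell(e)}(x_{f_i(e_\circ)}(g_i(e_\diamond)))$, where $f_i$ assembles the circle labels of copy $i$ from the $\alpha$'s and $\beta$'s. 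By independence of distinct entries $x_u(v)$ together with Hermite orthogonality $\E[h_a(Z) h_b(Z)] = \mathbf{1}[a=b]$, a summand vanishes unless for every realized pair $(u,v) = (f_i(e_\circ), g_i(e_\diamond))$, the multiset of Hermite orders incident to $(u,v)$ admits a perfect pairing by equal orders; for surviving summands, standard Hermite moment bounds control the value by $(O(\ell))^{O(\ell q)}$.

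The combinatorial heart is counting surviving labelings. Group them by the equivalence relation on the $2q$ copies of $V$ that identifies slots sharing a label, producing a quotient graph $\tilde G$; the number of labelings realizing a fixed equivalence relation is $n^{w(\tilde V)}$, since circle slots contribute a factor $n$ per free label and diamond slots contribute $d = n^{\log_n d}$. The number of equivalence relations compatible with the Hermite pairing is at most $(w(V)\, q)^{O(\ell q)}$. The central analytic estimate to establish is the vertex-separator bound
\[
w(\tilde V) \;\leq\; q\,(w(V) - w(S_{\min})) \;+\; q\, w(V_{\mathrm{iso}}) \;+\; w(S_{\min}),
\]
obtained by viewing the $2q$-fold glued structure as a closed walk: every traversal of a copy must cross each $L$-$R$ vertex separator, forcing at least $w(S_{\min})$ labels per copy to be shared with a neighboring copy (hence at most $w(V) - w(S_{\min})$ fresh labels per copy from non-isolated vertices), while isolated vertices are unconstrained by Hermite-pairing and contribute up to $w(V_{\mathrm{iso}})$ further fresh labels per copy; the additive $w(S_{\min})$ accounts for the boundary labels of the initial copy.

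Combining the above,
\[
\E\trace((MM^\top)^q) \;\leq\; (w(V)\log n)^{O(\ell q)} \cdot n^{q(w(V) - w(S_{\min})) + q\, w(V_{\mathrm{iso}}) + w(S_{\min})},
\]
and taking the $(2q)$-th root with $q = \Theta(w(V)\log n)$ absorbs the residual $n^{w(S_{\min})/(2q)}$ into a $(\log n)^{O(1)}$ multiplicative constant, yielding the claimed bound $\E\|M\| \leq (w(V)\log n)^{O(\ell)} \cdot n^{(w(V) - w(S_{\min}) + w(V_{\mathrm{iso}}))/2}$. The main obstacle is the vertex-separator inequality above: while the min-cut intuition is clear, formalizing it requires careful combinatorial bookkeeping on the glued shape -- in particular, tracking how closed walks re-use labels across copies, how the asymmetric weighting of circles (weight $1$) versus diamonds (weight $\log_n d$) propagates through $\tilde G$, and separately accounting for isolated vertices whose slots carry no pairing constraint. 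This analysis is precisely the content of Section~8 of \cite{AhnMP16}.
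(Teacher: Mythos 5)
The paper does not prove this statement itself: it is stated as a \emph{Fact} imported from \cite{AhnMP16}, Corollary~8.16, with a footnote explaining that the in-expectation version follows by combining that corollary with their Lemma~8.7 (which controls expected trace powers of $M M^\top$) at a choice $q \approx w(S_{\min})\log n$. Your sketch is a faithful reconstruction of exactly that trace-moment argument, and matches the cited source: pass to $\E\trace((MM^\top)^q)$, expand into labelings, kill terms via Hermite orthogonality, group the survivors by the quotient shape $\tilde G$, and invoke the vertex-separator combinatorics of Section~8 of \cite{AhnMP16} to bound $w(\tilde V)$ before taking the $(2q)$-th root. Your separator inequality is stated in the form $w(\tilde V) \le q\bigl(w(V)-w(S_{\min})\bigr) + q\,w(V_{\mathrm{iso}}) + w(S_{\min})$, which is the right bookkeeping: after the $(2q)$-th root the exponent becomes $\tfrac12\bigl(w(V)-w(S_{\min})+w(V_{\mathrm{iso}})\bigr) + w(S_{\min})/(2q)$, and the apparent double appearance of isolated vertices (once inside $w(V)$, once in the $w(V_{\mathrm{iso}})$ term) is intentional, since isolated slots carry no Hermite pairing constraint and therefore contribute a fresh label in each of the $2q$ copies rather than every other copy. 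Your choice $q = \Theta(w(V)\log n)$ differs cosmetically from the footnote's $q\approx w(S_{\min})\log n$, but both suppress the residual $n^{w(S_{\min})/(2q)}$ to $O(1)$. Since both you and the paper ultimately delegate the hard combinatorial step to Section~8 of \cite{AhnMP16}, this is the same route, correctly outlined.
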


Instead of working with minimum weight separators, it is more convenient to work with maximum vertex-capacitated flows:
We denote by $f(G(S,T))$ the maximum vertex-capacitated $(L,R)$-flow in $G(S,T)$.
By the generalized max-flow min-cut theorem, this is the same as the minimum-weight $(L,R)$ vertex separator in $G(S,T)$.
Define $f(G(S))$ analogously.
Further, for a vertex (either circle or diamond) $s$, we define $f(s)$ as the total incoming flow into this vertex.
The heart of this section is showing the following lower bound on flow in $G(S,T)$.
\begin{lemma}[Large flow in $G(S,T)$]
    \label{lem:graph-matrix-tradeoff}
    Let $G(S,T)$ be an admissible circle- and diamond-merged graph.
    Then, $G(S,T)$ satisfies the following inequality:
    \[
    f(G(S,T)) \geq \sum_{i=1}^r \floor{\ell_i/2} + w  (\text{number of isolated super-diamonds in $G(S,T)$}) - w  (p - |T|) - \frac 32  (p - |S|) \, .
    \]
\end{lemma}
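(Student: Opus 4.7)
The plan is to lower bound $f(G(S,T))$ by constructing an explicit $(L,R)$ vertex-capacitated flow and then invoking max-flow/min-cut duality, tracking the loss through each of the three operations (circle merging, diamond merging, and edge removal) that produce $G(S,T)$ from the base graph $G_0$.

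First, I would establish the base case. In $G_0$, which is a disjoint union of $r$ alternating cycles with $\ell_i$ circle vertices in the $i$-th cycle, I would explicitly route $\lfloor \ell_i/2 \rfloor$ units of flow in each cycle by pairing up $L$-circles and $R$-circles along consecutive diamonds. This saturates the cycle (up to the rounding when $\ell_i$ is odd, which is exactly where the floor comes from) and yields $f(G_0) = \sum_{i=1}^r \lfloor \ell_i/2 \rfloor$, which matches the leading term of the bound.

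Next, I would analyze the effect of each operation. For \emph{circle merging} by the partition $S$, the total number of ``extra'' circles lost is $p - |S|$. The admissibility condition that merged circles have pairwise distance $> 2$ guarantees that merging $k$ circles in one part of $S$ does not collapse any diamond; so the combinatorial effect is only at the circle-side capacity. When all $k$ merged circles lie on the same side ($L$ or $R$), the loss is at most $k-1$ units of flow. When the merge mixes sides, the super-circle is assigned to $L$ (by the merging rule), effectively deleting $R$-side capacity; here a single merge can cost up to $\tfrac 32 (k-1)$ units once one accounts for the subsequent flow-rerouting through diamonds whose remaining paths must be re-used, giving the coefficient $\tfrac 32$ in aggregate. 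For \emph{diamond merging} by the partition $T$, each merged super-diamond has weight $w$ and the total weight lost is $w(p-|T|)$; any flow path using a merged-away diamond can be rerouted through the super-diamond but only up to its capacity $w$, so the global loss is bounded by $w(p - |T|)$. For \emph{edge removal}, only edges that had multiplicity $\geq 2$ after merging may be deleted; since at least one copy of each such edge can be preserved by choosing to remove only the excess, the min-cut is unaffected except that isolated super-diamonds can be created, in which case their weight $w$ was already charged to the diamond-merging penalty but contributes nothing to any separator and hence gets credited back as $+w$ per isolated super-diamond on the right-hand side.

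Combining the three bookkeeping bounds gives
\[
    f(G(S,T)) \;\geq\; \sum_{i=1}^r \lfloor \ell_i/2 \rfloor + w \cdot \#(\text{isolated super-diamonds in } G(S,T)) - w(p - |T|) - \tfrac{3}{2}(p - |S|),
\]
as claimed. The main obstacle will be justifying the $\tfrac 32$ coefficient for circle merging: the naive accounting only gives $1$ per merged circle, and the extra half unit arises from a subtle interaction between mixed-side merges and the edge-removal step, since after an $L$-to-$R$ merge the flow must traverse a diamond whose parallel-edge copies may have been thinned. I expect this step to require a carefully chosen potential function combining flow value with a ``side-displacement'' penalty tracking how many constituents of each super-circle came from the opposite side, and an induction on the total number of elementary merges. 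The diamond-merging and edge-removal analyses, by contrast, are essentially bookkeeping once the circle bound is in hand.
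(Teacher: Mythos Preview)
Your overall skeleton---establish $f(G_0)=\sum_i\lfloor\ell_i/2\rfloor$, then track loss through circle merging, diamond merging, and edge removal---matches the paper's decomposition into Claims~\ref{clm:G-to-G(S)} and~\ref{clm:G(S)-to-G(S_T)}. Your base-case argument and your ``naive accounting gives $1$ per merged circle'' are both correct and exactly what the paper does for $f(G(S))\geq \sum_i\lfloor\ell_i/2\rfloor-(p-|S|)$.

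There are, however, two genuine gaps in the remainder. First, you misread the edge-removal step: the removal is \emph{adversarial}, not under your control. The lemma must hold for \emph{every} admissible $G(S,T)$, so you cannot argue ``at least one copy of each such edge can be preserved by choosing to remove only the excess, so the min-cut is unaffected.'' An adversary may well remove the unique edge connecting a super-diamond to one side, destroying flow through it entirely. This is precisely why the paper's proof of Claim~\ref{clm:G(S)-to-G(S_T)} has to introduce a three-case analysis of super-diamonds (edges preserved; edges lost but not isolated; isolated) and cannot simply say edge removal is free.

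Second, your diagnosis of the extra $\tfrac12(p-|S|)$ is off. It is not a ``mixed-side merge'' phenomenon requiring a side-displacement potential; rather, it is a \emph{charging argument} tying edge removals back to circle merges. The paper observes (Observation~\ref{obs:edge_preservation}) that any removed edge in $G(S,T)$ had multiplicity $\geq 2$ after diamond merging, which forces its super-circle endpoint to have had at least two diamond neighbors in the same equivalence class of $T$---and that can only happen if that super-circle arose from merging several circles. Summing, the total number of ``bad'' super-diamonds (Cases~2 and~3) is at most $p-|S|$ (Claim~\ref{clm:combinatorics-1}), and for each such bad super-diamond the flow loss beyond the $w$-per-merged-diamond budget is at most $\tfrac12$ (using $w\geq\tfrac12$ and Claim~\ref{clm:combinatorics-2}). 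That is where the $\tfrac12(p-|S|)$ comes from, and combining it with the $1\cdot(p-|S|)$ from circle merging gives the $\tfrac32$. Your proposed induction-with-potential approach does not obviously recover this, because the extra half is not paid at circle-merge time but at diamond-merge/edge-removal time, and is only \emph{billable} to circle merges after the fact.
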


With these two in hand, we can prove~\cref{lem:norm_bound_admissible_g_m}
\begin{proof}[Proof of~\cref{lem:norm_bound_admissible_g_m}]
  First, observe that by definition we have $w(V) \leq O(p)$ and by assumption $\ell = O(p)$.
  Further, since $d \leq n \leq d^2$, the weight of diamond vertices $w = \log_nd$ satisfies $1/2 \leq w \leq 1$.

  We will also use the following fact proven below.
  By the generalized max-flow-min-cut theorem, we know that if the maximum vertex-capacitated $(L,R)$-flow in $G(S,T)$ is at least $f$, then the minimum weight vertex separator has weight at least $f$ as well.\footnote{We can assign capacity $\infty$ to all edges.}
  Using this, \cref{lem:graph-matrix-tradeoff} and the bound in~\cref{fact:graph_matrix_bound} it follows that $\E_{x_1, \ldots, x_n} \norm{M(G(S,T))}$ is at most $(p \cdot \log n)^{O(p)}$ times $n$ to the power of the following expression
  \begin{align*}
    &\frac{\card{S} + w \cdot \card{T} - f + w \cdot (\text{number of isolated super-diamonds in $G(S,T)$})} 2 \\
    &\leq \frac{\card{S} + w \cdot \card{S} - \sum_{i=1}^r \floor{\ell_i/2} + w \cdot (p - \card{T}) + \tfrac 3 2 (p - \card{S})} 2 \\
    &= \frac{- \tfrac 1 2 \card{S} - \sum_{i=1}^r \floor{\ell_i/2} + w \cdot p + \tfrac 3 2 p } 2 \leq \frac{p + \tfrac r 2 + w \cdot p - \tfrac 1 2 \card{S}} 2\,.
  \end{align*}
  Thus, we can conclude that:
  \[
    \E_{x_1, \ldots, x_n} \Norm{M} \leq (p \cdot \log n)^{O(p)} \cdot d^{p/2} \cdot n^{r/4} \cdot n^{\tfrac{2p - \card{S}} 4} \,.
  \]
\end{proof}

\subsubsection{Finding Large Flows in \texorpdfstring{$G(S,T)$}{G(S,T)}}

In this section, we will prove~\cref{lem:graph-matrix-tradeoff}.
It will follow from \cref{clm:G-to-G(S),clm:G(S)-to-G(S_T)}, which we state and prove below.

\paragraph{Merging circles.}
We establishing the following claims. 
\begin{claim}
    \label{clm:G-to-G(S)}
    Let $G(S)$ be an admissible circle-merge graph.
    Then, $G(S)$ satisfies
    \[
    f(G(S)) \geq \sum_{i=1}^r \floor{\ell_i/2} - (p - |S|) \, .
    \]
\end{claim}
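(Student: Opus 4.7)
The plan has three steps: first compute $f(G_0)$ exactly, then show that each \emph{binary} circle merge decreases the max flow by at most one, and finally combine.

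\textbf{Step 1: Exact flow in $G_0$.} I aim to establish $f(G_0) = \sum_{i=1}^r \lfloor \ell_i/2 \rfloor$. Since $G_0$ is a disjoint union of cycles $C_1,\ldots,C_r$, it suffices to compute $f(C_i) = \lfloor \ell_i/2 \rfloor$ for each cycle. Cycle $C_i$ has $\ell_i$ circles alternating between $L$ and $R$ (with the last two circles on the same side when $\ell_i$ is odd, and the self-loop convention when $\ell_i = 1$), so the minority side contains exactly $\lfloor \ell_i/2 \rfloor$ circles; this gives the upper bound $f(C_i) \leq \lfloor \ell_i/2 \rfloor$ directly from the unit vertex capacities. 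The matching lower bound comes from an explicit routing: pair $\lfloor \ell_i/2 \rfloor$ disjoint $(L,R)$ circle-pairs sharing a common diamond, and route one unit through each such triple.

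\textbf{Step 2: Single-merge decrement lemma.} The combinatorial heart of the argument is the claim that if $G'$ is obtained from $G$ by merging two distance-$>2$ circles $c_1,c_2$ into a super-circle $c^*$, then $f(G') \geq f(G) - 1$. I would take an optimal integral max flow $\phi$ in $G$, path-decompose it, and note that by unit vertex capacity at most one flow path uses $c_1$ and at most one uses $c_2$ (possibly the same path). Case analysis on the sides of $c_1, c_2$ handles every possibility. If $c_1, c_2$ both lie on the same side, so does $c^*$, and it absorbs one of the (at most two) paths meeting $\{c_1,c_2\}$, losing at most one unit. If $c_1 \in L$ and $c_2 \in R$ (so $c^* \in L$), then a path through $c_1$ must begin there and a path through $c_2$ must end there; when these are distinct we retain the one rooted at $c_1$ (rerouting its start to $c^*$) and discard the other, whereas when they coincide as one path $P$, $P$ becomes a $c^*$-to-$c^*$ closed walk in $G'$ contributing no net $L$-to-$R$ flow and is discarded. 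In every case, because $c_1, c_2$ share no diamond neighbor, the union-of-neighbors property of $c^*$ ensures each retained path remains valid in $G'$ without altering its intermediate vertices, and paths avoiding $\{c_1,c_2\}$ are untouched.

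\textbf{Step 3: Iterating.} The graph $G(S)$ is obtained from $G_0$ by exactly $p - |S|$ binary merges (each reducing the number of parts by one). The distance-$>2$ hypothesis on parts of $S$ is preserved throughout: two circles in a part cannot acquire a shared diamond via an intermediate merge since none existed in $G_0$ and merging only identifies vertices. Hence the single-merge lemma applies to each step of the sequence, and iterating gives $f(G(S)) \geq f(G_0) - (p - |S|) = \sum_{i=1}^r \lfloor \ell_i/2 \rfloor - (p - |S|)$. The main technical hurdle I anticipate is the mixed-side subcase of Step 2 where a single path $P$ uses both $c_1$ and $c_2$; one must verify carefully that $P$'s degeneration to a closed walk accounts for only a single unit of lost flow rather than more.
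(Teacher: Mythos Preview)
Your proposal is correct and follows essentially the same two-step skeleton as the paper: compute $f(G_0)=\sum_i \lfloor \ell_i/2\rfloor$, then argue each of the $p-|S|$ binary circle merges decreases the max flow by at most one. The paper's execution is considerably more streamlined, however. For Step~1 it routes a \emph{fractional} flow---each $L$-circle sends $1/2$ unit through each of its two diamond neighbors---which automatically respects all capacities and avoids your pairing construction. For Step~2 it bypasses the path decomposition and side-by-side case analysis entirely: after a merge the existing flow is still a valid edge-flow, and the only possible capacity violation is at the newly merged super-circle, which now carries at most $2$ units through a capacity-$1$ vertex; reducing the flow by the excess (at most $1$) along any flow path through that vertex restores feasibility. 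Your approach works, but the fractional/capacity-violation argument is shorter and sidesteps the mixed-side ``closed walk'' subtlety you flagged.
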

\begin{proof}
    We first argue that $f(G_0) = \sum_{i=1}^r \floor{\ell_i/2}$.
    For $\ell_i = 1$, i.e., self-loops, the argument is trivial.
    Without loss of generality assume that there are less circle vertices in $L$ than in $R$.
    Then, every circle vertex in $L$ can send 1/2-units of flow via each of its diamond neighbors to $R$.
    Since every circle vertex (in both $L$ and $R$) has degree 2, we never over-saturate circle vertices.
    Further, we route at most 1/2-units of flow via each diamond vertex, thus, we also respect the capacity constraints there.
    Since $L$ has size at least $\floor{\ell_i/2}$, we are done.

    Next, let $\mathbf{f}$ be an $(L,R)$ vertex-capacitated flow in $G_0$ of value $\sum_{i=1}^r \floor{\ell_i/2}$.
    The process of obtaining $G(S)$ from $G_0$ can be viewed as a sequence of $p - |S|$ merges of pairs of circle vertices.
    After each merge, $\mathbf{f}$ still respects the capacity constraints on all but at most one one circle vertex (the newly-merged one), where $\mathbf{f}$ may now try to route as many as $2$ units of flow where the capacity is $1$.
    Reducing the flow by at most $1$ 
    fixes the capacity constraints.
\end{proof}

\begin{claim}
  \label{claim:no_iso_diamonds_in_G(S)}
  Let $G(S)$ be an admissible circle-merge graph.
  Then, the following properties hold.
  \begin{enumerate}
      \item Every diamond in $G(S)$ has degree exactly 2 (in particular, there are no isolated diamonds in $G(S)$).
      \item Every super-circle has even degree.
      \item There are no edges of multiplicity larger than 1.
  \end{enumerate}
\end{claim}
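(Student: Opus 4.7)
All three properties follow by tracing the construction of $G(S)$ from $G_0$ and invoking the admissibility hypothesis on $S$. My plan is to first verify the properties for $G_0$ itself and then check that the circle-merging step preserves them.

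In $G_0$, which is a disjoint union of $r$ alternating cycles, every diamond is an interior cycle vertex and therefore has degree exactly $2$, and every circle has degree $2$ (the only possibly exceptional case is a 1-cycle component, which is trivial for the statement and can be treated separately in the same spirit as the $\ell_i = 1$ case in the proof of \cref{clm:G-to-G(S)}). Likewise no edges of multiplicity greater than $1$ arise, once the 1-cycle convention in the definition of $G_0$ is applied. So the base case is immediate.

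The key observation is that the admissibility condition -- that in every part $s \in S$ all circles have pairwise distance greater than $2$ in $G_0$ -- exactly forbids two circles being merged from sharing a diamond neighbor. This is the unique mechanism by which circle merging could produce two parallel edges between a diamond and a single super-circle. Consequently the cleanup step ``replace parallel edges by a single edge'' in the construction of $G(S)$ never fires on a diamond-incident edge, and both the set of edges and their multiplicities at every diamond are preserved by the merge.

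The three claimed properties now fall out. Property $1$ holds because diamond degrees are inherited unchanged from $G_0$, where they equal $2$. Property $2$ holds because a super-circle obtained by identifying a part $s \in S$ has degree equal to the sum of degrees of its $|s|$ constituent circles in $G_0$, namely $2|s|$, which is even. Property $3$ holds because the merging introduces no new parallel edges, and the original $G_0$ has none under the 1-cycle convention. The only genuinely substantive point in this argument is the deduction that ``pairwise distance greater than $2$'' prevents any diamond from becoming incident to two edges going to a single super-circle; the remaining degree accounting is routine.
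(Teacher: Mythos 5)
Your proof is correct and takes essentially the same approach as the paper's one-line argument, which likewise observes that the properties hold in $G_0$, that no edges are deleted when forming $G(S)$, and that the admissibility condition forbids merging circles that share a diamond neighbor. Your more explicit degree bookkeeping (super-circle degree $= 2|s|$) and the aside about the $1$-cycle convention are just elaborations of that same reasoning.
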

\begin{proof}
  These properties follow since they hold true in $G_0$ and we do not delete any edges when forming $G(S)$ and we never merge circles that have the same diamond neighbor.
\end{proof}

\paragraph{Merging diamonds.} 

Let $f(G(S))$ be a flow in $G(S)$ satisfying the conclusion of~\cref{clm:G-to-G(S)}.
We first make a simplifying observation that will help us later on.
\begin{claim}
  \label{claim:only_path_diamonds_matter}
  Let $f(G(S))$ be as above.
  Let $t$ be a diamond vertex that has both its circle neighbors on the same side.
  Then, we can without loss of generality assume that $f(t) = 0$.
\end{claim}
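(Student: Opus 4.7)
The plan is to take an optimal $(L,R)$ vertex-capacitated flow realizing the value $f(G(S))$ guaranteed by \cref{clm:G-to-G(S)}, and then locally reroute it to avoid the distinguished diamond $t$ while preserving its total value. Since the inequality in \cref{clm:G-to-G(S)} only concerns the value of the flow, any value-preserving modification still satisfies the conclusion of that claim, so assuming $f(t)=0$ is indeed without loss of generality.

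First, I would invoke a standard flow-decomposition: any $(L,R)$-flow $f$ can be written as a non-negative combination of simple paths from $L$ to $R$ (together with cycles), and removing the cycles never decreases the value of the flow, so we may assume $f$ is supported on simple paths $P_1,\dots,P_m$ from some vertex of $L$ to some vertex of $R$. Since edges in $G(S)$ only go between circles and diamonds, every such path alternates between circles and diamonds, starting at an $L$-circle and ending at an $R$-circle.

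Next, I would use the hypothesis that both circle neighbours of $t$, call them $a$ and $b$, lie on the same side, say both in $L$ (the case of both in $R$ is symmetric by reversing paths). Any path $P_i$ that traverses $t$ must enter $t$ from one of $\{a,b\}$ and leave via the other, so up to reversing directions locally we may write
\[
P_i \;=\; (a,\, t,\, b,\, v_1,\, d_1,\, v_2,\, \ldots,\, v_k),
\]
with $a\in L$ the start, $v_k\in R$ the end, and $v_1,\dots,v_k$ circles interleaved with diamonds $d_1,d_2,\ldots$. Since $b\in L$ is itself a valid source, I would replace $P_i$ by the truncated path $P_i' = (b, v_1, d_1, v_2, \ldots, v_k)$, which is still a simple $(L,R)$-path carrying the same unit of flow but now avoiding both $a$ and $t$. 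Applying this truncation to every path in the decomposition that uses $t$ yields a new flow $f'$ with the same value as $f$ and with $f'(t)=0$. The vertex-capacity constraints remain satisfied because the only vertices whose throughput changes are $a$, $t$, and possibly $b$, and at each of them the throughput only \emph{decreases}.

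The main (and essentially the only) subtlety is verifying that the truncated path $P_i'$ starts at a legitimate source, which is exactly where the ``both neighbours on the same side'' hypothesis is used: if $a\in L$ and $b\in R$ (or vice versa), truncation would begin the path at a sink and fail. This is also why the analogous argument in the $R$-case proceeds by truncating the \emph{tail} of the path rather than its head.
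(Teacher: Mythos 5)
Your proof is correct and takes essentially the same approach as the paper: the paper's brief argument also removes, from any flow-path that uses $t$, the $L$-to-$L$ (or $R$-to-$R$) detour through $t$, observing that the remaining tail starting from the second $L$-neighbor is still a valid $(L,R)$-path of the same value. You simply make the path-decomposition step explicit (and handle the symmetric $R$-case by truncating the tail instead of the head), which makes the flow-conservation and capacity checks cleaner; the only minor imprecision is your assumption that $a$ is the start of $P_i$, but since your truncation at $b$ discards everything before $b$ the argument goes through unchanged when the path merely passes through $a$ rather than starting there.
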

\begin{proof}
  Suppose $t$ has both its neighbors $v_1, v_2$ on the left side and the flow is going from $v_1$ to $v_2$.
  So there is a "flow-path" transporting $f(t)$ units of flow from (say) $v_1$ to $v_2$ (and then further) via the path $v_1 \rightarrow t \rightarrow v_2$.
  We can remove this path without change the total amount of flow going from $L$ to $R$ and all capacity constraints are still observed.
  A symmetric argument works when $t$ has both its neighbors in $R$.
\end{proof}

\begin{claim}
    \label{clm:G(S)-to-G(S_T)}
    Let $G(S,T)$ be an admissible circle- and diamond-merged graph.
    Then, the graph $G(S,T)$ satisfies
    \[
    f(G(S,T)) - w \cdot (\text{number of isolated super-diamonds in $G(S,T)$}) \geq f(G(S)) - w \cdot (p - |T|) - \frac{p - |S|}{2} \,.
    \]
\end{claim}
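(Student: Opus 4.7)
The plan is to start with a max flow $\mathbf{f}$ in $G(S)$ of value $f(G(S))$ satisfying Claim~\ref{claim:only_path_diamonds_matter}, so that $\mathbf{f}$ is supported only on diamonds of $G(S)$ whose two neighbors lie on opposite sides, and to construct a valid flow $g$ in $G(S,T)$ by successively modifying $\mathbf{f}$, tracking the loss at each of the two steps producing $G(S,T)$ from $G(S)$: diamond merging followed by edge removal.

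For the diamond-merging step, the natural push of $\mathbf{f}$ through super-diamond $t \in T$ totals $F_t := \sum_{d \in t} \mathbf{f}(d) \leq w\card{t}$ (since every path diamond in $G(S)$ carries at most $w$ units of flow). As $t$ has capacity $w$ in $G(S,T)$, we cap at $\min(F_t, w)$, losing at most $(F_t - w)^+ \leq w(\card{t} - 1)^+$ per super-diamond. Summing over $t$ gives a total merging loss of at most $w \sum_t (\card{t}-1) = w(p-\card{T})$, which matches the $-w(p-\card{T})$ term on the right-hand side.

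For the edge-removal step, two structural facts are central: (i) an edge $(s,t)$ of $G(S,T)$ may be removed only if at least two diamonds of $t$ were neighbors of circles in $s$ in $G(S)$, equivalently, the corresponding edge in the post-merge multigraph had multiplicity at least two; and (ii) no super-circle may become isolated. Since the flow is vertex-capacitated, collapsing parallel edges to multiplicity one does not change the max flow, so edge-removal loss comes only from severing the unique surviving connection between some super-circle and some super-diamond. The plan is to charge each unit of such lost flow to one of the $p - \card{S}$ circle merges that produced the corresponding multi-edge; the no-super-circle-isolation constraint forces each merge to contribute actual flow loss on at most one of its two sides (L or R), yielding a total edge-removal loss of at most $(p-\card{S})/2$. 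The $+w$ credit per isolated super-diamond arises because, when $t$ is isolated in $G(S,T)$, the diamond-merging bound $(F_t - w)^+$ pessimistically overcounts the true loss at $t$ by $w$ (the flow is redistributed to other super-diamonds rather than being capped at $w$), and this overcount is exactly what the $+w\cdot(\text{iso})$ term recovers in the final accounting.

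The main obstacle will be making the charging argument in the edge-removal step rigorous---in particular, simultaneously justifying the $(p-\card{S})/2$ bound and the $+w$ credit per isolated super-diamond in a way that correctly handles the interaction between flow rerouting, multi-edge collapsing, and the no-super-circle-isolation constraint, symmetrically across the L and R sides.
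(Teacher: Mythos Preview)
Your two-step decomposition (diamond-merging loss $\leq w(p-|T|)$, then edge-removal loss $\leq (p-|S|)/2$) has the right shape but, as written, proves a bound that is too weak by essentially a factor of two in the $(p-|S|)$ term, and your handling of the isolated-diamond term is inverted.

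First, the $-w\cdot(\#\text{iso})$ on the left-hand side is a \emph{penalty}, not a credit: isolated super-diamonds hurt the graph-matrix norm bound, and the claim must absorb this penalty into the right-hand side. Your explanation (``the diamond-merging bound overcounts the true loss by $w$ because the flow is redistributed to other super-diamonds'') is backwards---when $t$ becomes isolated you lose all of $F_t$, not $(F_t-w)^+$, so the merging bound \emph{under}counts, and there is no general mechanism to reroute the $F_t$ flow elsewhere since other super-diamonds may already be saturated.

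Second, and more seriously, even granting a Claim~\ref{clm:combinatorics-1}-style charging argument, your accounting yields only
\[
f(G(S,T)) - w\cdot\#\text{iso} \;\geq\; f(G(S)) - w(p-|T|) - w\cdot\#\{\text{case-2 }t\} - 2w\cdot\#\{\text{case-3 }t\} \;\geq\; f(G(S)) - w(p-|T|) - w(p-|S|),
\]
which for $w>1/2$ is strictly weaker than the claimed $-(p-|S|)/2$. Your ``each merge contributes flow loss on at most one of its two sides'' heuristic does not bridge this gap; a circle merge sits on a single side of the bipartition, so the phrase does not parse, and in any case bounding the \emph{number} of deleted edges (which charging does) is not the same as bounding the \emph{flow loss} they cause.

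The missing ingredient is what the paper isolates as Claim~\ref{clm:combinatorics-2}: whenever the edges of an equivalence class $t$ are not preserved, two distinct diamonds $v,v'\in t$ must share a common super-circle neighbor $s$, and hence $\mathbf f(v)+\mathbf f(v')\leq \mathbf f(s)\leq 1$ (the \emph{circle} capacity, not $2w$). This lets one bound the total flow lost at such a $t$ by $w(|t|-2)+1$ rather than $w|t|$, converting the per-bad-$t$ penalty from $w$ to $(1-w)\leq 1/2$. Without this observation, the factor $1/2$ in front of $(p-|S|)$ is not attainable.
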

\begin{proof}

    Our goal will be to construct a large flow in $G(S,T)$, starting from a maximal flow in $G(S)$:

    \paragraph{Constructing a large flow in $G(S,T)$.}

    Let $\mathbf{f}$ be a maximal vertex-capacitated $(L,R)$-flow in $G(S)$.
    We will construct a vertex-capacitated $(L,R)$-flow $\mathbf{f}'$ in $G(S,T)$ by modifying $\mathbf{f}$ as follows (arguing equivalence class by equivalence class).
    Fix an equivalence class of diamonds $t \in T$ and call its associated super-diamond $x_t$.
    Let $E(t)$ be the edges incident to $t$ in $G(S)$ and $E_L(t), E_R(t)$ be the edges connecting to $L$ and $R$, respectively.
    Note that $E(t)$ is non-empty by~\cref{claim:no_iso_diamonds_in_G(S)} (and hence at least one of $E_L(t)$ and $E_R(t)$ is non-empty as well).
    Let $E(x_t)$ be the edges incident to $x_t$ in $G(S,T)$.
    To define $\mathbf{f'}$ on the edges $E(x_t)$ we split into cases.

    For convenience, we say that the edges of an equivalence class are \emph{preserved}, if both 1.) $E_L(t) \neq \emptyset$ implies that there is an edge from $L$ to $x_t$ in $G(S,T)$ and 2.) $E_R(t) \neq \emptyset$ implies that there is an edge from $R$ to $x_t$ in $G(S,T)$ are true.
    Note that this does not imply anything about whether $E_L(t)$ or $E_R(t)$ are empty or not.

    \paragraph{Case 1: No edges lost (implies $x_t$ is not isolated)}
    
    \begin{quote}
      The edges of $t$ are preserved.
    \end{quote}
    Let $v \in t$ be a vertex for which $f(v)$ is minimal (among vertices in $t$). 
    If $\mathbf{f}(v) = 0$, set $\mathbf{f}'$ to be zero for all edges in $E(x_t)$.
    Note that this implies that all vertices in $t$ have one neighbor in $L$ and one in $R$.
    Further, each edges in $G(S)$ carries at least $f(v)$ units of flow.
    Since edge preservation holds, there is an edge from $x_t$ to both $L$ and $R$ and by the above, we can assign these flow $f(v)$ without violating capacity constraints.
    If $\mathbf{f}(v) \neq 0$ a
    
    Note that in each of the two scenarios above we have $\mathbf{f'}(x_t) = \mathbf{f}(v)$.
    Note that by~\cref{claim:no_iso_diamonds_in_G(S)} not both of $E_L(t)$ and $E_R(t)$ are empty.
    Thus, since no edges are lost, $x_t$ is connected to at least one of $L$ and $R$ by assumption in either case, so it is not isolated.

    \paragraph{Case 2: Edges lost but $x_t$ is not isolated}

    \begin{quote}
      The edges of $t$ are not preserved, but $x_t$ is not isolated in $G(S,T)$.
    \end{quote}

    In this case, let $\mathbf{f}$' route zero flow along edges in $E(x_t)$.
    By assumption $x_t$ is not isolated in $G(S,T)$.

    \paragraph{Case 3: Edges lost and $x_t$ is isolated}
    \begin{quote}
      The edges of $t$ are not preserved and $x_t$ is isolated in $G(S,T)$.
    \end{quote}

    Also in this case, let $\mathbf{f}$' route zero flow along edges in $E(x_t)$.
    Furthermore, $x_t$ is isolated in $G(S,T)$ by assumption.

    \paragraph{Auxiliary lemmas about Cases 2 and 3.}

    We record the following observation about edge preservation.
    \begin{observation}[Properties of Edge Deletions in $G(S,T)$]
        \label{obs:edge_preservation}
        Let $G(S,T)$ be an admissible circle- and diamond-merged graph.
        The following two properties are true.

        \textbf{Charging Edge Deletions to Edges in $G(S)$:} For every edge that we remove from $G(S,T)$, we can ``charge'' this to two distinct edges in $G(S)$ that are \emph{not} present in $G(S,T)$ (after relabelling the diamond endpoint) and that have the same super-circle endpoint.
        Further, aggregated over all removed edges in $G(S,T)$, we never charge an edge in $G(S)$ twice.

        \textbf{No Edge Deletions Incident to Singleton Diamonds:} Additionally, whenever we remove an edge from $G(S,T)$, it must be the case that its super-diamond was obtained by a partition of diamonds of size at least 2.
    \end{observation}
    \begin{proof}
        Let $x_t$ be a super-diamond in $G(S,T)$ and suppose we remove the edge $e = (e_\circ, x_t)$ from $G(S,T)$, where $e_\circ$ is the super-circle endpoint.
        Recall that all edges in $G(S)$ have multiplicity 1 (cf.~\cref{claim:no_iso_diamonds_in_G(S)}).
        Thus, since it first had multiplicity at least 2 in $G(S,T)$, it must be the case that the equivalence class $t$ of diamonds that formed $x_t$ must have had size at least 2.
        Further, $e_\circ$ must have been incident to two edges in $t$.
        This directly implies the second claim.
        For the first, note that since these two edges are not present in $G(S,T)$ (since we remove the edge $e$), we claim that we can charge the removal of $e$ to them.
        Indeed, all the charged edges will be distinct, since for every super-circle and super-diamond/equivalence class of diamonds pair, we only ever remove one edge. 
    \end{proof}

    To proceed, we need the following two claims.
    \begin{claim}
    \label{clm:combinatorics-1}
    \begin{align}
    \label{eq:combinatorics-2}
    (\text{number of case-2 $t$s}) + 2 \cdot (\text{number of case-3 $t$s}) \leq p - |S| \, .
    \end{align}
    \end{claim}
    \begin{proof}[Proof of \cref{clm:combinatorics-1}]
    Our approach to prove this claim starts with a lower bound on the number of edges removed in the ``edge removal step'' of obtaining $G(S,T)$.
    
    For every case-$2$ $t$, by definition at least one of $E_L(t)$ or $E_R(t)$ is non-empty in $G(S)$, but there is no edge from $x_t$ to $L$ or $R$ respectively.
    By~\cref{obs:edge_preservation} we know we can find a super-circle in $G(S)$ such that we charged two of the edges incident to it for this missing edge.

    For every case-$3$ $t$, since $x_t$ is isolated, we must have lost all edges incident to  $t$.
    In this case $t$ must have contained at least 2 diamonds (cf.\ the first property in~\cref{obs:edge_preservation}).
    Further, since every diamond has degree 2 in $G(S)$ and has distinct neighbors by~\cref{claim:no_iso_diamonds_in_G(S)}, $x_t$ must have had degree at least 2 in $G(S,T)$ before the ``edge-removal'' step.
    For each of these edges, by~\cref{obs:edge_preservation} we can find a super-circle in $G(S)$ such that we charged two of the edges incident to it for this missing edge.

    For a super-circle $y_s$ in $G(S)$, let $\alpha_s \in \N$ be the number of edges incident to it that have been charged.
    Note that $\alpha_s$ is even by definition.
    Since $y_s$ is not isolated in $G(S,T)$ and all the charged edges in $G(S)$ are not present in $G(S,T)$, it must have degree at least $\alpha_s + 1$ in $G(S)$.
    Since every circle vertex in $G(S)$ has even degree by~\cref{claim:no_iso_diamonds_in_G(S)}, this implies that it in fact must have degree at least $\alpha_s + 2$ (again, in $G(S)$).
    Further, since every circle has degree 2 in $G_0$ and no edges were lost when forming $G(S)$, the subset $s \in S$ of circles which collapsed to form $y_s$ has size $|s| \geq \tfrac {\alpha_s} 2 + 1$.
    So we have
    \[
    p = \sum_{s \in S} |s| \geq \sum_{s \in S} \Paren{\frac {\alpha_s} 2 + 1} \geq      (\text{number of case-2 $t$s}) + 2 \cdot (\text{number of case-3 $t$s}) + |S| \, ,
    \]
    which proves \eqref{eq:combinatorics-2}.
    \end{proof}

    Moving on, we next establish the following claim. 
    \begin{claim}
        \label{clm:combinatorics-2}
        Let $t \in T$ be an equivalence class whose edges are not preserved (i.e., $t$ belongs to cases $2$ or $3$). Then there exist distinct vertices $v, v' \in t$ such that $\mathbf{f}(v_t) + \mathbf{f}(v_t') \leq 1$.
    \end{claim}
    \begin{proof}[Proof of \cref{clm:combinatorics-2}]
    Since the edges of $t$ are not preserved, there exists a super circle $s$ that has at least two neighbors in $t$, call them $v$ and $v'$.
    Assume that $s \in L$, the case that $s \in R$ is symmetric.
    If either of them has both neighbors on the same side we know that $\mathbf{f}(v) = 0$ (or $\mathbf{f}(v') = 0$), so assume without loss of generality that this is the case for neither of them.
    Then for both of them the only edge from $L$ is from $s$.
    Since $s$ has capacity $1$ it follows that $\mathbf{f}(v) + \mathbf{f}(v') \leq \mathbf{f}(s) \leq 1$.
    \end{proof}

  \paragraph{Putting everything together.}
    Now we will put together \cref{clm:combinatorics-1,clm:combinatorics-2} and \cref{clm:G(S)-to-G(S_T)}.
    We aim to lower-bound amount of $(L,R)$ flow routed by $\mathbf{f}$'.
    If $x_1,\ldots,x_{|T|}$ are the super-diamonds in $G(S,T)$, then
    \[
    f(G(S,T)) \geq \sum_{t \in T} \mathbf{f}'(x_t) \, .
    \]
    Furthermore, $f(G(S)) \leq \sum_{t \in T} \mathbf{f}(t)$, where $\mathbf{f}(t) = \sum_{v \in t} \mathbf{f}(v)$.
    For each $t$ in case $1$ there exists $v_t \in t$ such that $\mathbf{f}'(x_t) = \mathbf{f}(v_t)$ and hence
    \[
      \mathbf{f}'(x_t) - \mathbf{f}(t) \geq - \sum_{v \in t, v \neq v_t} \mathbf{f}(v) \, ,
    \]
    while for each $t$ in case $2$ and case $3$,
    \[
    \mathbf{f}'(x_t) - \mathbf{f}(t) = - \mathbf{f}(t) \, .
    \]
    Hence, since $\mathbf{f}(v) \leq w$ for every $v$.
    \begin{align}
    f(G(S,T)) - f(G(S)) & \geq - \sum_{t \in \text{ case 1}, v \in T, v \neq v_t} \mathbf{f}(v) - \sum_{t \in \text{ cases 2,3}, v \in t} \mathbf{f}(v) \nonumber \\
    & \geq - w \cdot (|\{v \, : \, v \in t, t \in \text{ case 1} \}| - (\text{number of $t \in$ case 1})) - \;\;\sum_{\mathclap{\substack{t \in \text{ cases 2,3}\,,\\ v \in t}}} \mathbf{f}(v) \,.\label{eq:combinatorics-3}
    \end{align}
    Now, by \cref{clm:combinatorics-2}.
    \begin{align}
    &- \sum_{t \in \text{cases 2,3}, v \in t} \mathbf{f}(v) \\
    &\geq - w \cdot (|\{v \, : \, v \in t, t \in \text{ cases 2,3} \}| - 2 \cdot (\text{number of $t \in$ cases 2,3})) - (\text{number of $t \in$ cases 2,3}) \,. \label{eq:combinatorics-4}
    \end{align}
    Putting together \eqref{eq:combinatorics-3} and \eqref{eq:combinatorics-4}, we get
    \[
    f(G(S,T)) - f(G(S)) \geq -w \cdot (p - |T|) + (w - 1) \cdot (\text{number of $t \in $ cases 2,3}) \, .
    \]
    Furthermore, the number of isolated super-diamonds in $G(S,T)$ is at most the number of $t$ in case $3$.
    So, using also $w-1 \geq -1/2$,
    \begin{align*}
    & f(G(S,T)) - f(G(S)) - w \cdot (\text{number of isolated super-diamonds}) \\
    & \quad \geq -w \cdot (p - |T|) - \frac 12  \cdot (\text{number of $t \in $ case 2}) - (\text{number of $t \in $ case 3})\, .
    \end{align*}
    Applying \cref{clm:combinatorics-1} finishes the proof.
\end{proof}

\subsection{Auxiliary Lemmas used in Induction}
\label{sec:induction_subset_terms}

This subsection proves~\cref{lem:main_induction}.
Recall that we use the following notation.
For a vertex $v$, we use $\deg(v)$ to denote its degree, $\degdagger(v)$ to denote its degree counting only edges that are \emph{not} loops, and $s(v)$ to denote the number of self-loops.
Note that $\deg(v) = \degdagger(v) + 2s(v)$ and that the lemma also applies when $u$ only has self-loops.

We start with the following helper lemma; see \cref{fig:expecting_vertices} for an illustration.
\begin{lemma}
\label{lem:union-of-cycles-merge}
    Let $G$ be a vertex merge of a union of $r$ cycles whose lengths sum to $p$.
    Let $u$ be a vertex in $G$ and $G'$ be a graph obtained by removing $u$ from $G$, along with all its incident edges, and adding a matching $M$ on the neighbors of $u$ (counted with multiplicity).
    Then $G'$ is a vertex merge of a union of at most $r'$ cycles whose lengths sum to $p - \deg(u)/2$, where $r' = r + \tfrac{\degdagger(u)}{2} -1$.
    Further, all other vertices in $G'$ have the same degree as in $G$.
\end{lemma}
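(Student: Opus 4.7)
My plan is to prove the lemma by lifting to the disjoint-cycle preimage of $G$, performing the analogous operation there, and projecting back. Write $G = \pi(\tilde G)$ where $\tilde G = \bigsqcup_{j=1}^{r} \tilde C_j$ is a disjoint union of $r$ cycles of lengths $\ell_1,\ldots,\ell_r$ summing to $p$ and $\pi$ is the vertex-merge map, and set $U = \pi^{-1}(u)$, so $|U| = \deg(u)/2$ since every vertex of $\tilde G$ has degree $2$. I would then classify the half-edges at $U$ as \emph{external} (other endpoint outside $U$) or \emph{internal} (both ends in $U$, including self-loops at $1$-cycles), noting that there are exactly $\degdagger(u)$ external and $2s(u)$ internal half-edges and that the number of internal edges of $\tilde G$ is $s(u)$.

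Next I would lift $M$ to a pairing $\tilde M$ of the external half-edges via the natural bijection between neighbor-instances of $u$ in $G$ and external half-edges at $U$ in $\tilde G$, and define $\tilde G'$ by removing $U$ and all its incident edges from $\tilde G$ and adding, for each $\tilde M$-pair, one new edge between the two non-$U$ far-ends. A short case analysis shows every non-$U$ vertex $v$ keeps degree exactly $2$ in $\tilde G'$: each of its two cycle-edges in $\tilde G$ either survives intact or contributes one external half-edge which $\tilde M$ pairs into one new edge at $v$. Hence $\tilde G'$ is $2$-regular, i.e., a disjoint union of cycles whose total length equals $|V(\tilde G')| = p - |U| = p - \deg(u)/2$; the restricted projection $\pi'$ sends $\tilde G'$ onto $G'$, and the same bookkeeping shows $\deg_{G'}(v) = \deg_G(v)$ for every $v \ne u$.

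To count cycles I would contract each non-empty arc of $\tilde G \setminus U$ to a single super-vertex and view the new edges of $\tilde G'$ as the edges of an auxiliary $2$-regular multigraph $K$. The non-empty arcs number exactly $|U| - s(u) = \degdagger(u)/2$, because within each $\tilde C_j$ with $k_j \ge 1$ the empty arcs are in bijection with the internal edges of $\tilde C_j$; thus $K$ has $\degdagger(u)/2$ vertices and at most that many connected components, each giving one cycle of $\tilde G'$ among the arc-derived ones. Combining this with the $r - r_U$ unchanged cycles (where $r_U$ counts the original cycles that meet $U$) and using $r_U \ge 1$ (since $u$ exists, so $U \neq \emptyset$) yields a total cycle count of at most $r - r_U + \degdagger(u)/2 \le r + \degdagger(u)/2 - 1 = r'$. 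The main delicate step, I expect, is this combinatorial bookkeeping around internal half-edges, empty arcs, and $1$-cycles; once the identities $|U| = \deg(u)/2$ and $\text{(non-empty arcs)} = \degdagger(u)/2$ are verified, the $2$-regularity of $\tilde G'$ and $K$ and the cycle bound follow immediately.
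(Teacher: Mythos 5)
Your proof is correct and follows the same underlying strategy as the paper: lift $G$ to its disjoint-cycle preimage, delete $\pi^{-1}(u)$ and re-close along a lifted matching, then read off $2$-regularity, total length, and degree preservation, and finally bound the number of cycles. Your auxiliary $2$-regular multigraph $K$ on the $\degdagger(u)/2$ non-empty arcs is a clean, uniform way to organize the cycle count that the paper handles by a somewhat terser case analysis on $\degdagger(u)$, but both arguments amount to the same thing.
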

\begin{proof}
    Let $\cC = C_1,\ldots,C_r$ be the union of cycles of which $G$ is a vertex merge.
    Let $S_1,\ldots,S_N$ be a partition of the vertices of $C_1 \cup \ldots \cup C_r$ so that collapsing the vertices in each equivalence class yields $G$.
    Let $S_u$ be the member of the partition whose vertices collapse to $u$ in the vertex-merged graph $G$.
    The matching $M$ specifies a matching on the vertices not in $S_u$ which are incident to edges across the cut induced by $S_u$.
    Create a new union of cycles $\cC'$ by removing all the vertices in $S_u$ and adding edges between the matched vertices.
    In the resulting graph all vertex degrees are still $2$, so it must be a union of disjoint cycles, and $G'$ is clearly a vertex merge of $\cC'$.
    If $u$ had no other neighbors in $G$, i.e., $\degdagger(u) = 0$, there are no edges across the cut.
    Thus, the vertices in $S_u$ must have been a union of some of the cycles in $\cC$.
    It follows, that $\cC'$ is a union of at least one cycle less.
    If $u$ had only two edges to other vertices in $G$, i.e., $\degdagger(u) = 2$, the number of edges across the cut is 2.
    Thus, both of these edges must belong to the same cycle.
    It follows, that the number of cycles in $\cC'$ is the same as in $\cC$.
    Else, since there are at most $\degdagger(u)$ vertices in the matching and we can partition them into pairs belonging to the same cycle, this accounts for $\degdagger(u)/2$ cycles.
    Before removing $u$ at least one cycle was intersecting $S_u$, thus, the new number of cycles is at most $r + \tfrac{\degdagger(u)}{2} - 1$.

    For every vertex $u'$ other than $u$, if it previously was adjacent to $u$ (with multiplicity $m$), it is adjacent to $m$ additional vertices in $G'$ (in addition to other vertices).
    Thus, its degree remains unchanged.
    Further, all these vertices will be distinct from $u'$ and hence also the ``dagger-degree'' stays the same. 
    
    Next, we count how many edges are in $\cC'$.
    Note that every edge in $\cC$ with both endpoints in $S_u$ has no counterpart in $\cC'$; every pair of edges with a single endpoint in $S_u$ account for one edge in $\cC$.
    Denote by $t_1,t_2$ the number of edges that have one, respectively, two endpoints in $S_u$.
    Thus, $\cC'$ contains $p - t_2 - \tfrac{t_1}{2}$ edges.
    Since $t_1 + 2t_2 = 2\card{S_u}$, $\cC'$ has $p - |S_u|$ edges.
    The total degree of vertices in $S_u$ is $2|S_u|$, hence the degree of $u$ in $G$ is also $2|S_u|$, proving the lemma.
\end{proof}

\begin{figure}[ht]
    \centering
    \begin{subfigure}[t]{0.45\textwidth}
        \centering
        \begin{tikzpicture}[scale=1.5, every node/.style={draw, circle, minimum size=5mm}]
            \node (A) at (0,0) {$u$};
            \node (B) at (-0.7,-0.4) {};
            \node (C) at (-0.7,0.4) {};
            \node (D) at (0.7,-0.4) {};
            \node (E) at (0.7,0.4) {};

            \draw (A) -- (B);
            \draw (A) -- (C);
            \draw (A) -- (D);
            \draw (A) -- (E);
            \draw (B) -- (C);
            \draw (D) -- (E);
        \end{tikzpicture}
        \caption{An example graph $G$.}
        \label{fig:subfig1}
    \end{subfigure}
    \hfill
    \begin{subfigure}[t]{0.45\textwidth}
        \centering
        \begin{tikzpicture}[scale=1, every node/.style={draw, circle, minimum size=5mm}]
            \node (F1A) at (-1,0) {};
            \node (F1B) at (0,0) {};
            \draw (F1A) -- (F1B);

            \node (F1C) at (-1,-1) {};
            \node (F1D) at (0,-1) {};
            \draw (F1C) -- (F1D);

            \node (F2A) at (1,0) {};
            \node (F2B) at (2,0) {};
            \node (F2C) at (2,-1) {};
            \node (F2D) at (1,-1) {};
            \draw (F2A) -- (F2B) -- (F2C) -- (F2D) -- (F2A);

            \node (F2A) at (3,0) {};
            \node (F2B) at (4,0) {};
            \node (F2C) at (4,-1) {};
            \node (F2D) at (3,-1) {};
            \draw (F2A) -- (F2B) -- (F2C) -- (F2D) -- (F2A);
        \end{tikzpicture}
        \caption{A family of new graphs, note that the four cycle appears with multiplicity 2.}
        \label{fig:subfig2}
    \end{subfigure}
    \caption{An illustration of \cref{lem:union-of-cycles-merge}.}
    \label{fig:expecting_vertices}
\end{figure}

We can now prove~\cref{lem:main_induction}.
\restatelemma{lem:main_induction}

\begin{proof}
    Since $S$ is a strict subset of $V$, let $u \in V \setminus S$.
    Let $s(u)$ be the number of self-loops on $u$ and let $N(u)$ be the neighborhood of $u$, counted with multiplicity if $G$ is a multigraph.
    Note that $N(u)$ could be empty in the case when $u$ only has self-loops.
    Let $E(u)$ be the edges incident to $u$.
    Let $\cM(u)$ be the set of matchings on the neighbors of $u$ (excluding $u$ itself, but counted with multiplicity if $u$ participates in multi-edges).
    For $M \in \cM(u)$, let $G'(M) = (V',E'(M))$ be the graph given by setting $V' = V \setminus \{u\}$ and $E'(M) = (E \setminus E(u)) \cup M$ to be $E$, with edges incident to $u$ removed, and $M$ added.
    Note that when initially $V = \Set{u}$, $G'$ will be the empty graph.
    In this case we use the convention that $\mathcal{M}(u)$ only contains the empty set, such that e.g., $\sum_{M \in \mathcal{M}(u)} \prod_{\Set{v,w} \in E'(M)} \iprod{x_{f(v)},x_{f(w)}} = 1$.
    
    We start by expanding $P_G^{\subseteq S}$ as follows, where~\cref{eq:one} follows by linearity of $(\cdot)^{\subseteq f(S \cup \{u\})}$, \cref{eq:two} follows by \Cref{cor:wicks-with-norm}, using that the power of the norm is at most $p \leq d$, and \cref*{eq:three} follows by linearity of $(\cdot)^{\subseteq f(S \cup \{u\})}$
    \begin{align}
    P_G^{\subseteq S}(w,x)
    & = \sum_{\substack{f \, : \, V \rightarrow [n] \\ \text{ injective}}} \prod_{v \in V} w_{f(v)} \cdot \Paren{\prod_{\{v,w\} \in E} \iprod{x_{f(v)}, x_{f(w)}}}^{\subseteq f(S)} \nonumber \\
    & = \sum_{\substack{f \, : \, V \rightarrow [n] \\ \text{ injective}}} \prod_{v \in V} w_{f(v)} \cdot \E_{x_{f(u)}} \Paren{\prod_{\{v,w\} \in E} \iprod{x_{f(v)}, x_{f(w)}}}^{\subseteq f(S \cup \{u\})} \nonumber \\
    & = \sum_{\substack{f \, : \, V \rightarrow [n] \\ \text{ injective}}} \prod_{v \in V} w_{f(v)} \cdot  \Paren{\E_{x_{f(u)}} \prod_{\{v,w\} \in E} \iprod{x_{f(v)}, x_{f(w)}}}^{\subseteq f(S \cup \{u\})} \label{eq:one} \\
    & = \sum_{\substack{f \, : \, V \rightarrow [n] \\ \text{ injective}}} \prod_{v \in V} w_{f(v)} \cdot  \Paren{\sum_{M \in \cM(u)} O(d)^{s(u)} \prod_{\{v,w\} \in E'(M) } \iprod{x_{f(v)}, x_{f(w)}}}^{\subseteq f(S \cup \{u\})} \label{eq:two} \\
    & = O(d)^{s(u)} \cdot \sum_{M \in \cM(u)} \sum_{i \in [n]} w_i \cdot \sum_{\substack{f \, : \, V' \rightarrow  ([n] \setminus \{i\}) \\ \text{injective}}} \prod_{v \in V'} w_{f(v)} \cdot \Paren{\prod_{\{v,w\} \in E'(M)} \iprod{x_{f(v)}, x_{f(w)}}}^{\subseteq f(S)} \label{eq:three} \, .
    \end{align}
    Next, using that $w_i^2 = w_i$ we obtain that $\Set{w_i^2 = w_i \,, \sum_{i=1}^n w_i = \eta n}$ proves at constant degree that
    \begin{align*}
      &w_i \cdot \sum_{\substack{f \, : \, V' \rightarrow  ([n] \setminus \{i\}) \\ \text{injective}}} \prod_{v \in V'} w_{f(v)} \cdot \Paren{\prod_{\{v,w\} \in E'(M)} \iprod{x_{f(v)}, x_{f(w)}}}^{\subseteq f(S)} \\
      &= w_i \cdot \sum_{\substack{f \, : \, V' \rightarrow  [n] \\ \text{injective}}} \prod_{v \in V'} w_{f(v)} \cdot \Paren{\prod_{\{v,w\} \in E'(M)} \iprod{x_{f(v)}, x_{f(w)}}}^{\subseteq f(S)} \\
      &- w_i \cdot \sum_{\substack{f \, : \, V' \rightarrow  [n]\,,i \in \mathrm{Im}(f) \\ \text{injective}}} \prod_{v \in V'} w_{f(v)} \cdot \Paren{\prod_{\{v,w\} \in E'(M)} \iprod{x_{f(v)}, x_{f(w)}}}^{\subseteq f(S)} \\
      &= w_i \cdot P_{G'(M)}^{\sse f(S)} - \sum_{\substack{f \, : \, V' \rightarrow  [n]\,,i \in \mathrm{Im}(f) \\ \text{injective}}} \prod_{v \in V'} w_{f(v)} \cdot \Paren{\prod_{\{v,w\} \in E'(M)} \iprod{x_{f(v)}, x_{f(w)}}}^{\subseteq f(S)} \,.
    \end{align*}
    Next, using that in the double sum $\sum_{i \in [n]} \sum_{\substack{f \, : \, V' \rightarrow  [n]\,,i \in \mathrm{Im}(f) \\ \text{injective}}}$ every mapping $f$ is counted exactly $\card{V'}$ times, we obtain
    \begin{align*}
      &\sum_{i \in [n]} \sum_{\substack{f \, : \, V' \rightarrow  [n]\,,i \in \mathrm{Im}(f) \\ \text{injective}}} \prod_{v \in V'} w_{f(v)} \cdot \Paren{\prod_{\{v,w\} \in E'(M)} \iprod{x_{f(v)}, x_{f(w)}}}^{\subseteq f(S)} \\
      &= \card{V'} \cdot \sum_{\substack{f \, : \, V' \rightarrow  [n] \\ \text{injective}}} \prod_{v \in V'} w_{f(v)} \cdot \Paren{\prod_{\{v,w\} \in E'(M)} \iprod{x_{f(v)}, x_{f(w)}}}^{\subseteq f(S)} \\
      &= \card{V'} \cdot P_{G'(M)}^{\sse f(S)} \,.
    \end{align*}
    Putting everything together, we obtain
    \begin{align*}
      \Set{w_i^2 = w_i \,, \sum_{i=1}^n w_i = \eta n} \proves_{O(1)}^{w} P_G^{\subseteq S}(w,x) = O(d)^{s(u)} \cdot (\eta n - |V'|) \cdot \sum_{M \in \cM(u)} P_{G'(M)}^{\subseteq S} \,.
    \end{align*}
    Finally, noting that there are at most $\deg(u)! \leq \deg(u)^{O(\deg(u))}$ matchings and applying \cref{lem:union-of-cycles-merge} finishes the proof.
\end{proof}

\section{Transfer Results: Certificates for Subgaussian Data}
\label{sec:transfer-lemma}

In this section, we show how the methodology from \cite{diakonikolas2024sos} can be used to transfer the certificates on Gaussian data to subgaussian data.
All of the ideas in the present section are already present in \cite{diakonikolas2024sos}.
At the same time, we will also be able to derive high-probability concentration bounds for Gaussian data.
We first transfer the operator norm resilience and sparse eigenvalue certificates (cf.\ \cref{thm:resilience_Gauss_full,thm:sparse_sing_val_full}).
At the end of this section, we also show how to transfer the alternate form of the operator norm resilience certificates (cf~\cref{thm:resilience_Gauss_full_alternate}) since this will be necessary for our covariance estimation application.
\begin{theorem}[Adapted from \cite{diakonikolas2024sos}]
\label{thm:Gaussian-cov-resilience}
Let $M$ be $\R^{n\times d}$ random matrix such that $\E[M] = 0$ and the entries of $M$ are jointly $s$-subgaussian, i.e., for all $A \in \R^{n\times d}$, $\P( \sum_{i,j} A_{i,j}M_{i,j} > s t \|A\|_F) \leq 2\exp(-t^2)$.\footnote{In particular, $M$ could be composed of independent rows/columns of (centered) $s$-subgaussian vectors.}  
Let $\cC$ be a fixed set of pseudoexpectations over variables $u$, $v$, and $w$ that satisfy $\sum_i w_i^2u_i^2=1$ and $\|v\|_2^2=1$.

Then, with probability $1-\delta$,
\begin{align*}
 \max_{\pE \in \cC} \pE\left[ \frac{1}{\sqrt{n}}\sum_{i,j} M_{i,j} w_iu_iv_j  \right] \lesssim s \cdot \E_{G} \left[ \frac{1}{\sqrt{n}} \max_{\pE \in \cC} \pE\left[ \sum_{i,j} G_{i,j} w_iu_iv_j  \right]\right] + s \sqrt{\frac{\log(1/\delta)}{n}}, 
\end{align*}
where $G$ is a matrix with i.i.d.\ $\cN(0,1)$ entries.

\end{theorem}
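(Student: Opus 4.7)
The strategy is to view the quantity $X_\pE := \tfrac{1}{\sqrt n}\pE[\sum_{i,j} M_{i,j} w_i u_i v_j]$ as a stochastic process indexed by $\pE \in \cC$, and to compare it to the analogous Gaussian process $Y_\pE := \tfrac{1}{\sqrt n}\pE[\sum_{i,j} G_{i,j} w_i u_i v_j]$. By linearity of pseudoexpectation, $X_\pE = \tfrac{1}{\sqrt n}\iprod{M, A(\pE)}$ and $Y_\pE = \tfrac{1}{\sqrt n}\iprod{G, A(\pE)}$, where $A(\pE) \in \R^{n \times d}$ has entries $A(\pE)_{i,j} := \pE[w_i u_i v_j]$. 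Pseudoexpectation Cauchy--Schwarz (\cref{fact:pE_cauchy_schwarz}) together with the two constraints in the definition of $\cC$ gives
\[
\|A(\pE)\|_F^2 \;=\; \sum_{i,j} \Paren{\pE[w_i u_i v_j]}^2 \;\leq\; \pE\Brac{\Paren{\sum_i w_i^2 u_i^2}\Paren{\sum_j v_j^2}} \;=\; 1,
\]
and the same inequality applied to $A(\pE_1) - A(\pE_2) = A(\pE_1 - \pE_2)$ shows that the Frobenius distance between the $A(\pE)$ controls the increments of both processes.

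The key observation I would then make is that, by the joint subgaussianity of $M$, the increment $X_{\pE_1} - X_{\pE_2}$ is $(s/\sqrt n)\,\|A(\pE_1)-A(\pE_2)\|_F$-subgaussian, while the Gaussian increment $Y_{\pE_1}-Y_{\pE_2}$ has standard deviation exactly $(1/\sqrt n)\,\|A(\pE_1)-A(\pE_2)\|_F$. Thus the subgaussian increment pseudo-metric $d_X$ on $\cC$ is pointwise at most $s$ times the Gaussian increment metric $d_Y$. I would then invoke Talagrand's majorizing measures theorem (which identifies $\E\sup_\pE Y_\pE$ up to universal constants with $\gamma_2(\cC, d_Y)$) together with the generic chaining upper bound $\E\sup_\pE X_\pE \lesssim \gamma_2(\cC, d_X)$ for subgaussian processes, to conclude
\[
\E \sup_{\pE \in \cC} X_\pE \;\lesssim\; s \cdot \gamma_2(\cC, d_Y) \;\lesssim\; s \cdot \E \sup_{\pE \in \cC} Y_\pE,
\]
which accounts for the first term on the right-hand side (after passing from $\E X_\pE$ to the realized $X_\pE$ via the tail bound below).

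For the concentration term, the same chaining machinery, combined with Borell-style tail estimates for subgaussian processes (e.g.\ Theorem~2.2.27 of Talagrand's generic chaining book, or the subgaussian concentration results in \cite{Vershynin18}), yields that $\sup_\pE X_\pE$ concentrates subgaussianly around its mean at the scale of the diameter $\Delta(\cC, d_X) \leq 2s/\sqrt n$; explicitly, for every $u > 0$,
\[
\P\Paren{\sup_{\pE \in \cC} X_\pE \;>\; \E \sup_{\pE \in \cC} X_\pE + u} \;\leq\; \exp\Paren{-c n u^2/s^2}.
\]
Setting $u \asymp s\sqrt{\log(1/\delta)/n}$ gives the additional $s\sqrt{\log(1/\delta)/n}$ term and completes the argument.

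The main conceptual obstacle is that one cannot appeal to bounded-differences, log-Sobolev, or Talagrand-convex-Lipschitz concentration, since the entries of $M$ are only assumed \emph{jointly} subgaussian and are allowed to be dependent; the only information the hypothesis provides is that every linear functional of $M$ with unit Frobenius-norm coefficient matrix is $s$-subgaussian. Fortunately, this is precisely what the chaining toolkit requires, since it only accesses the process through subgaussianity of increments. After the above reductions, the proof is a routine invocation of the machinery already employed for the same transfer in \cite{diakonikolas2024sos}, and no new ingredient is needed.
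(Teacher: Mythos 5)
Your proposal is correct and follows the same route as the paper's proof: rewrite $\pE[\cdot]$ as a linear functional $\iprod{M, A(\pE)}$, bound $\|A(\pE)\|_F$ via pseudoexpectation Cauchy--Schwarz, and invoke the Gaussian comparison consequence of Talagrand's generic chaining together with a subgaussian concentration tail to get both terms. The paper states this more tersely by citing \cite{diakonikolas2024sos} for the chaining step, but the substance of your argument is identical.
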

\begin{proof}
Let $R$ denote the quantity $\E_{G} \left[ \frac{1}{\sqrt{n}} \max_{\pE \in \cC} \pE\left[ \sum_{i,j} G_{i,j} u_iv_j  \right]\right]$.
Let $\cT$ denote the set of matrices
\[
    \left\{A \in \R^{n \times d}: A_{i,j} = \frac{1}{\sqrt{n}}\pE[w_iu_i v_j] \text{ for } \pE \in \cC \right\} \,.
\]
The left-hand side is thus equivalent to $\sup_{A \in \cT} \langle M, A \rangle$, where $M$ is jointly $s$-subgaussian.
Applying Gaussian comparison theorem (a consequence of Talagrands's generic chaining~\cite{Talagrand21}) as in \cite{diakonikolas2024sos}, we get that with probability $1  - \delta$,
  \begin{align*}
      \sup_{\pE \in \cC} \pE\left[\frac{1}{\sqrt{n}}u^\top M v\right] &= \sup_{T \in \cA} \langle M, T \rangle \\
      &\lesssim s\E_G\left[\sup_{T \in \cA} \langle G, T \rangle\right] + s \sup_{A \in \cT}\|A\|_\fr \sqrt{\log(1/\delta)}\\
      &\lesssim s R + s \sqrt{\frac{\log(1/\delta)}{n}},
  \end{align*}
where the last line uses SoS-Cauchy Schwarz inequality to get $\|A\|_\fr^2 = \frac{1}{n} \sum_{i,j} \left(\pE[w_iu_i v_j]\right)^2 \leq \frac{1}{n} \sum_{i,j} \left(\pE[w_i^2u_i^2]\right) \left(\pE[v_j^2]\right) = \frac{1}{n}$.  
   
\end{proof}
Applying the result above we obtain the following SoS certificate for the \textsf{SSV} problem for subgaussian data.
\begin{corollary}[High-probability $\eta$-\textsf{SSV} certificates for subgaussian data]
\label{cor:high-prob-ssv-subgaussian}
Let $M$ be $\R^{n\times d}$ random matrix with $n\geq d$ such that $\E[M] = 0$ and the entries of $M$ are jointly $s$-subgaussian.
Let $\cA$ denote the constraints
\begin{align*}
    \cA \coloneqq \Set{w_i^2 = w_i  \, \forall i\in[n]\,, \sum_{i=1}^n w_i \leq \eta n \,, w_iu_i = u_i \,\forall i\in[n] \,, \norm{u}^2 = 1, \norm{v}^2 = 1}\,.
\end{align*}
Then, with probability $1 - \delta$ over $M$, there exists a sum of squares proof of the following inequality: 
\begin{align*}
 \cA \proves_{O(p)}^{w,u,v} \frac{1}{\sqrt{n}} u^\top M v  \leq O(s) \cdot \left( \sqrt{\sosBound}
 + 
  \sqrt{\frac{\log(1/\delta)}{n}} \right) \,.
\end{align*}

\end{corollary}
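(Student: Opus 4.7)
The plan is to combine \Cref{thm:Gaussian-cov-resilience} with \Cref{thm:sparse_sing_val_full} via SoS duality. First I would invoke the duality between SoS proofs and pseudo-expectations (\Cref{fact:sos-duality}): the system $\cA$ is Archimedean (the constraints $\|u\|^2 = \|v\|^2 = 1$, $w_i^2 = w_i$, and $\sum_i w_i \leq \eta n$ jointly bound $\|w\|^2 + \|u\|^2 + \|v\|^2$), so the infimum of $B$ admitting $\cA \proves_{O(p)}^{w,u,v} \tfrac{1}{\sqrt{n}} u^\top M v \leq B$ coincides, up to arbitrarily small additive slack, with the supremum of $\pE[\tfrac{1}{\sqrt{n}} u^\top M v]$ over all degree-$O(p)$ pseudo-expectations satisfying $\cA$. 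It thus suffices to establish the claimed bound for this pseudo-expectation supremum with probability at least $1-\delta$.

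To apply \Cref{thm:Gaussian-cov-resilience}, I would take $\cC$ to be the set of \emph{all} degree-$O(p)$ pseudo-expectations satisfying $\cA$; this is a fixed set, chosen independently of $M$ as the theorem requires. The two preconditions $\|v\|^2 = 1$ and $\sum_i w_i^2 u_i^2 = 1$ are both implied by $\cA$: the first directly, and the second because squaring the SoS identity $w_i u_i = u_i$ yields $w_i^2 u_i^2 = u_i^2$, after which summation together with $\|u\|^2 = 1$ gives $\sum_i w_i^2 u_i^2 = 1$. Moreover, modulo $\cA$ the polynomials $\sum_{i,j} M_{i,j} w_i u_i v_j$ and $u^\top M v$ coincide, so pseudo-expectation values (and SoS proofs) for one translate directly to the other.

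Applying \Cref{thm:Gaussian-cov-resilience} then yields, with probability at least $1 - \delta$,
\[
\sup_{\pE \in \cC} \pE\Brac{\tfrac{1}{\sqrt{n}} u^\top M v}
\;\lesssim\; s \cdot \E_G \sup_{\pE \in \cC} \pE\Brac{\tfrac{1}{\sqrt{n}} u^\top G v} + s\sqrt{\tfrac{\log(1/\delta)}{n}}.
\]
The inner supremum equals, by duality applied to each fixed $G$, the best degree-$O(p)$ SoS upper bound on $\tfrac{1}{\sqrt{n}} u^\top G v$ under $\cA$, and \Cref{thm:sparse_sing_val_full} bounds its $G$-expectation by $\sqrt{\sosBound}$. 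Substituting this bound and then translating the resulting pseudo-expectation upper bound back into an SoS certificate, via a second application of \Cref{fact:sos-duality}, completes the argument.

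Since the heavy lifting is already done by the preceding Gaussian analysis and the Gaussian-to-subgaussian transfer lemma, I do not anticipate a substantive technical difficulty. The main subtlety will be ensuring that $\cC$ is fixed independently of $M$ and that both directions of the duality are applied in the Archimedean regime so that the degree-$O(p)$ SoS-proof-to-pseudo-expectation translation (and back) preserves the degree without accumulating non-negligible slack.
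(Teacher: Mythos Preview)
Your proposal is correct and follows essentially the same approach as the paper's proof: use SoS/pseudo-expectation duality (\Cref{fact:sos-duality}) to pass to a supremum over degree-$O(p)$ pseudo-expectations satisfying $\cA$, apply the Gaussian-to-subgaussian transfer (\Cref{thm:Gaussian-cov-resilience}) with $\cC$ taken as this fixed set, bound the resulting Gaussian expectation via a second application of duality together with \Cref{thm:sparse_sing_val_full}, and translate back. Your explicit checks that $\cA$ is Archimedean and that the hypotheses $\sum_i w_i^2 u_i^2 = 1$, $\|v\|^2 = 1$ of \Cref{thm:Gaussian-cov-resilience} are implied by $\cA$ are details the paper leaves implicit.
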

\begin{proof}
    Let $\cC$ denote the set of degree-$m$ pseudoexpectations that satisfy the constraints $\cA$.
    Using the duality between SoS proofs and pseudoexpectations (\Cref{fact:sos-duality}),
    we see that  $\cA \proves_{m}^{u,v,w}  \frac{1}{\sqrt{n}} u^\top M v  \leq B$ if and only if $\sup_{\pE \in \cC} \pE[\tfrac{1}{\sqrt{n}}u^\top M v] \leq B$.
    Applying \Cref{thm:Gaussian-cov-resilience},
    it suffices to show that 
    $$\E\left[\sup_{\pE \in \cC} \pE[\tfrac{1}{\sqrt{n}}u^\top G v]\right] \leq  \sqrt{\sosBound} 
    ,$$ where $G$ is a matrix of \iid $\cN(0,1)$ entries.
    Leveraging the duality (\Cref{fact:sos-duality}) again, the above conclusion holds from  \Cref{thm:sparse_sing_val_full}.
 \end{proof}
Next, we obtain SoS certificates for the resilience of operator norm of subgaussian data.
\begin{corollary}[High-probability operator norm resilience of subgaussian data]
\label{cor:high-prob-sos-resilience-subgaussian}
Let $M$ be $\R^{n\times d}$ random matrix with $n\geq d$ such that $\E[M] = 0$ and the entries of $M$ are jointly $s$-subgaussian.
Let $M_1,\dots,M_n$ denote the rows of $M$.
Let $\cB$ denote the constraints
\begin{align*}
    \cB \coloneqq \Set{w_i^2 = w_i \forall i \in [n]\,, \sum_{i=1}^n w_i \leq \eta n \,,  \norm{v}^2 = 1}\,.
\end{align*}
Then, with probability $1 - \delta$ over $M$, there exists a sum of squares proof of the following inequality: 
\begin{align*}
 \cB \proves_{O(p)}^{w,v} \frac{1}{n} \sum_{i=1}^n w_i \langle M_i, v \rangle^2  \leq O(s^2)\cdot \left ( \sosBound + {\frac{\log(1/\delta)}{n}}\right) \,.
\end{align*}
\end{corollary}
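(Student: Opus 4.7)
The proof mirrors the derivation of the linear Gaussian bound \Cref{thm:resilience_Gauss_full} from its alternate form \Cref{thm:resilience_Gauss_full_alternate}, carried out in the subgaussian setting. Write $C := O(s^2)\bigl(\sosBound + \log(1/\delta)/n\bigr)$ for the target upper bound, so that $C = B^2$ where $B = O(s)\bigl(\sqrt{\sosBound} + \sqrt{\log(1/\delta)/n}\bigr)$ is the bound from \Cref{cor:high-prob-ssv-subgaussian}.

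The first step is to invoke the subgaussian analog of \Cref{thm:resilience_Gauss_full_alternate}, namely \Cref{thm:op_resilience_alternate_subgaussian}, which yields, with probability at least $1-\delta$ over $M$, the SoS proof
\[
    \Set{w_i^2 = w_i \,\forall i,\ \textstyle\sum_i w_i \leq \eta n} \;\proves_{O(p)}^{w,v}\; \Paren{\tfrac{1}{n}\sum_i w_i\iprod{M_i,v}^2}^{p} \;\leq\; C^{p}\, \|v\|^{2p}.
\]
This alternate-form bound is established via the transfer methodology of \cite{diakonikolas2024sos}: by strong duality (\Cref{fact:strong_duality_alternate}), whose applicability hinges on the absence of any constraint on $v$, it suffices to give a uniform high-probability upper bound on $\max_{\pE_0}\pE_0\bigl[(\sum_i w_i\iprod{M_i,v}^2 /n)^{p}\bigr]$ over all pseudoexpectations satisfying $\{w_i^2=w_i,\sum_i w_i\leq \eta n\}$; this supremum is compared to its Gaussian counterpart via a \emph{quadratic} Gaussian-comparison theorem—the analog of \Cref{thm:Gaussian-cov-resilience} for the $p$-th power of a quadratic form of $M$—after which \Cref{thm:resilience_Gauss_full_alternate} supplies the Gaussian expectation.

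Once the alternate-form inequality is available, the remainder of the proof is routine. Adjoining the constraint $\|v\|^2=1$ to the system yields $\cB$; since $\|v\|^2-1=0$ gives $\cB \proves_{O(p)} \|v\|^{2p} = 1$ (by the telescoping identity $\|v\|^{2p}-1=(\|v\|^2-1)\cdot\sum_{k=0}^{p-1}\|v\|^{2k}$), the alternate-form inequality becomes $\cB \proves_{O(p)} \bigl(\tfrac{1}{n}\sum_i w_i\iprod{M_i,v}^2\bigr)^{p} \leq C^{p}$. Moreover, $\tfrac{1}{n}\sum_i w_i \iprod{M_i,v}^2=\tfrac{1}{n}\sum_i (w_i\iprod{M_i,v})^{2}$ (using $w_i^2=w_i$) is manifestly a sum of squares, so it is SoS-nonnegative on $\cB$. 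Applying \Cref{fact:sos_square_root} in the form $\{X\geq 0,\,X^{p}\leq C^{p}\}\proves_{O(p)}^{X} X\leq C$ then produces the desired bound $\cB \proves_{O(p)}^{w,v} \tfrac{1}{n}\sum_i w_i \iprod{M_i,v}^2 \leq C$.

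The main obstacle is therefore \Cref{thm:op_resilience_alternate_subgaussian}—the quadratic Gaussian-to-subgaussian transfer. Unlike \Cref{thm:Gaussian-cov-resilience}, which uses Talagrand-style generic chaining for \emph{linear} functionals of a subgaussian matrix, the alternate form requires uniform control over a family of $p$-th-power quadratic forms. This can be obtained by combining a decoupling step (passing from $M^{\otimes 2}$ to an iid copy $M\otimes M'$), two rounds of linear Gaussian comparison (one for each copy), and a Hanson–Wright-type concentration bound to handle the deviation of the supremum from its expectation; this is precisely the generalization of the linear transfer lemma developed in \cite{diakonikolas2024sos}. Working with the power-$p$ alternate form rather than attempting to directly derive the linear bound is essential here, because it lets the transfer step operate in a constraint system that leaves $v$ unconstrained, which is what allows strong duality to be invoked cleanly.
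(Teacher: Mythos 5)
Your route through \cref{thm:op_resilience_alternate_subgaussian} is considerably more circuitous than what the paper does, and it also has a concrete mismatch. The paper's proof of \cref{cor:high-prob-sos-resilience-subgaussian} is direct: it uses the Archimedean duality \cref{fact:sos-duality} to reduce to bounding $\sup_{\pE} \pE[\tfrac 1 n \sum_i w_i \iprod{M_i,v}^2]$ over pseudoexpectations satisfying $\cB$, then \emph{linearizes} this quadratic form by adjoining auxiliary variables $u_1,\dots,u_n$ with the constraints $u_i w_i = u_i$ and $\|u\|^2 = 1$, proving the exact identity
\[
\sup_{\pE \in \cC'} \pE\Brac{\tfrac 1 n \textstyle\sum_i w_i \iprod{M_i,v}^2} = \Paren{\sup_{\pE \in \cC} \pE\Brac{\tfrac 1 {\sqrt n} \textstyle\sum_i w_i u_i \iprod{M_i,v}}}^2,
\]
and then applies the \emph{linear} Gaussian comparison \cref{thm:Gaussian-cov-resilience} to the right-hand side. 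No power-$p$ alternate form, no decoupling, no Hanson--Wright. Your observation that one could instead specialize \cref{thm:op_resilience_alternate_subgaussian} to $\|v\|^2=1$ and take a $p$-th root via \cref{fact:sos_square_root} is in principle a valid reduction, but it routes through a strictly harder theorem (which the paper needs separately only for the covariance-estimation application, where one genuinely cannot impose $\|v\|^2=1$).

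Two concrete problems with the proposal as written. First, \cref{thm:op_resilience_alternate_subgaussian} is stated for the constraint system with the \emph{equality} $\sum_i w_i = \eta n$, not the inequality $\sum_i w_i \leq \eta n$ that appears in $\cB$; you silently rewrite it with $\leq$, and this upgrade does not come for free at the SoS level (the paper in fact remarks inside that proof that the equality form ``simplifies the arguments for strong duality''). Second, your description of how \cref{thm:op_resilience_alternate_subgaussian} is proved --- decoupling $M^{\otimes 2}$ into independent copies, two rounds of linear comparison, and a Hanson--Wright deviation bound --- does not match the paper. That proof instead introduces a single scalar slack variable $r \geq 0$, uses H\"older's inequality for pseudoexpectations (\cref{fact:pe_hoelder}) to reduce the $p$-th power to a bilinear quantity, then applies exactly the same one-shot linearization-plus-\cref{thm:Gaussian-cov-resilience} machinery. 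So the ``main obstacle'' you flag is both already available and established by a different (and simpler) argument than the one you sketch; and the step you regard as routine (downgrading from the alternate form to the constrained linear form) is where you introduced the unjustified constraint replacement.
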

\begin{proof}
Let $\cC'$ denote the pseudoexpectations over $w$ and $v$ that satisfy $\cB$.
By duality (\Cref{fact:sos-duality}), 
it suffices to show that, with probability $1-\delta$,
$\sup_{\pE \in \cC'} \pE'\left[\tfrac{1}{n}\sum_{i=1}^n w_i \langle M_i, v \rangle^2\right] \leq \sosBound^2 + s^2 \frac{\log(1/\delta)}{n}$.
Let $\cC$ denote the pseudoexpectations $w$, $v$, and $u = (u_1,\dots,u_n)$ that satisfy $\cB \cup \{u_iw_i = u_i \forall i \in [n], \|u\|_2^2=1\}$.
We claim that the following equality holds for any matrix $M$:
\begin{align}
    \sup_{\pE \in \cC'} \pE\left[\frac{1}{n}\sum_{i=1}^n w_i \langle M_i, v \rangle^2\right] = \left(\sup_{\pE \in \cC} \pE\left[ \sum_{i=1}^n \tfrac{1}{\sqrt{n}} w_i u_i \langle M_i, v \rangle \right]\right)^2 \,.
    \label{eq:linearized-sos}
\end{align}
Deferring the proof of this equality below, we proceed with the remainder of the proof.
The equation \Cref{eq:linearized-sos} gives an equivalent ``linear'' (in data) representation, permitting the use of \Cref{thm:Gaussian-cov-resilience}.   
Applying \Cref{eq:linearized-sos} twice and using \Cref{thm:Gaussian-cov-resilience}, we get that with probability $1-\delta$ over $M$,
\begin{align*}
    \sup_{\pE \in \cC'} \pE\left[\sum_{i=1}^n w_i \langle M_i, v \rangle^2\right] 
    &= \left(\sup_{\pE \in \cC} \pE\left[\sum_{i=1}^n \tfrac{1}{\sqrt{n}} w_i u_i \langle M_i, v \rangle\right]\right)^2 \\
    &\lesssim s^2 \left(\E\left[\sup_{\pE \in \cC} \pE\left[\sum_{i=1}^n \tfrac{1}{\sqrt{n}} w_i u_i \langle G_i, v \rangle \right]\right]\right)^2 + \frac{s^2 \log(1/\delta)}{n} \\
    &\lesssim s^2 \left(\E\left[\sup_{\pE \in \cC'} \pE\left[\tfrac{1}{\sqrt{n}}\sum_{i=1}^n w_i \langle G_i, v \rangle^2\right]\right] \right) + \frac{s^2 \log(1/\delta)}{n} \\
    &\lesssim s^2 \sosBound + \frac{s^2 \log(1/\delta)}{n}.
\end{align*}
To finish the proof, it remains to prove \Cref{eq:linearized-sos}.
Given any $\pE \in \cC$ such that $ \sum_i\pE[w_i\langle M_i, v\rangle^2] > 0$; otherwise, the left hand side in \Cref{eq:linearized-sos} is 0 and the right hand side is nonnegative.
consider the extension $\pE'$ over $v, u,$ and $w$ such that assigns $u_i := \frac{w_i \langle M_i , v\rangle }{\sqrt{\pE[\sum_i w_i \langle M_i,v\rangle^2 ]}}$, i.e., for any polynomial $p$,  $\pE'[p(v,w,u_1,\dots,u_n)]$ is defined to be $  \pE\left[p\left(v,w, \frac{w_1 \langle M_1 , v\rangle}{\sqrt{\pE[\sum_i w_i \langle M_i,v\rangle^2 ]}},\dots,\frac{w_n \langle M_n , v\rangle}{\sqrt{\pE[\sum_i w_i \langle M_i,v\rangle^2 ]}} \right)\right]$.
The extended pseudoexpecation $\pE'$ belongs to $\cC$, and the right hand side is equal to the left hand side in \Cref{eq:linearized-sos}.
The claim in \Cref{eq:linearized-sos} follows by taking a supremum over $\cC'$.

\end{proof}

\paragraph{Transfering alternate form of operator norm resilience certificates.}
We now show to extend the alternate formulation in \Cref{thm:resilience_Gauss_full_alternate}
to subgaussian data (and obtain high-probability estimates).
We will again use the methodology of~\cite{diakonikolas2024sos}.
We use the following version of H\"older's Inequality for pseudo-expectations.
\begin{fact}
    \label{fact:pe_hoelder}
    Let $m$ be even and let $\pE$ be a degree-$m$ pseudo-expectations over variables $g_i, h_i$.
    Then,
    \[ 
        \pE \Brac{\tfrac 1 n \sum_{i=1}^n g_i^{m-1} h_i} \leq \pE \Brac{\tfrac 1 n \sum_{i=1}^n g_i^m}^{1-1/m}\pE \Brac{\tfrac 1 n \sum_{i=1}^n h_i^m}^{1/m} \,.
    \]
\end{fact}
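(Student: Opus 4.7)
The plan is to deduce this Hölder-type inequality from a weighted arithmetic-geometric-mean inequality in sum-of-squares form, followed by a single-variable optimization. Concretely, the first step is to establish that, for every real $\lambda > 0$, the polynomial
\[
(m-1)\lambda^{m/(m-1)} g^m + h^m - m\lambda \cdot g^{m-1} h
\]
in the formal variables $g,h$ admits a sum-of-squares decomposition of degree $m$. After the substitution $x = \lambda^{1/(m-1)} g$, this reduces to showing that $(m-1)x^m + h^m - m x^{m-1} h$ is SoS. Viewing this as a polynomial in $h$, one sees that $h=x$ is a double root, so one can factor out $(h-x)^2$; the quotient equals $\sum_{k=0}^{m-2} (k+1)\, x^k h^{m-2-k}$, and the latter can be shown to be SoS by induction on even $m$ (for instance, for $m = 4$ one gets $(h+x)^2 + 2x^2$). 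This weighted AM-GM is standard and serves as the workhorse.

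Once the SoS inequality is in hand, I would apply the pseudo-expectation to it termwise, using \Cref{fact:pE_and_sos_proofs} to preserve nonnegativity. Summing over $i$, dividing by $m\lambda n$, and using linearity of $\pE$ gives
\[
\pE\Brac{\tfrac{1}{n}\sum_{i=1}^n g_i^{m-1} h_i} \;\leq\; \tfrac{m-1}{m}\lambda^{1/(m-1)} \cdot \pE\Brac{\tfrac{1}{n}\sum_{i=1}^n g_i^m} + \tfrac{1}{m\lambda} \cdot \pE\Brac{\tfrac{1}{n}\sum_{i=1}^n h_i^m}
\]
for every $\lambda > 0$.

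The final step is to optimize over $\lambda$. Writing $G = \pE[\tfrac{1}{n}\sum_i g_i^m]$ and $H = \pE[\tfrac{1}{n}\sum_i h_i^m]$, the right-hand side is minimized at $\lambda^{\star} = (H/G)^{(m-1)/m}$ (when both $G,H>0$), and substitution collapses the two terms into $G^{(m-1)/m} H^{1/m}$, yielding the claim. The degenerate cases $G=0$ or $H=0$ can be handled by taking $\lambda \to 0^+$ or $\lambda \to \infty$ in the inequality above and invoking pseudo-Cauchy-Schwarz (\Cref{fact:pE_cauchy_schwarz}) to force $\pE[\tfrac{1}{n}\sum_i g_i^{m-1} h_i] = 0$ there. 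The only subtle point in the argument is verifying the SoS certificate for weighted AM-GM; everything afterwards is a routine one-parameter optimization and a direct appeal to the fact that pseudo-expectations respect sum-of-squares inequalities.
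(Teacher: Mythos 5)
The paper states this H\"older inequality as a fact without proof, so there is no internal argument to compare against. Your strategy is a sound route: reduce to a bivariate sum-of-squares Young inequality, apply linearity of the pseudo-expectation termwise over $i$, and optimize the free scalar $\lambda$. The factorization $(m-1)x^m + h^m - m x^{m-1} h = (h-x)^2 \sum_{k=0}^{m-2}(k+1)x^k h^{m-2-k}$ is correct, and the algebra of the $\lambda$-optimization is right.

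The one place where you leave a genuine gap is the assertion that the quotient $q_m(x,h) := \sum_{k=0}^{m-2}(k+1)x^k h^{m-2-k}$ is SoS ``by induction on even $m$''; you do not describe the inductive step, and a direct attempt to relate $q_m$ to $q_{m-2}$ does not obviously produce a sum of squares (e.g., $q_m - x^2 q_{m-2}$ and $q_m - h^2 q_{m-2}$ both contain odd mixed monomials). There are two clean ways to close this. First, one can note that $(m-1)x^m+h^m-mx^{m-1}h$ is nonnegative on all of $\R^2$ (by AM--GM when $x$ and $h$ share a sign, and trivially otherwise since then $x^{m-1}h\le 0$), and invoke Hilbert's theorem that a nonnegative bivariate form is automatically a sum of squares, at the correct degree $m$. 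Second, one can verify directly that
\[
q_m(x,h) = \sum_{j=1}^{m/2-1} j\,\bigl(x^{j-1}h^{m/2-j} + x^{j}h^{m/2-1-j}\bigr)^2 + \tfrac m 2 \bigl(x^{m/2-1}\bigr)^2,
\]
which generalizes your $m=4$ example and, for $m=6$, gives $(h^2+xh)^2 + 2(xh+x^2)^2 + 3x^4$; this yields an SoS certificate for $(h-x)^2 q_m$ of degree exactly $m$, as required. Finally, the degenerate cases $\pE[\tfrac1n\sum_i g_i^m]=0$ or $\pE[\tfrac1n\sum_i h_i^m]=0$ are already handled by letting $\lambda\to\infty$ or $\lambda\to 0^+$ in your one-parameter bound; the additional appeal to pseudo-Cauchy--Schwarz is both unnecessary and not directly applicable as stated, since a naive split of $g_i^{m-1}h_i$ produces factors whose squares exceed degree $m$.
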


Our main theorem is the following.
\begin{theorem}
    \label{thm:op_resilience_alternate_subgaussian}
    Let $n,p \in \N, \eta \in [0,1], \delta > 0$ be such that $\eta n$ is an integer and $p \geq 2$ is a power of 2.
    Let $M \in \R^{n \times d}$ be a random matrix with $n\geq d$ such that $\E M = 0$ and the entries of $M$ are jointly $s$-subgaussian.
    Let $M_1, \ldots, M_n$ denote the rows of $M$.
    Let $\cB$ denote the set of system of constraints
    \[
        \cB \coloneqq \Set{\forall i \in [n] \colon w_i^2 = w_i \,, \sum_{i=1}^n w_i = \eta n}\,.
    \]
    Then, with probability at least $1-\delta$, there exists a sum-of-squares proof of the following inequality 
    \begin{align*}
        \cB \proves_{O(p)}^{w, v} \Paren{\tfrac 1 n \sum_{i=1}^n w_i \cdot \iprod{M_i, v}^2}^p \leq &(O(s^2))^p \cdot \biggl[ \sosBound^p 
        +\Paren{\frac {\log(1/\delta)} n}^p \biggr] \cdot \norm{v}^{2p} \,.
    \end{align*}
\end{theorem}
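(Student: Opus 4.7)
The plan is to mimic the proof of \Cref{cor:high-prob-sos-resilience-subgaussian} (the $p=1$ case) by iterating the linearization trick used there. By the strong duality in \Cref{fact:strong_duality_alternate} applied to the polynomial $q(w,v) = B^p \|v\|^{2p} - \Paren{\tfrac{1}{n}\sum_i w_i \langle M_i, v \rangle^2}^p$, the desired SoS proof exists if and only if, with probability at least $1-\delta$ over $M$, every degree-$O(p)$ pseudo-expectation $\pE$ satisfying $\cB$ obeys
\begin{align*}
    \pE\Brac{\Paren{\tfrac{1}{n}\sum_i w_i \langle M_i, v\rangle^2}^p} \leq B^p \cdot \pE[\|v\|^{2p}],
\end{align*}
with the $B$ from the theorem statement. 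So I need to bound the pseudo-expectation of the $p$-th power by $B^p$ times $\pE[\|v\|^{2p}]$, uniformly over such $\pE$.

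To do so, I would extend each such $\pE$ to a pseudo-expectation $\pE'$ on $(w,v,u)$ with auxiliary variables $u_1,\dots,u_n$ satisfying $u_i w_i = u_i$ and $\sum_i u_i^2 = 1$, by setting $u_i := w_i \langle M_i, v\rangle / C$ for a normalizing constant $C$ that depends on $\pE$, chosen so that $\Paren{\tfrac{1}{\sqrt n}\sum_i w_i u_i \langle M_i, v\rangle}^{2}$ evaluated under $\pE'$ equals $\tfrac{1}{n}\sum_i w_i \langle M_i, v\rangle^2$ evaluated under $\pE$ (exactly as in the proof of \Cref{cor:high-prob-sos-resilience-subgaussian}). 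Iterating this construction $p$ times with independent copies of the auxiliary variables, or, equivalently, by an application of SoS-H\"older via \Cref{fact:pe_hoelder}, the $p$-th power $\pE\Brac{\Paren{\tfrac 1 n \sum_i w_i \langle M_i, v\rangle^2}^p}$ becomes expressible through the $2p$-th moment of a quantity that is linear in the data $M$. Then \Cref{thm:Gaussian-cov-resilience} transfers the bound to Gaussian data at the cost of a multiplicative factor $O(s)$ per linear factor and an additive correction of order $\sqrt{\log(1/\delta)/n}$; the resulting Gaussian expectation is controlled by the in-expectation Gaussian bound in \Cref{thm:resilience_Gauss_full_alternate}. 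Raising the linear-level estimate to the $2p$-th power yields $O(s^2)^p \cdot \bigl(\sosBound^p + (\log(1/\delta)/n)^p\bigr)$, matching the claim.

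The main obstacle I anticipate is handling the homogeneity in $v$: the linearization identity \Cref{eq:linearized-sos} in the proof of \Cref{cor:high-prob-sos-resilience-subgaussian} is stated under the unit-norm constraint $\|v\|^2 = 1$, whereas here the target inequality allows $v$ to be unnormalized and has $\|v\|^{2p}$ on the right-hand side. The resolution is to exploit that both sides of the target inequality are homogeneous of degree $2p$ in $v$, which lets me factor out $\pE[\|v\|^{2p}]$ and effectively reduce to the normalized case via a rescaled pseudo-expectation (e.g., $\tilde\pE[\cdot] := \pE[\|v\|^{2p} \cdot] / \pE[\|v\|^{2p}]$ restricted to polynomials of appropriate degree in $v$). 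Verifying that this rescaling is compatible with both the auxiliary-variable extension and strong duality is the main bookkeeping step; beyond it, the argument follows the template of \Cref{cor:high-prob-sos-resilience-subgaussian}, supplemented by the in-expectation Gaussian bound \Cref{thm:resilience_Gauss_full_alternate} in place of \Cref{thm:resilience_Gauss_full}.
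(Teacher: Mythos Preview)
Your high-level plan (strong duality, linearize in $M$, Gaussian comparison, then the in-expectation Gaussian bound) is correct and matches the paper's approach. However, you have misidentified the main obstacle, and two substantive steps are missing from your outline.

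First, ``iterating the construction $p$ times with independent copies of the auxiliary variables'' does not give a quantity linear in $M$: applying the linearization of \Cref{eq:linearized-sos} to each factor of $\bigl(\tfrac{1}{n}\sum_i w_i\langle M_i,v\rangle^2\bigr)^p$ still leaves a degree-$p$ expression in $M$, and the identity in \Cref{eq:linearized-sos} holds only at the level of the supremum, not pointwise. The paper's route is to introduce a single auxiliary scalar $r$ with the constraint $r\ge 0$ and use \Cref{fact:pe_hoelder} to write
\[
\sup_{\pE}\ \frac{\pE\bigl[X^p\bigr]}{\pE[\|v\|^{2p}]}\ =\ \Bigl(\sup_{\pE}\ \frac{\pE\bigl[X\cdot r^{p-1}\bigr]}{\pE[\|v\|^{2p}]^{1/p}\,\pE[r^p]^{(p-1)/p}}\Bigr)^{p},
\]
with equality when $r=X$; one further Cauchy--Schwarz step (introducing $u_i$) then makes the numerator linear in $M$. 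This ratio formulation also dissolves your ``homogeneity in $v$'' obstacle automatically: there is no need for the rescaled pseudo-expectation $\tilde\pE[\cdot]=\pE[\|v\|^{2p}\cdot]/\pE[\|v\|^{2p}]$, which in any case does not satisfy $\|v\|^2=1$ as a constraint and so cannot feed into the normalized results.

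Second, you assume the additive correction in the Gaussian comparison is $O\bigl(s\sqrt{\log(1/\delta)/n}\bigr)$, but this requires bounding the diameter of the index set of the canonical process by $1/\sqrt{n}$. In the $p=1$ case this is the one-line computation in \Cref{thm:Gaussian-cov-resilience}, but here the index set is $\bigl\{(\tfrac{1}{c(\pE)\,n}\pE[w_iu_ir^{p-1}v])_i\bigr\}$ with $c(\pE)=\pE[\|v\|^{2p}]^{1/(2p)}\pE[r^p]^{(p-1)/(2p)}\pE[\tfrac{1}{n}\sum_i u_i^2 r^{p-1}]^{1/2}$, and bounding its diameter by $1/\sqrt{n}$ needs the modified Cauchy--Schwarz \Cref{fact:modified_pE_CS} (this is precisely where the constraint $r\ge 0$ is used) together with \Cref{fact:pe_hoelder}. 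This is the actual technical crux you need to supply.
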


\begin{proof}
    Let $\cC'$ be the set of degree-$O(p)$ pseudo-expectations that satisfy the system $\cB$.
    By strong duality for the system $\mathcal{B}$ (\cref{fact:strong_duality_alternate}), it is enough to show that with probability $1 - \delta$ over $M$:
    \[
         \sup_{\pE \in \cC'} \frac{\pE\Brac{\Paren{\tfrac 1 n \sum_{i=1}^n w_i \cdot \iprod{M_i, v}^2}^p}} {\pE \Brac{\norm{v}^{2p}}} \leq (O(s^2))^p \Paren{\sosBound^p + \Paren{\frac {\log(1/\delta)} n}^p} \,.
    \]
    Let $\cC$ be the set of pseudo-expectations over variables $w, v$ and (additional) scalar variable $r$ satisfying $\cB \cup \Set{r \geq 0}$.
    Similar to the previous proofs, we will be able to show that
    \begin{align}
        \label{eq:equiv_one}
        \sup_{\pE \in \cC'} \frac{\pE\Brac{\Paren{\tfrac 1 n \sum_{i=1}^n w_i \cdot \iprod{M_i, v}^2}^p}} {\pE \Brac{\norm{v}^{2p}}} = \Paren{ \sup_{\pE \in \cC}  \frac{\pE\Brac{\Paren{\tfrac 1 n \sum_{i=1}^n w_i \cdot \iprod{M_i, v}^2} \cdot r^{p-1}}}  {\Paren{\pE \Brac{\norm{v}^{2p}}}^{1/p} \cdot \Paren{\pE r^p}^{(p-1)/p}}  }^p \,.
    \end{align}
    Further, let $\cC^*$ be the set of pseudo-expectations over $w_i, v, r$ and additional variables $u_1,\dots,u_n$ satisfying $\cB \cup \Set{ r \geq 0}$, then we will soon show that
    \begin{align}
        \nonumber
        &\Paren{ \sup_{\pE \in \cC}  \frac{\pE\Brac{\Paren{\tfrac 1 n \sum_{i=1}^n w_i \cdot \iprod{M_i, v}^2} \cdot r^{p-1}}}  {\Paren{\pE \Brac{\norm{v}^{2p}}}^{1/p} \cdot \Paren{\pE r^p}^{(p-1)/p}}  }^p \\
        &\qquad= \Paren{ \sup_{\pE \in \cC}  \frac{\pE\Brac{\Paren{\tfrac 1 n \sum_{i=1}^n w_i \cdot u_i \iprod{M_i, v}} \cdot r^{p-1}}}  {\Paren{\pE \Brac{\norm{v}^{2p}}}^{1/(2p)} \cdot \Paren{\pE r^p}^{(p-1)/(2p)} \cdot \Paren{\pE \Brac{\tfrac 1 n \sum_{i=1}^n u_i^2 r^{p-1}}}^{1/2} } }^{2p}
        \label{eq:equiv_two}
    \end{align}
    As before, combining \Cref{eq:equiv_two} with the the comparison inequalities between gaussian and subgaussian processes (similar to \cite{diakonikolas2024sos} in \Cref{thm:Gaussian-cov-resilience}), 
    we have the following conclusion (here, $G \sim \cN(0,1)^{n\times d}$ with rows $G_1,\dots,G_n$):
    with probability $1-\delta$:
    \begin{align*}
        &\sup_{\pE \in \cC'} \frac{\pE\Brac{\Paren{\tfrac 1 n \sum_{i=1}^n w_i \cdot \iprod{M_i, v}^2}^p}} {\pE \Brac{\norm{v}^{2p}}} \\
        &\leq O(s)^{2p} \Paren{ \E_{G} \sup_{\pE \in \cC}  \frac{\pE\Brac{\Paren{\tfrac 1 n \sum_{i=1}^n w_i \cdot u_i \iprod{G_i, v}} \cdot r^{p-1}}}  {\Paren{\pE \Brac{\norm{v}^{2p}}}^{1/(2p)} \cdot \Paren{\pE r^p}^{(p-1)/(2p)} \cdot \Paren{\pE \Brac{\tfrac 1 n \sum_{i=1}^n u_i^2 r^{2(p-1)}}}^{1/2} } }^{2p} \\
        &\qquad + O(s)^{2p} \log(1/\delta)^p \cdot \Paren{\mathrm{diam}(T)}^{2p}\,,
    \end{align*}
    where $T$ is the induced indexed set of the canonical process, defined in \Cref{eq:definition-T} and $\mathrm{diam}(T) = \sup_{t \in T}\|t\|_2$.

    We can then again use \Cref{eq:equiv_one,eq:equiv_two} to upper bound the first term on the right-hand side (without the $O(s)^{2p}$):
    \begin{align*}
        &\Paren{ \E_{G} \sup_{\pE \in \cC}  \frac{\pE\Brac{\Paren{\tfrac 1 n \sum_{i=1}^n w_i \cdot u_i \iprod{G_i, v}} \cdot r^{p-1}}}  {\Paren{\pE \Brac{\norm{v}^{2p}}}^{1/(2p)} \cdot \Paren{\pE r^p}^{(p-1)/(2p)} \cdot \Paren{\pE \Brac{\tfrac 1 n \sum_{i=1}^n u_i^2 r^{2(p-1)}}}^{1/2} } }^{2p} \\
        &\qquad = \E_{G} \sup_{\pE \in \cC'} \frac{\pE\Brac{\Paren{\tfrac 1 n \sum_{i=1}^n w_i \cdot \iprod{G_i, v}^2}^p}} {\pE \Brac{\norm{v}^{2p}}}\,.
    \end{align*}
    But by strong duality (cf.~\cref{fact:strong_duality_alternate}) and~\cref{thm:resilience_Gauss_full_alternate}, this term is at most $\sosBound^p$.
    Note that the constraint system $\cB$ has a constraint of the form $\sum_{i=1}^n w_i = \eta n$, which is stronger than the one $\sum_{i=1}^n w_i \leq \eta n$ used in~\cref{thm:resilience_Gauss_full_alternate}.
    This stronger form simplifies the arguments for strong duality.

    Thus, it only remains to show that $\mathrm{diam(T)} \leq 1/\sqrt{n}$.
    For $\pE \in \cC^*$, let
    \[
        c(\pE) = \Paren{\pE \Brac{\norm{v}^{2p}}}^{1/(2p)}  \Paren{\pE r^p}^{(p-1)/(2p)}  \Paren{\pE \Brac{\tfrac 1 n \sum_{i=1}^n u_i^2 r^{2(p-1)}}}^{1/2} \,.
    \]
    Then, formally the set $T$ is equal to 
    \begin{align}
    \label{eq:definition-T}
        \Set{\Paren{\frac 1 {c(\pE) \cdot n} \pE \Brac{w_i u_i r^{p-1} v}}_{i=1}^n \,,  \mid \pE \in \cC^* \,,} \,.
    \end{align}

    Consider any element $t \in T$, then (shortening $c(\pE)$ to $c$)
    \begin{align*}
        \norm{t}^2 = \frac 1 {c^2 n^2} \sum_{i=1}^n \sum_{j=1}^n \Paren{\pE w_i u_i r^{p-1} v_j}^2 \,.
    \end{align*}
    Now since all $\pE \in \cC^*$ satisfy $\Set{r \geq 0}$, it follows from a variant of the classical Cauchy-Schwarz for pseudo-expectations (cf.~\cref{fact:modified_pE_CS} and $w_i \leq 1$), that 
    \begin{align*}
        \sum_{i=1}^n \sum_{j=1}^n \Paren{\pE w_i u_i r^{p-1} v_j}^2 &\leq \Paren{\sum_{i=1}^n \pE r^{p-1} u_i^2} \Paren{\sum_{j=1}^d \pE v_j^2 r^{p-1}} = \Paren{\sum_{i=1}^n \pE r^{p-1} u_i^2} \pE r^{p-1} \norm{v}^2 \\
        &\leq \Paren{\pE \sum_{i=1}^n r^{p-1} u_i^2} \Paren{\pE r^{p}}^{1-1/p} \Paren{ \pE \norm{v}^{2p} }^{1/p} = c^2 \cdot n \,.
    \end{align*}
    Thus $\norm{t}^2 \leq 1/n$ and $\mathrm{diam}(T) \leq 1 / \sqrt{n}$.

    We next justify~\cref{eq:equiv_one,eq:equiv_two}.
    We start with~\cref{eq:equiv_one}.
    By H\"older's Inequality for pseudo-expectations (cf.~\cref{fact:pe_hoelder}) it holds that
    \[
        \pE\Brac{\Paren{\tfrac 1 n \sum_{i=1}^n w_i \cdot \iprod{x_i, v}^2} \cdot r^{p-1}} \leq \Paren{ \pE \Brac{r^p} }^{(p-1)/p} \cdot \Paren{\pE\Brac{\Paren{\tfrac 1 n \sum_{i=1}^n w_i \cdot \iprod{x_i, v}^2}^p}}^{1/p}\,.
    \]
    Thus, the right-hand side is at most the left-hand side.
    Picking $r = \tfrac 1 n \sum_{i=1}^n w_i \cdot \iprod{x_i,v}^2 \geq 0$ achieves this inequality.
    We can justify~\cref{eq:equiv_two} similarly:
    By the same variant of the classical Cauchy-Schwarz for pseudo-expectations (cf.~\cref{fact:modified_pE_CS}, this uses that $\pE$ satisfies $\Set{r \geq 0}$), it holds that
    \[
        \pE\Brac{\Paren{\tfrac 1 n \sum_{i=1}^n w_i \cdot u_i \iprod{x_i, v}}  r^{p-1}} \leq \Paren{\pE\Brac{\Paren{\tfrac 1 n \sum_{i=1}^n w_i \iprod{x_i, v}^2}  r^{p-1} } }^{1/2} \Paren{\pE\Brac{ \Paren{\tfrac 1 n \sum_{i=1}^n u_i^2 }  r^{p-1}}}^{1/2} \,.
    \]
    And picking $u_i = w_i \iprod{x_i,v}$ shows that the equality can be achieved (since $w_i^2 = w_i$).
\end{proof}

\section{Applications: Robust Statistics, Subspace Distortion, and 2-to-\texorpdfstring{$p$}{p} Norms}
\label{sec:applications}
In this section, we apply the sparse singular value and operator norm resilience certificates obtained in the previous section to problems in robust statistics and to certifying upper bounds on the distortion of random subspaces and the 2-to-$p$ norm of random operators.
In particular, we give the applications to robust covariance and (both covariance-aware and Euclidean) mean estimation in~\cref{sec:cov_est_full}, the application to subspace distortion in~\cref{sec:distortion}, and the application to certifying 2-to-$p$ norms of random operators in~\cref{sec:2_to_p_full}.

\subsection{Robust Covariance and Covariance-Aware Mean Estimation}
\label{sec:cov_est_full}

In this section, we will derive our results for robust covariance estimation as well as (covariance-aware and Euclidean) mean estimation.
In particular, proving~\cref{thm:cov-intro,thm:mean-aware-intro,thm:main-intro}.
They are a direct consequence of our new resilience certificate and ideas implicit in prior works.
We repeat the full argument here for completeness.

\paragraph{Background.} 
In \emph{robust covariance estimation}, the goal is to estimate 
the covariance matrix $\Sigma \in \R^{d \times d}$ of an unknown probability distribution $\cD$ on $\R^d$ from $\eta$-corupted samples.
Without some assumptions on $\cD$, robust covariance estimation is information-theoretically impossible. A standard assumption in the literature is that 
$\cD$ is (hypercontractive)-subgaussian. 
For the subgaussian setting, using $n \geq d$ samples, 
error $\alpha = O(\eta \log(1/\eta) + \sqrt{d/n})$ is information-theoretically achievable (for estimating in the relative spectral norm) and best possible up to constants.
Note that by taking $n = \tilde{O}(d/\eta^2)$, one can make the error 
as small as $O(\eta \log(1/\eta))$.
In contrast, for $\alpha \gtrsim \sqrt{\eta}$ (the regime of interest in this paper), the computationally efficient algorithm from \cite{KotSS18} requires a larger sample complexity of $\Omega(\eta d^2/\alpha^2)$ to get error $\alpha$.

\new{Our main algorithm result is the following, nearly matching the low-degree testing lower bounds (\Cref{prop:low-degree-cov-estimation-gaussian}), even for the Gaussian case:}

\begin{theorem}[Robust Covariance Estimation in Relative Spectral Norm; Full Version of~\cref{thm:cov-intro}]
    \label{thm:cov_est_full}
    Let $n,d,p \in \N$ such that $p\geq 2$.
    Let $\cX^* = \Set{x_1^*, \ldots, x_n^*}$ be a set of $n$ \iid samples from a mean-zero $s$-hypercontractive-subgaussian distribution with unknown covariance $\Sigma$ in dimension $d$.
    Let $\eta,\alpha > 0$ be such that $\Omega(s^2\sqrt{\eta}) \leq s^2\alpha < 1/10$ and $\eta$ smaller than a sufficiently small absolute constant.
    There exists an algorithm that given $\eta$, $p$, $\alpha$, and an $\eta$-corruption $x_1, \ldots, x_n$ of $\cX^*$ runs in time $n^{O(p)}$, and with probability at least $1-\delta$, outputs an estimate $\hat{\Sigma}$ such that
    \[
        (1-s^2\alpha) \Sigma \preceq \hat{\Sigma} \preceq (1+s^2\alpha) \Sigma \,,
    \]
    as long as $n = \tilde{\Omega}(\tfrac{\eta^2 d^{2+10/p}}{\alpha^4} + \tfrac {d^{1+5/p}} {\alpha^2} + \frac{d}{\eta \alpha} + 
    \frac{\log(1/\delta)}{\alpha^2}) + \frac{\log(1/\delta)}{\eta \alpha}$.
\end{theorem}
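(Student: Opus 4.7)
}
The plan is to embed the operator-norm resilience certificate of \Cref{thm:op_resilience_alternate_subgaussian} (in the alternate form, where the unit-norm constraint on $v$ is \emph{not} imposed) into the standard filtering/SoS framework for robust second-moment estimation pioneered in \cite{KotSS18}. First, I reduce to the Gaussian setting: by the transfer methodology from \cite{diakonikolas2024sos} (also used to derive \Cref{cor:high-prob-sos-resilience-subgaussian}), any SoS proof of relative spectral closeness for the Gaussian case carries over to $s$-hypercontractive-subgaussian data with a loss of an $s^{O(1)}$ factor in the final error. Thus I may assume the inliers are drawn from $\cN(0,\Sigma)$. Since the whole argument is linearly equivariant---both the SoS program and the certificates commute with the transformation $x \mapsto \Sigma^{-1/2} x$---it further suffices to prove the statement when $\Sigma = I$, i.e.\ to show that the algorithm outputs $\hat{\Sigma}$ with $(1-\alpha)I \preceq \hat{\Sigma} \preceq (1+\alpha)I$.

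\emph{The program.} Let $x_1,\ldots,x_n$ be the $\eta$-corrupted samples and let $w_1^*,\ldots,w_n^* \in \{0,1\}$ indicate the (unknown) inliers, so $\sum w_i^* \geq (1-\eta)n$. The algorithm searches over degree-$O(p)$ pseudo-expectations $\pE$ in indeterminates $w=(w_1,\ldots,w_n)$ satisfying $\cB = \{w_i^2=w_i,\ \sum_i w_i \geq (1-\eta) n\}$, and outputs the matrix
\[
\hat{\Sigma} \ =\ \tfrac{1}{(1-\eta)n}\, \pE \Big[\, \textstyle\sum_{i=1}^n w_i\, x_i x_i^\top\, \Big].
\]
The feasibility of $\pE$ (namely the pseudo-expectation induced by the point $w=w^*$) is guaranteed, so the algorithm is well-defined.

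\emph{The key SoS inequality.} The core estimate is that with high probability over the inliers,
\[
  \cB \ \proves_{O(p)}^{w,v}\ \ \Big(\tfrac 1 n \sum_{i=1}^n (w_i^* - w_i w_i^*)\, \iprod{x_i,v}^2\Big)^p \ \leq\ (\alpha/C)^p \cdot \|v\|^{2p},
\]
for a sufficiently large absolute constant $C$. This is exactly \Cref{thm:op_resilience_alternate_subgaussian} applied to the inlier set (which contains $\geq (1-\eta)n$ i.i.d.\ Gaussians) and to the $\eta n$-sparse selector $w_i^*(1-w_i)$, provided the sample complexity satisfies $n \gtrsim \tilde{\Omega}(\eta^2 d^{2+2\eps}/\alpha^4 + d^{1+\eps}/\alpha^2 + \log(1/\delta)/\alpha^2)$ with $\eps = 5/p$ so that $\sosBound \leq (\alpha/C)^2$; the additional $\log(1/\delta)/\alpha^2$ term comes from the high-probability tail in \Cref{thm:op_resilience_alternate_subgaussian}. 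A symmetric statement holds with the roles of $w$ and $w^*$ swapped, by applying the same certificate to the sparse selector $w_i(1-w_i^*)$.

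\emph{Rounding and conclusion.} Because $w$ and $w^*$ each select at least $(1-\eta)n$ samples from a pool of size $n$, the symmetric difference $\{i : w_i \neq w_i^*\}$ has size at most $2\eta n$. Using $w_i - w_i^* = w_i(1-w_i^*) - w_i^*(1-w_i)$ and SoS triangle inequality, together with the two symmetric certificates above (raised to the $p$-th power, taking pseudo-expectations, and applying $\Set{X^p \leq C^p} \proves_p X \leq C$ from \Cref{fact:sos_square_root}), I derive that for every unit vector $v$,
\[
\Big|\, \pE \big[\tfrac 1 n \textstyle\sum_i (w_i-w_i^*)\, \iprod{x_i,v}^2\big]\, \Big| \ \lesssim\ \alpha.
\]
Combined with the standard concentration bound $\|\tfrac{1}{(1-\eta)n}\sum_i w_i^* x_i x_i^\top - I\|_\op \lesssim \sqrt{d/n} + \eta \lesssim \alpha$ for $n \gtrsim d/\alpha^2 + d/(\eta\alpha) + \log(1/\delta)/(\eta\alpha)$, this yields $\|\hat{\Sigma} - I\|_\op \lesssim \alpha$ in spectral norm, and the desired relative spectral bound $(1-\alpha)I \preceq \hat{\Sigma} \preceq (1+\alpha)I$ follows. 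The $s^2 \alpha$ appears upon undoing the subgaussian transfer.

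\emph{Main obstacle.} The technically delicate point is the rounding step: we need the operator norm resilience certificate in the form of \Cref{thm:op_resilience_alternate_subgaussian}, i.e.\ \emph{without} normalizing $v$ to be a unit vector inside the SoS proof. This is because inside the pseudo-expectation we must apply the bound with $v$ free (so that the final bound can be applied to an arbitrary test direction outside the SoS), and adding $\|v\|=1$ as a constraint in the inner proof would break the duality and prevent us from extracting a spectral bound on $\hat{\Sigma}$. The alternate-form certificate is exactly what makes this step go through, which is why \Cref{thm:op_resilience_alternate_subgaussian} is stated and proved in the preceding section.
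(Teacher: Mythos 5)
Your approach has a concrete gap that makes the algorithm fail: you claim that a ``symmetric statement holds with the roles of $w$ and $w^*$ swapped, by applying the same certificate to the sparse selector $w_i(1-w_i^*)$.'' But the selector $w_i(1-w_i^*)$ picks out the \emph{kept outliers}, and the $x_i$ at those positions are adversarially corrupted. The resilience certificate of \Cref{thm:op_resilience_alternate_subgaussian} is a property of a realized subgaussian sample; it does not and cannot hold for arbitrary points. Indeed, your program has no constraint at all connecting $w$ to the data. The pseudo-expectation induced by $w\equiv 1$ satisfies $\cB$, and then $\hat\Sigma = \tfrac{1}{(1-\eta)n}\sum_i x_ix_i^\top$, which the adversary can make arbitrarily far from $\Sigma$ by placing all $\eta n$ corruptions at $T\cdot e_1$ for huge $T$. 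So the ``key SoS inequality'' for the kept-outlier piece is simply false, and the rounding step collapses.

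The paper's \Cref{alg:robust_cov_est} avoids exactly this trap by making the constraint system $\cA$ \emph{carry the certification}. It introduces free variables $x_i'$ (the hypothetical cleaned samples) with $(1-w_i)x_i'=(1-w_i)x_i$, a free matrix $\Sigma' = \tfrac 1 n \sum x_i'(x_i')^\top$, and then requires as a constraint that an SoS proof exist, in fresh indeterminates $(\tilde w, v)$, of $\cB\proves (\tfrac 1 n\sum\tilde w_i\iprod{x_i',v}^2)^{p'}\le\alpha^{p'}\iprod{v,\Sigma'v}^{p'}$. Feasibility is witnessed by $x_i'=x_i^*$ (niceness of the uncorrupted sample). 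Then in \Cref{lem:deterministic-covariance}, after writing $\Sigma'-\tilde\Sigma$ as a sum over $1-r_i = 1-(1-w_i^*)(1-w_i)$, the term involving $x_i'$ (kept outliers, in your language) is bounded by the \emph{certified constraint}, which applies to $\Sigma'$ inside the program, while the term involving $x_i^*$ (removed inliers) is bounded by niceness as you describe. The certified constraint is precisely the mechanism that lets the SoS relaxation tell apart $w$'s that keep too many bad points -- for such $w$ the induced $\{x_i'\}$ cannot admit the resilience certificate, so those $w$'s are excluded from the feasible region. Your program is missing this mechanism entirely, so no amount of rounding will salvage it.
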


\cref{thm:cov_est_full} follows from a known connection between certifying resilience and covariance estimation~\cite{KotSS18}.
We repeat the proof below for convenience.
Let $\cB$ be the following auxiliary constraint system in $n$ scalar-valued variables $\tilde{w}_i$
\begin{equation*}
    \cB = \Set{\tilde{w}_i^2 = \tilde{w}_i \,, \sum_{i=1}^n \tilde{w}_i = 2\eta n} \,.
\end{equation*}
Consider the system of polynomial inequalities $\cA$ below.
We remark that the last constraint in $\cA$ can be formally modelled using polynomial inequalities by introducing auxiliary variables for the coefficients of the SoS proof.
This is a by-now standard technique and we refer the reader to~\cite{KotSS18,fleming2019semialgebraic} for a more detailed discussion.
Our algorithm searches for a pseudo-expectation satisfying $\cA$.%
\footnote{Typically, one uses $w_i$ to denote the ``inlier'' samples instead of $1-w_i$, since this is more convenient in the proofs. We use this notation to be consistent with the notation in previous sections.}
Note that $v$ is a variable in the SoS proof we are searching for in the last constraint.
Let $p'$ be the power of two that is closest to and larger than $p$.
Note that $p' \leq 2p$.
\begin{equation*}
    \cA = 
    \left\{
        \begin{aligned}
            &\forall i \in [n] \colon  & w_i^2 &= w_i \,, \\
            &&\sum_{i=1}^n w_i &\leq \eta n \,, \\
            && (1-w_i) x_i' &= (1-w_i) x_i \,, \\
            && \Sigma' &= \tfrac 1 n \sum_{i=1}^n x_i' (x_i')^\top \\
            &\exists \text{ SoS proof in $(\tilde{w},v)$ that } &\cB \proves_{O(p)}^{\tilde{w},v} \Paren{\tfrac 1 n \sum_{i=1}^n \tilde{w}_i \iprod{x_i', v}^2}^{p'} &\leq \alpha^{p'} \cdot \iprod{v, \Sigma' v}^{p'}\
        \end{aligned}
    \right\} \,.
\end{equation*}
Note that $w_i$ and $\tilde{w}_i$ are distinct variables and that the upper bound on their respective sums is different by design ($\eta n$ versus $2\eta n$).
\begin{mdframed}
    \begin{algorithm}[Robust Covariance Estimation]
      \label{alg:robust_cov_est}\mbox{}
      \begin{description}
      \item[Input:] $x_1, \ldots, x_n$, $\eta$-corruption of $\cX^*$ and corruption level $\eta$.
      \item[Output:] Estimate $\hat{\Sigma}$ of $\Sigma$.
      
      \begin{enumerate}
      \item Find degree-$O(p)$ pseudo-expectation $\pE$ satisfying $\cA$.
      \item Output $\hat{\Sigma} = \pE \Sigma'$.
      \end{enumerate}
      \end{description}
    \end{algorithm}
\end{mdframed}

\cref{thm:cov_est_full} will follow from the following lemma implicit in~\cite{KotSS18}.
We defer the proof to the end of this section.

\begin{definition}
\label{def:niceness-covariance}
Let $\eta \in (0,0.1)$,  $\alpha \in (0,1/2)$, and $p\in \N$ be a power of two. 
We say that a multiset $S = \{y_1,\dots,y_n\}$
is $(\eps,\alpha,p)$-nice with respect to $\Sigma$ if the following holds:
\begin{enumerate}
        \item $\cB \proves_{O(p)}^{\tilde{w},v} \Paren{\tfrac 1 n \sum_{i=1}^n \tilde{w}_i \iprod{\Sigma^{-1/2}y_i, v}^2}^{p} \leq (\alpha/2)^{p} \|v\|^{2p}$.
        \item $(1-\alpha) \tilde\Sigma \preceq \Sigma \preceq (1+ \alpha) \tilde{\Sigma} $, where  $\tilde{\Sigma} = \tfrac 1 n \sum_{i=1}^n y_iy_i^\top$.
\end{enumerate}
    
\end{definition}

\begin{lemma}
\label{lem:deterministic-covariance}
  Let $x_1^*, \ldots, x_n^*$ be $(\eta,\alpha,p)$-nice with respect to $\Sigma$ for $\eta \leq 0.1$, $\alpha \in (0,1/2)$, and an even $p$ that is a power of $2$
    (\Cref{def:niceness-covariance}).
    Let $\cX$ be an $\eta$-corruption of $x^*$'s.
    Consider the system $\cA$ defined above on $\cX$.
    Then, it holds that for every unit vector $u$:
    \begin{enumerate}
        \item $
        \cA \proves_{O(p)} \Iprod{u, \Paren{\Sigma' - \tilde{\Sigma}}u}^2 \leq 12^2\alpha^2 \cdot \Iprod{u, \tilde{\Sigma} u}^2 \,$, where $\tilde{\Sigma} = \tfrac{1}{n}\sum_i x^*_i(x^*_i)^\top$.
        \item $
        \cA \proves_{O(p)} \Iprod{u, \Paren{\Sigma' - \tilde{\Sigma}}u}^2 \leq 24^2\alpha^2 \cdot \Iprod{u, \Sigma u}^2 \,.$
    \end{enumerate}
    In particular, there exists an algorithm that given $\eta$, $\alpha$, $p$, and $\cX$, runs in time $(n+d)^{O(p)}$ and outputs an estimate $\hat{\Sigma}$ such that
        $(1-O(\alpha)) \Sigma \preceq \hat{\Sigma} \preceq (1+O(\alpha)) \Sigma \,$.
    \end{lemma}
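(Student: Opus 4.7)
The plan is to follow the weight-coupling strategy of Kothari--Steinhardt--Steurer~\cite{KotSS18}, specialized to our resilience constraint. Let $w^* \in \{0,1\}^n$ be the (unknown) indicator of corrupted samples; by assumption $\sum_i w^*_i \leq \eta n$ and $x_i = x^*_i$ whenever $w^*_i = 0$. Define the ``symmetric difference'' variable $\tilde{w}_i := w_i + w^*_i - w_i w^*_i$, a polynomial in the formal variables $w_i$. Within $\cA$, one checks the SoS identities $\tilde{w}_i^2 = \tilde{w}_i$, $\sum_i \tilde{w}_i \leq 2\eta n$, and (crucially) $(1-\tilde{w}_i)(x'_i - x^*_i) = 0$, which uses $(1-\tilde{w}_i) = (1-w_i)(1-w^*_i)$ together with the constraint $(1-w_i)x'_i = (1-w_i) x_i$ and the fact $(1-w^*_i) x_i = (1-w^*_i) x^*_i$. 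The latter identity promotes to $x'_i(x'_i)^\top - x^*_i(x^*_i)^\top = \tilde{w}_i[x'_i(x'_i)^\top - x^*_i(x^*_i)^\top]$, so that for any fixed unit vector $u$,
\[
\langle u, (\Sigma' - \tilde{\Sigma}) u\rangle = A - B, \qquad A := \tfrac{1}{n}\sum_i \tilde{w}_i \langle x'_i, u\rangle^2, \quad B := \tfrac{1}{n}\sum_i \tilde{w}_i \langle x^*_i, u\rangle^2.
\]

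The core of the proof is to bound $A^2$ and $B^2$ in SoS by the target $\langle u, \tilde\Sigma u\rangle^2$. The resilience constraint built into $\cA$ gives $A^{p'} \leq \alpha^{p'}\langle u, \Sigma' u\rangle^{p'}$ (applied with $v = u$ and with $\tilde{w}$ set to our specific polynomial; since $\tilde{w}$ already satisfies the hypotheses of $\cB$ after a trivial padding, the ``exists SoS proof'' constraint instantiates to this inequality). Niceness, after substituting $v = \Sigma^{1/2} u$ in its resilience bound, gives $B^{p} \leq (\alpha/2)^{p}\langle u, \Sigma u\rangle^{p}$. Since $p$ and $p'$ are powers of two and both sides are SoS-nonnegative, \Cref{fact:sos_square_root} lets us take roots to obtain $A \leq \alpha \langle u, \Sigma' u\rangle$ and $B \leq \tfrac{\alpha}{2}\langle u, \Sigma u\rangle$. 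To convert the $A$-bound from $\Sigma'$ (unknown) to $\tilde\Sigma$, I use the identity $\langle u, \Sigma' u\rangle = \langle u, \tilde\Sigma u\rangle + A - B$ (again from $x'_i = x^*_i$ on the complement of $\tilde{w}$'s support). Combining with $A \leq \alpha\langle u, \Sigma' u\rangle$ and $B \geq 0$ yields $(1-\alpha)\langle u, \Sigma' u\rangle \leq \langle u, \tilde\Sigma u\rangle$, hence $A \leq \tfrac{\alpha}{1-\alpha}\langle u, \tilde\Sigma u\rangle$. The $B$-bound converts via niceness's $\Sigma \preceq (1+\alpha)\tilde\Sigma$, giving $B \leq \tfrac{\alpha(1+\alpha)}{2}\langle u, \tilde\Sigma u\rangle$. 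The first inequality of the lemma then follows by SoS triangle $(A-B)^2 \leq 2A^2 + 2B^2$, which produces a constant well below $12^2$; the second inequality is obtained analogously, applying niceness on the \emph{other} side ($\tilde\Sigma \preceq \tfrac{1}{1-\alpha}\Sigma$) to express everything in terms of $\langle u, \Sigma u\rangle$.

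For the algorithmic conclusion, I first check that $\cA$ is SoS-feasible via the ``honest'' assignment $w_i = w^*_i$, $x'_i = x^*_i$, $\Sigma' = \tilde\Sigma$: the first four constraints are immediate, and the last (``exists SoS proof'' resilience) is precisely niceness plus the mild $(\alpha/2 \to \alpha)$ slack after substituting $v \to \Sigma^{1/2} v$ and absorbing $(1+\alpha)$ factors. By \Cref{fact:pE_efficient_optimization}, a degree-$O(p)$ pseudo-expectation $\pE$ satisfying $\cA$ can be found in time $(n+d)^{O(p)}$. Setting $\hat\Sigma := \pE[\Sigma']$ and applying pseudo-expectation Cauchy--Schwarz (\Cref{fact:pE_cauchy_schwarz}) to the bound just proved yields $\langle u, (\hat\Sigma - \tilde\Sigma) u\rangle^2 = (\pE\langle u, (\Sigma' - \tilde\Sigma) u\rangle)^2 \leq \pE\langle u, (\Sigma' - \tilde\Sigma) u\rangle^2 \leq 24^2 \alpha^2 \langle u, \Sigma u\rangle^2$, which combined with niceness yields $(1 - O(\alpha))\Sigma \preceq \hat\Sigma \preceq (1+O(\alpha))\Sigma$. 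The main technical obstacle is the careful chaining of SoS inequalities---taking $p$-th roots, converting between $\Sigma', \tilde\Sigma, \Sigma$, and bookkeeping constants---since each step must preserve SoS-derivability at low degree; but this is standard given the availability of \Cref{fact:sos_square_root} for powers of two and the polynomial identities relating $\Sigma', \tilde\Sigma$, $A$, and $B$.
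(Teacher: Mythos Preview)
Your overall strategy---defining the symmetric-difference weights $\tilde w_i = w_i + w_i^* - w_i w_i^*$, splitting $\langle u,(\Sigma'-\tilde\Sigma)u\rangle$ into the two resilience-type sums $A$ and $B$, bounding $A$ via the SoS-proof constraint in $\cA$ and $B$ via niceness, then feeding back through $\Sigma' = \tilde\Sigma + (A-B)$---is exactly the paper's approach. The feasibility and rounding arguments are likewise the same.

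There is, however, a genuine gap at the step where you invoke \Cref{fact:sos_square_root} to pass from $A^{p'}\le \alpha^{p'}\langle u,\Sigma' u\rangle^{p'}$ to $A\le \alpha\langle u,\Sigma' u\rangle$. That fact requires the upper bound $C$ to be an \emph{absolute constant}, whereas $\langle u,\Sigma' u\rangle$ is a polynomial in the SoS variables $x_i'$. There is no low-degree SoS proof of $\{X\ge 0,\,Y\ge 0,\,X^p\le Y^p\}\proves X\le Y$ in general, so this linearization is not justified as written. (Your root-taking for $B$ is fine, since $\langle u,\Sigma u\rangle$ is a fixed scalar.) The paper sidesteps this by never linearizing: it stays at the $p$-th power throughout, using the SoS almost-triangle inequality $\langle u,\Sigma' u\rangle^p \le 2^p\langle u,\tilde\Sigma u\rangle^p + 2^p\langle u,(\Sigma'-\tilde\Sigma)u\rangle^p$ to replace $\Sigma'$ by $\tilde\Sigma$ plus a term that can be absorbed into the left-hand side, arriving at $\langle u,(\Sigma'-\tilde\Sigma)u\rangle^p \le (6\alpha)^p\langle u,\tilde\Sigma u\rangle^p + (4\alpha)^p\langle u,(\Sigma'-\tilde\Sigma)u\rangle^p$, and only then rearranging (a legitimate scalar division) and taking roots against the now-constant right-hand side. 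Your argument is easily repaired along these lines, but the linear-level manipulation you wrote does not go through in SoS.
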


With this in hand, we can prove~\cref{thm:cov_est_full} below.
\begin{proof}[Proof of \cref{thm:cov_est_full}]
    Let $p'$ be the power of two that is closest to and larger than $p$.
    Define $\alpha' := s^2 \alpha$.
    Given \Cref{lem:deterministic-covariance},
    it suffices to show that with probability at least $1-\delta$, 
    the points $x^*_1,\dots,x^*_n$ satisfy $(\eta, \alpha', p')$-niceness from \Cref{def:niceness-covariance}.
    
    Towards that goal, we note that $\Sigma^{-1/2} x_i^*$ are \iid samples from an isotropic, mean-zero $s$-subgaussian distribution.
    By \Cref{thm:op_resilience_alternate_subgaussian},
    the first condition in \Cref{def:niceness-covariance} holds with probability $1 -\delta$ if
    $$
     s^2 \left( \max\Set{p \cdot \log^{O(1)}(n) d^{\frac{5}{2p}} \max \Set{\sqrt{\frac{d}{n}}, \frac{\sqrt{\eta d}}{n^{1/4}}}, \sqrt{\eta}}   + \frac{\log(1/\delta)}{n}\right)    \lesssim \alpha'\,.
    $$
    Since $\alpha' \gtrsim s^2 \sqrt{\eta}$,
    the above inequality is satisfied 
    if $n \geq  \tilde{\Omega}(\tfrac{s^8\eta^2 d^{2+10/p}}{\alpha'^4} + \tfrac {s^4d^{1+5/p}} {\alpha'^2} + \tfrac{s^2\log(1/\delta)} \alpha')$. 
    To establish the second condition in \Cref{def:niceness-covariance},
    we use standard concentration results on the empirical covariance matrix of subgaussian data to get that $n = \Omega\left(\frac{s^4 d}{\alpha'^2} + \frac{s^4\log(1/\delta)}{\alpha'^2} \right)$ samples suffice~\cite[Section 4.7]{Vershynin18}; here we use that $\alpha' \leq 1$ to ignore the other terms.
    Since $s \geq 1$, $\alpha' \leq 1$, the desired sample complexity follows by substituting $\alpha'=s^2 \alpha$. 

\end{proof}

\subsubsection{Proof of \Cref{lem:deterministic-covariance}}
As mentioned earlier, the ideas here are implicit in \cite{KotSS18}.

\begin{proof}
    Let $u \in \R^d$ be an arbitrary unit vector.
    Since for $p$ even and any $C > 0$ it holds that $\{X^p \leq C^p\} \proves_p^X X^2 \leq C^2$, (\cref{fact:sos_square_root}), it is enough to show that
    \[
        \Iprod{u, \Paren{\Sigma' - \tilde{\Sigma}} u}^p \leq (12 \alpha)^p \Iprod{u, \tilde{\Sigma}u}^p \,.
    \]
    Let $w_i^* = \Ind(x_i \neq x_i^*)$.
    Note that by definition at most $\eta n$ of the $w_i^*$ are non-zero.
    Further, our constraints imply that $(1-w_i^*)(1 - w_i) x_i'  = (1-w_i^*)(1 - w_i) x_i^*$.
    For convenience define $r_i = (1-w_i^*)(1 - w_i)$.
    Using the definition of $\Sigma'$ and $\tilde{\Sigma}$ and SoS almost triangle inequality (cf.\ \cref{fact:sos_triangle_large_power}) it immediately follows that
    \begin{align*}
        \cA \proves_{O(p)} \Iprod{u, \Paren{\Sigma' - \tilde{\Sigma} } u}^p &= \Paren{\tfrac 1 n \sum_{i=1}^n \iprod{x_i', u}^2 - \tfrac 1 n \sum_{i=1}^n \iprod{x_i^*, u}^2 }^p \\
        &\leq 2^p\Paren{\tfrac 1 n \sum_{i=1}^n (1-r_i)\iprod{x_i', u}^2}^p + 2^p\Paren{\tfrac 1 n \sum_{i=1}^n (1-r_i)\iprod{x_i^*, u}^2 }^p\,.
        \numberthis\label{eq:sos-proof-robust-covariance}
    \end{align*}
    Observe that sum of $1-r_i$ is small as follows:
    \[
        \cA \proves_{O(1)}^{w_i} \sum_{i=1}^n \Paren{1 - r_i} = \sum_{i=1}^n \Paren{w_i^* + w_i - w_i^* w_i} \leq 2 \eta n \,.
    \]
    Without loss of generality, assume that $\cA$ implies that this sum is exactly $2\eta n$ (this can be achieved by including a few more points in the respective sums).
    Then, substituting $\tilde{w}_i = (1-r_i)$ and $v = u$ we can apply the SoS proof that
    \[
        \cB \proves_{O(p)}^{\tilde{w},v} \Paren{\tfrac 1 n \sum_{i=1}^n \tilde{w}_i \iprod{x_i', v}^2}^p \leq \alpha^p \cdot \Iprod{v, \Sigma' v}^p
    \]
    and conclude that the first term in \Cref{eq:sos-proof-robust-covariance} is at most $(2\alpha)^p \Iprod{v, \Sigma' v}^p$.

    Further, by assumption on $x^*$'s, we have that
    (i)
    $\cB \proves_{O(p)}^{\tilde{w},v} \Paren{\tfrac 1 n \sum_{i=1}^n \tilde{w}_i \iprod{x_i^*, v}^2}^p \leq (\alpha/2)^p \iprod{v, {\Sigma}v }^p$, which follows by the substitution $v \mapsto \Sigma^{1/2}v$ 
    and (ii) $\cB \proves_{O(p)}^{v}  (v^\top \Sigma v)^p \leq 2^p (v^\top \tilde{\Sigma} v)^p$; the claim in (ii) follows by the inequalities $\Sigma \preceq (1+ \alpha) \tilde{\Sigma}$ and  $\alpha \le 1$
    and the fact that the linear matrix inequalities have sum of squares proof for quadratic forms.
   Hence, the second term in \Cref{eq:sos-proof-robust-covariance} is at most $\alpha^p \iprod{v, \tilde\Sigma v}^p$.
    Taken together, we have shown that (using again SoS almost triangle inequality)
    \begin{align*}
        \cA &\proves_{O(p)} \Iprod{u, \Paren{\Sigma' - \tilde{\Sigma}} u}^p \leq (2\alpha)^p \Paren{\Iprod{v, \Sigma' v}^p + \Iprod{v, \tilde{\Sigma} v}^p} \\
        &\leq (6 \alpha)^p \Iprod{v, \tilde{\Sigma} v}^p + (4\alpha)^p\Iprod{u, \Paren{\Sigma' - \tilde{\Sigma}} u}^p \,,
    \end{align*}
    which implies the first claim by rearranging and the fact that $6 \alpha/(1 - 4 \alpha) \leq 12 \alpha$ since $\alpha \leq 0.1$.
    The second claim follows by the first claim and the second condition in \Cref{lem:deterministic-covariance}.

    \paragraph{Feasibility and accuracy.}
    We begin by showing that the constraint system $\cA$ is feasible.
    Indeed, we select $w_i = \Ind(x_i \neq x_i^*)$ and $x_i' = x_i^*$ and $\Sigma' = \tilde{\Sigma} = \tfrac 1 n \sum_{i=1}^n x_i^* (x_i^*)^\top$.
    Then all of the constraints are satisfied, where the last constraint uses that $\cX^*$ satisfies \Cref{def:niceness-covariance} (both the first and the second conditions as discussed above in the proof).
    Recall that the algorithm outputs $\widehat{\Sigma} = \pE \Sigma'$, where $\pE$ is any pseudoexpectation satisfying $\cA$; such a $\pE$ must exist because $\cA$ is satisfiable.
    Establishing the desired accuracy guarantee  is equivalent to showing that for all unit vectors $u \in \R^d$ it holds that $\Iprod{u, (\hat{\Sigma} - \tilde{\Sigma}) u}^2 \leq 72\alpha^2 \cdot \Iprod{u, \tilde{\Sigma} u}^2$.
    However, this follows by Cauchy-Schwarz for pseudo-expectations and the second claim above.

    We now discuss the runtime.
    Note that both the number of variables and constraint in $\cA$ is polynomial in $(n+d)^p$ (this uses that we can succinctly represent the last constraint).
    Thus, by~\cref{fact:pE_efficient_optimization} we can compute a pseudo-expectation $\pE$ satisfying $\cA$ in time $(n+d)^{O(p)}$.
\end{proof}

\subsubsection{Robust Covariance-Aware Mean Estimation}
\label{sec:cov_mean_est_full}

We will show the following results for covariance-aware mean estimation by combining our robust covariance estimator from the previous section and existing estimators for (nearly) isotropic sub-Gaussian distributions.
\begin{theorem}[Robust Covariance-Aware Mean Estimation; Full Version of~\cref{thm:mean-aware-intro,thm:mean-intro}]
    \label{thm:cov_mean_est_full}
    Let $n,d,p \in \N$.
    Let $\cX^* = \Set{x_1^*, \ldots, x_n^*}$ be a set of $n$ \iid samples from an $s$-hypercontractive-subgaussian distribution with unknown mean $\mu$ and unknown covariance $\Sigma$ in dimension $d$.
    Let $\eta,\alpha > 0$ be such that $\Omega(s^2\sqrt{\eta}) \leq s^2\alpha < 1/10$ and $\eta$ smaller than a sufficiently small absolute constant.
    There exists an algorithm that given $\eta$ and an $\eta$-corruption $x_1, \ldots, x_n$ of $\cX^*$ runs in time $n^{O(p)}$ and with probability at least $1-\delta$ outputs an estimate $\hat{\mu}$ such that
        
        $$\Norm{\Sigma^{-1/2}(\hat{\mu} - \mu)} \leq O(s\new{\sqrt{\eta \alpha }})\, \qquad \text{and} \qquad \Norm{(\hat{\mu} - \mu)} \leq O\left(s\new{\sqrt{\eta \alpha \|\Sigma\|_\op}} \right)\,$$
    as long as 
    $n = \tilde{\Omega}(\tfrac{\eta^2 d^{2+10/p}}{\alpha^4} + \tfrac {d^{1+5/p}} {\alpha^2} + 
    \frac{d}{\eta \alpha} + 
    \frac{\log(1/\delta)}{\alpha \min(\eta,\alpha)})$.
\end{theorem}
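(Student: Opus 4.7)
The plan is a two-stage reduction: first robustly estimate the covariance $\Sigma$ using \Cref{thm:cov_est_full}, then use the estimated covariance to transform the data into approximately isotropic form and apply a covariance-aware mean estimator based on the resilience certificate from \Cref{cor:high-prob-sos-resilience-subgaussian}. The $\sqrt{\alpha}$ improvement over the standard robust mean-estimation rate of $O(\sqrt{\eta})$ comes directly from the fact that, after transformation by $\hat{\Sigma}^{-1/2}$, the effective resilience scale of the clean data is $O(\alpha)$ rather than $O(1)$.

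\emph{Stage 1 (Covariance Estimation).} Apply \Cref{thm:cov_est_full} with target accuracy $s^2\alpha/C$ for a sufficiently large absolute constant $C$, obtaining with probability $1-\delta/2$ an estimate $\hat{\Sigma}$ satisfying $(1-s^2\alpha/C)\hat{\Sigma}\preceq\Sigma\preceq(1+s^2\alpha/C)\hat{\Sigma}$. The sample complexity of this stage is the $\tilde{\Omega}(\eta^2 d^{2+10/p}/\alpha^4 + d^{1+5/p}/\alpha^2 + d/(\eta\alpha) + \log(1/\delta)/\alpha^2)$ part of the bound in the theorem.

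\emph{Stage 2 (Covariance-Aware Mean Estimation).} Transform samples via $y_i = \hat{\Sigma}^{-1/2}x_i$, so that the clean $y_i^* = \hat{\Sigma}^{-1/2}x_i^*$ are $O(s)$-hypercontractive-subgaussian with mean $\tilde{\mu}:=\hat{\Sigma}^{-1/2}\mu$ and covariance $\Sigma_y$ satisfying $\Norm{\Sigma_y - I}_\op \le O(s^2\alpha)$. Set up an SoS program analogous to that of \Cref{lem:deterministic-covariance}, searching for outlier indicators $w_i\in\{0,1\}$ with $\sum_i w_i \le \eta n$, replacement points $y_i'$ agreeing with $y_i$ wherever $w_i = 0$, and a candidate mean $\hat{\nu} = \tfrac{1}{n}\sum_i y_i'$. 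Apply \Cref{cor:high-prob-sos-resilience-subgaussian} to the (symbolic) centered clean samples $\{y_i^* - \tilde{\mu}\}$: with probability $1-\delta/2$ we obtain a degree-$O(p)$ SoS proof that, for every $\eta n$-sparse $w$ and every $v$,
\[
  \tfrac{1}{n}\sum_{i=1}^n w_i \,\Iprod{y_i^* - \tilde{\mu},\, v}^2 \;\le\; O(s^2\alpha)\cdot\norm{v}^2.
\]
Combined with the SoS Cauchy--Schwarz inequality (\Cref{fact:pE_cauchy_schwarz}), this upgrades to an SoS proof of
\[
  \Bigl\| \tfrac{1}{n}\sum_{i=1}^n w_i(y_i^* - \tilde{\mu}) \Bigr\|^2 \;\le\; \eta\cdot O(s^2\alpha) \;=\; O(s^2\eta\alpha),
\]
so the pseudo-expectation-optimal $\hat{\nu}$ satisfies $\norm{\hat{\nu} - \tilde{\mu}} \le O(s\sqrt{\eta\alpha})$, up to the standard concentration deviation of the clean sample mean, which is $O(\sqrt{d/n} + \sqrt{\log(1/\delta)/n})$ and is absorbed by the additional $n \gtrsim d/(\eta\alpha) + \log(1/\delta)/(\eta\alpha)$ term.

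\emph{Output, error analysis, and main obstacle.} Return $\hat{\mu}=\hat{\Sigma}^{1/2}\hat{\nu}$. The Mahalanobis error satisfies
\[
  \Norm{\Sigma^{-1/2}(\hat{\mu} - \mu)} \;=\; \Norm{\Sigma^{-1/2}\hat{\Sigma}^{1/2}(\hat{\nu} - \tilde{\mu})} \;\le\; (1+O(s^2\alpha))\cdot O(s\sqrt{\eta\alpha}) \;=\; O(s\sqrt{\eta\alpha}),
\]
yielding the first bound; the Euclidean bound follows from $\Norm{\hat{\mu}-\mu} \le \sqrt{\Norm{\Sigma}_\op}\cdot\Norm{\Sigma^{-1/2}(\hat{\mu}-\mu)}$. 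The main technical obstacle is wiring the random $\hat{\Sigma}$ into the Stage-2 SoS constraint system so that the resilience certificate, originally applied to the untransformed clean data in \Cref{cor:high-prob-sos-resilience-subgaussian}, still produces an $O(s^2\alpha)$ operator-norm bound on the $\eta n$-sparse outlier outer products of $y_i^* - \tilde{\mu}$. Since the transformation by $\hat{\Sigma}^{-1/2}$ is deterministic conditional on the Stage-1 output and the clean covariance in the transformed space is itself $O(s^2\alpha)$-close to $I$, the resilience bound transfers correctly; the remaining bookkeeping is analogous to the SoS-based robust estimator of \cite{KotSS18}, with the quantitative improvement arising precisely from the near-isotropy achieved after transformation.
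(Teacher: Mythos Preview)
Your two-stage plan (robustly estimate $\Sigma$, whiten, then robustly estimate the mean) matches the paper's structure, but Stage 2 diverges substantially from what the paper does and contains a real gap.

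First, a minor but necessary point in Stage 1: \Cref{thm:cov_est_full} requires mean-zero data, so you cannot apply it directly to the $x_i$. The paper handles this by first \emph{symmetrizing}, running covariance estimation on $\{(x_i - x_{i+1})/\sqrt{2} : i \text{ odd}\}$, which are mean-zero with covariance $\Sigma$.

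The main divergence is Stage 2. The paper does \emph{not} build a second SoS program or invoke the resilience certificate for mean estimation. It simply calls a black-box stability-based robust mean estimator (\Cref{fact:stability_implies_robust_est}) on the whitened samples. The entire analysis reduces to showing that $\{\hat{\Sigma}^{-1/2} x_i^*\}$ is $(\eta, \delta')$-stable with $\delta' = O(s\sqrt{\eta\alpha})$; this is \Cref{fact:stab_nearly_iso_sub_g}, which establishes stability \emph{uniformly over all} whitening matrices $\Pi$ with $\|\Pi^\top \Pi - I\|_{\mathrm{op}} \le s^2\alpha$, cleanly sidestepping the dependence between $\hat{\Sigma}$ and the data. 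The $\sqrt{\eta\alpha}$ rate appears because the second stability condition $\|\Sigma_S - I\|_{\mathrm{op}} \le \delta'^2/\eta$ is off by the covariance error $O(s^2\alpha)$, forcing $\delta' \ge s\sqrt{\eta\alpha}$.

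Your SoS route has a genuine gap: from the resilience bound on the \emph{clean} points $\|\tfrac{1}{n}\sum_i w_i(y_i^* - \tilde\mu)\|^2 \le O(s^2\eta\alpha)$ you jump to $\|\hat{\nu} - \tilde\mu\| \le O(s\sqrt{\eta\alpha})$. But $\hat{\nu} = \pE[\tfrac{1}{n}\sum_i y_i']$, and the symbolic $y_i'$ are completely unconstrained on indices with $w_i = 1$. To control $\pE[\tfrac{1}{n}\sum_i (1-r_i)(y_i' - \tilde\mu)]$ you need a resilience constraint on the $y_i'$ themselves (centered at the symbolic $\hat\nu$) inside the SoS system, then prove feasibility and run the full identifiability argument; you hint at this via ``analogous to \Cref{lem:deterministic-covariance}'' but never use such a constraint in your analysis. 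Separately, your explanation that the ``effective resilience scale is $O(\alpha)$ after transformation'' is not the right mechanism: the SoS resilience bound for isotropic data is $O(\sqrt{\eta})$ once $n$ is large, and only equals $O(\alpha)$ at the minimal sample threshold---the transformation itself does not change that scale.
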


The above result improves on \cite{KotSS18} in terms of sample complexity: the algorithm in \cite{KotSS18}, coupled with \cite{DiaKan22-book}, requires $\frac{\eta d^2}{\alpha^2} + \frac{d}{\eta \alpha}$ samples even for $s=O(1)$ and $\delta = 0.1$.
On the other hand, \Cref{thm:cov_mean_est_full} uses much fewer (roughly) $\frac{\eta^2 d^2}{\alpha^4} + \frac{d}{\eta \alpha}$ samples. Observe that $\frac{d}{\eta \alpha}$ samples are needed even information-theoretically,
and the computational lower bounds presented in \Cref{sec:low-degree-cov-mean-est} show that for $\alpha = \Theta(1)$,  $\eta^2d^2$ many samples are needed.

We now collect necessary definitions that we shall use, starting with the notion of \emph{stability}~\cite{DiaKKLMS19-siam,SteCV18} as well as known algorithmic guarantees for robust mean estimation for (nearly) isotropic subgaussian distributions.
We use the following defintion from~\cite[Definition 2.1]{DiaKan22-book}
\begin{definition}[$(\eta,\delta)$-stability]
    Let $0 < \eta < 1/2$ be such that $\delta \geq \eta$.
    Let $\cX = \Set{x_1^*, \ldots, x_n^*} \sse \R^d$ and $\mu \in \R^d$.
    For a set $S \sse [n]$, define $\mu_S = \tfrac 1 {\card{S}} \sum_{i \in S} x_i^*$ and $\Sigma_S = \tfrac 1 {\card{S}} \sum_{i \in S} (x_i^* - \mu)(x_i^* - \mu)^\top$. 
    The set  $\cX$ is called $(\eta,\delta)$-stable with respect to $\mu$, if for all $S \sse [n]$ of size at least $(1-\eta) n$ it holds that (i)
         $\norm{\mu_S - \mu} \leq \delta$,
        and (ii) $\norm{\Sigma_S - I_d}_\op \leq \delta^2/\eta$.
\end{definition}
We further use the following facts (cf.~\cite[Proposition 3.3 and Theorem 2.11]{DiaKan22-book}; see also \cite{DiaKKLMS19-siam}).
\begin{fact}[Stability of Isotropic Subgaussians]
    \label{fact:stab_iso_sub_g}
    Let $0 < \eta < 1/2$ and $\gamma \geq 0$.
    Let $\cX = \Set{x_1^*, \ldots, x_n^*}$ a set of $n$ \iid samples from an isotropic $s$-subgaussian distribution with mean $\mu$.
    Then, with probability at least $1-\exp(-\gamma^2 n)$, $\cX$ is $(\eta, \delta = s\eta \sqrt{\log(1/\eta)} + s\gamma)$ stable with respect to $\mu$ as long as $n \geq \Omega(d/\gamma^2)$.
\end{fact}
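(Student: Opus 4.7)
The plan is to reduce to one-dimensional marginals in a direction $v$, via the identity $\|\Sigma_S - I_d\|_{\op} = \sup_{\|v\|=1} |v^\top \Sigma_S v - 1|$ and $\|\mu_S - \mu\| = \sup_{\|v\|=1} \langle \mu_S - \mu, v\rangle$, then use a discretization/net argument over $v \in S^{d-1}$. Without loss of generality I take $\mu = 0$ and the covariance equal to $I_d$ (one can reduce to this case by whitening, which leaves $s$ unchanged up to constants because the distribution is assumed isotropic). For a fixed unit vector $v$, the random variables $Y_i := \langle x_i^*, v\rangle$ are i.i.d.\ scalar $s$-subgaussian with $\E Y_i = 0$ and $\E Y_i^2 = 1$.

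The key decomposition is that for any $S \subseteq [n]$ with $|S| \geq (1-\eta)n$,
\[
    \frac{1}{|S|}\sum_{i \in S} Y_i \;=\; \frac{n}{|S|}\biggl[\frac{1}{n}\sum_{i=1}^n Y_i \;-\; \frac{1}{n}\sum_{i \notin S} Y_i\biggr],
\]
and similarly for $Y_i^2$. So it suffices to control (a) the full sample concentration of $\tfrac1n\sum Y_i$ and $\tfrac1n\sum Y_i^2$, and (b) the worst-case contribution of any $\eta n$ removed indices. For (a), scalar subgaussian concentration gives $|\tfrac1n\sum_i Y_i| \lesssim s\gamma$ and $|\tfrac1n\sum_i Y_i^2 - 1| \lesssim s^2\gamma$ (once $n \gtrsim 1/\gamma^2$), each with failure probability $\exp(-c\gamma^2 n)$. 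For (b), the key tail lemma is: with failure probability $\exp(-c\gamma^2 n)$, the top $\eta n$ values of $|Y_i|$ each lie in the tail above $t^* = \Theta(s\sqrt{\log(1/\eta)})$, so their sum is at most $O(s\eta n \sqrt{\log(1/\eta)})$, while the top $\eta n$ values of $Y_i^2$ sum to $O(s^2 \eta n \log(1/\eta))$. This is proved by a direct Chernoff-type count of how many $|Y_i|$ exceed a given threshold, combined with an integration-by-parts over thresholds to sum up the top order statistics.

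Combining (a) and (b) and dividing by $|S| \asymp n$, I obtain, for fixed $v$,
\[
    \Bigl|\tfrac{1}{|S|}\sum_{i\in S} Y_i\Bigr| \;\lesssim\; s\eta\sqrt{\log(1/\eta)} + s\gamma \;=\; \delta,
    \qquad
    \Bigl|\tfrac{1}{|S|}\sum_{i\in S} Y_i^2 - 1\Bigr| \;\lesssim\; s^2\log(1/\eta) + s^2\gamma \;\leq\; \delta^2/\eta,
\]
where the last inequality follows from $\delta^2/\eta \geq s^2 \eta \log(1/\eta) + s^2 \gamma^2/\eta \geq s^2\gamma$ for $\gamma \leq 1$. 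Finally I upgrade each fixed-$v$ bound to a uniform bound by a $\tfrac14$-net $\mathcal{N} \subseteq S^{d-1}$ of size $e^{O(d)}$, using the standard inequality $\|A\|_{\op} \leq 2\sup_{v\in\mathcal{N}}|v^\top A v|$ for symmetric $A$ and $\|x\| \leq 2\sup_{v\in\mathcal{N}}\langle x,v\rangle$ for vectors. The union bound fails with probability $\leq e^{O(d)}\exp(-c\gamma^2 n)$, which is $\leq \exp(-\gamma^2 n)$ once $n \geq \Omega(d/\gamma^2)$ with a suitable hidden constant.

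The main obstacle will be the top-$\eta n$ tail bound for subgaussian random variables, uniformly over directions $v$. Naively applying a concentration bound on the order statistics direction-by-direction followed by a union bound is the cleanest route, but one must verify that the failure probability for a single direction is at most $\exp(-\Omega(d + \gamma^2 n))$ so that the $e^{O(d)}$ net factor is absorbed; this is the main quantitative check. I expect to bound $\#\{i : |Y_i| > t^*\}$ via a Chernoff bound on a sum of Bernoullis with mean $\leq 2e^{-t^{*2}/(2s^2)} \leq \eta/2$, yielding the required tail at scale $\exp(-\Omega(\eta n))$, and then to argue that the contribution of values below $t^*$ is at most $t^* \cdot \eta n$ (respectively $t^{*2}\cdot \eta n$) deterministically, which matches the target $\delta$ and $\delta^2/\eta$ bounds.
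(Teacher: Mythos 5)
First, a structural note: the paper does not prove this statement --- it is cited directly from \cite[Proposition 3.3 and Theorem 2.11]{DiaKan22-book} (see also \cite{DiaKKLMS19-siam}), so there is no internal proof to compare against, and your proposal is a from-scratch reconstruction. Your overall architecture --- fix a direction $v$, control both the full-sample average and the worst-case contribution of any $\eta n$ removed indices of the one-dimensional marginal $Y_i = \langle x_i^*, v\rangle$, then upgrade to all directions via a $1/4$-net and a union bound --- is exactly the standard route and is correct at that level.

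That said, there are two concrete problems in the tail-sum step (b). First, the failure probability does not come out right via a Chernoff count. Bounding $\#\{i : |Y_i| > t^*\}$ against a $\operatorname{Bin}(n, \eta/2)$ only gives failure probability $\exp(-\Omega(\eta n))$, which matches the target $\exp(-\gamma^2 n)$ only when $\eta \gtrsim \gamma^2$; the statement is claimed for all $\gamma \geq 0$. Moreover, peeling over dyadic thresholds introduces additive slack at each level, and making that slack small enough for the union bound over levels is exactly where this route becomes awkward. The cleaner and fully uniform-in-$\gamma$ argument is to skip the count entirely: set $Z_i = |Y_i|\,\mathbf{1}(|Y_i| > t^*)$, observe that $Z_i$ is sub-Gaussian with $\|Z_i\|_{\psi_2} = O(s)$ and $\E Z_i = O(s\eta\sqrt{\log(1/\eta)})$, and apply sub-Gaussian Hoeffding directly to $\tfrac1n\sum_i Z_i$ to get failure $\exp(-c\gamma^2 n)$ at deviation $O(s\gamma)$. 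Then use the one deterministic inequality you do state at the end,
\[
\sup_{|T| \leq \eta n} \frac1n\sum_{i\in T}|Y_i| \;\leq\; \frac1n\sum_i Z_i \;+\; \eta\, t^* ,
\]
which also clarifies the currently garbled sentence ``the top $\eta n$ values each lie in the tail above $t^*$, so their sum is at most $O(s\eta n\sqrt{\log(1/\eta)})$'' (that implication is backwards as written). For $Y_i^2$, replace Hoeffding by Bernstein applied to the sub-exponential truncation $Z_i' = Y_i^2\,\mathbf{1}(|Y_i| > t^*)$.

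Second, there is an error (likely a typo, but as written it sinks the comparison) in the variance estimate: the bound should be $|\tfrac{1}{|S|}\sum_{i\in S}Y_i^2 - 1| \lesssim s^2\eta\log(1/\eta) + s^2\gamma$, not $s^2\log(1/\eta) + s^2\gamma$. The version without the extra factor of $\eta$ is not $\leq \delta^2/\eta$ in general (take, say, $\eta = \gamma = 0.01$), so the chain of inequalities you give to conclude $\leq \delta^2/\eta$ would be false. With the correct $s^2\eta\log(1/\eta)$ term the comparison goes through, using $s^2\eta\log(1/\eta) \leq \delta^2/\eta$ and $s^2\gamma \leq 2s^2\gamma\sqrt{\log(1/\eta)} \leq \delta^2/\eta$ (the cross term, which your stated chain drops).
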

\begin{fact}[Stability Implies Robust Mean Estimation]
    \label{fact:stability_implies_robust_est}
    Let $0 < \eta < 1/2, \eta \leq \delta $.
    And let $\cX = \Set{x_1^*, \ldots, x_n^*}$ be an $(\eta, \delta)$-stable set with respect to some vector $\mu$.
    Then, there exists an algorithm $\cA_{\mathsf{robust\_mean}}$ that takes as input an $\eta$-corruption of $\cX$ and $\eta$, runs in time $n^{O(1)}$ and outputs a vector $\hat{\mu}$ such that $\norm{\hat{\mu} - \mu} \leq O(\delta)$.
\end{fact}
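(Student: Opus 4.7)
\textbf{Proof plan for \Cref{fact:stability_implies_robust_est}.} The plan is to show that an $(\eta,\delta)$-stable set admits a polynomial-time robust mean estimator via the \emph{filtering} paradigm, which is by now standard \cite{DiaKKLMS19-siam, DiaKan22-book}. Let $\cY = \{y_1,\ldots,y_n\}$ denote the $\eta$-corrupted version of $\cX$, and let $I_{\text{good}} \subseteq [n]$ be the (unknown) set of indices where $y_i = x_i^*$; by construction $|I_{\text{good}}| \geq (1-\eta)n$. The key structural fact we exploit is that, by stability applied to $I_{\text{good}}$, both the empirical mean and the empirical covariance of the surviving inliers are close to $\mu$ and $I_d$ respectively, with deviations controlled by $\delta$ and $\delta^2/\eta$.

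First I would introduce the soft-weight LP relaxation: search over weight vectors $w \in [0,1]^n$ with $\sum_i w_i \geq (1-2\eta)n$ and consider the weighted empirical mean $\hat\mu_w = (\sum_i w_i y_i)/(\sum_i w_i)$ and the weighted second moment $\hat\Sigma_w = (\sum_i w_i (y_i-\hat\mu_w)(y_i-\hat\mu_w)^\top)/(\sum_i w_i)$. The algorithm $\cA_{\mathsf{robust\_mean}}$ initializes $w_i = 1$ for all $i$ and repeats: if $\|\hat\Sigma_w - I_d\|_{\op} \leq C\delta^2/\eta$ for a sufficiently large constant $C$, output $\hat\mu_w$; otherwise compute the top unit eigenvector $v$ of $\hat\Sigma_w - I_d$, form the scalar scores $\tau_i = \iprod{y_i - \hat\mu_w, v}^2$, and downweight by $w_i \leftarrow w_i \cdot (1 - \tau_i/\tau_{\max})$ where $\tau_{\max} = \max_i \tau_i$. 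Each iteration zeros out at least one weight, so the procedure terminates in $O(n)$ rounds, each costing $O(nd^2)$ time for eigendecomposition.

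The correctness proof proceeds via the standard potential $\Phi(w) = \sum_{i \in I_{\text{bad}}} (1-w_i) - \sum_{i \in I_{\text{good}}} (1-w_i)$, where $I_{\text{bad}} = [n] \setminus I_{\text{good}}$. I would argue that as long as $\|\hat\Sigma_w - I_d\|_{\op}$ is large, the downweighting step strictly increases $\Phi$. The key calculation is: the total $v$-score $\sum_{i} w_i \tau_i$ exceeds a threshold whenever the top eigenvalue is large, while by the second stability condition applied to $I_{\text{good}} \cap \{i : w_i > 0\}$, the good points alone contribute at most $O(\delta^2 n)$ to $\sum_i w_i \tau_i$. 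Thus the bad points must contribute the bulk, and since $\tau_i/\tau_{\max} \in [0,1]$, the weight mass removed from $I_{\text{bad}}$ exceeds that removed from $I_{\text{good}}$. Since $\Phi \geq 0$ initially and $\Phi \leq |I_{\text{bad}}| \leq \eta n$ always, this monotonicity bounds the total good mass ever removed by $\eta n$, so when the algorithm halts, the surviving weight vector covers at least $(1-O(\eta))|I_{\text{good}}|$ of the clean points.

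The final accuracy bound then follows from a deterministic ``stability-implies-mean-closeness'' lemma: on any weighting $w$ satisfying $\sum w_i \geq (1-2\eta)n$ and $\|\hat\Sigma_w - I_d\|_{\op} \leq C\delta^2/\eta$, one has $\|\hat\mu_w - \mu\| \leq O(\delta)$. This is a one-line Cauchy--Schwarz: writing $\hat\mu_w - \mu = \sum_i w_i (y_i-\mu)/\sum_i w_i$ and projecting onto the worst direction $u$, one splits the sum into good and bad indices, bounding the good part by the first stability condition applied to the support of $w|_{I_{\text{good}}}$, and bounding the bad part by $\sqrt{|I_{\text{bad}}|} \cdot \sqrt{\sum_i w_i \iprod{y_i-\hat\mu_w,u}^2} \lesssim \sqrt{\eta n} \cdot \sqrt{n(1+C\delta^2/\eta)} \lesssim \delta n$, using $\delta \geq \eta$. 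The main obstacle, and the only step that requires care, is the potential argument linking the spectral condition $\|\hat\Sigma_w - I\|_{\op} \gg \delta^2/\eta$ to a provable gap between $v$-scores on good vs.~bad points; here one invokes the second stability condition on $I_{\text{good}}$ restricted to the current support of $w$, which remains stable because the support is still $(1-O(\eta))$ of the clean set.
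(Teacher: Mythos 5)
The paper does not prove this fact: it is imported verbatim as \cite[Theorem 2.11]{DiaKan22-book} (see also \cite{DiaKKLMS19-siam}). Your proposal is therefore a from-scratch reconstruction of the standard filter argument from that literature, and the high-level architecture you describe --- weighted covariance check, top-eigenvector scoring, multiplicative downweighting, potential $\Phi$ tracking good-vs-bad mass removed, and a deterministic ``stability $\Rightarrow$ mean closeness'' lemma at the end --- is the right one.

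However, the final accuracy calculation as you wrote it has a real gap. You bound the bad-point contribution to $\hat\mu_w - \mu$ in a fixed direction $u$ by
\[
\sqrt{|I_{\text{bad}}|}\cdot\sqrt{\textstyle\sum_i w_i\iprod{y_i-\hat\mu_w,u}^2} \;\lesssim\; \sqrt{\eta n}\cdot\sqrt{n\bigl(1+C\delta^2/\eta\bigr)}\,.
\]
Under $\eta \le \delta$ this is $\gtrsim n\sqrt{\eta}$, so after dividing by $W\approx n$ you obtain error $O(\sqrt{\eta})$, not $O(\delta)$. In the regime of interest (e.g.\ Gaussians, where $\delta \asymp \eta\sqrt{\log(1/\eta)} \ll \sqrt{\eta}$) this is lossy, and the statement you are trying to prove promises $O(\delta)$. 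The step you are missing is that the Cauchy--Schwarz should be applied after subtracting the good-point contribution so that the baseline ``$1$'' in the weighted second moment cancels: writing $r = \|\hat\mu_w-\mu\|$,
\[
\frac1n\sum_{i\in I_{\text{bad}}} w_i\iprod{y_i-\mu,u}^2 \;=\; \frac1n\sum_i w_i\iprod{y_i-\mu,u}^2 \;-\; \frac1n\sum_{i\in I_{\text{good}}} w_i\iprod{y_i-\mu,u}^2\,,
\]
and one bounds the first term from above by $1 + C\delta^2/\eta + r^2$ (covariance check plus recentering) and the second term from below by $(1-\eta)(1-\delta^2/\eta)$ using the \emph{two-sided} operator-norm clause of the stability definition. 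The $1$'s cancel and, since $\eta \le \delta$ gives $\eta \le \delta^2/\eta$, one is left with $\frac1n\sum_{I_{\text{bad}}}w_i\iprod{y_i-\mu,u}^2 \lesssim \delta^2/\eta + r^2$; Cauchy--Schwarz then yields $r \le \delta + \sqrt{\eta}\cdot\sqrt{O(\delta^2/\eta + r^2)} = O(\delta) + \sqrt{\eta}\,r$, which closes after rearranging. The same cancellation is needed in your potential step: ``good points contribute $O(\delta^2 n)$ to $\sum_i w_i\tau_i$'' is not literally true (they contribute $\approx n(1+\delta^2/\eta)$); the correct claim is that the \emph{excess} over the $n\cdot 1$ baseline is $O(\delta^2 n/\eta)$, and the filter threshold constant $C$ is chosen large enough that the bad excess dominates.
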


We will need the following straightforward extension of~\cref{fact:stab_iso_sub_g} to distributions which are only nearly isotropic.
\begin{fact}[Stability of Nearly Isotropic Subgaussians]
    \label{fact:stab_nearly_iso_sub_g}
    Let $0 < \eta < 1/2, \gamma < 1, s\geq 1, s^2\alpha \leq 1/4$.
    Let $\cX^{**} = \Set{x_1^{**}, \ldots, x_n^{**}}$ be a set of $n$ \iid samples from an isotropic $s$-subgaussian distribution with mean $\mu^{**}$.
    Then, with probability at least $1-\exp(-\gamma^2 n)$, for all matrices $\Pi$ with $\|\Pi^\top \Pi- I\|_\op\leq s^2\alpha$, the set $\cX^* = \Set{\Pi x_1^{**}, \ldots, \Pi x_n^{**}}$ is $(\eta, O(\delta'))$-stable with respect to $\Pi\mu^{**}$ for $\delta' = s \eta \sqrt{\log(1/\eta)} + s\gamma + s\sqrt{\eta \alpha}$ and   $n \geq \Omega(d/\max(\gamma^2, \eta \alpha))$.
\end{fact}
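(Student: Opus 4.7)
The plan is to reduce to \cref{fact:stab_iso_sub_g} applied to the isotropic samples $\cX^{**}$, and then transfer stability of $\cX^{**}$ (w.r.t.\ $\mu^{**}$) to stability of $\cX^* = \Pi \cX^{**}$ (w.r.t.\ $\Pi\mu^{**}$) purely by linear-algebraic manipulations that are uniform in $\Pi$. Concretely, set $\gamma' = \max(\gamma,\sqrt{\eta\alpha})$, so that $\gamma'^2 = \max(\gamma^2,\eta\alpha)$. Applying \cref{fact:stab_iso_sub_g} with parameter $\gamma'$ (which only demands $n \geq \Omega(d/\gamma'^2)$, exactly what our hypothesis supplies, and fails with probability only $\exp(-\gamma'^2 n) \leq \exp(-\gamma^2 n)$), we obtain that $\cX^{**}$ is $(\eta, \delta)$-stable with respect to $\mu^{**}$ for $\delta = s\eta\sqrt{\log(1/\eta)} + s\gamma'\leq s\eta\sqrt{\log(1/\eta)} + s\gamma + s\sqrt{\eta\alpha} = \delta'$.

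Condition on this single high-probability event; we must now show stability of $\cX^*$ for \emph{every} $\Pi$ with $\|\Pi^\top\Pi - I\|_\op \leq s^2\alpha$. Fix such a $\Pi$ (which is square, in line with the intended application to covariance-aware mean estimation) and any $S\subseteq[n]$ with $|S| \geq (1-\eta)n$. Writing $\mu_S^*$ and $\Sigma_S^*$ for the set-$S$ mean and (population-centered) covariance of $\cX^*$, we have $\mu_S^* = \Pi \mu_S^{**}$ and $\Sigma_S^* = \Pi \Sigma_S^{**}\Pi^\top$. Since $\|\Pi\|_\op^2 = \|\Pi^\top\Pi\|_\op \leq 1 + s^2\alpha \leq 5/4$, the mean deviation is bounded by $\|\mu_S^* - \Pi\mu^{**}\| = \|\Pi(\mu_S^{**}-\mu^{**})\| \leq \|\Pi\|_\op \cdot \delta \leq 2\delta'$, which is the desired mean bound up to a constant.

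For the covariance condition, decompose
\[
\Sigma_S^* - I = \Pi(\Sigma_S^{**} - I)\Pi^\top + (\Pi\Pi^\top - I).
\]
The first summand is bounded in operator norm by $\|\Pi\|_\op^2 \cdot \delta^2/\eta \leq (5/4)\cdot \delta'^2/\eta$. For the second, I will use the key algebraic fact that for square $\Pi$, the matrices $\Pi^\top\Pi$ and $\Pi\Pi^\top$ share the same spectrum (the squared singular values of $\Pi$), so $\|\Pi\Pi^\top - I\|_\op = \|\Pi^\top\Pi - I\|_\op \leq s^2\alpha \leq \delta'^2/\eta$ since $\delta'^2 \geq s^2\eta\alpha$ by construction. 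Combining the two summands by triangle inequality, $\|\Sigma_S^* - I\|_\op \leq O(\delta'^2/\eta)$, which completes the verification of $(\eta,O(\delta'))$-stability. I do not anticipate a real obstacle here; the main subtlety to get right is the uniformity over $\Pi$ (handled because after conditioning on stability of $\cX^{**}$ the remaining argument is deterministic in $\Pi$) and the identification $\|\Pi\Pi^\top - I\|_\op = \|\Pi^\top\Pi - I\|_\op$ for square $\Pi$, which is what allows the $s^2\alpha$ assumption on $\Pi$ to feed directly into the covariance bound.
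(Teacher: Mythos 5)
Your proof is correct and follows essentially the same approach as the paper: invoke \cref{fact:stab_iso_sub_g} on $\cX^{**}$, condition on the resulting stability event, and then transfer to $\cX^*=\Pi\cX^{**}$ deterministically in $\Pi$ via the decomposition $\Sigma_S^*-I = \Pi(\Sigma_S^{**}-I)\Pi^\top + (\Pi\Pi^\top-I)$ together with $\|\Pi\|_\op^2\leq 1+s^2\alpha$. Your choice of $\gamma'=\max(\gamma,\sqrt{\eta\alpha})$ and the explicit observation that $\|\Pi\Pi^\top-I\|_\op=\|\Pi^\top\Pi-I\|_\op$ for square $\Pi$ are both small improvements in rigor over the paper's writeup, which invokes \cref{fact:stab_iso_sub_g} with the original $\gamma$ (so it implicitly needs the same boost when $\gamma^2<\eta\alpha$ to satisfy that fact's sample-size hypothesis) and swaps $\Pi^\top\Pi$ for $\Pi\Pi^\top$ without remark.
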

\begin{proof}
    The set $\cX^{**}$ consists of $n$ \iid samples from an isotropic 1-subgaussian distribution with mean $\mu^{**}$.
    Let $\delta \lesssim s\left(\eta\sqrt{\log(1/\eta)} + \gamma \right)$.
    By~\cref{fact:stab_iso_sub_g}, $\cX^{**}$ is $(\eta, \delta)$-stable with respect to $\mu^{**}$ with probability at least $1-\exp(-\gamma^2n)$.
    We condition on this event.
    
    Fix a $\Pi$ satisfying the condition in the lemma and define $x_i^* = \Pi x_i^{**}$ and $\mu^{*} = \Pi \mu^{**}$.
    Let $S \sse [n]$ be an arbitrary set of size at least $(1-\eta)n$.
    Define
    \begin{align*}
        \mu_S^* &= \tfrac 1 {\card{S}} \sum_{i \in S} x_i^*\,, & \Sigma_S^* &= \tfrac 1 {\card{S}} \sum_{i \in S} (x_i^* - \mu)(x_i^* - \mu)^\top \,,\\
        \mu_S^{**} &= \tfrac 1 {\card{S}} \sum_{i \in S} x_i^{**}\,, & \Sigma_S^{**} &= \tfrac 1 {\card{S}} \sum_{i \in S} (x_i^{**} - \mu^{**})(x_i^{**} - \mu^{**})^\top \,.
    \end{align*}
    Note that $\mu_S^{*} = \Pi \mu_S^{**} $ and $\Sigma_S^{*} = \Pi  \Sigma_S^{**} \Pi^\top$.
    By stability of $\cX^{**}$ we know that $\norm{\mu_S^{**} - \mu^{**}} \leq \delta$.
    Thus,
    \[
        \norm{\mu_S^{*} - \mu^*} = \norm{\Pi(\mu_S^{**} - \mu^{**})} \leq \norm{\Pi}_\op \cdot \norm{\mu_S^{**} - \mu^{**}} \leq O(\delta) \,,
    \]
    where we use that  $\|\Pi\|_\op$ is in $[0.5,1.5]$.
    Further, we know that $\norm{\Sigma_S^{**} - I_d} \leq \delta^2/\eta$.
    Thus, 
    \begin{align*}
        \norm{\Sigma_S^{*} - I_d} &= 
        \norm{\Pi\Sigma_S^{**}\Pi^\top - I_d}_\op
        \leq \norm{\Pi\Sigma_S^{**}\Pi^\top - \Pi  \Pi^\top}_\op+ \norm{\Pi\Pi^\top - I}_\op
        \\
        &\leq \norm{\Pi}_\op^2 \norm{\Sigma_S^{**} - I}_\op  + s^2\alpha\\
        &= O(1)\cdot \frac{\delta^2 + \eta O(s^2\alpha))} \eta \leq \frac{O(\delta + s\sqrt{\eta \alpha})^2}  \eta \,.
    \end{align*}
\end{proof}

We can now describe our algorithm to show~\cref{thm:cov_mean_est_full}.
In what follows, $\cX^* = \Set{x_1^*, \ldots, x_n^*}$ is an \iid sample from a 1-subgaussian distribution with unknown mean $\mu$ and unknown covariance $\Sigma$.
For simplicity, we assume that $n/2$ is an even integer. 
\begin{mdframed}
    \begin{algorithm}[Robust Covariance-Aware Mean Estimation]
      \label{alg:robust_cov_mean_est}\mbox{}
      \begin{description}
      \item[Input:] $\cX = \{x_1, \ldots, x_n\}$, which is $\eta$-corruption of $\cX^*$ and corruption level $\eta$.
      \item[Output:] Estimate $\hat{\mu}$ of $\mu$ in Mahalanobis norm.
      
      \begin{enumerate}
      \item Let $\cX_{\mathsf{symmetrized}} = \Set{\tfrac 1 {\sqrt{2}} (x_i - x_{i+1}) \mid i = 1, \ldots, n/2, i\text{ odd}}$
      \item Run~\cref{alg:robust_cov_est} on input $\cX_{\mathsf{symmetrized}}$ and $\eta$ to obtain $\hat{\Sigma}$.
      \item Compute $\tilde{\cX} = \Set{\hat{\Sigma}^{-1/2} x \mid x \in \cX}$.
      \item Run $\cA_{\mathsf{robust\_mean}}$ on input $\tilde{\cX}$ and $\eta$ to obtain $\hat{\mu}'$.
      \item Output $\hat{\mu} = \hat{\Sigma}^{1/2}\hat{\mu}'$.
      \end{enumerate}
      \end{description}
    \end{algorithm}
\end{mdframed}

We can now prove~\cref{thm:cov_mean_est_full}
\begin{proof}
    Let $\cX^*_{\mathsf{symmetrized}}$ be obtained from $\cX^*$ in the same way that $\cX_{\mathsf{symmetrized}}$ is obtained from the input.
    Note that $\cX^*_{\mathsf{symmetrized}}$ consists of $n/2$ \iid samples from a mean-zero $s$-hypercontractive-subgaussian distribution with covariance $\Sigma$.
    Further, $\cX^*_{\mathsf{symmetrized}}$ is a $2\eta$-corruption of $\cX_{\mathsf{symmetrized}}$.
    It follows from~\cref{thm:cov_est_full} that, with probability at least $1-\delta$, the estimate $\hat{\Sigma}$ obtained in Step 3 satisfies $(1-s^2\alpha) \Sigma \preceq \hat{\Sigma} \preceq (1+s^2\alpha) \Sigma$.
    Note that this also implies that $(1-O(s^2\alpha)) \Sigma^{-1} \preceq \hat{\Sigma}^{-1} \preceq (1+O(s^2\alpha)) \Sigma^{-1}$.
    Let this be the event $\mathcal{E}$.

    Let $x^{**}_i = \Sigma^{-1/2} x^*_i$ for all $i \in [n]$ and define $\cX^{**}$ to be the analogous set.
    Further define, $\tilde{x}_i^* := \hat{\Sigma}^{-1/2}x_i^* = \hat{\Sigma}^{-1/2}\Sigma^{1/2}x_i^* = \Pi x^{**}_i$ and define $\tilde{\cX}^* = \Pi  \cX^{**}$ to be the analogous set, where $\Pi:=\hat{\Sigma}^{-1/2}\Sigma^{1/2}$.
    On the event $\cE$, we have that $\|\Pi\Pi^\top - I\|_\op \leq O(s^2\alpha)$.
    
    The set $\cX^{**}$ consists of $n$ \iid samples from an isotropic $s$-subgaussian distribution with mean $\mu^{**}$. 
    Applying \Cref{fact:stab_nearly_iso_sub_g},
    with the fact that $\|\Pi \Pi^\top - I \|_\op \leq O(s^2\alpha)$,
    we get that, with probability $1-\delta$, the set $\tilde{\cX}^*$
    is $(\eta, \delta')$-stable with respect to $\tilde{\mu} = \Pi \mu^{**} = \hat{\Sigma}^{-1/2} \mu$
    for $\delta' = s \eta \sqrt{\log(1/\eta)} + \sqrt{\eta s^2 \alpha} + s \sqrt{\frac{\log(1/\delta)}{n}}$, where we use that $n \geq d/(\eta \alpha)$. We denote  this event by $\cE'$.
    The proposed algorithm will be correct on the event $\cE \cap \cE'$, which holds with probability $1-2 \delta$ by a union bound.
    
    Since $\alpha \geq \eta$ and $n$ large enough, we have that $\delta' = s O( \sqrt{\eta \alpha } + \sqrt{\frac{\log(1/\delta)}{n}})$, which is $O(s \sqrt{\eta \alpha})$ under the assumptions on $n$.
    It follows by~\Cref{fact:stability_implies_robust_est} that the output $\hat{\mu}'$ of $\cA_{\mathsf{robust\_mean}}$ in Step 5 satisfies $\norm{\hat{\mu}' - \tilde{\mu}} \leq O( s \new{\sqrt{\eta \alpha}})$.
    Since $s^2\alpha \leq 1/4$,
    we have that $x^\top \Sigma^{-1} x = \Theta(1) x^\top \hat{\Sigma}^{-1}x$ for all $x$.
    Hence,
    \begin{align*}
        \Norm{\Sigma^{-1/2} \Paren{\hat{\mu} - \mu}}^2 &= \Paren{\hat{\mu} - \mu}^\top \Sigma^{-1} \Paren{\hat{\mu} - \mu} \leq O(1) \cdot \Paren{\hat{\mu} - \mu}^\top \hat{\Sigma}^{-1} \Paren{\hat{\mu} - \mu} \\
        &=\Norm{\hat{\Sigma}^{-1/2} \Paren{\hat{\mu} - \mu}}^2 = \Norm{\hat{\mu}' - \tilde{\mu}}^2 \leq O(\delta') \,,
    \end{align*}
    which implies the desired error guarantee.
    The conclusion for the Euclidean norm also follows from this guarantee.
    
\end{proof}

\subsection{Distortion of Random Subspaces}
\label{sec:distortion}

In this section we use our results to show new certificates for the distortion of random subspaces
In this setting, our certificates are (nearly) optimal in terms of the relation of $n$ (ambient dimension), $d$ (subspace dimension) and $\Delta$ (target distortion to be certified), as evidenced by low-degree polynomial lower bounds~\cite{mao2021optimal,chen2022well} (see the discussion at the end of this section).
Recall the definition of the distortion of a subspace $X \sse \R^n$ (cf.~\cref{def:distortion}).
The distortion of $X$, denoted by $\Delta = \Delta(X)$ is defined as
\[
        \Delta(X) \coloneqq \max_{x\in X ,\,x \neq 0} \frac{\sqrt{n} \norm{x}_2}{\norm{x}_1} \,.
\]
The following related definition will be useful for us.
It is taken from~\cite[Defnition 2.10]{guruswami2010almost}.
\begin{definition}
    \label{def:well_spreadness}
    Let $t \in [n], \e > 0$.
    A subspace $X \sse \R^n$ is called \emph{$(t,\e)$-well-spread} if for all $x \in X$ and every set $S \sse [n]$ such that $\card{S} \leq t$, it holds that $\norm{x_{S^c}}_2 \geq \e \norm{x}_2$.\footnote{For a set $T$, we use the notation $x_T$ to denote the $\card{T}$-dimensional vector consisting of the coordinates of $x$ indexed by $T$.} 
\end{definition}
It is known that well-spread subspaces have low-distortion.
In particular, if a subspace $X$ is $(t,\e)$-well-spread, it has distortion at most $\tfrac{\sqrt{n}}{\sqrt{t} \e^2}$ (cf.~\cite[Lemma 2.11]{guruswami2010almost}).
As we will see, our certificates for resilience, immediately certify an upper bound on the well-spreadness of subspaces.

We can now state our main theorem.
All subspaces will be represented as the columnspan of matrices.
\begin{theorem}[Full Version of~\cref{thm:distortion-us}]
    \label{thm:distortion_full}
    Let $n,d,p \in \N$ and $1 \leq \Delta \leq \sqrt{n}$ such that $d \leq n^{1-5/(p+5)}/\poly(\log n)$.
    There exists an algorithm $\cA$ that takes as input a matrix $X \in \R^{n\times d}$, runs in time $n^{O(p)}$, and outputs a real number $\tau$ such that
    \begin{itemize}
        \item The distortion of the columnspan of $X$ is at most $\tau$ (always).
        \item If the entries of $X$ are \iid $\cN(0,1)$\footnote{Using~\cref{cor:high-prob-sos-resilience-subgaussian}, we could also replace this by the condition that the entries of $X$ are mean-zero jointly subgaussian.}, then $\tau \leq \tilde{O}(\frac{d^{1/2 + 5/(2p)}}{n^{1/4}} )$ with probability at least $1-e^{-d^{0.01}}$.
    \end{itemize}
\end{theorem}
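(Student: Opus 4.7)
\textbf{Proof proposal for \cref{thm:distortion_full}.}
The plan is to reduce certifying a bound on $\Delta(X)$ to certifying $(t,\epsilon)$-well-spreadness of the columnspan (\cref{def:well_spreadness}), and then invoke the high-probability operator-norm resilience certificate of \cref{cor:high-prob-sos-resilience-subgaussian}. Recall that if the columnspan of $X$ is $(t,\epsilon)$-well-spread then $\Delta(X) \leq \sqrt{n}/(\sqrt{t}\epsilon^2)$. Setting $t = \eta n$ for $\eta = c\sqrt{n}/(d^{1+5/p}\polylog(n))$ with a sufficiently small constant $c$, and $\epsilon = 1/\sqrt{2}$, would yield the desired bound $\tilde O(d^{1/2+5/(2p)}/n^{1/4})$.

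To certify well-spreadness in SoS, I parametrize any $x$ in the columnspan as $x = Xv$ for $v \in \R^d$ and rewrite the condition as
\begin{align*}
\sum_{i \in S} \langle X_i, v\rangle^2 \;\leq\; (1-\epsilon^2)\,\|Xv\|_2^2 \qquad \forall\, v \in \R^d,\; S \subseteq [n] \text{ with } |S|\leq \eta n,
\end{align*}
where $X_1,\ldots,X_n$ are the rows of $X$. Introducing Boolean indicators $w_i$ for $S$, the left-hand side is controlled directly by \cref{cor:high-prob-sos-resilience-subgaussian}: with probability $1 - \delta$ over the Gaussian rows there is a degree-$O(p)$ SoS proof that $\tfrac{1}{n}\sum_i w_i \langle X_i, v\rangle^2 \lesssim \sosBound \cdot \|v\|_2^2$ under the constraints $\{w_i^2 = w_i,\, \sum_i w_i \leq \eta n,\, \|v\|^2 = 1\}$. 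For the right-hand side, the identity $\|Xv\|_2^2 - \sigma_{\min}(X)^2\|v\|_2^2 = v^\top(X^\top X - \sigma_{\min}(X)^2 I)v$ is a sum of squares in $v$ (since $X^\top X - \sigma_{\min}(X)^2 I \succeq 0$ and $\sigma_{\min}(X)^2$ is a scalar computed deterministically from $X$), so $\|v\|_2^2 \leq \|Xv\|_2^2/\sigma_{\min}(X)^2$ holds in SoS. Classical Gaussian matrix concentration (e.g., Davidson--Szarek) gives $\sigma_{\min}(X)^2 \geq n/4$ with probability $1-e^{-\Omega(n)}$ whenever $n \gtrsim d$, which is guaranteed by our assumption on $d$. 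Combining the two bounds in SoS gives
\begin{align*}
\sum_i w_i\langle X_i, v\rangle^2 \;\lesssim\; \sosBound \cdot \tfrac{n}{\sigma_{\min}(X)^2} \cdot \|Xv\|_2^2 \;\lesssim\; \sosBound \cdot \|Xv\|_2^2,
\end{align*}
which yields $\epsilon^2 \geq 1/2$ once $\sosBound$ is below a small absolute constant.

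To make $\sosBound$ constant-sized, I pick $\eta$ so that both terms in its defining maximum are small: the hypothesis $d \leq n^{1-5/(p+5)}/\polylog(n)$ is exactly what is needed to absorb the $\tilde O(d^{5/(2p)})\sqrt{d/n}$ term, and taking $\eta \approx \sqrt{n}/d^{1+5/p}$ (up to polylogs) absorbs the $\tilde O(d^{5/(2p)})\sqrt{\eta d}/n^{1/4}$ term. Setting $\delta = e^{-d^{0.01}}$ in \cref{cor:high-prob-sos-resilience-subgaussian} keeps the additive $\log(1/\delta)/n$ term negligible. For the deterministic algorithm, I would use \cref{fact:check_sos_proof} to search at degree $O(p)$ in time $n^{O(p)}$ for an SoS proof of the above well-spreadness inequality; on success, output $\sqrt{n}/(\sqrt{\eta n}\epsilon^2)$, and otherwise output the trivial bound $\sqrt{n}$ (which is always valid since $\|x\|_2 \leq \|x\|_1$). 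Soundness of SoS (\cref{fact:pE_and_sos_proofs}) ensures any output is a valid upper bound on $\Delta(X)$, so the always-correct guarantee holds.

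The main technical obstacle I anticipate is the bookkeeping between the $\|v\|_2 = 1$ normalization used by \cref{cor:high-prob-sos-resilience-subgaussian} and the denominator $\|Xv\|_2^2$ that appears in well-spreadness; the fix is to multiply the resilience inequality through by $\sigma_{\min}(X)^{-2}$ (a scalar computed from $X$) before combining, but care is needed to make sure each step remains a genuine SoS proof of polynomial bit-complexity rather than just a true inequality. A secondary subtlety is confirming that the concrete lower bound $\sigma_{\min}(X)^2 \geq n/4$ is \emph{certified} (not merely true w.h.p.) by the algorithm; this is automatic since $\sigma_{\min}(X)$ is numerically computable from the input, and only the success probability of the downstream step depends on this bound holding.
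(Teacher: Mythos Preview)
Your approach is essentially the paper's: reduce distortion to $(t,\epsilon)$-well-spreadness, parametrize the subspace as $x=Xv$, control $\sum_{i\in S}\iprod{X_i,v}^2$ by the operator-norm-resilience certificate, and control $\|Xv\|^2$ from below via a smallest-singular-value check. The choice $\eta\approx\sqrt{n}/d^{1+5/p}$ and the resulting bound $\tilde O(d^{1/2+5/(2p)}/n^{1/4})$ are exactly what the paper obtains.

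The one point of divergence is precisely the ``technical obstacle'' you flag, and the paper's fix is simpler than the one you sketch. You propose searching for an SoS proof of the \emph{homogeneous} inequality $\sum_i w_i\iprod{X_i,v}^2\le(1-\epsilon^2)\|Xv\|^2$, which then forces you to combine the $\|v\|^2=1$-constrained resilience certificate with the PSD fact $\sigma_{\min}(X)^2\|v\|^2\le\|Xv\|^2$ \emph{inside} SoS; as you note, the bookkeeping is awkward because the certificate from \cref{cor:high-prob-sos-resilience-subgaussian} lives under the constraint $\|v\|^2=1$. The paper sidesteps this entirely: it first does a numerical check that $\|X^\top X - nI_d\|_{\op}\le n/5$ (so $\|Xv\|^2\ge 4n/5$ for every actual unit $v$), and then searches for an SoS proof of the \emph{non-homogeneous} inequality $\sum_i w_i\iprod{X_i,v}^2\le n/2$ under the constraint $\|v\|^2=1$. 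Completeness is then immediate from the resilience theorem, and soundness is done outside SoS by combining the two true inequalities for actual unit $v$. So there is nothing to combine inside the proof system, and your anticipated obstacle evaporates. A second, minor difference: rather than fixing $\eta$ in advance, the paper searches over $\eta\in\{1/n,\ldots,1\}$ for the largest value admitting such a certificate, which gives the algorithm a clean deterministic output on every input.
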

In particular, our algorithm certifies that a random $d$-dimensional subspace has distortion at most $\Delta$ as long as $d \leq \tilde{O}(\tfrac{\sqrt{n} \Delta^2} {(\sqrt{n} \Delta^2)^{5/(2p)}})$.
Taking $p =O(\log n)$ we can, in quasi-polynomial time, certify distortion $\Delta$ as long as $d \leq \tilde{O} (\sqrt{n} \Delta^2)$.

The algorithm we use is the following.
\begin{mdframed}
    \begin{algorithm}[Certifying Low Distortion of Random Subspaces]
      \label{alg:cert_distortion}\mbox{}
      \begin{description}
      \item[Input:] A matrix $X \in \R^{n \times d}$ with rows $X_1, \ldots, X_n$.
      \item[Output:] $\tau \in \R_{\geq 0}$, upper bound on the distortion of the columnspan of $X$.
      
      \begin{enumerate}
      \item Compute $\norm{X^\top X - n \cdot I_d}$. If this is larger than $n/5$, output $\tau = \sqrt{n}$ and terminate. Else, continue.
      \item Compute $\eta^*$ as the largest $\eta \in \{1/n, 2/n, \ldots, 1\}$ such that there exists an SoS proof, in scalar-valued variables $w_1, \ldots, w_n$ and vector-valued variable $v$ (in $d$ dimensions), that
        \[
            \Set{ w_i^2 = w_i \,, \sum_{i=1}^n w_i \leq \eta n \,, \norm{v}^2 = 1} \proves_{O(p)}^{w_i, v} \sum_{i=1}^n w_i \iprod{X_i, v}^2 \leq  n/2 \,.
        \]
        Output $\tau = \sqrt{n}$ and terminate if no such $\eta$ exists.
        \item Output $\tau = 4/\sqrt{\eta^*}$.
      \end{enumerate}
      \end{description}
    \end{algorithm}
\end{mdframed}

\begin{proof}[Proof of~\cref{thm:distortion_full}]
    We first argue that~\cref{alg:cert_distortion} runs in time $n^{O(p)}$.
    Indeed, this follows since we can decide the existence of an the SoS proofs in Step 2 in time $n^{O(p)}$.
    
    We next show soundness, i.e., that $\tau$ is always an upper bound on the distortion.
    In case we output $\tau = \sqrt{n}$, this is trivial since the distortion can be at most $\sqrt{n}$.
    Let $\eta^*$ be the largest value in $\{1/n, \ldots, 1\}$ such that there is an SoS proof that 
    \[
        \Set{ w_i^2 = w_i \,, \sum_{i=1}^n w_i \leq \eta^* n \,, \norm{v}^2 = 1} \proves_{O(p)}^{w_i, v} \sum_{i=1}^n w_i \iprod{X_i, v}^2 \leq  n/2 \,.
    \]
    To argue the remaining cases, we will show that the columnspan of $X$ is always $(\eta^* n, \tfrac 1 2)$-well-spread (we remark that the choice of $1/2$ is arbitrary here).
    By~\cite[Lemma 2.11]{guruswami2010almost} this implies that the distortion is at most $\tfrac{4\sqrt{n}}{ \sqrt{\eta^* n}} = 4/\sqrt{\eta^*} = \tau$.
    Let $x = X v$, for some arbitrary unit vector $v \in \R^d$.
    Let $S \sse [n]$ be the set of the $\eta^* n$ largest coordinates of $x$.
    Then, it is enough to show that $\norm{x_{S^c}}_2 \geq \tfrac 1 2 \norm{x}_2$.
    Which is equivalent to showing that $\norm{x_S}_2^2 \leq 3/4 \norm{x}_2^2$.
    Note that by the existence of the SoS proof and since $S$ has size $\eta^* n$, we know that $\norm{x_S}_2^2 = \sum_{i \in S} \iprod{X_i, v}^2 \leq n/2$.
    Further, by the condition checked in Step 1, it follows that $\norm{x}^2 = v^\top (X^\top X - n \cdot I_d) v + n \geq (4/5) n$.
    Thus, $3/4 \norm{x}^2 \geq (3/5) n \geq n/2 = \norm{x_S}_2^2$.
    It follows that $\tau$ always is a valid upper bound on the distortion.

    Next, assume that the entries of $X$ are \iid $\cN(0,1)$.
    We show that $\tau \leq \tilde{O}(\frac{d^{1/2 + 5/(2p)}}{n^{1/4}} )$ with probability at least $1-e^{-d^{0.01}}$.
    For simplicity, we show it when the failure probability is $2e^{-d^{0.01}}$.
    First, by standard concentration bounds (cf., e.g.~\cite[Theorem 4.6.1 and Equation 4.22]{Vershynin18}) it holds that $\norm{X^\top X - n \cdot I_d} \leq O(d) \leq n/5$ with probability at least $1-2\exp(-n)$.
    Thus, the condition in Step 1 holds with that probability. 
    
    Lastly, we will show that with probability at least $1-e^{-d^{0.001}}$ there exists an SoS proof that
    \[
        \Set{ w_i^2 = w_i \,, \sum_{i=1}^n w_i \leq \eta^* n \,, \norm{v}^2 = 1} \proves_{O(p)}^{w_i, v} \sum_{i=1}^n w_i \iprod{X_i, v}^2 \leq  n/2 \,.
    \]
    for $\eta^* = \tilde{\Omega}(\sqrt{n}/d^{1+5/p})$ (rounded to the nearest multiple of $1/n$).
    Indeed, by~\cref{thm:resilience_Gauss_full} we know that, with this probability, we can certify an upper bound of
    \[
        (p \cdot \log n)^{O(1)} \cdot d^{5/(2p)} \cdot \max\Set{\sqrt{\frac d n}, \frac{\sqrt{\eta^* d}}{n^{1/4}}} \cdot n + O(1) \log(1/\delta)\,.
    \]
    Since $d \leq n^{1-5/(p+5)}/\poly(\log n)$, or equivalently, $n \geq \tilde{\Omega}(d^{1+5/p})$, the first term in the maximum is at most $d^{-5/(2p)}/\poly(\log n)$.
    Similarly, for our choice of $\eta^*$, the second term in the maximum is at most $d^{-5/(2p)}/\poly(\log n)$ as well.
    Thus, the whole expression is at most $n/2$.
\end{proof}

\paragraph{Low-degree hardness for obtaining better certificates.}

A natural question is whether our certificates achieve the optimal trade-off between $d,n$, and $\Delta$.
Existing lower bounds, in the framework of low-degree polynomials, suggest that this is indeed the case.\footnote{When compared to the guarantees of the certificates we obtain in quasi-polynomial time we leave it as an open question to give a certification algorithm achieving the same trade-off that runs in polynomial time. This is not ruled out by the lower bounds below.}
See~\cref{sec:low-degree-lower-bounds} for more background on this framework and formal definitions.
For a distinguishing problem, the low-degree conjectures posits that if the so-called \emph{degree-$D$} likelihood ratio stays bounded for $D = \Omega(\polylog n))$, then no polynomial-time algorithm exists.
The prior works of \cite[Theorem 4.5]{mao2021optimal} (see also \cite[Theorem 5.6]{chen2022well}) have shown the following:\footnote{The results in \cite{mao2021optimal,chen2022well} are phrased in the language of the ``sparse vector in a subspace'' problem. We have restated them in our language, see the discussion after the theorem.}
\begin{theorem}
    \label{thm:low-degree_distortion}
    Let $d,n \in \N, \Delta \in [\log^{10} n,\sqrt{n}]$ be such that $d = \tilde{\Omega}(\sqrt{n} \Delta^2)$.
    Given as input a matrix $X \in \R^{n \times d}$, the degree-$\log^{\Omega(1)}(n)$ likelihood ratio of the following distinguishing problem stays bounded.
    \begin{enumerate}
        \item The entries of $X$ are \iid distributed as $\cN(0,1)$.
        \item The columnspan of $X$ has distortion at least $\Omega(\Delta)$ with high probability.
    \end{enumerate}
\end{theorem}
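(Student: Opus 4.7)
The plan is to reduce to the standard low-degree hardness for planted sparse vector, which \cite{mao2021optimal,chen2022well} establish via an explicit planted distribution. Set $\eta = 1/\Delta^2$, so that $d = \tilde{\Omega}(\sqrt{n}/\eta)$, which sits exactly at the conjectural hardness threshold for detecting an $\eta n$-sparse vector in a $d$-dimensional subspace of $\R^n$. I would define the planted distribution as follows: sample a uniformly random subset $T \subseteq [n]$ of size $\eta n$ and a uniform sign pattern $\epsilon \in \{\pm 1\}^T$, set $v^* = (\eta n)^{-1/2} \sum_{i \in T} \epsilon_i e_i$, draw $g^* \sim \cN(0, I_d)$, and then generate $X$ by sampling a $\cN(0,1)^{n \times d}$ matrix conditioned on $X g^* = \|g^*\|^2 \cdot v^*$. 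Equivalently, $X = v^*(g^*)^\top + G (I - g^*(g^*)^\top/\|g^*\|^2)$ where $G$ is an independent copy of the null matrix. By construction $v^* \in \mathrm{colspan}(X)$, and since $\|v^*\|_2 = 1$ while $\|v^*\|_1 = \sqrt{\eta n}$, the columnspan has distortion at least $\sqrt{n}/\sqrt{\eta n} = \Delta$, so this planted distribution satisfies the second alternative.

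Next I would bound the degree-$D$ likelihood ratio $\|L^{\le D}\|_2^2$ between the planted and null distributions for $D = \log^{\Omega(1)} n$. Using the standard Hermite expansion of the planted/null ratio, this collapses to an expectation of the form
\[
\|L^{\le D}\|_2^2 \;=\; \sum_{k=0}^{D} \frac{1}{k!} \, \E\bigl[ \langle v^*, \tilde v^* \rangle^k \cdot \langle g^*, \tilde g^* \rangle^k \bigr] \;+\; (\text{lower order terms}),
\]
where $(v^*, g^*)$ and $(\tilde v^*, \tilde g^*)$ are independent copies. The Gaussian factor contributes $\E[\langle g^*, \tilde g^* \rangle^{2k}] \lesssim (kd)^{k}$, while the sparse factor $\E[\langle v^*, \tilde v^* \rangle^{2k}]$ is a sum over pair-matchings of coordinates in the support intersection, with leading term of order $n^{-k}$ and higher-order corrections controlled by the overlap tail. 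Combining and using $d \lesssim \sqrt{n}/\eta = \sqrt{n}\Delta^2$ up to polylog factors, the ratio of consecutive terms is at most $1/\polylog n$, giving $\|L^{\le D}\|_2^2 = O(1)$.

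The one-line reduction from planted sparse vector to distortion is the easy half; the core of the argument is the moment calculation, which I would mirror on the analogous calculation in \cite{mao2021optimal} (Section 4) or \cite{chen2022well} (Section 5). The main obstacle is the combinatorial bookkeeping for $\E[\langle v^*, \tilde v^* \rangle^{2k}]$: when $k$ is large compared to $\eta n$, the support overlap $|T \cap \tilde T|$ has non-Gaussian upper tails, and matching the threshold $d \asymp \sqrt n \Delta^2$ exactly requires tracking how these tails interact with the Gaussian factor $(kd)^k$. The condition $\Delta \ge \log^{10} n$ ensures $\eta n \ge \log^{20} n$, giving us enough room for the standard concentration estimates on $|T \cap \tilde T|$ to suffice for every $k \le D = \log^{O(1)} n$. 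Once that moment bound is in hand, the rest is a direct summation and assembles into the stated degree-$\log^{\Omega(1)}(n)$ hardness.
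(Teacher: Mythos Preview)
The paper does not prove this statement itself; it is quoted from \cite[Theorem 4.5]{mao2021optimal} and \cite[Theorem 5.6]{chen2022well}, restated in the language of distortion. The paper's own contribution is only the verification that the planted vector used there---with i.i.d.\ entries equal to $0$ with probability $1-1/\Delta^2$ and $\cN(0,\Delta^2/n)$ otherwise---witnesses distortion $\Omega(\Delta)$.

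Your proposed planted model has a genuine gap. You take $v^*$ to be a \emph{unit} vector and condition on $Xg^* = \|g^*\|^2 v^*$, so the column of $X$ in the direction $g^*/\|g^*\|$ equals $\|g^*\|\,v^*$, which has norm $\approx \sqrt d$. Under the null that same column is $\cN(0,I_n)$, with norm $\approx \sqrt n$. Since we are in the regime $d \ll n$, the degree-$2$ polynomial $\|X\|_F^2$ already distinguishes: its mean under your planted law is $nd - (n-d)$ versus $nd$ under the null, a gap of order $n$ against fluctuations of order $\sqrt{nd}$. Hence the low-degree advantage blows up at degree $2$, and the moment bound you sketch cannot hold. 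The construction in \cite{mao2021optimal} is engineered precisely so that the planted column matches the null entrywise in second moment; that is why the Bernoulli--Gaussian prior, rather than a unit-norm $\{0,\pm 1\}$ vector, is used there.

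Two smaller issues: the moment expansion you wrote is the one for the \emph{additive} spike $X = v^*(g^*)^\top + G$, not for your conditioning-based model; under the additive model $v^*$ is generically not in the column span at all, so property 2 would fail instead. And the sentence ``$\Delta \ge \log^{10} n$ ensures $\eta n \ge \log^{20} n$'' is backwards, since $\eta n = n/\Delta^2$ decreases in $\Delta$.
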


In particular, the subspace in the first case has (true) distortion $O(1)$.
Yet, any certification algorithm that certifies an upper bound better than $\Omega(\Delta)$ would solve the distinguishing problem.

The subspace constructed in the second distribution contains a vector $v \in \R^n$, referred to as the ``planted sparse vector'', whose entries are \iid distributed as
\[
    v_i = \begin{dcases}
        0 \,, &\quad \text{with probability $1-\tfrac 1 {\Delta^2}$,} \\
        N\Paren{0,\tfrac {\Delta^2} { n}}\,, &\quad \text{with probability $\tfrac 1 {\Delta^2}$.}
    \end{dcases}
\]
It follows from standard concentration bounds that with high probability $\norm{v}_2 \approx 1$ and
\[
    \norm{v}_1 \approx \tfrac \Delta {\sqrt{n}} \cdot (\text{\#non-zero coordinates of $v$}) \approx \tfrac {\sqrt{n}} \Delta \,.
\]
Thus, the distortion of $X$ is at least $\Omega(\sqrt{n}/(\tfrac {\sqrt{n}} \Delta)) = \Omega(\Delta)$, with high probability.

\subsection{\texorpdfstring{$2 \rightarrow p$}{2-to-p} Norm of Random Matrices}
\label{sec:2_to_p_full}
For a vector $x$, we use $\|x\|_p$ to denote the $\ell_p$-norm of $x$; recall that $\|x\|$ denotes the Euclidean norm.
\begin{definition}[$2 \rightarrow p$ norm]
  Let $A \in \R^{d \times n}$.
  For $p \geq 1$, we define the $2 \rightarrow p$ norm of $A$ as
  \[
  \|A\|_{2 \rightarrow p} = \max_{x \neq 0} \frac{ (\E_{i \sim [n]} |\iprod{A_i,x}|^p)^{1/p}}{ \|x\|} = n^{-1/p} \cdot \max_{x \neq 0} \frac{ \|Ax\|_p}{\|x\|} \, .
  \]
  Here $A_1,\ldots,A_n$ are the rows of the matrix $A$.
\end{definition}

The factor of $n^{-1/p}$ is merely a notational convenience.
Standard concentration bounds readily show that with high probability $\|A\|_{2 \rightarrow p} \leq O(1 + d^{p/2}/n)^{1/p} \approx \sqrt{d} \cdot n^{-1/p}$ for $n$ small enough (the regime of interest in this paper).

\subsubsection{State of the Art}
The following fact captures the state of the art for polynomial-time certifiable bounds on the $2 \rightarrow p$ norm of a random matrix.

\begin{fact}[\cite{barak2012hypercontractivity}]
\label{fact:barak-two-to-four}
  There is a polynomial-time algorithm which takes a $d \times n$ matrix $A$ and outputs an upper bound on $\|A\|_{2 \rightarrow 4}$.
  With high probability over $A \sim \cN(0,1)^{d \times n}$, this bound is at most $O(1+d^2/n)^{1/4}$.
\end{fact}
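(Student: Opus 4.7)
The plan is to realize the algorithm as computing the value of the degree-four sum-of-squares relaxation of $\max_{\|x\|=1}\|Ax\|_4^4$ (solvable in polynomial time) and to argue that, with high probability, this value is at most $O(n+d^2)$, which gives the claimed bound on $\|A\|_{2\to 4}$. Writing $A_1,\ldots,A_n\in\R^d$ for the rows of $A$, the starting identity is
\[
  \sum_{i=1}^n \iprod{A_i,x}^4 \;=\; (x^{\otimes 2})^\top M\, x^{\otimes 2}, \qquad M \;:=\; \sum_{i=1}^n A_i^{\otimes 2}(A_i^{\otimes 2})^\top \in \R^{d^2\times d^2},
\]
so certifying an upper bound on $\|Ax\|_4^4/\|x\|^4$ reduces to certifying a bound on this quadratic form subject to $\|x\|^2=1$.

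Decomposing $M = M_0 + (M-M_0)$ with $M_0 := \E M$, Wick's theorem (\Cref{thm:wick_moment}) yields the exact polynomial identity $(x^{\otimes 2})^\top M_0\, x^{\otimes 2} = n\cdot\E_{g\sim\cN(0,I_d)}\iprod{g,x}^4 = 3n\|x\|^4$, contributing exactly $3n$ under $\|x\|^2=1$. Crucially, although $M_0$ itself has a huge eigenvalue of order $nd$ in the direction $\mathrm{vec}(I_d)/\sqrt d$, this is harmless since we use the polynomial identity and not the spectrum of $M_0$. For the centered part, since $\|x\otimes x\|^2 = \|x\|^4$ is a polynomial identity, the standard SoS spectral bound gives
\[
  \{\|x\|^2=1\}\;\proves_4\; (x^{\otimes 2})^\top (M-M_0)\, x^{\otimes 2} \;\leq\; \|M-M_0\|_{\op}.
\]
Combining yields the deterministic inequality $\|Ax\|_4^4 \leq (3n + \|M-M_0\|_{\op})\,\|x\|^4$ for every $x$, and hence the algorithm simply computes $\|M-M_0\|_{\op}$ in $\poly(n,d)$ time and outputs $(3 + \|M-M_0\|_{\op}/n)^{1/4}$ as an always-valid upper bound on $\|A\|_{2\to 4}$.

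It remains to show $\|M - M_0\|_{\op} \lesssim d\sqrt{n} + d^2$ with high probability, which by AM-GM is $O(n+d^2)$. For a fixed unit $y\in\R^{d^2}$ identified with a matrix $Y\in\R^{d\times d}$ of Frobenius norm $1$, the relevant random variable is $y^\top(M-M_0)y = \sum_{i=1}^n \bigl[(A_i^\top Y A_i)^2 - \E(A_1^\top Y A_1)^2\bigr]$, a sum of i.i.d.\ centered squares of degree-two Gaussian chaoses. I would combine Hanson--Wright concentration with a $1/2$-net of the unit sphere of $\R^{d^2}$ (of cardinality $e^{O(d^2)}$), provided each direction concentrates at rate $\exp(-\Omega(d^2))$ at the target scale.

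The main technical obstacle is handling directions $Y$ close to the ``bad'' direction $I_d/\sqrt d$: there $A_i^\top Y A_i$ has nonzero mean of order $\sqrt d$ and the summand variance inflates by a factor of $d^2$, so naive uniform Bernstein over the whole sphere incurs an extra $\sqrt d$ factor in the tail and fails to close. The fix is to decompose $Y = \alpha\, I_d/\sqrt d + Y_\perp$ with $\mathrm{tr}(Y_\perp)=0$ and to treat the two pieces separately: the $I_d/\sqrt d$ component is handled by scalar concentration of $\sum_i\|A_i\|^4$, a sum of independent squared $\chi^2_d$ variables whose centered fluctuation is $O(d^{3/2}\sqrt n)$ and contributes $O(\sqrt{nd})$ to the quadratic form after dividing by $d$; the traceless part $Y_\perp$ is handled by the uniform sub-exponential tail bound above, now restricted to the $(d^2-1)$-dimensional traceless symmetric subspace, where the summands have $\psi_1$-norm of order $1$ and the $e^{O(d^2)}$ net bound gives the desired $O(d\sqrt{n}+d^2)$ combined estimate.
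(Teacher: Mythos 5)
The paper gives no proof of this fact—it is cited verbatim from \cite{barak2012hypercontractivity}—so I am evaluating your proposal on its own terms and against what is actually needed to establish the stated bound.

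Your first half is sound: writing $\|Ax\|_4^4 = (x^{\otimes 2})^\top M x^{\otimes 2}$, splitting $M = M_0 + (M-M_0)$, using Wick to compute $(x^{\otimes 2})^\top M_0 x^{\otimes 2} = 3n\|x\|^4$ as a polynomial identity (thereby sidestepping the large eigenvalue of $M_0$ along $\mathrm{vec}(I_d)/\sqrt d$), and bounding the remainder in SoS by $\|M-M_0\|_{\op}\|x\|^4$ is exactly the right reduction, and the resulting algorithm is well-defined and always sound. The identification $d\sqrt n + d^2 \lesssim n + d^2$ via AM--GM is also correct, as is your treatment of the $I_d/\sqrt d$ direction: $\sum_i\|A_i\|^4$ does have centered fluctuation $O(d^{3/2}\sqrt n)$, contributing $O(\sqrt{nd})$ after dividing by $d$.

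The gap is in the traceless part. You assert that the summands $(A_i^\top Y_\perp A_i)^2 - \E(\cdot)$ have $\psi_1$-norm $O(1)$, but this is false: $A_i^\top Y_\perp A_i$ is a mean-zero degree-two Gaussian chaos with $\psi_1$-norm $\Theta(\|Y_\perp\|_{\op})$, so its square is $\psi_{1/2}$, not $\psi_1$. Consequently the tail of $\sum_i Z_i$ at scale $t$ is governed by $\exp(-c\min(t^2/n,\,\sqrt t))$; even after truncating $\|A_i\|\lesssim\sqrt d$ so that $|Z_i|\lesssim d^2$ and applying Bernstein, the exponent at the target scale $t\asymp n+d^2$ is $\min(t^2/n,\,t/d^2)$, which is $O(1)$ when $n\lesssim d^2$ and more generally $\ll d^2$ for all $n\lesssim d^4$. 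This is nowhere near enough to survive the union bound over an $e^{\Theta(d^2)}$-size net, so the argument fails on essentially the whole regime $d\leq n\lesssim d^4$—including the entire window $n\in[d,d^2]$ where the Fact is giving nontrivial information. Matrix Bernstein has the same deficiency: it yields $\|M-M_0\|_{\op}\lesssim \sqrt{nd^3}\,\polylog + d^2\polylog$, and $\sqrt{nd^3}\not\lesssim n+d^2$ once $n\gtrsim d$. The sharp estimate $\|M-M_0\|_{\op}\lesssim d\sqrt n + d^2$ is correct, but proving it requires a trace-moment (or graph-matrix) computation in the spirit of \cite{AhnMP16}; indeed, the paper's ``squared inner product'' worked example in the techniques section is precisely the $n\times n$ block of this matrix, and it is presented there to illustrate exactly why a na\"ive net-plus-Bernstein spectral bound is insufficient without the Efron--Stein/graph-matrix decomposition. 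So your plan would need to replace the net argument with a moment/combinatorial estimate to close.
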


Using the Riesz–Thorin theorem, this $2 \rightarrow 4$ bound, combined with the fact that with high probability $\E_{i \sim [n]} \iprod{A_i,x}^2 \approx 1$ for all unit $x$, implies a certification algorithm for the $2 \rightarrow p$ norm for any $p \in [2,4]$, which we capture in the following corollary.

\begin{corollary}[Corollary of \Cref{fact:barak-two-to-four}]
\label{cor:2-to-p-sota}
    Fix $p \in [2,4]$.
    There is a polynomial-time algorithm which takes a $d \times n$ matrix $A$ and outputs an upper bound on $\|A\|_{2 \rightarrow p}$.
    If $n \gg d$, with high probability over $A \sim \cN(0,1)^{d \times n}$, this bound is at most $O(1+d^2/n)^{1/2 - 1/p}$.
\end{corollary}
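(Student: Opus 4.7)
The plan is to build the algorithm by combining \Cref{fact:barak-two-to-four} at the endpoint $p=4$ with a straightforward computation of the maximum singular value at the endpoint $p=2$, and then interpolate between the two bounds using the Riesz--Thorin theorem.

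More concretely, the algorithm would proceed in three steps. First, I would compute $B_2 := n^{-1/2} \, \|A\|_\op$ exactly by singular value decomposition; this is a valid upper bound on $\|A\|_{2 \to 2}$ (in fact an equality) and is polynomial-time computable. Second, I would invoke the polynomial-time algorithm of \Cref{fact:barak-two-to-four} on input $A$ to obtain an upper bound $B_4$ on $\|A\|_{2 \to 4}$. Third, setting
\[
\theta := 2 - \tfrac{4}{p} \in [0,1] \quad \text{for } p \in [2,4],
\]
so that $\tfrac{1}{p} = \tfrac{1-\theta}{2} + \tfrac{\theta}{4}$, I would output
\[
B_p := B_2^{1-\theta} \cdot B_4^{\theta}.
\]

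For correctness, applying the Riesz--Thorin interpolation theorem to the linear map $A \colon (\R^d, \|\cdot\|_2) \to (\R^n, L^q(n^{-1} \cdot \mathrm{count}))$ for $q \in \{2,4\}$ gives
\[
\|A\|_{2\to p} \;\leq\; \|A\|_{2 \to 2}^{1-\theta} \cdot \|A\|_{2 \to 4}^{\theta} \;\leq\; B_2^{1-\theta} \cdot B_4^{\theta},
\]
so $B_p$ is always a valid upper bound on $\|A\|_{2 \to p}$. For the probabilistic estimate, standard Gaussian matrix concentration (e.g., \cite{Vershynin18}) yields $\|A\|_\op \leq \sqrt{n} + \sqrt{d} + O(\sqrt{\log n})$ with high probability, so when $n \gg d$ we have $B_2 \leq 1 + o(1)$. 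Combined with $B_4 \leq O(1 + d^2/n)^{1/4}$ from \Cref{fact:barak-two-to-four}, and using that $\theta/4 = 1/2 - 1/p$, we get
\[
B_p \;\leq\; (1+o(1))^{1-\theta} \cdot O\!\left(1 + \tfrac{d^2}{n}\right)^{\theta/4} \;=\; O\!\left(1 + \tfrac{d^2}{n}\right)^{1/2 - 1/p}.
\]

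There is no real obstacle here: the argument is essentially the one sketched in the paragraph immediately preceding the corollary. The only point worth being careful about is that the Riesz--Thorin theorem must be applied with the \emph{normalized} counting measure on $[n]$, which is the measure implicit in the definition of $\|A\|_{2\to q}$; with that normalization, Jensen's inequality also ensures $\|A\|_{2\to 2}$ is bounded by $\|A\|_{2\to 4}$, so no inconsistency arises at the endpoints. The algorithm is polynomial time because both endpoint computations are, and the final combination is an explicit arithmetic expression.
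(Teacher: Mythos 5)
Your proposal is correct and matches the paper's approach: the paper derives this corollary by Riesz--Thorin interpolation between the $2\to 4$ bound of \Cref{fact:barak-two-to-four} and the observation that $\|A\|_{2\to 2} = n^{-1/2}\|A\|_{\op} \approx 1$ when $n \gg d$, exactly as you do. Your write-up merely fleshes out the paper's one-sentence sketch (the interpolation parameter $\theta = 2 - 4/p$, the use of normalized counting measure so that the $L^q$ norms are nondecreasing in $q$, and the resulting consistency at the endpoints), and all of those details are correct.
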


Except if $p=4$ or if $n = \Omega(d^2)$, the bound $(1 + d^2/n)^{1/2 - 1/p}$ is asymptotically larger than the true value of $\|A\|_{2 \rightarrow p}$.
Our result brings the certifiable bound asymptotically closer to the true value of $\|A\|_{2 \rightarrow p}$ for every $p \in (2,4)$.

\subsubsection{Our Result and Proof}
We improve on the bound $(1 + d^2/n)^{1/2 - 1/p}$ from \Cref{cor:2-to-p-sota} using \Cref{thm:sparse_sing_val_full}, by showing:
\begin{theorem}
\label{thm:2-to-p-us}
    Fix $p \in [2,4]$. For every $\eps > 0$, there is a polynomial-time algorithm which takes a random matrix $A \sim \cN(0,1)^{d \times n}$ and outputs an upper bound on $\|A\|_{2 \rightarrow p}$.
    If $n \gg d$, this bound is with high probability at most
    \begin{align}
    \label{eq:2-to-p-cert-value-us}
    \tilde{O} \Paren{ 1+ \frac {d^{2+\eps}}{n}}^{\frac 14 - \frac 1 {2p}} + \tilde{O} \Paren{ \sqrt{d} \cdot n^{-1/p} } \, .
    \end{align}
    If $\eps = 0$, the bound can be certified in quasipolynomial time.
\end{theorem}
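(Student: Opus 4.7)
The plan is to exploit the sparse singular value certificate of \Cref{thm:sparse_sing_val_full} to certify a polynomial-time bound on $\|A\|_{2\to 4}$, and then interpolate this with the efficiently-computable spectral norm $\|A\|_{\op} = \sqrt n \cdot \|A\|_{2\to 2}$ via log-convexity of $L^p$-norms to obtain a bound on $\|A\|_{2\to p}$ for intermediate $p \in (2,4)$. This gives an improvement over \Cref{cor:2-to-p-sota} by effectively replacing the existing degree-$4$ certificate for $\|A\|_{2\to 4}$ (which gives exponent $1/4$ on $d^2/n$) with our new higher-degree certificate (which shaves a factor of two off the exponent).

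The first step will be to invoke \Cref{thm:sparse_sing_val_full} (equivalently \Cref{thm:resilience_Gauss_full}) with SoS degree $q = O(1/\eps)$, certifying that for all $S \subseteq [n]$ with $|S| = \eta n$ and all unit vectors $v$, $\tfrac{1}{n}\sum_{i \in S}\iprod{A_i,v}^2 \leq \sosBound(\eta)$, where with high probability $\sosBound(\eta) \leq \tilde O(d^{1/2+\eps/4}\sqrt{\eta}/n^{1/4}) + O(\sqrt{\eta})$. Next, for each fixed unit $v$, I bucket the rows by dyadic magnitude, $B_k(v) = \{i : 2^k \leq \iprod{A_i,v}^2 < 2^{k+1}\}$, and apply the certified bound to $S = B_k(v)$. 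Solving the resulting scalar inequality $2^k|B_k| \leq n \cdot \sosBound(|B_k|/n)$ for $|B_k|$ yields $|B_k| \leq \tilde O((d^{1+\eps/2} n^{1/2} + n)/4^k)$, and summing across the $O(\log n)$ relevant $k$ gives
\[
\|Av\|_4^4 \;=\; \sum_i \iprod{A_i,v}^4 \;\leq\; \sum_k 4^{k+1}|B_k(v)| \;\leq\; \tilde O\!\bigl(d^{1+\eps/2} n^{1/2}\bigr) + \tilde O(n),
\]
uniformly in $v$. Dividing by $n$ then certifies $\|A\|_{2\to 4}^4 \leq \tilde O((1 + d^{2+\eps}/n)^{1/2})$ in polynomial time.

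With this $2\to 4$ certificate in hand, log-convexity of $L^p$-norms yields $\|Av\|_p^p \leq \|Av\|_2^{4-p}\|Av\|_4^{2p-4}$ for $p \in [2,4]$. Substituting $\|Av\|_2 \leq \|A\|_{\op} \leq \sqrt n + O(\sqrt d)$ (directly computable in polynomial time, sharp w.h.p.\ for Gaussian $A$) together with the bound on $\|Av\|_4^4$, dividing by $n$, and taking $p$-th roots delivers the claimed bound, with the second term $\tilde O(\sqrt{d}\cdot n^{-1/p})$ emerging from the $O(\sqrt d)$ correction in $\|A\|_{\op}$ after binomial expansion. The quasipolynomial-time variant at $\eps = 0$ follows by instead choosing $q = \Theta(\log d)$, yielding running time $n^{O(\log d)}$. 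The main technical step is the bucket analysis for $\|A\|_{2\to 4}$; the subtlety is that the buckets $B_k(v)$ depend on $v$, but because the SoS proof in \Cref{thm:sparse_sing_val_full} is uniform in both the indicator variables $w$ and the direction $v$, the per-$v$ bound on $\|Av\|_4^4$ is automatically uniform over $v$, and no additional SoS machinery beyond \Cref{thm:sparse_sing_val_full} is needed. The log-convexity step is purely deterministic real-analytic postprocessing.
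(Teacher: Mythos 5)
There is a genuine gap in the bucketing step, and a second, independent issue with the interpolation step.

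\paragraph{Gap in the bucketing.} When you solve $2^k |B_k| \leq n\cdot \sosBound(|B_k|/n)$ you implicitly assume the binding term in $\sosBound(\eta)$ is the one that scales like $\sqrt\eta$, namely $(p\log n)^{O(1)} d^{5/(2p)}\cdot \sqrt{\eta d}/n^{1/4}$, which gives the advertised $|B_k|\lesssim 4^{-k}(d^{1+\eps/2}\sqrt n + n)$. But the definition of $\sosBound$ also contains the $\eta$-\emph{independent} floor $(p\log n)^{O(1)} d^{5/(2p)}\sqrt{d/n}$, and this term dominates whenever $\eta \lesssim 1/\sqrt n$, i.e.\ whenever $|B_k| \lesssim \sqrt n$. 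In that regime the inequality only yields $|B_k| \lesssim d^\eps\sqrt{dn}/2^k$, a $2^{-k}$ rather than $4^{-k}$ decay. Since $2^k$ can be as large as $\tilde O(d)$ (because $\iprod{A_i,v}^2\leq\|A_i\|^2\leq\tilde O(d)$), summing $4^{k+1}|B_k|$ over this tail gives a contribution $\sum_k 2^{k}\cdot d^{\eps}\sqrt{dn}\approx d^{1+\eps}\cdot\sqrt{dn} = d^{3/2+\eps}\sqrt n$, which overwhelms your claimed $\tilde O(d^{1+\eps/2}\sqrt n)$. The paper gets around this precisely by \emph{not} relying on the SSV certificate in the small-$\eta$/large-$t$ regime: instead it uses the pairwise inner-product (Gershgorin-type) argument in equation \cref{eq:2-to-p-proof-2}, which shows $\Pr_{i\sim[n]}(\iprod{A_i,v}^2\geq t)\leq\tilde O(d/(tn))$ for $t\gg\sqrt d$. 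That bound is what controls the large-$t$ tail and produces the $\tilde O(d^2/n)$ (i.e.\ the $\tilde O(\sqrt d\, n^{-1/p})$) contribution. Your proposal never invokes anything like this and cannot close the gap using \cref{thm:sparse_sing_val_full} alone.

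\paragraph{Interpolation loses the second term.} Separately, even if you had a correct bound $\|A\|_{2\to 4}\leq\tilde O(1+d^{2+\eps}/n)^{1/8}+\tilde O(\sqrt d\, n^{-1/4})$, Riesz--Thorin/log-convexity gives $\|A\|_{2\to p}\leq \|A\|_{2\to 2}^{4/p-1}\|A\|_{2\to 4}^{2-4/p}$, and the contribution of the row-norm term becomes $(\sqrt d\, n^{-1/4})^{2-4/p}=d^{1-2/p}n^{1/p-1/2}$. For $n\gg d$ and $p<4$ this is strictly larger than the theorem's claimed $\sqrt d\, n^{-1/p}$, since their ratio is $(n/d)^{2/p-1/2}>1$. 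Your suggestion that the second term ``emerges from the $O(\sqrt d)$ correction in $\|A\|_{\op}$ after binomial expansion'' is also not right: the correction from $\|A\|_{2\to 2}=1+O(\sqrt{d/n})$ is of order $\sqrt{d/n}$, not $\sqrt d\, n^{-1/p}$. The paper avoids both problems by writing $\|A\|_{2\to p}^p=\max_{\|v\|=1}\int_0^\infty\Pr_{i\sim[n]}(|\iprod{A_i,v}|^p\geq t)\,dt$, splitting the integration range into pieces, and applying, on each piece, whichever of three tail bounds (Markov on $\|A\|_{2\to 2}$, the resilience certificate, the Gershgorin-type bound) is sharpest. That direct per-$p$ treatment is what recovers the sharp $\sqrt d\, n^{-1/p}$ term; interpolating from the $p=4$ endpoint gives it away.
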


Note that the second term in the bound of \Cref{thm:2-to-p-us} appears in the true value of the $2 \rightarrow p$ norm, up to logarithmic factors.
Up to $d^{\eps}$, the first term is the square root of the previous state-of-the-art certifiable bound.

\begin{proof}[Proof of \Cref{thm:2-to-p-us}]
    We prove the $\eps > 0$ case; the $\eps = 0$ case follows the same argument.
    By standard concentration of measure combined with \Cref{thm:resilience_Gauss_full}, for every $\eps > 0$ there is a polynomial-time algorithm which with high probability over $A$ certifies that the following all hold:
    \begin{enumerate}
        \item $|\iprod{A_i,A_j}| \leq \tilde{O}(\sqrt d)$ for all $i \neq j$
        \item $\|A_i\|^2 \leq \tilde{O}(d)$ for all $i$
        \item $\|A\|_{2 \rightarrow 2} \leq O(1)$
        \item For every $\eta \in [0,1]$, for every $S \subseteq [n]$ with $|S| = \eta n$, and for every unit vector $v \in \R^d$, it holds that
        \[
        \frac 1 n \sum_{i \in S} \iprod{A_i,v}^2 \leq \tilde{O} \Paren{ \frac{\sqrt{\eta} \cdot d^{1/2 + \eps}}{n^{1/4}} + d^{\eps} \cdot \sqrt{\frac d n }} \, .
        \]
    \end{enumerate}
    Hence, it will suffice to show that for every $A \in \R^{d \times n}$ satisfying the above conditions, $\|A\|_{2 \rightarrow p}$ obeys the bound in \Cref{eq:2-to-p-cert-value-us}.
    We start by developing several facts which hold for any such $A$.

    First of all, by Markov's inequality and the fact that $\|A\|_{2 \rightarrow 2} \leq O(1)$, for any unit vector $v$ and any $t \geq 0$,
    \begin{align}
    \label{eq:2-to-p-proof-1}
    \Pr_{i \sim [n]} ( \iprod{A_i,v}^2 \geq t ) \leq O \Paren{ \frac 1 {t^2} } \, .
    \end{align}

    Second, for any $t \gg \sqrt{d}$,
    \begin{align}
    \label{eq:2-to-p-proof-2}
    \Pr_{i \sim [n]} ( \iprod{A_i,v}^2 \geq t ) \leq  \tilde{O}\Paren{ \frac d {tn}} \, .
    \end{align}
    To see \Cref{eq:2-to-p-proof-2}, note that if $\Pr_{i \sim [n]} ( \iprod{A_i,v}^2 \geq t ) \geq \eta$, then there is a set $S \subseteq [n]$ with $|S| = \eta n$ such that $\sum_{i \in S} \iprod{A_i,v}^2 \geq t \eta n$.
    At the same time, $\sum_{i \in S} \iprod{A_i,v}^2$ is at most the maximum eigenvalue of $\eta n \times \eta n$ matrix whose entries are $\iprod{A_i,A_j}$ for $i,j \in S$.
    This maximum eigenvalue is at most the maximal $\ell_1$ norm of any row of the matrix, which is at most $\tilde{O}(d + \eta n \sqrt{d})$, since we have assumed $\|A_i\|^2 \leq \tilde{O}(d)$ and $|\iprod{A_i,A_j}| \leq \tilde{O}(\sqrt{d})$ if $i \neq j$.
    Putting these inequalities together gives \Cref{eq:2-to-p-proof-2}.

    Third, by assumption, for any $t \geq 0$,
    \begin{align}
    \label{eq:2-to-p-proof-3}
    \Pr_{i \sim [n]} ( \iprod{A_i,v}^2 \geq t ) \leq  \tilde{O}\Paren{ d^{\eps} \cdot \frac d {\sqrt n} \cdot t^{-2} + d^\eps \cdot \sqrt{\frac d n} \cdot t^{-1} } \, .
    \end{align}
    
    Next, we write
    \[
    \|A\|_{2 \rightarrow p}^p = \max_{\|v\|=1} \E_{i \sim [n]} |\iprod{A_i,v}|^p = \max_{\|v\|=1} \int_{0}^\infty \Pr_{i \sim [n]} (|\iprod{A_i,v}|^p \geq t) \, dt \, .
    \]
    Our strategy is to break this integral into several pieces and use different bounds on each piece.
    In particular,
    \[
    \int_{0}^\infty = \int_0^1 + \int_{1}^{(d/\sqrt{n})^{p/2}} + \int_{(d/\sqrt{n})^{p/2}}^{d^{p/4}} + \int_{d^{p/4}}^{\tilde{O}(d^{p/2})} + \int_{\tilde{O}(d^{p/2})}^\infty \, .
    \]
    Since we are integrating a probability, the integrand is always at most $1$, so the first of the integrals will be at most $1$.
    And, by assumption, for every $i$ we have $\iprod{A_i,v}^p \leq \|A_i\|^p \leq \tilde{O}(d^{p/2})$, so the last of the integrals is $0$.
    Dropping the $dt$ for compactness, we get
    \begin{align*}
    \int_{0}^\infty \Pr_{i \sim [n]} (|\iprod{A_i,v}|^p \geq t)  & \leq 1 +
    \underbrace{\int_{1}^{(d/\sqrt{n})^{p/2}} (\cdots)}_{A} + \underbrace{\int_{(d/\sqrt{n})^{p/2}}^{d^{p/4}} (\cdots)}_{B} + \underbrace{\int_{d^{p/4}}^{\tilde{O}(d^{p/2})} (\cdots)}_{C} \;,
    \end{align*}
    where $(\cdots) = \Pr_{i \sim [n]} (|\iprod{A_i,v}|^p \geq t) \, dt$.

    Now we need to bound the three integrals $A,B,C$.
    We start with $A$.
    By \Cref{eq:2-to-p-proof-1},
    \begin{align*}
    \int_1^{(d/\sqrt{n})^{p/2} }\Pr_{i \sim [n]} (|\iprod{A_i,v}|^p \geq t) \, dt &\leq O(1) \cdot \int_{1}^{(d/\sqrt{n})^{p/2} } t^{-2/p} \, dt \leq O(1+d/\sqrt{n})^{(p/2)(1-2/p)} \\
    &= O(1) + O(d/\sqrt{n})^{p/2 -1}\, .
    \end{align*}
    Now we use \Cref{eq:2-to-p-proof-3} to bound $B$.
    Since we are using \Cref{eq:2-to-p-proof-3} in the regime that the term $d^{\eps} \cdot \tfrac d {\sqrt n} \cdot t^{-2}$ dominates, we obtain
    \begin{align*}
    \int_{(d/\sqrt{n})^{p/2}}^{d^{p/4}} \Pr_{i \sim [n]} (|\iprod{A_i,v}|^2 \geq t^{2/p}) & \leq \tilde{O} \Paren{ d^{\eps} \cdot \frac d {\sqrt n} \cdot \int_{(d/\sqrt{n})^{p/2}}^{d^{p/4}} t^{-4/p}}  \\
    & \leq \tilde{O} \Paren{d^{\eps} \cdot \frac d {\sqrt n} \cdot (d/\sqrt{n})^{(p/2)(1-4/p)}} \\
    & \leq \tilde{O} \Paren{d^{\eps} \cdot \Paren{\frac d {\sqrt n}}^{p/2 -1}} \, .
    \end{align*}

    Next, we bound $C$ using \Cref{eq:2-to-p-proof-2}, obtaining
    \begin{align*}
        \int_{d^{p/4}}^{\tilde{O}(d^{p/2})} \Pr_{i \sim [n]} (|\iprod{A_i,v}|^2 \geq t^{2/p}) \leq \tilde{O} \Paren{ \frac d n \cdot \int_{d^{p/4}}^{\tilde{O}(d^{p/2})} t^{-2/p} }
        \leq \tilde{O} \Paren{ \frac d n \cdot d^{(p/2)(1-2/p)}}  \, .
    \end{align*}

    All together, we have obtained:
    \[
     \int_{0}^\infty \Pr_{i \sim [n]} (|\iprod{A_i,v}|^p \geq t) \leq O(1) + d^{\eps} \cdot \tilde{O} \Paren{ \frac d {\sqrt{n}}}^{p/2-1}  +  \tilde{O} \Paren{ \frac d n \cdot d^{(p/2)(1-2/p)}}  \, .
    \]

    This gives us
     \[
     \int_{0}^\infty \Pr_{i \sim [n]} (|\iprod{A_i,v}|^p \geq t) \leq \tilde{O} \Paren { d^{\eps p / 4} \cdot \Paren{\frac d {\sqrt n} }^{p/4} + \frac d n \cdot d^{p/2 - 1}}  \, .
    \]
    Taking the $p$-th root finishes the proof.
\end{proof}

\subsubsection{Sparse Principal Component Analysis}
In sparse principal component analysis, or \emph{sparse PCA}, the goal is to find one or more sparse directions $v \in \R^n$ which have large variance with respect to a high-dimensional dataset $X_1,\ldots,X_N \in \R^n$.
We may assume that $\E_{i \sim [n]} X_i = 0$, in which case this problem is mathematically identical to finding the maximizer(s) in \Cref{prob:sparse-sing-vect} where $M \in \R^{N \times n}$ has rows $X_1,\ldots,X_N$.\footnote{Here we have switched notation to use ``$N$'' in place of ``$d$'' to avoid the risk of confusion from using ``$d$'' as the number of samples in a dataset rather than its dimension.}

Since sparse PCA is hard to approximate for worst-case datasets \cite{chan2016approximability}, most recent work focuses on average-case variants.
Average-case studies of sparse PCA typically involve data sampled from the following \emph{sparse spiked covariance} model.

\begin{definition}[Sparse spiked covariance \cite{johnstone2009consistency}]
  To draw $N$ samples from the $\eta$-sparse spiked covariance model in $n$ dimensions with signal-to-noise ratio $\beta > 0$, first draw a random unit vector $v$ conditioned on having $\eta n$ nonzero coordinates, then draw $X_1,\ldots,X_N \overset{\text{\iid}}{\sim} \cN(0, I + \beta vv^\top)$.
\end{definition}

There are several flavors of average-case sparse PCA, in roughly increasing order of difficulty:
\begin{itemize}[leftmargin=*]
\item \textbf{Testing:} Distinguish $N$ samples from the sparse spiked covariance model from $N$ samples drawn \iid from $\cN(0,I)$.
\item \textbf{Recovery (a.k.a. estimation):} Given $N$ samples from the sparse spiked covariance model, find a unit vector $\hat{v}$ such that $\iprod{v,\hat{v}}^2 \geq 1-o(1)$.
\item \textbf{Certification:} Given $N$ samples from $\cN(0,I)$, output a certificate that no $\eta$-sparse unit vector $v$ achieves $\tfrac 1 N \sum_{i \leq N} \iprod{X_i,v}^2 \geq \Omega(\beta)$. 
The certification variant is simply \Cref{prob:sparse-sing-vect} in the Gaussian matrix case.
\end{itemize}

\noindent For all three variants, the main algorithmic question is: for which $\beta = \beta(N,n,\eta)$ can the problem be solved in polynomial time?

\paragraph{Relationship among variants.}
Algorithms for recovery and certification can in particular be used for testing, while recovery and certification are typically thought to be incomparable.
However, SoS-based certification algorithms can be used not only to solve the recovery problem, but also to solve a robust form of it, where an adversary is allowed to make entrywise-bounded perturbations to $X_1,\ldots,X_N$ \cite{d2020sparse}.

\paragraph{Our Setting: Moderate samples, moderate sparsity.}
The theoretical literature on sparse PCA is extensive, and we do not attempt a full survey.
Instead, we narrow our attention to the range of $\eta,n,$ and $d$ where our algorithm improves on the state of the art.
First of all, the best known algorithms for sparse PCA differ depending on whether $\eta \ll 1/\sqrt{n}$ or $\eta \gg 1/\sqrt{n}$.
In the former ``very sparse'' case, entrywise thresholding applied to the sample covariance matrix is a useful tool for identifying the nonzero coordinates of $v$; this technique is used in the \emph{diagonal thresholding} and \emph{covariance thresholding} algorithms of \cite{johnstone2009consistency,deshp2016sparse}, which both correspond roughly to \cref{thm:cert-inner-products}.
We restrict our attention to this ``moderate sparsity'' case, where $1 \gg \eta \gg 1/\sqrt{n}$.
Additionally, we restrict attention to the ``moderate samples'' case, where $\sqrt{n}\ll N \ll n$, as the algorithmic landscape of sparse PCA also changes when $N \gg n$ or $N \ll \sqrt{n}$.

\paragraph{Summary of prior work.}  
In this moderate sparsity, moderate samples, a polynomial-time algorithm is known for the certification problem only when $\beta \gg \sqrt{n \eta}$; this algorithm corresponds to \Cref{thm:t-th-moment-cert}.
However, polynomial-time algorithms for the testing and recovery variants are known which allow smaller $\beta$, namely, $\beta \gg \sqrt{\eta} n^{3/4} / \sqrt{N}$.
(One may check that the latter improves on $\sqrt{n \eta}$ in the moderate-samples moderate-sparsity regime.)
These algorithms are implicit in \cite{mao2021optimal}.

\paragraph{Consequences of \Cref{thm:main-intro} for sparse PCA.}
We improve the state of the art for sparse PCA certification in the moderate-samples moderate-sparsity regime almost to match the state of the art for testing and recovery.
In particular, \Cref{thm:sparse_sing_val_full}, the quantitative version of \Cref{thm:main-intro}, immediately implies that for every $\eps > 0$ there is a polynomial-time algorithm for the certification variant of sparse PCA whenever $\beta \gg \sqrt{\eta} n^{3/4} / N^{1/2 - \eps}$, and a quasipolynomial time algorithm when $\beta \gg \sqrt{\eta} n^{3/4} / \sqrt{N}$.

In addition to bring the state of the art for certification in line with the easier variants, this has (at least) two significant consequences for recovery and testing.
First, our algorithm expands the range of $\beta$ for which recovery and testing can be performed in polynomial time in the presence of adversarial entry-wise perturbations of $X_1,\ldots,X_N$ \cite{d2020sparse}.
Second, the recovery and testing algorithms of \cite{mao2021optimal} work only for Gaussian distributuions, while our algorithm works whenever $X_1,\ldots,X_N$ are (centered) subgaussian.

\section{Low-Degree Lower Bounds}
\label{sec:low-degree-lower-bounds}
In this section,
we state the computational lower bounds for the family of low-degree polynomial tests.
We first present the necessary background in \Cref{sec:low-degree-lower-bound-background}.
We state robust covariance estimation for Gaussian data in \Cref{sec:low-degree-cov-est-resilience-gaussian},
covariance-aware mean estimation for Gaussian data in \Cref{sec:low-degree-cov-mean-est},
and covariance-aware mean estimation for subgaussian data in \Cref{sec:low-degree-cov-est-subgaussian}.

\subsection{Background on Low-Degree Method \& Non-Gaussian Component Analysis}
\label{sec:low-degree-lower-bound-background}
We will construct hard instances using the Non-Gaussian Component Analysis (NGCA) from \cite{DiaKS17}. 
We begin by defining the high-dimensional hidden direction distribution: 
\begin{definition}[High-Dimensional Hidden Direction Distribution] \label{def:high-dim-distribution}
For a unit vector $v \in \R^d$ and a distribution $A$ on the real line with probability density function $A(x)$, define $P_{A,v}$ to be a distribution over $\R^d$, where $P_{A,v}$ is the product distribution whose orthogonal projection onto the direction of $v$ is $A$, 
and onto the subspace perpendicular to $v$ is the standard $(d{-1})$-dimensional normal distribution. 
That is, the density of $P_{A,v}$ satisfies $P_A(x) := A(v^\top x) \phi_{\bot v}(x)$, where $\phi_{\bot v}(x) = \exp\left(-\|x - (v^\top x)v\|_2^2/2\right)/(2\pi)^{(d-1)/2}$.
\end{definition}
The Non-Gaussian Component Analysis (NGCA) testing problem is defined as follows: 
\begin{problem}[Non-Gaussian Component Analysis]
\label{prob:generic_hypothesis_testing}
Let $A$ be a distribution over $\R$. 
For the sample size $n$ and dimension $d$, we define the following hypothesis testing problem with input $(y_1,\dots,y_n)$ with each $y_i \in \R^d$:
\begin{itemize}
\item $H_0$: $y_1,\dots, y_n$ are sampled i.i.d.\ from $\cN(0,\bI_d)$.
\item $H_1$: First a unit vector $v$ is sampled randomly from $\cS^{d-1}$, and then conditioned on $v$, 
$y_1,\dots, y_n$ are sampled i.i.d.\ from $P_{A,v}$.
\end{itemize}
\end{problem}
The computational lower bounds for NGCA were first developed for the statistical query algorithms in \cite{DiaKS17}.
In this paper, we will focus on the restricted family of low-degree polynomial tests, which are closely related to the statistical query lower bounds~\cite{BreBHLS21}.  
\newcommand{\Adv}{\mathrm{Adv}}
The degree-$D$ advantage of a hypothesis testing problem, defined below, lies at the core of this family.
\begin{definition}[Degree-$D$ Advantage of a Testing Problem]
Consider a testing problem of distinguishing $H_0: y \sim \cQ$ versus $H_1: y\sim \cP$ for input $y \in (\R^d)^n$.
The degree-$D$ advantage is defined as follows:
\begin{align}
\Adv_{\leq D, n} := \max_{f \in \R[y]_{\leq D}} \frac{\E_{\cP}[f(y)]}{\sqrt{E_{\cQ}[f^2(y)]}},
\end{align}
where we take the maximum over all degree-$D$ polynomials $f$ over $nd$-dimensional $y$.
\end{definition}
The low-degree conjecture posits that the hypothesis testing problem is computationally hard to solve if $\Adv_{\leq D, m} = O(1)$.
We refer the reader to \cite{Hopkins-thesis,KunWB19} for more details about this conjecture.
In the rest of this section, we shall show that the degree-$D$ advantage is bounded for various NGCA testing problems.

For the NGCA problem defined in \Cref{prob:generic_hypothesis_testing}, \cite{mao2021optimal} gave an expression that relies only on the (normlaized probabilist's) Hermite coefficients.

\begin{lemma}[Lemma 6.4 in \cite{mao2021optimal} ]
\label{lem:ldlr-Hermite}
Consider the NGCA problem with the hidden distribution $A$,
then the associated degree-$D$ advantage satisfies the following:
\begin{align}
\label{eq:degree-d-advantange-ngca}
\Adv_{\leq D, n}
&\leq \sum_{i=0; i \text{ is even}}^D \left(\frac{i}{d}\right)^{\frac{i}{2}}  \sum_{\cI \in \N^n; |\cI| = i} \prod_{j=1}^n \left(\E_{X \sim A} [h_{\cI_j}(x)]\right)^2 \,.
\end{align}
\end{lemma}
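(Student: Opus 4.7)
The plan is to apply the standard orthonormal-basis characterization of the low-degree advantage. With $\cQ = \cN(0,I_d)^{\otimes n}$, the tensorized normalized Hermite polynomials $\{H_\alpha\}$, indexed by multi-indices $\alpha : [n] \times [d] \to \N$ and defined by $H_\alpha(y) = \prod_{i,j} h_{\alpha_{i,j}}(y_{i,j})$, form an orthonormal basis of $L^2(\cQ)$. Writing $L = d\cP/d\cQ$ and $L^{\leq D}$ for its degree-$D$ projection,
\[
\Adv_{\leq D, n}^2 \;=\; \|L^{\leq D}\|_{\cQ}^2 \;=\; \sum_{|\alpha| \leq D} \Paren{\E_\cP[H_\alpha(y)]}^2 \, ,
\]
so the task reduces to computing each Fourier coefficient $\E_\cP[H_\alpha]$ and summing.

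The single-sample computation rests on a Hermite generating-function identity: for a unit vector $v \in \R^d$ and $k \in \N$, matching coefficients of $s^k$ in the two expansions of $\exp(s v^\top y - s^2/2)$ gives
\[
h_k(v^\top y) \;=\; \sum_{|\alpha| = k} \sqrt{\tbinom{k}{\alpha}} \, v^\alpha \, H_\alpha(y) \, .
\]
Since $d P_{A,v}/d\cN(0,I_d) = A(v^\top y)/\phi(v^\top y) = \sum_k \hat A_k \, h_k(v^\top y)$ with $\hat A_k := \E_{X \sim A}[h_k(X)]$, orthonormality of $\{H_\alpha\}$ in $L^2(\cN(0,I_d))$ yields $\E_{P_{A,v}}[H_\alpha(y)] = \hat A_{|\alpha|} \sqrt{\binom{|\alpha|}{\alpha}} \, v^\alpha$. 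Tensorizing over the $n$ samples, which are i.i.d.\ under $\cP$ conditional on $v$,
\[
\E_\cP [H_\alpha(y)] \;=\; \Paren{\prod_{i=1}^n \hat A_{|\alpha_i|} \sqrt{\tbinom{|\alpha_i|}{\alpha_i}}} \cdot \E_v\Brac{v^{\sum_i \alpha_i}} \, .
\]

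The final step is a clean combinatorial collapse. Squaring and introducing an independent copy $v'$ of $v$, the factor $\Paren{\E_v v^{\sum_i \alpha_i}}^2$ becomes $\E_{v,v'} \prod_i (v \odot v')^{\alpha_i}$. Grouping $\alpha$ by its ``profile'' $\cI_i := |\alpha_i|$ and applying the multinomial theorem,
\[
\sum_{\alpha \,:\, |\alpha_i| = \cI_i \text{ for all } i} \; \prod_{i=1}^n \binom{\cI_i}{\alpha_i} (v \odot v')^{\alpha_i} \;=\; \prod_{i=1}^n \iprod{v, v'}^{\cI_i} \;=\; \iprod{v,v'}^{|\cI|} \, ,
\]
so $\Adv_{\leq D,n}^2 = \sum_{\cI \,:\, |\cI| \leq D} \Paren{\prod_i \hat A_{\cI_i}^2} \cdot \E_{v,v'}\Brac{\iprod{v,v'}^{|\cI|}}$. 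I then finish with the standard spherical-moment bound $\E_{v,v' \sim S^{d-1}}\Brac{\iprod{v,v'}^i} \leq (i/d)^{i/2}$ for even $i$ (and $0$ for odd $i$), which follows from $(i-1)!! \leq i^{i/2}$ together with $d(d+2)\cdots(d+i-2) \geq d^{i/2}$. The stated upper bound on $\Adv_{\leq D,n}$ itself (as opposed to its square) then follows because $\Adv_{\leq D,n} \geq 1$ (take $f \equiv 1$) and the right-hand side is $\geq 1$ (from the $i=0$ term). I expect the whole argument to be routine; the only subtlety is consistency with probabilist's-Hermite normalizations (the $\sqrt{k!}$ and $\sqrt{\alpha!}$ factors), which the generating-function identity handles cleanly.
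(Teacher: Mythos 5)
The paper does not prove this lemma itself; it is cited verbatim from \cite{mao2021optimal}, so there is no in-paper argument to compare against. Your derivation is correct and follows the standard Hermite-analytic route to NGCA low-degree bounds, which is almost certainly what the cited reference does: express $\Adv^2_{\leq D,n}$ as $\sum_{|\alpha|\le D}(\E_\cP H_\alpha)^2$, compute the single-sample Fourier coefficient $\E_{P_{A,v}}[H_\alpha] = \hat A_{|\alpha|}\sqrt{\binom{|\alpha|}{\alpha}}\,v^\alpha$ via the generating-function identity, tensorize, replace $(\E_v[\cdot])^2$ by an expectation over an independent copy $v'$, collapse each profile class by the multinomial theorem to $\iprod{v,v'}^{|\cI|}$, and bound spherical moments by $(i/d)^{i/2}$. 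The one step worth flagging is your closing passage from $\Adv^2$ to $\Adv$: the stated lemma bounds $\Adv_{\le D,n}$ but the Hermite computation naturally bounds its square. Your fix — observe $\Adv_{\le D,n}\ge 1$ (take $f\equiv 1$) so that $\Adv \le \Adv^2$, and that the RHS is $\ge 1$ from the $i=0$ term — is correct, though it suggests the clean form of the lemma (and likely what \cite{mao2021optimal} actually states) is a bound on $\Adv^2_{\le D,n}$; in the paper's sole use of this lemma (proving $\Adv_{\le D,n}=O(1)$ in \Cref{cor:moment-bounds}) the two versions are interchangeable.
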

The next result generalizes the arguments in the proof of \cite[Theorem 4.5]{mao2021optimal} in a straightforward manner by replacing the conditions in Lemma 6.6 therein by a more general condition below.
\begin{lemma}[Degree-$D$ Advantage and Growth of Hermite Coefficients; Implicit in \cite{mao2021optimal}]
\label{cor:moment-bounds}
Suppose the univariate distribution $A$ satisfies the following conditions for an even  $j_* \in \N$, $\kappa,\tau  \in \R_+$:
\begin{itemize}
   \item $\E_{X \sim A}h_j(X) = 0 $ for $j \in [j_*-1] $.
   \item For $j \geq j_*$, $\left|\E_{X \sim A}h_j(X)\right|^2 \leq j^j \kappa \tau^{-\frac{j}{j_*}}$.
 \end{itemize} 
Then if $n \kappa \geq 1$ and $n \leq \frac{1}{\poly(D)} d^{\frac{j_*}{2}} \frac{d^{\frac{j_*}{2}} \tau}{\kappa}$, then the degree-$D$ advantage of the NGCA problem with the hidden distribution  $A$ satisfies $\Adv_{\leq D, n} \leq O(1)$.
\end{lemma}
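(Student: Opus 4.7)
The plan is to mimic the proof of Theorem 4.5 in \cite{mao2021optimal}, the only change being that we rely on the generalized Hermite-coefficient bounds $|\E h_j|^2 \leq j^j \kappa \tau^{-j/j_*}$ (for $j \geq j_*$) instead of the sharper ones implicit in their Lemma 6.6. First I would invoke \Cref{lem:ldlr-Hermite}, reducing the task to bounding
\[
\Adv_{\leq D, n}^2 \;\leq\; \sum_{\substack{i = 0 \\ i \text{ even}}}^{D} \Paren{\tfrac{i}{d}}^{i/2} T_i, \qquad T_i := \sum_{\substack{\cI \in \N^n \\ |\cI| = i}} \prod_{j=1}^{n} \Paren{\E_{X \sim A}h_{\cI_j}(X)}^2,
\]
and showing that the right-hand side is a bounded geometric-like series under the stated assumptions.

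The key step is to estimate $T_i$ for each even $i \in \{j_*, \ldots, D\}$. The vanishing-moment assumption forces every nonzero coordinate $\cI_j$ to satisfy $\cI_j \geq j_*$, so I would stratify by $k$, the number of nonzero coordinates of $\cI$, with $1 \leq k \leq \lfloor i/j_* \rfloor$. Writing the nonzero coordinates as $a_1, \ldots, a_k \geq j_*$ with $\sum a_\ell = i$, the generalized moment bound gives
\[
\prod_{\ell = 1}^{k} \Paren{\E h_{a_\ell}}^2 \;\leq\; \kappa^k \, \tau^{-i/j_*} \, \prod_{\ell=1}^k a_\ell^{a_\ell}.
\]
I would then count admissible configurations: $\binom{n}{k}$ choices for the support, and at most $\binom{i - k(j_* - 1) - 1}{k-1} \leq \binom{i}{k}$ compositions of $i$ into $k$ parts of size $\geq j_*$. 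Combining these with crude bounds like $\binom{n}{k} \leq n^k / k!$, the convexity bound $\prod a_\ell^{a_\ell}$ with $\sum a_\ell = i$, and $k \leq i/j_*$, yields
\[
T_i \;\leq\; \tau^{-i/j_*} \sum_{k=1}^{\lfloor i/j_* \rfloor} (n \kappa)^k \cdot C(i,k)
\]
for some explicit combinatorial weights $C(i,k)$ that grow only polynomially in $i$ and $D$.

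The final step is to insert the hypothesis $n \kappa \geq 1$ (which ensures that the dominating contribution comes from the maximal $k \approx i/j_*$) together with the sample-size bound $n \leq \frac{1}{\poly(D)} \cdot d^{j_*}\tau/\kappa$. Under these, a direct substitution shows that each term $(i/d)^{i/2} T_i$ is at most $\poly(D)^{-i/j_*}$, so summing over the at most $D/2$ even values of $i \leq D$ gives $\Adv_{\leq D, n}^2 \leq O(1)$.

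The main obstacle I anticipate is bookkeeping: balancing the combinatorial growth of $\binom{n}{k}\binom{i}{k}$, the super-polynomial growth of $\prod a_\ell^{a_\ell}$, and the attenuation from $(i/d)^{i/2} \tau^{-i/j_*}$, so that the free polynomial factor $\poly(D)$ in the sample-size hypothesis absorbs all sub-exponential slack uniformly in $(i,k)$. Since the assumed Hermite bound $j^j \kappa \tau^{-j/j_*}$ is precisely what was isolated in Mao--Wein's Lemma 6.6 as the sole bottleneck in their calculation, the rest of their argument should transfer essentially verbatim, and I would reference their computation rather than repeat the combinatorial estimates in full.
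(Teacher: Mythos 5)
Your outline matches the paper's proof essentially step for step: invoke \cref{lem:ldlr-Hermite}, stratify the multi-indices $\cI$ with $|\cI|=i$ by support size $m$ (you call it $k$), use the moment hypothesis to bound each Hermite product by $i^{O(i)}\kappa^m\tau^{-i/j_*}$, count configurations by $\binom{n}{m}$ times the number of compositions of $i$ into $m$ parts $\geq j_*$, use $n\kappa \geq 1$ to make the $m=\lfloor i/j_*\rfloor$ term dominate the resulting sum, and finally sum a geometric-type series over $i$. The counting you propose ($\binom{n}{k}\binom{i}{k}$) is if anything slightly tighter than the paper's cruder $n^m i^m$; both are absorbed into the $\poly(D)$ slack. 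So modulo notation, this is the same argument.

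There is, however, a real arithmetic gap in your final step, which you inherit from what appears to be a typo in the lemma statement itself. You state the sample-size hypothesis as $n \leq d^{j_*}\tau/(\kappa\,\poly(D))$ and then assert that ``a direct substitution shows that each term $(i/d)^{i/2}T_i$ is at most $\poly(D)^{-i/j_*}$.'' This does not follow. With $T_i \leq i^{O(i)}(n\kappa/\tau)^{i/j_*}$, the $i$-th term of the advantage is at most
\[
\left(i^{O(1)}\, d^{-1/2}\,\bigl(n\kappa/\tau\bigr)^{1/j_*}\right)^{i}.
\]
If $n\kappa/\tau \lesssim d^{j_*}/\poly(D)$, then $(n\kappa/\tau)^{1/j_*}\lesssim d/\poly(D)^{1/j_*}$ and the base of the $i$-th power is $\approx i^{O(1)}d^{+1/2}/\poly(D)^{1/j_*}$, which grows with $d$ and is not small. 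What the calculation actually requires is $(n\kappa/\tau)^{1/j_*}\lesssim d^{1/2}/\poly(D)^{1/j_*}$, i.e., $n \lesssim d^{j_*/2}\tau/(\kappa\,\poly(D))$; with this the $d^{-1/2}$ and $d^{+1/2}$ cancel exactly and the base becomes $i^{O(1)}/\poly(D)^{1/j_*}\leq 1/2$ for $i\leq D$ and large enough $\poly(D)$. This $d^{j_*/2}$ is also what the paper's own displayed computation and its downstream instantiations use (for instance, in the Gaussian covariance lower bound with $j_*=4$, $\kappa=\eta^2$, $\tau=\eta^4/\alpha^4$, the conclusion is $n \ll d^{2}\eta^{2}/\alpha^4 = d^{j_*/2}\tau/\kappa$, not $d^{j_*}\tau/\kappa = d^4\eta^2/\alpha^4$). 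You should therefore correct the threshold to $n\leq d^{j_*/2}\tau/(\kappa\,\poly(D))$; with that fix, your proof goes through.
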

\begin{proof}
As mentioned earlier, we closely follow the proof arguments  in \cite[Theorem 4.5]{mao2021optimal} and simplify the bound in \Cref{lem:ldlr-Hermite}
for the given $A$.

For positive integers $1 \leq m \leq i \leq D$, 
define the set of multi-indices that (i) have support $m$, (ii) sum up to $i$, and (iii) each non-zero element is at least $j_*$:
$$\cA(i,m) = \Big\{\cI \in \N^n: |\cI| = i, |\cI|_0 = m; \alpha_i \in \{0\} \cup \left\{j_*, j_*+1, \dots\right\}\Big\}.$$
It can be seen that the cardinality of $\cA(i,m)$ is at most $ n^m i^i$ because the number of ways to select the support of $\cI \in \cA(i,m)$ is $\binom{n}{m} \leq n^m$, and once the support is selected, each element in the support has at most $i$ choices. 
Moreover, $\cA(i,m)$ is non-empty only if $m \leq \lfloor i/j_* \rfloor$.

The reason behind the restriction of non-zero elements in $\cI \in \cA(i,m)$ to be at least $j_*$ is because  
for any $\cI \not \in \cA(i,m)$, the expression  $\prod_{j \in \cI}\left(\E_{X \sim A}h_j(X)\right)^2 = 0$ in \Cref{lem:ldlr-Hermite}. 
Thus, we can restrict our attention to $\cI \in \cup_{i\geq j_*}\cup_{m=1}^{\lfloor i/j_*\rfloor}\cA(i,m)$ while evaluating \Cref{eq:degree-d-advantange-ngca}.
For any $\cI \in \cA(i,m)$ with $i \geq j_*$, the assumption on the Hermite coefficients imply that
\begin{align}
\label{eq:hermite-product-I-Aim}
\prod_{j \in \cI}\E_{X \sim A}\left(h_j(X)\right)^2 \leq \prod_{j \in \cI: j > 0} j^j \kappa \tau^{ - \frac{j}{j_*}} = i^{2i} \kappa^{m}\tau^{-\frac{i}{j_*}}\,.
\end{align}

We now calculate the following expression that corresponds to $|\cI| = i$ in \Cref{eq:degree-d-advantange-ngca}: for any $i \geq  j_*$, we obtain
\begin{align*}
 \sum_{\cI \in \N^n; |\cI| = i} \prod_{j=1}^n \left(\E_{X \sim A} [h_{\cI_j}(x)]\right)^2 &=   \sum_{m=1}^{ \lfloor i/j_*\rfloor}  \sum_{\cI \in \cA(i,m)} \prod_{j=1}^n \left(\E_{X \sim A} [h_{\cI_j}(x)]\right)^2 \\
&\leq \sum_{m=1}^{ \lfloor i/j_*\rfloor} |\cA(i,m)| i^{3i} \kappa^{m }\tau^{-\frac{i}{j_*}} \\
&\leq \sum_{m=1}^{ \lfloor i/j_*\rfloor} n^m i^{2i} \kappa^{m } \tau^{ - \frac{i}{j_*}} \\
&\leq i^{4i} \tau^{-\frac{i}{j_*}} \max(n\kappa, (n \kappa)^{\lfloor i/j_*\rfloor}) \\
&= i^{4i} \tau^{-\frac{i}{j_*}} (n \kappa) ^{i/j_*} 
= i^{4i} (n\kappa\tau^{-1})^{i/j_*},
\end{align*}
where we used that $n\kappa \geq 1$.

We are now ready to evaluate the complete expression in \Cref{eq:degree-d-advantange-ngca} below.
\begin{align*}
\Adv_{\leq D, n}
&\leq \sum_{i=0; i \text{ is even}}^D \left(\frac{i}{d}\right)^{\frac{i}{2}}  \sum_{\cI \in \N^n; |\cI| = i} \prod_{j=1}^n \left(\E_{X \sim A} [h_{\cI_j}(x)]\right)^2 \\
&\leq \sum_{i=0; i \text{ is even}}^D \left(\frac{i}{d}\right)^{\frac{i}{2}}   i^{4i}   \left(n\kappa\tau^{-1}\right)^{i/j_*}\\
&\leq \sum_{i=0; i \text{ is even}}^D \left( i^{4.5} \left(\frac{ n \kappa}{\tau d^{ \frac{j_*}{2}}} \right)^{\frac{1}{j_*}} \right)^i  \,,  
\end{align*} 
which is less than $O(1)$ if $n \ll \frac{1}{\poly(D)} \frac{d^{\frac{j_*}{2}} \tau}{\kappa}$.

\end{proof}

\subsection{Robust Covariance Estimation and Resilience for Gaussian Data}
\label{sec:low-degree-cov-est-resilience-gaussian}
In this section, we state the low-degree lower bounds  for the following testing problem:
\begin{problem}[Robust Gaussian Covariance Testing in Operator Norm]
 \label{def:cov-testing}
 Given corruption rate $\eta \in (0,1/2)$, deviation $\alpha \in (C\eta,1)$ for a large constant $C$, sample size $n\in \N$ and dimension $d \in \N$, consider the following distribution testing problem with input $y = (y_1,\dots,y_n) \in \R^{nd}$: 
 \begin{enumerate}
\item $H_0$: $y_1,\dots, y_n$ are sampled i.i.d.\ from $\cN(0,\bI_d)$.
\item $H_1$: First a unit vector $v$ is sampled randomly from $\cS^{d-1}$, and then 
$y_1,\dots, y_n$ are sampled i.i.d.\ from $(1-\eta) \cN(0, \bI_d - \alpha vv^\top) + \eta Q_v$, where $Q_v$ are arbitrary.
 \end{enumerate}
\end{problem}
Observe that the hypothesis testing problem above is easier than robust covariance estimation of Gaussian data with relative spectral norm error less than $c \alpha$ for a small enough constant $c>0$.\footnote{Moreover, the testing problem is also easier than computing robustly $\widehat{\Sigma}$ satisfying $0.5 \Sigma\preceq \widehat{\Sigma} \preceq \frac{c}{(1-\alpha)} \Sigma$ for a tiny constant $c>0$.}
Since certifying operator norm resilience gives an algorithm for covariance estimation in spectral norm (\Cref{thm:cov_est_full}), the same instance also give evidence of computational hardness for the problem of certifying operator norm resilience. 

The main result of this subsection is the following computational lower bound:
\begin{proposition}
\label{prop:low-degree-cov-estimation-gaussian}
Consider \Cref{def:cov-testing} with $\alpha \in (0, 1)$ and $\eta \gtrsim 1/\sqrt{d}$.
There exist a choice of $Q_v$'s such that the degree-$D$ advantage of \Cref{def:cov-testing} is $O(1)$
for
any
$d \ll n \ll \frac{d^2\eta^2}{\poly(D) \alpha^4}$.
\end{proposition}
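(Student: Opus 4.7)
I will reduce \Cref{def:cov-testing} to an instance of Non-Gaussian Component Analysis (\Cref{prob:generic_hypothesis_testing}) with a carefully designed univariate distribution, then invoke \Cref{cor:moment-bounds} at $j_* = 4$. The construction: set $M := \sqrt{(\eta + (1-\eta)\alpha)/\eta}$ and take $B$ to be the Rademacher distribution on $\{-M, +M\}$. Define $Q_v$ as the product distribution whose projection onto $\mathrm{span}(v)$ is $B$ and whose projection onto $v^\perp$ is $\cN(0, I_{d-1})$. Since both mixture components $\cN(0, I - \alpha vv^\top)$ and $Q_v$ are products with the \emph{same} marginal $\cN(0, I_{d-1})$ perpendicular to $v$, their mixture is itself a product of the form $P_{A,v}$ from \Cref{def:high-dim-distribution} with
\[
A \;=\; (1-\eta)\cN(0, 1-\alpha) \,+\, \eta B,
\]
realizing \Cref{def:cov-testing} as an NGCA instance. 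Thus it remains to bound the Hermite coefficients of $A$ and apply \Cref{cor:moment-bounds}.

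\textbf{Killing low Hermite moments.} The first and third Hermite moments of $A$ vanish by symmetry of both components. For $j = 2$, \Cref{item:her-ornstein} with $\rho^2 = \alpha$ yields $\E_{\cN(0,1-\alpha)} h_2 = \alpha\, h_2(0) = -\alpha/\sqrt 2$, while $\E_B h_2 = (M^2 - 1)/\sqrt 2$. The choice of $M$ is precisely such that $(1-\eta)(-\alpha/\sqrt 2) + \eta(M^2 - 1)/\sqrt 2 = 0$, so $\E_A h_j = 0$ for $j \in \{1,2,3\}$.

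\textbf{Bounding higher Hermite moments.} For $j \geq 4$ the Gaussian component contributes $|\E_{\cN(0,1-\alpha)} h_j|^2 = \alpha^j h_j(0)^2 \leq O(\alpha^j)$, using $|h_j(0)| = O(1)$ for even $j$ and $0$ for odd $j$. For the Rademacher component, the explicit expansion of $H_{e_j}$ gives $|h_j(x)|^2 \leq |x|^{2j}\cdot e^{O(\sqrt j)}$ for $|x| \geq 1$ (via the involution identity $\sum_m j!/(m!(j-2m)!2^m) = I(j) \leq (j/e)^{j/2} e^{O(\sqrt j)}$, which dominates the $|x|^j$ term in the expansion after dividing by $\sqrt{j!}$). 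Since $\alpha > C\eta$ implies $M^2 = \Theta(\alpha/\eta)$, this gives $\eta^2 (\E_B h_j)^2 \leq \eta^2 (C'\alpha/\eta)^j$. Combining,
\[
|\E_A h_j|^2 \;\lesssim\; \alpha^j + \eta^2 (C'\alpha/\eta)^j \;\leq\; j^j\,\eta^2\, (\alpha/\eta)^j \quad \text{for all } j \geq 4,
\]
which is exactly the form $j^j \kappa\, \tau^{-j/4}$ required by \Cref{cor:moment-bounds} with $\kappa = \eta^2$ and $\tau = (\eta/\alpha)^4$. (The first inequality reduces to $\eta^{j-2} \leq j^j$ and $(C'/C'')^j \leq j^j$, both easy for $j \geq 4$.)

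\textbf{Conclusion and obstacle.} The assumption $\eta \gtrsim 1/\sqrt d$ together with $n \geq d$ ensures $n\kappa = n\eta^2 \geq 1$. \Cref{cor:moment-bounds} then gives $\mathrm{Adv}_{\leq D, n} = O(1)$ whenever $n \leq \frac{1}{\poly(D)}\, d^{j_*/2} \tau/\kappa = \frac{d^2 \eta^2}{\poly(D)\, \alpha^4}$, matching the claim. The only genuinely delicate step is the polynomial-in-$M$ Hermite bound in Step~3: the Cramér-style estimate $|h_j(x)|^2 \leq C e^{x^2/2}$ is too weak here (since $M^2 \sim \alpha/\eta$ can be large), so one must instead exploit that for fixed $j$, $h_j$ behaves like a degree-$j$ polynomial with $I(j)/\sqrt{j!}$-bounded coefficient sum, yielding the clean $|h_j(M)|^2 \leq M^{2j} e^{O(\sqrt j)}$ used above.
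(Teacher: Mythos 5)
Your proof is correct and follows essentially the same high-level strategy as the paper: reduce \Cref{def:cov-testing} to an NGCA instance $P_{A,v}$ via a product-structure argument, choose $A$ so that the first three normalized Hermite coefficients vanish, bound the higher coefficients by $\eta^2(C\alpha/\eta)^j$, and invoke \Cref{cor:moment-bounds} with $j_*=4$, $\kappa=\eta^2$, $\tau=(\eta/\alpha)^4$. The one genuine difference is the choice of contaminating distribution. The paper takes $Q = \tfrac12\cN(\delta,1)+\tfrac12\cN(-\delta,1)$ with $\delta\asymp\sqrt{\alpha/\eta}$, which is convenient because \cref{item:her-unit-variance} gives the clean closed form $\E_{\cN(\delta,1)}h_j = \delta^j/\sqrt{j!}$, so the extra $1/\sqrt{j!}$ makes the upper bound $\eta^2\delta^{2j}/j!$ essentially immediate. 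You instead take $Q$ to be a symmetric two-point (Rademacher) distribution at $\pm M$, which forces you to bound $h_j(M)$ directly via the triangle inequality on the Hermite coefficient expansion, leading to the involution-count estimate $|h_j(x)|\le I(j)/\sqrt{j!}\cdot|x|^j = e^{O(\sqrt j)}|x|^j$ for $|x|\ge 1$. That estimate is correct (and you are right that the Cram\'er-type bound $h_j(x)^2\lesssim e^{x^2/2}$ is useless here since $M^2\asymp\alpha/\eta$ is unbounded), and it yields the same final bound after absorbing the $e^{O(\sqrt j)}$ factor. The net tradeoff: the paper's Gaussian mixture gives a shorter Hermite computation, while your Rademacher variant is arguably more elementary in the sense that $Q$ has just two atoms and no Gaussian tails. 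Both implicitly rely on the same looseness in \Cref{cor:moment-bounds} — the $j^j$ factor and the $\poly(D)$ in the sample bound have ample slack to absorb a constant-to-the-$j$ from the $\lesssim$, even though for small $j$ the stated pointwise inequality $C^j\le j^j$ is not literally met; this is present in the paper's proof too and is not a gap in yours.
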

The above lower bound matches the computational guarantee of \Cref{thm:cov_mean_est_full} of using $\eta^2d^2/\alpha^4$ samples (up to the $d^\eps$ factor).
 Observe that the condition $n\gg d$ in \Cref{prop:low-degree-cov-estimation-gaussian} is rather mild as it is needed to solve the robust covariance estimation information-theoretically~\cite{10.1214/17-AOS1607}.
\begin{proof}
Consider the NGCA instance with  $A = (1-\eta) \cN(0,1 -\alpha) + 0.5 \eta \cN(\delta, 1) + 0.5 \eta \cN(-\delta, 1) $ for $\delta \in \R_+$ to be decided soon. 
Observe that it is a special case of \Cref{def:cov-testing} with $Q_v = P_{A',v}$ for $A' = 0.5 \cN(\delta, 1) + 0.5 \cN(-\delta, 1)$.
To apply \Cref{lem:ldlr-Hermite}, we need to calculate $\E_{X \sim A} [h_j(X)]$.
We shall choose $\delta$ so that the variance of $A$ is equal to $1$.
It can be checked that taking $\delta$ to satisfy the following equation works $(1-\eta) (1-\alpha) + \eta (\delta^2 + 1) = 1$, which is satisfied for $\delta  \asymp \sqrt{\frac{\alpha}{\eta}}$.

We now calculate the quantities required to apply \Cref{cor:moment-bounds}.
We claim that for any $j \in \{1,2,3\}$, $\E_{X \sim A} [h_j(X)] = 0$, which follows by the fact that $A$ mathces the first three moments of $\cN(0,1)$.
The symmetry of $A$ further implies that $\E_{X \sim A} [h_j(X)] = 0$ for any odd $j$,
Next for any even $j \geq 4$,
a direct calculation using \Cref{item:her-unit-variance,item:her-ornstein} implies that
\begin{align*}
\E_{X \sim A} [h_j(X)] &= (1-\eta) \E_{\cN(0,1-\alpha)}[h_j(X)] + (\eta/2) \E_{\cN(\delta,1)}[h_j(X)] + (\eta/2) \E_{\cN(-\delta,1)}[h_j(X)]\\
&= (1-\eta) (\sqrt{\alpha})^{j}h_j(0) + \eta \frac{\delta^j}{\sqrt{j!}},
\end{align*}
leading to the following coarse bounds on the square value:
\begin{align*}
\left(\E_{X \sim A} [h_j(X)]\right)^2
&\lesssim \left(\frac{\alpha^{j/2}}{j!!}\right)^2 + \eta^2 \frac{\alpha^j}{ \eta^j j!} \lesssim \eta^2(\alpha/\eta)^j 
\lesssim  \eta^2 \left(C\frac{\alpha^4}{\eta^4}\right)^{\frac{j}{4}}.
\end{align*}

We can now apply \Cref{cor:moment-bounds} with $j_* = 4$, $\kappa= \eta^2$, $\tau = \eta^4/\alpha^4$, we get that the degree-$D$ advantage is $O(1)$ for any $n \ll \frac{1}{\poly(D)} d^2 \frac{\eta^4}{\alpha^4} \frac{1}{\eta^2} = \frac{d^2\eta^2}{ \poly(D)\alpha^4}$ and $n \gg 1/\eta^2$.
Since $\eta \ll 1/\sqrt{d}$, the condition $n\gg 1/\eta^2$ is satisfied if  $n \gg d$.
\end{proof}

\subsection{Covariance-Aware Mean Testing and Estimation}
\label{sec:low-degree-cov-mean-est}
In this section, we focus on the problem of covariance-aware mean estimation.
We first define the following related testing problem that asks us to (robustly) distinguish between $\cN(0, \vec I)$ and $\cN(\mu,\Sigma)$ with $\|\Sigma^{-1/2} \mu\|_2 \geq \alpha$; See \Cref{sec:testing-vs-estimation-cov-aware} for its relation with the estimation problem.

\begin{problem}[Covariance-Aware Robust Mean Testing]
 \label{def:rm-testing}
 Given corruption rate $\eta \in (0,1/2)$, Mahalanobis norm $\alpha \in \R_+$, sample size $n\in \N$ and dimension $d \in \N$, consider the following distribution testing problem with input $y = (y_1,\dots,y_n) \in \R^{nd}$: 
 \begin{enumerate}
\item $H_0$: $y_1,\dots, y_n$ are sampled i.i.d.\ from $\cN(0,\bI_d)$.
\item $H_1$: First a unit vector $v$ is sampled randomly from $\cS^{d-1}$, and then 
$y_1,\dots, y_n$ are sampled i.i.d.\ from $D_v:= (1-\eta) \cN(\mu_v, \vec \Sigma_v) + \eta Q_v$, where $Q_v$ are arbitrary and the mean of the inlier Gaussian distribution is at least $\alpha$  in Mahalanobis norm,  $\|\Sigma^{-1/2} \mu\|_2 \geq \alpha$.
 \end{enumerate}
\end{problem}
In fact, the lower bound instance in the following lower bound would satisfy $\mu_v = \delta v$ and   $\Sigma_v = \I - (1 - \sigma^2)vv^\top $  for scalars $\delta$, $\sigma$ satisfying $\alpha = \delta/\sigma$ and $\delta \asymp \sqrt{\eta}$.

\begin{theorem}[Low-Degree Hardness of Covariance-Aware Mean Testing]
\label{thm:low-degree-hardness-cov-aware-mean}
    Consider $\eta \gg 1/\sqrt{d}$.
    For any $\alpha > 0$, 
    there exist a choice of $\mu_v$, $\Sigma_v$, and $Q_v$'s such that the degree-$D$ advantage of \Cref{def:rm-testing} is $O(1)$ for any $n \gg d $ and $n \ll \frac{d^2\eta^2}{\poly(D)}$.
    In fact, the conclusion holds even when the minimum eigenvalue of $\Sigma_v$ is at least $\Omega\left(\min\left(\frac{\eta}{\alpha^2}, 0.5\right)\right)$. 
\end{theorem}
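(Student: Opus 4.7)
The plan is to embed \Cref{def:rm-testing} into the NGCA framework of \Cref{prob:generic_hypothesis_testing} and then apply \Cref{cor:moment-bounds} with $j_*=4$. I would take $\mu_v=\delta v$ and $\Sigma_v = I-(1-\sigma^2)vv^\top$, so the inlier marginal along $v$ is $\cN(\delta,\sigma^2)$ with Mahalanobis norm $\delta/\sigma=\alpha$; the hidden-direction 1D distribution is
\begin{equation*}
A = (1-\eta)\,\cN(\delta,\sigma^2) + \eta\, Q,
\end{equation*}
where $Q$ (inducing $Q_v = P_{Q,v}$) is chosen so that $A$ matches the first three moments of $\cN(0,1)$, forcing $\E_A h_j=0$ for $j\leq 3$, and its remaining Hermite coefficients decay fast enough.

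To make the construction work for every $\alpha\in(0,\infty)$, I would set $\delta^2 = \tfrac{1}{2}\eta\alpha^2/(\alpha^2+\eta)$ and $\sigma^2 = \delta^2/\alpha^2 = \tfrac{1}{2}\eta/(\alpha^2+\eta)$, which gives $\sigma^2 \geq \Omega(\min(\eta/\alpha^2,1/2))$ as the theorem claims, and a direct computation yields $\Var(Q) = [1-\tfrac{1}{2}(1-\eta)]/\eta > 1$ for every $\eta\in(0,1)$. I would then realize $Q$ as the convolution $Q = Y + Z$ with $Z\sim\cN(0,1)$ independent and $Y$ a two-point distribution on $\{\mu_1,\mu_2\}$ with weights $\lambda,1-\lambda$. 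The strict inequality $\Var(Y)=\Var(Q)-1>0$ is exactly what makes $Y$ realizable, and the three parameters $(\mu_1,\mu_2,\lambda)$ of $Y$ match the three required moments of $Q$ uniquely; a short explicit calculation shows that $|\mu_1|,|\mu_2| \leq C/\sqrt{\eta}$ uniformly in $\alpha$.

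The analytic payoff of choosing the convolution variance to be $1$ is the identity $\E_{\cN(y,1)}[h_j]=y^j/\sqrt{j!}$, read off the Hermite generating function $e^{yt+t^2(\tau^2-1)/2}=e^{yt}$ at $\tau=1$. Therefore
\begin{equation*}
\E_Q h_j \;=\; \frac{\lambda\mu_1^j + (1-\lambda)\mu_2^j}{\sqrt{j!}},
\end{equation*}
which, combined with $|\mu_i|\leq C/\sqrt{\eta}$ and Stirling's formula, yields $\eta^2(\E_Q h_j)^2 \leq C^{2j}\eta^{2-j}/j! \leq j^j\eta^{2-j}$ for every $j\geq j_0$, where $j_0$ is a universal constant. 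The inlier piece $(1-\eta)|\E_{\cN(\delta,\sigma^2)}h_j|$ is bounded by $O(1)$ using \Cref{item:her-near-zero} together with $|\delta/\sqrt{1-\sigma^2}|\leq \sqrt{2\eta}\leq 1$. Putting these two bounds together gives $(\E_A h_j)^2 \leq O(j^j\eta^{2-j})$ for $j\geq j_0$, while the handful of small even $j\in\{4,\ldots,j_0\}$ is verified by direct computation. Applying \Cref{cor:moment-bounds} with $j_*=4$, $\kappa=O(\eta^2)$, $\tau=\eta^4$ so that $\tau/\kappa=\eta^2$, gives $\mathrm{Adv}_{\leq D,n}=O(1)$ whenever $n\ll d^2\tau/(\poly(D)\kappa) = d^2\eta^2/\poly(D)$; the side condition $n\kappa\geq 1$ is automatic under $n\gg d$ and $\eta\gg 1/\sqrt{d}$.

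The main obstacle I anticipate is the uniform $O(1/\sqrt{\eta})$ bound on the atoms $\mu_i$ of $Y$ across the entire range $\alpha\in(0,\infty)$: while $\sigma_Y=\Theta(1/\sqrt{\eta})$ is essentially fixed, the required skewness $\mu_3(Y)/\sigma_Y^3$ varies with $\alpha$ and could in principle drive one atom far from the mean. Fortunately, plugging our choice of $\delta^2$ into the moment formulas shows that $\mu_3(Q) = \mu_3(Y)$ satisfies $|\mu_3(Y)|/\sigma_Y^3 \leq O(1)$ uniformly in $\alpha\in(0,\infty)$, which via the explicit two-point formulas $|\mu_i-\mu_Y| \in \{\sigma_Y\sqrt{(1-\lambda)/\lambda},\,\sigma_Y\sqrt{\lambda/(1-\lambda)}\}$ keeps $\lambda$ bounded away from $\{0,1\}$, and hence both atoms at scale $|\mu_Y|+O(\sigma_Y) = O(1/\sqrt{\eta})$.
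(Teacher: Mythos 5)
Your proposal is correct and follows the same overall strategy as the paper: reduce to an NGCA instance with a hidden univariate distribution $A = (1-\eta)\cN(\delta,\sigma^2) + \eta Q$ matching three moments of $\cN(0,1)$, bound the Hermite coefficients of $A$, and invoke \Cref{cor:moment-bounds} with $j_*=4$, $\kappa = \Theta(\eta^2)$, $\tau = \eta^4$. The arithmetic checks out: with your $\delta^2 = \tfrac12\eta\alpha^2/(\alpha^2+\eta)$ and $\sigma^2 = \tfrac12\eta/(\alpha^2+\eta)$, the identity $\delta^2/\eta+\sigma^2 = 1/2$ yields $\Var(Q) = 1/(2\eta)+1/2 > 1$, and the skewness of $Y$ is $O(\delta/\sqrt{\eta}) = O(1)$, so the two-point atoms stay at scale $O(1/\sqrt{\eta})$.

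Where you diverge from the paper is in two cosmetic but genuine ways. First, you reparametrize $(\delta,\sigma)$ jointly as functions of $(\eta,\alpha)$, handling every $\alpha>0$ in a single formula and directly yielding $\sigma^2 = \Omega(\min(\eta/\alpha^2,1/2))$; the paper instead fixes $\delta = 0.001\sqrt\eta$, sets $\sigma = \delta/\alpha$, and restricts to $\sigma<1/2$, appealing implicitly to the monotonicity of the testing problem in $\alpha$. Second, you construct $Q$ as a two-Gaussian mixture (equivalently $Y + \cN(0,1)$ with $Y$ a two-point law) and certify realizability via the uniform skewness bound, whereas the paper constructs a four-atom $F$ and verifies existence abstractly via the PSD moment-matching criterion of \Cref{fact:moment-matching}. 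The two-point route is a bit more hands-on and requires your observation that $\mu_3(Y)/\sigma_Y^3 = O(1)$ uniformly in $\alpha$, which you correctly flag as the main risk and which your calculation confirms; the paper's four-atom route trades this explicit check for the black-box existence theorem. Either approach gives the same Hermite-coefficient decay $(\E_A h_j)^2 \lesssim j^j\eta^{2-j}$ and hence the same conclusion. The only slightly thin point is the dismissal of the finitely many even $j\in\{4,\dots,j_0\}$ as "direct computation"; to make this airtight you should note that any constant blow-up there can be absorbed into $\kappa$ (or the $\poly(D)$ slack), as \Cref{cor:moment-bounds} only loses a multiplicative factor.
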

\begin{proof}
We shall take $Q_v$'s so that the distribution $D_v$ correspond to an NGCA instance $P_{A,v}$ for $A = (1-\eta) \cN(\delta, \sigma^2) + \eta Q$ for $\delta = \alpha \sigma$. 
Thus, $D_v = (1-\eta) \cN(\delta v, (I-vv^\top) + \sigma^2vv^\top  ) + \eta P_{Q,v}$.
The next lemma shows that there exists a distribution $Q$ so that $A$ matches many moments with $\cN(0,1)$.

\begin{lemma}
\label{lem:univariate-dist-new}
\looseness=-1
There exists a positive constant $\eta_0$ such that for all $\eta \in (0,\eta_0)$ and for every $\sigma \in (0,1/2)$, there exists a univariate distributions $A$ and $Q$ such that $A := (1 - \eta) \cN(\delta, \sigma^2) + \eta Q$ such that 
\begin{enumerate}[label=(\roman*)]
    \item \label[sublemma]{item:A-1} $\delta = 0.001 \sqrt{\eta}$.    
    \item \label[sublemma]{item:A-2} $A$ matches three moments with $\cN(0,1)$, i.e., $\E_{X \sim A}[h_{j}(X)] = 0$ for $j \in \{1,2,3\}$.
 \item \label[sublemma]{item:A-4} For  $j\geq 4$, $\left(\E_{X \sim A}[h_{j}(X)]\right)^2 \lesssim   \eta^2 \left(\frac{c}{\eta^4}\right)^{j/4}$ for a consant $c>0$.
\end{enumerate}
\end{lemma}
We defer the proof of \Cref{lem:univariate-dist-new} to \Cref{sec:proof_of_lem_univariate_dist}.

Since our goal is to show a lower bound, we will assume that $\alpha > 2\delta$, i.e., $\sigma < 1/2$. For any such $\alpha$, \Cref{lem:univariate-dist-new} is applicable.

Applying \Cref{cor:moment-bounds} with $j_* = 4$, $\kappa = \eta^2$ and $\tau = \eta^4$, we get that the degree-$D$ advantage is $O(1)$ for any $n \ll  \frac{d^2 \eta^4}{\poly(D)\eta^2} = \frac{d^2 \eta^2}{\poly(D)} $
and $n \gg d \gg \frac{1}{\eta^2}$.
\end{proof}
\paragraph{Proof of \Cref{lem:univariate-dist-new}}
\label{sec:proof_of_lem_univariate_dist}
We now provide the proof of \Cref{lem:univariate-dist-new}.
\begin{proof}
We establish these properties one by one.
By equating the moments of $A$ with $\cN(0,1)$,  
we obtain the following equivalent conditions for the desired distribution $Q$:
\begin{enumerate}
    \item $\E_A[X] = (1-\eta) \delta + \eta \E_Q[X]  = 0 $ implies 
     $\E_Q[X] = -\frac{(1-\eta) \delta}{\eta}$.   
    \item $\E_A[X^2] = (1-\eta) \left(\delta^2 + \sigma^2\right) + \eta \E_Q[X^2]  = 1 $ implies
        $\E_Q[X^2] = \frac{1 - (1-\eta)(\delta^2 + \sigma^2)}{\eta}$.
\item $\E_A[X^3] = (1 - \eta) (\delta^3 + 3 \delta \sigma^2) + \eta \E_Q[X^3] = 0$ implies
    $\E_Q[X^3] = - \frac{(1-\eta) \delta (\delta^2 + 3 \sigma^2)}{\eta }$
\end{enumerate}

We will choose $Q$ to be of the form $\sum_{j=1}^4 w_i\cN(x_i,1) $ for some $|x_i| \leq B$ for some finite $B$ (to be decided soon) and convex weights $w_i$.
Let $F$ be the discrete distribution $\sum_i w_i \I_{x_i}$.
The conditions on the moments of $Q$
now imply the following equivalent conditions for the moments of $F$:
\begin{align*}
    \sum_i w_i x_i &= -\frac{(1-\eta) \delta}{\eta}\\
    \iff \E_F[X] &= -\frac{(1-\eta) \delta}{\eta}, \,\,\qquad \text{and} \numberthis\label{eq:moment-x-w-1} \\
    \sum_i w_i (x_i^2 + 1) &= \frac{1 - (1-\eta)(\delta^2 + \sigma^2)}{\eta} \\
    \iff \E_F[X^2] &= \frac{1 - (1-\eta)(\delta^2 + \sigma^2) - \eta}{\eta}, \,\,\qquad \text{and} \numberthis\label{eq:moment-x-w-2}\\
    \sum_i w_i (x_i^3 + 3x_i) &= - \frac{(1-\eta) \delta (\delta^2 + 3 \sigma^2)}{\eta } \\
\iff \E_F[X^3] &= - \frac{(1-\eta) \delta (\delta^2 + 3 \sigma^2)}{\eta } - 3 \left(-\frac{(1-\eta) \delta}{\eta}\right) \numberthis\label{eq:moment-x-w-3},
\end{align*}
where the last step uses \Cref{eq:moment-x-w-1} and the fact that $\E[X^3] = \mu^3 + 3 \mu \sigma^2$ for $X \sim \cN(\mu,\sigma^2)$.
The following lemma gives the existence of such a $F$ in a bounded interval:
\begin{lemma}
\label{lem:discrete-x-w}
There exists a positive constant $\eta_0$ such that for all $\eta \in (0,\eta_0)$ and $\sigma^2 \in (0,0.1)$: there exists a discrete distribution $F$ over $[-10/\sqrt{\eta}, 10/\sqrt{\eta}]$ with support of size at most $4$ elements that matches the moments in \Cref{eq:moment-x-w-1,eq:moment-x-w-2,eq:moment-x-w-3} such that $\delta = \sqrt{\eta}/1000$.
\end{lemma}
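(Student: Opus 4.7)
The plan is to apply the moment-matching criterion from \Cref{fact:moment-matching} with $B = 10/\sqrt{\eta}$ to the three target moments $x_1 = \E_F[X]$, $x_2 = \E_F[X^2]$, $x_3 = \E_F[X^3]$ specified by the equations \Cref{eq:moment-x-w-1,eq:moment-x-w-2,eq:moment-x-w-3}. Since \Cref{fact:moment-matching} guarantees not only existence but a representing distribution supported on at most $4$ points (by the second part of that fact), this automatically yields the desired support-size bound. So all the work reduces to verifying the two $2\times 2$ PSD conditions.

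The first step is to substitute $\delta = \sqrt{\eta}/1000$ and obtain explicit expressions for the rescaled quantities $x_1/B$, $x_2/B^2$, $x_3/B^3$. A direct computation gives $x_1 = -(1-\eta)\delta/\eta = -(1-\eta)/(1000\sqrt{\eta})$, so $x_1/B = -(1-\eta)/10^4$, which is $\Theta(10^{-4})$ in magnitude. For $x_2$, expanding \Cref{eq:moment-x-w-2} yields $x_2 = (1-\sigma^2)/\eta - 1 + O(\delta^2)$, so $x_2/B^2 = (1-\sigma^2)/100 - O(\eta) \in [0.009 - o(1),\, 0.01]$ whenever $\sigma^2 \in (0,0.1)$. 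Finally, $x_3$ collects terms of order $\sqrt{\eta}$ and $1/\sqrt{\eta}$; the dominant contribution is $3(1-\eta)\delta/\eta = \Theta(1/\sqrt{\eta})$, so $x_3/B^3 = O(\eta)$, negligible compared to $x_2/B^2$.

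The second step is to verify the two PSD conditions. Each is a $2\times 2$ symmetric matrix of the form $\begin{bmatrix} 1 \pm a & \pm b \\ \pm b & c \end{bmatrix}$ where, from the estimates above, $a = O(10^{-4})$, $b = (1-\sigma^2)/100 + O(10^{-4}) + O(\eta) = \Theta(10^{-2})$, and $c = (1-\sigma^2)/100 + O(\eta) = \Theta(10^{-2})$. PSD-ness reduces to checking $1\pm a \ge 0$ (trivial for small $\eta$), $c \ge 0$ (immediate), and $(1\pm a) c - b^2 \ge 0$. Since $b$ and $c$ are both approximately $(1-\sigma^2)/100$, the determinant evaluates approximately to $c - b^2 \approx (1-\sigma^2)/100 - ((1-\sigma^2)/100)^2 = \Theta(10^{-2})$, which is bounded away from zero uniformly in $\sigma^2 \in (0,0.1)$. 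The perturbations coming from the small $a$, $O(10^{-4})$, and $O(\eta)$ terms are all at most $O(10^{-3})$ and hence cannot flip the sign of the determinant once $\eta_0$ is a small enough absolute constant.

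The main obstacle is the bookkeeping: we need clean non-asymptotic bounds on the lower-order error terms (the $\delta^2$, $\eta\sigma^2$, and $\eta$ contributions in $x_2$ and $x_3$) so that the determinant remains strictly positive uniformly over both $\eta \in (0,\eta_0)$ and $\sigma^2 \in (0,0.1)$. Once the threshold $\eta_0$ is chosen small enough that all the error terms sit well below the leading $\Theta(10^{-2})$ determinant, both PSD conditions hold, and \Cref{fact:moment-matching} furnishes the required $4$-atom distribution $F$ supported on $[-10/\sqrt\eta, 10/\sqrt\eta]$ that realizes the three prescribed moments.
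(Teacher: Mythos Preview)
Your proposal is correct and takes essentially the same approach as the paper: invoke \Cref{fact:moment-matching} with $B=10/\sqrt{\eta}$, compute the rescaled moments $x_i/B^i$, and verify the two $2\times 2$ PSD conditions. The only cosmetic difference is that the paper first writes down four sufficient inequalities (bounding $|x_1/B|$, $|x_1/B|/(x_2/B^2)$, $x_2/B^2$, and $|x_3/B^3|/(x_2/B^2)$) and then checks those, whereas you go straight to the determinant; the underlying numerical estimates are identical.
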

\begin{proof}[Proof of \Cref{lem:discrete-x-w}]
 We shall use \Cref{fact:moment-matching}, stating that there exists a distribution $F$ over $[-B,B]$ that matches these three moments if and only if the following matrices are PSD:
 \begin{align}
    \begin{bmatrix}
        1 + \frac{\E_F[X]}{B} & \frac{\E_F[X]}{B}+ \frac{\E_F[X^2]}{B^2} \\
        \frac{\E_F[X]}{B}+ \frac{\E_F[X^2]}{B^2} &  \frac{\E_F[X^2]}{B^2} + \frac{\E_F[X^3]}{B^3} 
    \end{bmatrix} \,\, \text{and}\,\,
    \begin{bmatrix}
        1 - \frac{\E_F[X]}{B} & \frac{\E_F[X]}{B}- \frac{\E_F[X^2]}{B^2} \\
        \frac{\E_F[X]}{B}- \frac{\E_F[X^2]}{B^2} &  \frac{\E_F[X^2]}{B^2} - \frac{\E_F[X^3]}{B^3} 
    \end{bmatrix}
\end{align}
It suffices to enforce the following conditions:
\begin{enumerate}
    \item $|\E_F[X/B]| \leq 0.1$
    \item $|\E_F[X/B]| \leq 0.1 \E[X^2/B^2]$
    \item $\E_F[X^2/B^2] \leq 0.1$
    \item $|\E_F[X^3/B^3]| \leq 0.1 \E[X^2/B^2]$    
\end{enumerate}
Indeed, if this happens then, the diagonal elements are non-negative, and their product is at least $0.9^2 \E_F[X^2/B^2]$, which is larger than the the product of off-diagonal entries, which is at most $1.1^2(\E[X^2/B^2])^2 \leq 0.5 \E_F[X^2/B^2]$. Thus, these two matrices will be PSD. 

We will now show that for the claimed values, we have $|\E_F[X/B]| \leq 0.0005$, $\E[X^2/B^2] \geq 0.005$, and $\E_F[X^3/B^3] \leq 0.0005$.
We will choose $B$ such that $\delta/B$ to be $10^{-4} \eta$ and $B = 10/\sqrt{\eta}$.
Indeed, $|\E_F[X/B]|$ is less than $(\delta/B\eta)$, which is less than 0.0001 in  absolute value.
The second moment, $\E_F[X^2/B^2]$, becomes $10^{-2}(1 - (1-\eta)(\delta^2 + \sigma^2) - \eta)$, which satisfies the desired bound if $\eta + \delta^2 + \sigma^2 \leq 0.5$; The latter happens for small enough $\eta$, $\delta$, and $\sigma^2$.
Finally,  
$$\E_F[X^3/B^3] = -\frac{\delta}{B^3} \cdot \frac{(1-\eta)}{\eta}  \left(\delta^2 + 3\sigma^2 + 3\right) \leq 0.0005\,.$$

\end{proof}

\Cref{lem:discrete-x-w}  establishes \ref{item:A-1} and \ref{item:A-2} directly.
We now turn our attention to \ref{item:A-4}. Define $a_j = \E_{A}[h_j(X)] $. 
Since $A$ matches three moments with $\cN(0,1)$, we obtain that $a_1 = a_2 = a_3 = 0$. For $j \geq 4$, we use \Cref{item:her-unit-variance,item:her-ornstein} to get:
\begin{align*}
    a_j &= (1-\eta) \E_{X \sim \cN(\delta, \sigma^2)}[h_j(X)] +  \sum_i \eta w_i \E_{X \sim \cN(x_i, 1)} [h_j(X)] \\
    &= (1-\eta) (1- \sigma^2)^{j/2} h_j\left(\frac{\delta}{\sqrt{1 -\sigma^2}}\right) + \sum_i \eta w_i \frac{x_i^j}{\sqrt{i!}},
\end{align*}
which implies that  $|a_j|^2 \leq  \left(\frac{c}{\sqrt{j}} + \eta\left(\frac{100}{\eta}\right)^{j/2} \frac{1}{\sqrt{j!}} \right)^2
\lesssim \eta^2 (c/\eta^4)^{j/4}$ for a constant $c > 0$.

\end{proof}
\subsubsection{Covariance-Aware Mean: Testing Versus Estimation}
\label{sec:testing-vs-estimation-cov-aware}
In this section, we consider the distinction between the testing problem \Cref{def:rm-testing}  and the estimation problem considered in \Cref{thm:cov_mean_est_full}.
While the testing problems are usually trivially easier than the corresponding estimation problems, the situation for the covariance-aware mean estimation problem is subtle because the underlying metric $\|\Sigma^{-1/2} (\cdot)\|_2$ is also unknown.
The next result shows that the testing problem in \Cref{thm:low-degree-hardness-cov-aware-mean} can be solved efficiently with an estimation subroutine with a sample overhead of at most $\alpha^2/\eta$.

\begin{proposition}[Testing using an estimation algorithm]
\label{prop:testing-vs-estimation}
Let $\cA$ be an algorithm that (i) takes as input $\eta$-corrupted data from $\cN(\mu,\Sigma)$, (ii) uses $n$ samples and time $T$, and (iii) returns  $\widehat{\mu}$ such that $\|\Sigma^{-1/2}(\widehat{\mu} - \mu)\|_2 \leq 0.1\alpha$ with probability $0.99$.

Then there is an algorithm $\cA'$ that (i) takes $C n \cdot \min(\alpha^2/\eta,1)$ samples from \Cref{def:rm-testing}, (ii) runs in time $O(T) + \poly(d, n, 1/\eta)$, and solves the hard instance in \Cref{thm:low-degree-hardness-cov-aware-mean} with probability $1 - 0.99\exp(-c \alpha^2/\eta)$.

In particular, combining \Cref{prop:testing-vs-estimation,thm:low-degree-hardness-cov-aware-mean} gives an evidence of computational hardness of achieving error
\begin{itemize}
    \item $\|\Sigma^{-1/2}(\widehat{\mu} - \mu)\| = O(1)$ with using $o(d^2 \eta^3)$ samples up to $\polylog(d)$ factors.
    \item $\|\Sigma^{-1/2}(\widehat{\mu} - \mu)\| = O(\sqrt{\eta})$ with using $o(d^2 \eta^2)$ samples up to $\polylog(d)$ factors.

\end{itemize}
\end{proposition}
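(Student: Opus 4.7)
The reduction is a sample-splitting ``plug-and-test'' argument: feed part of the samples to the given estimator $\cA$ to produce a direction $\widehat\mu$, then run a one-dimensional $\eta$-robust mean estimator on the projections $\langle \widehat\mu, y_i\rangle$ of the remaining samples. Concretely, given $n' \asymp n \cdot \max(\alpha^2/\eta, 1)$ testing samples (the ``$\min$'' as printed in the statement appears to be a typo---the $\max$ formula is what substitutes correctly into the downstream $d^2\eta^3$ and $d^2\eta^2$ lower bounds quoted in the remark), randomly permute and split them into a first block $S_1$ of size $n$ and a second block $S_2$ of size $m \asymp n\cdot \alpha^2/\eta$. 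Because the adversary commits to its corruption before the permutation, a hypergeometric tail bound shows each block is individually $O(\eta)$-corrupted with overwhelming probability. Running $\cA$ on $S_1$ produces $\widehat\mu$ with $\|\widehat\mu - \mu\|_{\Sigma^{-1}} \leq 0.1\alpha$ (prob.\ $0.99$); we then form $z_i := \langle \widehat\mu, y_i\rangle$ for $y_i \in S_2$, apply a textbook 1D $\eta$-robust mean estimator (e.g.\ the trimmed mean) to obtain $\hat T$, and output $H_1$ iff $\hat T \geq \tau$ for a threshold $\tau \asymp \delta^2 \asymp \eta$.

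The analysis leverages the explicit hard-instance structure: $\mu_v = \delta v$, $\Sigma_v = I - (1 - \sigma^2) v v^\top$, $\sigma = \delta/\alpha$, $\delta \asymp \sqrt\eta$, which enforces $\alpha^2/\eta \geq 1$ via $\sigma \leq 1$. Under $H_0$ we have $\mu = 0$ and $\Sigma = I$, so $\|\widehat\mu\|_2 \leq 0.1\alpha$; the $z_i$ are then centred Gaussians of standard deviation at most $0.1\alpha$, and the trimmed mean gives $|\hat T| = O(\alpha\eta + \alpha/\sqrt m) \ll \eta$ whenever $m \gtrsim \alpha^2/\eta^2$ and $\alpha$ is a sufficiently small constant. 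Under $H_1$, decomposing $\widehat\mu = a v + w$ with $w \perp v$, the Mahalanobis bound $(a-\delta)^2/\sigma^2 + \|w\|_2^2 \leq (0.1\alpha)^2$ forces $a \in [0.9\delta, 1.1\delta]$ and $\|w\|_2 \leq 0.1\alpha$; hence the clean mean of $z_i$ is $\langle \widehat\mu, \mu_v\rangle = a\delta \geq 0.9\delta^2 \asymp \eta$, while the clean variance $\widehat\mu^\top \Sigma_v \widehat\mu = \sigma^2 a^2 + \|w\|_2^2 \leq (0.1\alpha)^2$ keeps the noise at the same $O(\alpha\eta + \alpha/\sqrt m)$ level. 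The two regimes of $\hat T$ are therefore separated by a constant factor of $\eta$, and the threshold test succeeds with constant probability.

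To upgrade the success probability to $1 - 0.99 \exp(-c\alpha^2/\eta)$, I would split the input into $k \asymp \alpha^2/\eta$ independent sub-blocks of size $n+m$, execute the reduction above on each, and take a majority vote; a Chernoff bound then gives overall failure $\exp(-\Omega(k))$ on top of the $0.99$ per-call accuracy of $\cA$. The total sample cost is $k(n+m) = O(n \cdot \alpha^2/\eta)$, and the runtime is $kT + \poly(d,n,1/\eta) = O(T) + \poly(d,n,1/\eta)$ using $k \leq 1/\eta$. The main delicate point I foresee in writing this up is controlling the interaction between the adaptive adversary and the random split: since corruption is fixed before the permutation, the number of corrupted points landing in each sub-block is hypergeometric and concentrates around $\eta(n+m)$ with deviations of order $\exp(-\Omega(\eta(n+m)))$; once this is established, the guarantees of $\cA$ and of the 1D trimmed mean apply to each sub-block as black boxes.
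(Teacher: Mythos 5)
Your approach is correct in spirit but genuinely different from the paper's, and you are right that the $\min$ in the statement should be a $\max$ (indeed the paper's own proof uses $k \asymp \alpha^2/\eta$ repetitions with $n$ samples each, for $kn \asymp n\alpha^2/\eta$ total, which is what makes the downstream $d^2\eta^3$ bullet come out). The paper does not do sample splitting or any one-dimensional robust test: instead, on each of $k \asymp \alpha^2/\eta$ fresh batches of size $n$ it applies a random rotation $U_i$ before running $\cA$ and un-rotates the output, $y_i := U_i^\top \cA(U_i S_i)$; under $H_0$, rotational invariance of $\cN(0,I)$ makes $U_i$ independent of $U_i S_i$ and hence the direction $y_i/\|y_i\|$ exactly uniform regardless of what $\cA$ does, and the test is a spectral one (the top eigenvalue of $\frac1k\sum_i y_i y_i^\top$ with a norm truncation) rather than a projection. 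This avoids your sample split and the 1D trimmed-mean step entirely; your approach is more elementary and does not use Gaussian rotational symmetry, but it requires a fresh second block and a separate 1D robust estimator.

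There is one concrete bookkeeping error you should fix before this goes through. You set $m \asymp n\,\alpha^2/\eta$ and then run $k \asymp \alpha^2/\eta$ independent sub-blocks of size $n+m$, and you claim $k(n+m) = O(n\,\alpha^2/\eta)$; but when $\alpha^2/\eta \gg 1$ this product is $\Theta(n\,\alpha^4/\eta^2)$, which overshoots the stated $Cn\cdot\alpha^2/\eta$ budget and (more importantly) ruins the substitution into the hardness bound, giving only $o(d^2\eta^4)$ rather than $o(d^2\eta^3)$ for the $\alpha = O(1)$ bullet. The fix is easy: $m = n$ already suffices, since the 1D sampling error only needs $m \gtrsim \alpha^2/\eta^2$, and in the regime of \Cref{thm:low-degree-hardness-cov-aware-mean} we have $n \ge d \gtrsim 1/\eta^2 \ge \alpha^2/\eta^2$; with $m = n$ each sub-block has size $2n$ and the total is $2kn = O(n\,\alpha^2/\eta)$ as required. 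With that change your signal/noise separation ($H_1$ clean mean $\ge 0.9\delta^2 \asymp \eta$ vs.\ noise $O(\alpha\eta + \alpha/\sqrt m) = o(\eta)$) and the majority-vote Chernoff bound go through.
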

\begin{proof}
Let $k \asymp \alpha^2/\eta$.
The tester $\cA'$ is as follows:
\begin{enumerate}
    \item Let $S_1,\dots,S_k$ be $k$ datasets of $n$ independent samples from $\Cref{def:rm-testing}$. 
    \item Generate $k$ independent random $d$-dimensional rotation matrices  $U_1,\dots, U_k$.
    \item For $i \in [k]$:
    \begin{enumerate}
        \item Let $S_i' \gets\{U_i x_j : x_j \in S_i\}$ be the randomly rotated data.
        \item Let $y_i \gets U_i^\top \cA(S_i', \alpha,\eps)$ be the (inverted) output of the estimation algorithm $\cA$. 
    \end{enumerate}
    \item Compute $r:= \tfrac{1}{k}\max_{u \in \cS^{d-1}} \sum_{i \in [k]} (y_i^\top u)^2\1_{\|y_i\|_2 \leq 10\alpha + \sqrt{\eta}}$, the largest eigenvalue of $y_i$'s.
   \item If $r \geq c \eta $, output ``$H_1$'', else ``$H_0$''.    
\end{enumerate}

We now establish the correctness of the procedure, showing that $r$ is indeed large for the alternate distribution and small for the null distribution.
Let $\delta = 0.001\sqrt{\eta}$ denote the mean of the inlier Gaussian under $H_1$ in the hard instance from \Cref{thm:low-degree-hardness-cov-aware-mean}.

\paragraph{Alternate distribution.} 
We will show that with high probability, $v^\top y_i \geq \delta/2$ and $\|y_i\|_2 \leq 10 \alpha $ for majority of $i \in [k]$, which would imply that $r \geq \delta^2/8$.
Recall that in the hard instance,  $\Sigma_v^{-1/2} = (I - vv^\top) + \sigma^{-1}vv^\top$ for $\sigma = \delta/\alpha$ and $\mu_v = \delta v$, and thus for an arbitrary $x \in \R^d$,
\begin{align*}
\|\Sigma^{-1/2}(x - \mu)\|_2 \geq \left|v^\top \left((I - vv^\top) + \sigma^{-1}vv^\top\right)(x - \delta v)\right| = \left|\frac{(v^\top x - \delta)}{\sigma}\right| \,.
\end{align*}
Hence, the output of the estimation algorithm must satisfy $\left|v^\top\widehat{\mu} - \delta\right| \leq 0.1 \alpha\sigma = 0.1\delta$ with probability at least $0.99$.
Moreover, on the same probability event, since $\Sigma \preceq I$, we also have that $\|\widehat{\mu}\|_2 \leq \|\Sigma^{-1/2}(\widehat{\mu} - \mu)\| + \delta \leq \alpha + \delta$.
In fact, the same conclusion is true for all $ i \in [k]$ for $y_i$, 
because we even though we perform the rotation $U_i$, the definition of $y_i$ inverts the rotation.
By a Chernoff argument, with probability at least $1 - 0.01 \exp(-ck)$, 
for at least half of the $y_i$'s, $|v^\top y_i - \delta| \leq 0.1 \delta$ and hence $|v^\top y_i| \geq 0.5 \delta$.

\paragraph{Null distribution.}
When the underlying distribution is isotropic Gaussian, the rotational invariance of Gaussians imply that $U_i$ is independent of $S_i'$.
Thus, $U_i$ is also independent of $\cA(S_i', \alpha,\eps)$, implying that $y_i$ is a uniformly random vector of norm $\|y_i\|$. 
Thus $z_i:= y_i/\|y_i\|$ are uniformly random unit vectors that are independent of each other.
If we define $L = \tfrac{1}{k} \sum_iz_iz_i^\top$, then $\E[H] = \tfrac{1}{d}I$.
Applying concentration of the covariance of uniformly random vectors~\cite{Vershynin18}, we obtain that with probability at least $1 - \exp(- c k)$,
$\|H\|_\op \leq \frac{1}{d} \left(1 + \sqrt{\frac{d}{k}} + \frac{d}{k} \right) \leq \frac{1}{d} + \frac{1}{\sqrt{kd}} + \frac{1}{k}$, which is $O\left(\frac{1}{k}\right)$.

Therefore, with probability at least $1 - \exp(-ck)$, thus $r$ is at most $(\alpha^2 +  \eta)$ times $O(1/k)$.
For this to be less than $\delta^2/2$, we need $k \gtrsim \alpha^2/\delta^2 = \Theta(\alpha^2/\eta)$. 
\end{proof}

\subsubsection{Consequences on Differentially Private Covariance-Aware Mean Estimation}

In this section, we state the consequences of \Cref{prop:testing-vs-estimation} on differentially private covariance-aware mean estimation by using the reduction between robust estimation and differential privacy~\cite{DwoLei09,GeoHop22}.

Let $\nstar_{\priv,\eff}(\alpha,\beta,\eps,\delta)$ denote the sample complexity of the best polynomial-time algorithm that is $(\eps,\delta)$-DP and when the input is a set of $n$ i.i.d.\ samples from $\cN(\mu,\Sigma)$, then with probability $1-\beta$, it outputs $\widehat{\mu}$ satisfying $\|\Sigma^{-1/2}(\widehat{\mu}-\mu)\|_2 \leq \alpha$.
Similarly, define $\nstar_{\robust,\eff}(\alpha,\beta,\eta)$ denote the sample complexity of the best polynomial-time algorithm that takes as input of $\eta$-corrupted set of  $n$ i.i.d.\ samples from $\cN(\mu,\Sigma)$ and with probability $1-\beta$, it outputs $\widehat{\mu}$ satisfying $\|\Sigma^{-1/2}(\widehat{\mu}-\mu)\|_2 \leq \alpha$.

\begin{conjecture}
\label{conjecture:robust-cov-aware-mean-est}
For any $\alpha> 0$, 
$\nstar_{\robust,\eff}(\alpha,\beta,\eta) \gg\eta^2 d^2 \max\left(\frac{\eta}{\alpha^2}, 1\right) $.    
\end{conjecture}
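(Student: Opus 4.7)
The plan is to establish the conjecture as a conditional consequence of the low-degree polynomial conjecture, by chaining together the two results immediately preceding it: the low-degree hardness of covariance-aware robust mean testing (\Cref{thm:low-degree-hardness-cov-aware-mean}) and the testing-to-estimation reduction (\Cref{prop:testing-vs-estimation}). Suppose for contradiction that $\cA$ is a polynomial-time robust estimator achieving $\|\Sigma^{-1/2}(\widehat{\mu}-\mu)\|_2 \leq 0.1\alpha$ using $n$ samples with probability at least $0.99$. By \Cref{prop:testing-vs-estimation}, this algorithm can be turned into a polynomial-time tester $\cA'$ for the problem in \Cref{def:rm-testing} (with separation parameter $\alpha$) that uses $C n\cdot\min(\alpha^2/\eta, 1)$ samples and succeeds with high probability on the specific hard instance produced in \Cref{thm:low-degree-hardness-cov-aware-mean}.

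On the other hand, \Cref{thm:low-degree-hardness-cov-aware-mean} combined with the low-degree polynomial conjecture rules out any polynomial-time tester for this instance using $o(\eta^2 d^2/\polylog(d))$ samples (the regime $\eta \gg 1/\sqrt{d}$ in which the conjecture is stated is precisely the regime covered by that theorem). Therefore
\[
C n\cdot \min(\alpha^2/\eta, 1) \gtrsim \eta^2 d^2 / \polylog(d),
\]
which rearranges to
\[
n \gtrsim \frac{\eta^2 d^2}{\polylog(d)}\cdot\max\!\left(\frac{\eta}{\alpha^2},\, 1\right),
\]
which is exactly the desired lower bound (up to polylogarithmic factors absorbed into the $\gg$).

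\textbf{Main obstacle.} The fundamental one is of course that the derivation is only conditional: establishing the conjecture unconditionally would require computational hardness results well beyond the state of the art for average-case problems of this flavor. A secondary, more technical subtlety is tracking success probabilities through the reduction. The success probability guaranteed by \Cref{prop:testing-vs-estimation} is $1-0.99\exp(-c\alpha^2/\eta)$, which degenerates for $\alpha\ll\sqrt{\eta}$; to apply the argument in that regime one must first boost the estimator's failure probability to be polynomially small (e.g., by running $\cA$ on independent batches and aggregating via a robust procedure), and then verify that this amplification does not change the sample-complexity accounting by more than a polylogarithmic factor. Finally, one should check that the instance produced by \Cref{thm:low-degree-hardness-cov-aware-mean} meets the structural conditions used by the reduction, namely, the covariance-eigenvalue lower bound $\lambda_{\min}(\Sigma_v)\gtrsim\min(\eta/\alpha^2,1)$ stated in that theorem is precisely what the proof of \Cref{prop:testing-vs-estimation} relies on when it uses $\Sigma^{-1/2}v = (1/\sigma)v$ with $\sigma = \delta/\alpha = \Theta(\sqrt{\eta}/\alpha)$.
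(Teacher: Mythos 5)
The statement you are asked to prove is explicitly labeled a \emph{conjecture} in the paper, and the paper does not claim to prove it: the sentence immediately following says only that "the conjecture above is substantiated by \Cref{prop:testing-vs-estimation}." Your conditional derivation---chaining \Cref{thm:low-degree-hardness-cov-aware-mean} through \Cref{prop:testing-vs-estimation} under the low-degree conjecture---is exactly the substantiation the authors intend, and you correctly identify both the success-probability degradation when $\alpha \ll \sqrt{\eta}$ and the role of the eigenvalue condition $\lambda_{\min}(\Sigma_v) \gtrsim \min(\eta/\alpha^2, 1)$ in making the reduction apply to the hard instance. So you have the right logical structure and the right caveats.

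One thing you should flag rather than silently inherit: there is an internal inconsistency in the paper around the sample overhead in \Cref{prop:testing-vs-estimation}. Its statement has the factor $\min(\alpha^2/\eta,1)$, and your algebraic rearrangement (using $1/\min(x,1)=\max(1/x,1)$) reproduces the conjectured $\max(\eta/\alpha^2,1)$ exactly. But the proposition's proof runs $k \asymp \alpha^2/\eta$ independent batches of $n$ samples each, so the actual overhead should be $k n$ with $k \geq 1$, i.e.\ $\max(\alpha^2/\eta,1) \cdot n$; and the paper's own "in particular" bullets ($o(d^2\eta^3)$ at $\alpha = O(1)$, $o(d^2\eta^2)$ at $\alpha = O(\sqrt{\eta})$) are what you get with $\max$, not $\min$. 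With the corrected $\max$, the reduction yields $n \gtrsim \eta^2 d^2 \min(\eta/\alpha^2,1)$, which is weaker than the conjecture's $\max$ whenever $\alpha \gg \sqrt{\eta}$; this is also how the conjecture is consumed in the proof of \Cref{prop:private-lower-bound}, which reads it as a $\min$. So the conjecture as literally written goes beyond what the conditional reduction establishes in the regime $\alpha \gg \sqrt{\eta}$, which is consistent with its designation as a conjecture; your write-up should not give the impression that the reduction exactly reproduces the claimed $\max$ bound.
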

The conjecture above is substantiated by \Cref{prop:testing-vs-estimation},
\begin{proposition}
\label{prop:private-lower-bound}
Under \Cref{conjecture:robust-cov-aware-mean-est}, for any $\alpha > 0$
   \begin{align*}
       \nstar_{\priv,\eff}(\alpha,\beta,\eps,\delta) \gg \max_{t \in (0,1/2)} \min\left( d^{2-2t} \min(d^{-t}\alpha^{-2},1), d^t \cdot \frac{\log(1/\beta)}{\eps}, d^t \frac{\log(1/\delta)}{\eps}  \right)\,.
   \end{align*}
\end{proposition}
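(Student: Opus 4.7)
The overall strategy is to combine \Cref{conjecture:robust-cov-aware-mean-est} with a black-box reduction from polynomial-time approximate-DP estimation to polynomial-time robust estimation, in the spirit of the line of work \cite{DwoLei09, GeoHop22, HopKMN23, AsiUZ23}. These reductions show that any polynomial-time $(\eps,\delta)$-DP estimator for a statistical task can be converted into a polynomial-time robust estimator for the same task, with a sample overhead controlled by the privacy parameters. The form of the reduction I plan to target states: if there is a polynomial-time $(\eps,\delta)$-DP estimator for covariance-aware mean estimation with sample complexity $n_{\mathrm{priv}}(\alpha,\beta,\eps,\delta)$ and failure probability at most $\beta$, then for every (sufficiently small) corruption rate $\eta$, there is a polynomial-time robust estimator achieving accuracy $O(\alpha)$ with sample complexity at most
\[
n_{\mathrm{rob}} \;\lesssim\; n_{\mathrm{priv}} \cdot \max\!\Paren{\tfrac{\alpha^2}{\eta},\,1} \;+\; \tfrac{\log(1/\beta)}{\eta\eps} \;+\; \tfrac{\log(1/\delta)}{\eta\eps}\,.
\]
Intuitively, the $n_{\mathrm{priv}}$ term comes from running the DP estimator on an (uncorrupted) sub-sample; the $\log(1/\beta)/(\eta\eps)$ term accounts for failure-probability amplification; and the $\log(1/\delta)/(\eta\eps)$ term reflects the slack under group privacy applied across the $\eta n$ corrupted points. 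The factor $\max(\alpha^2/\eta,1)$ inherits from the testing-to-estimation conversion in \Cref{prop:testing-vs-estimation}, which the reduction effectively uses to amplify the accuracy of the DP sub-estimators.

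With such a reduction in hand, I would argue as follows. Fix $t \in (0,1/2)$ and set $\eta = d^{-t}$. By \Cref{conjecture:robust-cov-aware-mean-est}, the robust estimator furnished by the reduction must use $n_{\mathrm{rob}} \gg d^{2-2t}\,\max(d^{-t}\alpha^{-2},\,1)$ samples. For the upper bound on $n_{\mathrm{rob}}$ displayed above to accommodate this lower bound, at least one of the three summands must individually be $\gg d^{2-2t}\,\max(d^{-t}\alpha^{-2},1)$. Dividing the first summand through by $\max(\alpha^2 d^t,1)=1/\min(d^{-t}\alpha^{-2},1)$ yields $n_{\mathrm{priv}} \gg d^{2-2t}\min(d^{-t}\alpha^{-2},1)$, while the latter two summands directly yield $n_{\mathrm{priv}}\gg d^t\log(1/\beta)/\eps$ and $n_{\mathrm{priv}}\gg d^t\log(1/\delta)/\eps$ respectively (after rearranging the constraint $\log(1/\beta)/(\eta\eps) \gg d^{2-2t}\max(d^{-t}\alpha^{-2},1)$ and similarly for $\delta$, and absorbing the $\max$-factor, which is $\ge 1$, into the bound). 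Since $n_{\mathrm{priv}}$ must satisfy the weakest of the three, we obtain the $\min$ expression inside the statement. Finally, since $t$ was arbitrary, taking the maximum over $t\in (0,1/2)$ yields the claimed lower bound.

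The main obstacle will be pinning down a reduction with the precise quantitative form needed here: existing reductions in \cite{GeoHop22, HopKMN23, AsiUZ23} give qualitatively similar guarantees but with log factors and regime restrictions that will need to be tracked (and perhaps re-derived) carefully for the covariance-aware setting. A secondary subtlety is that the $\min(d^{-t}\alpha^{-2},1)$ (rather than $\max$) appears inside the first branch; this comes from combining the estimation-to-testing conversion in \Cref{prop:testing-vs-estimation} with the reduction, and must be propagated consistently through the privacy-to-robustness step. The cleanest approach is probably to split the argument into the two regimes $\alpha^2 \gtrsim \eta$ and $\alpha^2 \lesssim \eta$ and optimize the sub-sample-and-aggregate construction separately in each.
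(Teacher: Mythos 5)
Your high-level strategy — lower bound for robust estimation, then transfer to DP estimation via a privacy-to-robustness reduction — matches the paper's plan. However, the form of the reduction you invoke differs in a way that breaks the argument, and the paper's actual reduction is simpler.

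The paper does not build a \emph{new} robust estimator from the DP one via sub-sample-and-aggregate. Instead, it invokes the fact from \cite{DwoLei09,GeoHop22} (cited as Theorem~3.1 in \cite{GeoHop22}) that a high-probability accurate $(\eps,\delta)$-DP algorithm is \emph{already} robust, with no modification and no sample overhead: the very same algorithm $\cA$, run on an $\eta$-corrupted dataset of the \emph{same} size $n$, produces an $O(\alpha)$-accurate estimate with probability $1-\poly(\beta)$, for every $\eta \lesssim \min\bigl(\tfrac{\log(1/\beta)}{\eps n},\ \tfrac{\log(1/\delta)}{\eps n+\log n}\bigr)$. Crucially, the corruption level $\eta_*$ the algorithm tolerates \emph{decreases} as $n$ grows; there is no factor $\max(\alpha^2/\eta,1)$ and no additive $\log/(\eta\eps)$ terms. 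The argument is then a contradiction: fix $t$ and suppose $n$ is smaller than all three branches of the claimed bound; since $n< d^t\log(1/\beta)/\eps$ and $n< d^t\log(1/\delta)/\eps$, we get $\eta_*\gtrsim d^{-t}$, so $\cA$ is robust at corruption level $d^{-t}$, and \Cref{conjecture:robust-cov-aware-mean-est} (instantiated with this $\eta$) forces $n\gg d^{2-2t}\cdot(\cdots)$, contradicting $n$ being smaller than that branch as well. The $\max(\eta/\alpha^2,1)$ (equivalently, as the paper's proof text and the proposition write it, a $\min$-type factor — there appears to be a small inconsistency in the paper between the conjecture and its usage) already lives \emph{inside} the robust conjecture, not in the privacy-to-robustness step; your proposal misplaces where that factor originates.

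Your proposed reduction $n_{\mathrm{rob}} \lesssim n_{\mathrm{priv}}\cdot\max(\alpha^2/\eta,1) + \tfrac{\log(1/\beta)}{\eta\eps} + \tfrac{\log(1/\delta)}{\eta\eps}$ cannot yield the stated $\min$ structure, because the last two summands do not depend on $n_{\mathrm{priv}}$ at all. When you apply the robust lower bound and conclude that at least one of the three summands is large, the cases where the second or third summand dominates impose a condition on $\beta$ or $\delta$, not on $n_{\mathrm{priv}}$, so no lower bound on $n_{\mathrm{priv}}$ follows in those branches. To make the $\min$ emerge, the reduction must be stated so that the \emph{same} sample size appears both as the input to the DP algorithm and as the quantity constrained by the robust lower bound — exactly what the ``DP implies robust'' form achieves. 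As a secondary point, the algebra in your ``dividing the first summand through'' step is off: $\max(d^{-t}\alpha^{-2},1)\cdot\min(d^{-t}\alpha^{-2},1) = d^{-t}\alpha^{-2}$, so your division produces $d^{2-3t}\alpha^{-2}$ rather than $d^{2-2t}\min(d^{-t}\alpha^{-2},1)$; these coincide only when $\alpha^2 \geq d^{-t}$.
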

We now give context behind \Cref{prop:private-lower-bound}.
Let $\nstar_\priv(\alpha,\beta,\eps,\delta)$ denote the information theoretic sample complexity  of covariance-aware mean estimation using an $(\eps,\delta)$-DP algorithm  to obtain error $\alpha$ with probability $1-\beta$.
To simplify the discussion, consider the regime of $\alpha = 1$ and $\eps = 1$.
The following bound on the information-theoretic sample complexity from \cite[Theorem 3.2]{BroGSUZ21}  states $\nstar_\priv(1,\beta,1,\delta)$ is at most 
\begin{align*}
    \nstar_\priv(1, \beta, 1,\delta) \lesssim d + \log(1/\beta) + \log(1/\delta).
    \end{align*}
However, \Cref{prop:private-lower-bound} implies that this sample complexity can not be achieved efficiently (under \Cref{conjecture:robust-cov-aware-mean-est}), stating that $\nstar_{\priv, \eff}(1,\beta, 1,\delta) \geq \max_{t \in (0,1/2)} \min(d^{2-3t}, d^t \log(1/\beta), d^t \log(1/\delta))$.
In particular, any computationally-efficient algorithm that uses $d^{2 - \Omega(1)}$ samples must have polynomial prefactor $d^{\Omega(1)}$ either on  $\log(1/\delta)$ or $\log(1/\beta)$.
Indeed, existing computationally-efficient algorithms  use more samples than the information-theoretic sample complexity~\cite{BroHS23,KDH23}.

We now give the proof of \Cref{prop:private-lower-bound}.
\begin{proof}
   We shall use the following reduction stating that high-probability private estimators are automatically robust~\cite{DwoLei09,GeoHop22}  (see \cite[Theorem 3.1]{GeoHop22}). The reduction states that if an algorithm $\cA$ with domain $(\R^d)^n$ satisfies:
\begin{enumerate}
    \item $\cA$ is $(\eps,\delta)$-DP
\item If $S$ is a set of $n$ i.i.d.\ samples from $\cN(\mu,\Sigma)$, then with probability $1 - \beta$, $\cA$ outputs $\widehat{\mu} \in \R^d$ such that $\|\Sigma^{-1/2}(\widehat{\mu}-\mu)\|_2 \leq \alpha$.
\end{enumerate}
Then $\cA$ also satisfies the following conclusions: If $S$ is a set of $\eta$-corrupted samples from $\cN(\mu,\Sigma)$ for 
\begin{align}
\label{eq:rob-to-private-corruption}
\eta \asymp \min\left( \frac{\log(1/\beta)}{\eps n}, \frac{\log(1/\delta)}{\eps n + \log n}\right), 
\end{align}
then with probability $1- \poly(\beta)$, $\cA$ computes an estimate $\widehat{\mu}$ such that $\|\Sigma^{-1/2}(\widehat{\mu} - \mu)\|_2 \lesssim \alpha$.

    Let $\cA$ be an $(\eps,\delta)$-DP computationally-efficient algorithm that uses fewer than the claimed samples, achieved for a certain $t$.
    Then \Cref{eq:rob-to-private-corruption} implies that $\cA$ is robust to 
    \begin{align*}
        \eta_* \asymp \min\left( \frac{\log(1/\beta)}{\eps n}, \frac{\log(1/\delta)}{\eps n + \log n}\right) \geq d^{-t}
    \end{align*}
However, \Cref{conjecture:robust-cov-aware-mean-est} implies that any computationally-efficient algorithm requires at least $d^2 \eta^2\min(\eta\alpha^{-1},1)$ samples.
\end{proof}

\subsection{Robust Covariance Estimation  for Subgaussian Distributions}
\label{sec:low-degree-cov-est-subgaussian}
In this section, we consider the problem of robust covariance estimation of \emph{subgaussian} distributions (and the sample complexity of certifying operator norm resilience in \Cref{rem:phase-transition} for \emph{Gaussian} data) in the low-error regime of $\alpha \ll \sqrt{\eta}$.
We begin by defining the following testing problem:
\begin{problem}[Robust Subgaussian Covariance Testing in Operator Norm]
 \label{def:cov-testing-subgaussian}
 Given corruption rate $\eta \in (0,1/2)$, deviation $\alpha \in (\eta \log(1/\eta),0.1)$, sample size $n\in \N$ and dimension $d \in \N$, consider the following distribution testing problem with input $y = (y_1,\dots,y_n) \in \R^{nd}$: 
 \begin{enumerate}
\item $H_0$: $y_1,\dots, y_n$ are sampled i.i.d.\ from $\cN(0,\bI_d)$.
\item $H_1$: First a unit vector $v$ is sampled randomly from $\cS^{d-1}$, and then 
$y_1,\dots, y_n$ are sampled i.i.d.\ from $(1-\eta) D_v + \eta Q_v$, where $Q_v$ are arbitrary and $D_v$ is a centered subgaussian distribution with covariance $\Sigma_v$ satisfying $\|\Sigma_v - \bI_d\|_\op \geq \alpha$.
 \end{enumerate}
\end{problem}
Information-theoretically, $\Theta(d/\alpha^2)$ samples are necessary and sufficient to solve the above testing problem for any $\alpha = \Omega(\eps \log(1/\eps))$.
As in the Gaussian setting, the hypothesis testing problem above is easier than robust covariance estimation of subgaussian data with relative spectral norm error less than $c \alpha$ for a small enough constant $c>0$. 

The main result of this subsection is the following computational lower bound for the low-degree polynomial tests.
\begin{proposition}
\label{prop:low-degree-subgaussian-cov}
Fix $i_* \in \N$ to be an even constant and let $c_{i_*}$ be a  small enough constant and $c_{i_*}'$ be a large enough constant.
Let $1/\sqrt{d} \ll \eta \leq 10^{-5}$ satisfying 
\begin{align*}
c_{i_*}' \eta^{1 - \frac{2}{i_*+2}} \ll \alpha \ll c_{i_*} \eta^{1 - \frac{2}{i_*}}.
\end{align*}
There exist a choice of $Q_v$'s and $D_v$'s such that the degree-$D$ advantage of \Cref{def:cov-testing} is $O(1)$ for any $n \gg d $ and $n \ll \frac{d^{\frac{i_* + 2}{2}} \eta^{i_*}  }{\poly(D){\alpha}^{i_* + 2}}$
\end{proposition}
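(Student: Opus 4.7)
The plan is to apply the NGCA framework (\Cref{prob:generic_hypothesis_testing} combined with \Cref{cor:moment-bounds}), generalizing the constructions of \Cref{prop:low-degree-cov-estimation-gaussian} and \Cref{thm:low-degree-hardness-cov-aware-mean} from matching three Hermite moments (i.e., $j_* = 4$) to matching $i_* + 1$ Hermite moments (i.e., $j_* = i_* + 2$). Specifically, I will exhibit a univariate distribution $A = (1-\eta) A_0 + \eta Q$ on $\R$ whose lift via \Cref{def:high-dim-distribution} produces a valid instance of \Cref{def:cov-testing-subgaussian} and whose first $i_* + 1$ normalized probabilist's Hermite coefficients vanish, so that \Cref{cor:moment-bounds} applies with $j_* = i_* + 2$, $\kappa = \eta^2$, and $\tau = (\eta/\alpha)^{j_*}$.

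For the inlier, take $D_v$ to be the product distribution whose projection on $v$ is $A_0 := \cN(0, 1 - \alpha)$ and whose orthogonal projection is standard Gaussian; then $D_v$ is centered and subgaussian with covariance $\Sigma_v = I - \alpha\,vv^\top$, so $\|\Sigma_v - I\|_\op = \alpha$ as required. Using \Cref{item:her-ornstein}, $\E_{X \sim A_0}[h_j(X)] = \alpha^{j/2} h_j(0)$ for even $j$ (and zero for odd $j$). For the outlier part, take $Q_v$ to be the lift of a symmetric univariate Gaussian mixture $Q = \tfrac{1}{2}\sum_k w_k\bigl(\cN(\mu_k, \sigma_k^2) + \cN(-\mu_k, \sigma_k^2)\bigr)$ whose parameters $(\mu_k, \sigma_k)$ are chosen to enforce $\E_{X \sim A}[h_j(X)] = 0$ for all $j \in \{1, 2, \ldots, i_*+1\}$. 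Odd indices vanish by symmetry; each even index $j \in \{2, 4, \ldots, i_*\}$ contributes one scalar linear constraint on $\E_Q h_j$, with prescribed value $-(1-\eta)\alpha^{j/2} h_j(0)/\eta$. Because the signs of $h_{2k}(0)$ alternate in $k$, the matching cannot be done with a simple location mixture (as in \Cref{lem:discrete-x-w}) for $i_* \geq 4$; instead, one allows both $\mu_k$ and $\sigma_k \leq 1$ to vary, which via \Cref{item:her-ornstein} gives $\E_{\cN(\mu, \sigma^2)} h_j = \rho^j h_j(\mu/\rho)$ with $\rho^2 = 1 - \sigma^2$ and thus access to both signs of $h_j$. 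The hypothesis $\alpha \ll c_{i_*}\eta^{1 - 2/i_*}$, with $c_{i_*}$ sufficiently small, is exactly what ensures the moment vector is small enough for the associated truncated Hankel matrix (of size $((i_*/2) + 1) \times ((i_*/2) + 1)$) to be positive semidefinite, so that the moment-matching problem admits a solution in which the components $(\mu_k, \sigma_k)$ satisfy $|\mu_k|, |\sigma_k| \leq B$ with $B$ of controlled magnitude.

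Once $A$ is constructed, the remaining Hermite coefficients can be bounded: for even $j \geq i_* + 2$, triangle inequality combined with $|h_j(0)| \leq 1$ and the bounded-parameter bound on $Q$ gives
\begin{align*}
\bigl|\E_{X \sim A}[h_j(X)]\bigr|^2 \;\lesssim\; \alpha^{j} + \eta^2 \cdot \frac{B^{2j}}{j!}.
\end{align*}
With $B^2 = O(\alpha/\eta)$ (the binding scale coming from matching the second Hermite moment), the right-hand side is bounded by $\alpha^j + \eta^{2-j} \alpha^j / j!$, and direct comparison with the target bound $j^j \kappa \tau^{-j/j_*} = j^j \eta^2 (\alpha/\eta)^j$ from \Cref{cor:moment-bounds} reduces to $(j! \, j^j)^{-1} \leq 1$, which holds trivially for $j \geq 2$. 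The normalization $n \kappa \geq 1$ is immediate from $n \gg d$ and $\eta \gg 1/\sqrt{d}$.

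Applying \Cref{cor:moment-bounds} then yields $\Adv_{\leq D, n} = O(1)$ for all $n \ll \tfrac{1}{\poly(D)}\, d^{j_*/2}\, \tau/\kappa = \tfrac{1}{\poly(D)}\, d^{(i_*+2)/2}\,\eta^{i_*}/\alpha^{i_*+2}$, which is the claimed lower bound. The main technical obstacle will be the moment-matching step: unlike the three-moment case handled by the explicit PSD criterion of \Cref{fact:moment-matching} in \Cref{lem:discrete-x-w}, here one must realize a vector of $i_*/2$ signed target Hermite moments with alternating signs using a symmetric mixture of Gaussians, which amounts to simultaneously verifying PSD conditions on a larger truncated moment matrix. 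The quantitative hypothesis $\alpha \ll c_{i_*}\eta^{1-2/i_*}$ is precisely what keeps these PSD conditions feasible with $B$ of the stated order, while the complementary hypothesis $\alpha \gg c_{i_*}'\eta^{1-2/(i_*+2)}$ ensures that the resulting sample-complexity window $[d,\, d^{(i_*+2)/2}\eta^{i_*}/\alpha^{i_*+2}]$ is non-vacuous, so that the lower bound is non-trivial in the stated parameter regime.
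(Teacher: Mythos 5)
Your overall strategy — an NGCA instance with $j_* = i_* + 2$, $\kappa = \eta^2$, $\tau = (\eta/\alpha)^{i_*+2}$, and \Cref{cor:moment-bounds} — and your final bookkeeping (the Hermite-coefficient estimate for $j \geq j_*$ and the application of the corollary) match the paper, which proves the result via \Cref{lem:univariate-A-subgaussian}. However, your construction of the one-dimensional distribution $A$ is genuinely different from the paper's, and the difference is precisely where your proof has a gap.

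The paper does \emph{not} keep the inlier $D$ exactly Gaussian. Instead it sets $D$ to have density $\phi_{0,1-\alpha}(x) + p(x)\,\1_{[-1,1]}(x)$, where $p$ is a degree-$i_*$ polynomial, and keeps $Q$ a simple symmetric two-point Gaussian mixture $\tfrac12\bigl(\phi_{\delta',1} + \phi_{-\delta',1}\bigr)$ with $\delta' \asymp \sqrt{\alpha/\eta}$ chosen only to fix the variance. All the remaining moment cancellations for $j \in [i_*]$ are absorbed by the perturbation $p$, whose required integrated moments $a_i = \int_{-1}^1 x^i p(x)\,dx$ are tiny under the hypothesis $\alpha \ll c_{i_*}\eta^{1-2/i_*}$; the existence of a pointwise-small $p$ with those prescribed moments is guaranteed by \cite[Lemma 8.18]{DiaKan22-book}, and smallness of $p$ on $[-1,1]$ keeps $D$ a valid (and still subgaussian) density. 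This is allowed because \Cref{def:cov-testing-subgaussian} only requires $D_v$ to be subgaussian, not Gaussian. The key point is that a compactly-supported polynomial perturbation gives free access to arbitrary small \emph{signed} moments, which is exactly what is needed because the targets $-\tfrac{(1-\eta)}{\eta}\alpha^{k} h_{2k}(0)$ have alternating signs in $k$.

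Your approach instead keeps $D = \cN(0,1-\alpha)$ exactly Gaussian and tries to achieve the alternating-sign Hermite targets through a symmetric Gaussian mixture $Q$. You correctly observe that a pure location mixture cannot work (for such $Q$, $\E_Q[h_{2k}] = \E_F[Y^{2k}]/\sqrt{(2k)!} \geq 0$ for all $k$, contradicting the needed negative sign at $k = 2$), and you propose allowing the component variances $\sigma_k \leq 1$ to vary. This is plausible in principle, but you do not actually carry out the construction, and the feasibility is not at all obvious: once $\mu_k/\rho_k$ gets large (which it must to hit the dominant $k=1$ target of order $\alpha/\eta \gg 1$), all $h_{2k}(\mu_k/\rho_k)$ become positive simultaneously, and the cancellations needed for $k \geq 2$ must come from delicate corrections whose existence you assert but do not establish. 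Invoking a ``truncated Hankel matrix'' criterion does not directly apply here, because Hankel PSD conditions characterize raw-moment sequences of measures, not Hermite-coefficient sequences of Gaussian mixtures with per-component variances. You would need to fix a common $\sigma$ and translate the constraints into a genuine moment problem for the location distribution, then verify the resulting Hankel conditions on the stated bounded interval, or give a direct construction — neither of which is in the proposal. The quantitative hypothesis $\alpha \ll c_{i_*}\eta^{1-2/i_*}$ is indeed what makes the analogous step work in the paper (it keeps the $a_i$ small enough for \cite[Lemma 8.18]{DiaKan22-book}), but transplanting that reasoning to your mixture parametrization is a nontrivial step that is simply missing. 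In short: your Hermite-coefficient tail bound and the final application of \Cref{cor:moment-bounds} are fine, but the heart of the proof — exhibiting a valid subgaussian $D$ and contamination $Q$ with the required moment cancellations — is not done, and the construction you gesture at is harder than, and different from, the one the paper uses to sidestep exactly this difficulty.
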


The low-degree lower bound above suggests that getting error $c_p \eta^{1 - \frac{2}{p}}$ efficiently for a tiny constant $c_p$ requires at least roughly
$\frac{d^{0.5p+ 1} \eta^p}{\eta^{p - \frac{4}{p}}} = d^{0.5p + 1 } \eta^{\frac{4}{p}}$ samples.
To explain the parameter regime above, we instantiate the result for certain small values of $i_*$:
\begin{enumerate}[label=(\Roman*)]
	\item $i_*=2$: For $\alpha \in (c_2'\sqrt{\eta}, c_2)$, $n$ must be larger than $\frac{d^2 \eta^2}{\alpha^4}$.
	\item $i_*=4$: For $\alpha \in (c_4'\eta^{\frac{2}{3}}, c_4 \sqrt{\eta})$, $n$ must be larger than $\frac{d^3 \eta^4}{\alpha^6}$.
	\item $i_*=6$: For $\alpha \in (c_6'\eta^{\frac{3}{4}}, c_6 \eta^{\frac{2}{3}})$, $n$ must be larger than $\frac{d^4 \eta^6}{\alpha^8}$.
\end{enumerate}

\begin{remark}[Phase transition at $\eta^{\frac{1}{4}}$ for certifying $\eta$-\textsc{SSV}]
\label{rem:phase-transition}
Consider the $\eta$-SSV problem for Gaussian data.
As mentioned in \Cref{sec:transfer-lemma}, certifying a bound of $\sqrt{\alpha}$ for \textsc{SSV} gives a certificate of $\alpha$ for the operator norm resilience for \emph{subgaussian} data.
Hence, certifying a bound of $\sqrt{\alpha}$ for the $\eta$-\textsc{SSV} (with Gaussian data) problem solves \Cref{def:cov-testing-subgaussian} efficiently.
Taking $\alpha = c \sqrt{\eta}$ for a tiny constant $c > 0$,
the above low-degree lower bounds give evidence that efficiently certifying a bound of $ \sqrt{c}\eta^{1/4}$ for the $\eta$-\textsc{SSV} problem  requires at least $d^3 \eta = \omega(d^{2.5})$ samples (under $\eta\gg 1/\sqrt{d}$).
In contrast, we gave an efficient certification to certify a bound of $C\eta^{1/4}$ for a large constant $C$ using only $O(d^{2 + \eps})$ samples.

\end{remark}

We now give the proof of \Cref{prop:low-degree-subgaussian-cov}.
\begin{proof}
We again choose an NGCA instance with hidden distribution $A'$ from the following lemma.

\begin{lemma}
\label{lem:univariate-A-subgaussian}
Consider $\alpha, \eta, i_*$ satisfying the conditions of \Cref{prop:low-degree-subgaussian-cov}.
There exists a univariate distribution $A'$ satisfying the following conditions:
\begin{enumerate}
  \item $A' = (1-\eta) D' + \eta Q'$ for two distributions  $D'$ and $Q'$.
  \item $D'$ is subgaussian and has variance $1-\alpha$.
  \item $A'$ matches first $i_*+1$ moments with $A$ and is symmetric around $0$.
  \item Setting $j_* = i_* +2$, we have that $\E_{A'}h_j(X) = 0$ for $j \in [j_* - 1]$ and for $j\geq j_*$, 
  \begin{align*}
  \left(\E_{A'}[h_j(X)]\right)^2 \lesssim \eta^2 \left(C^j \frac{\alpha^j}{\eta^j}\right)^{\frac{j}{j_*}}.
  \end{align*} 
\end{enumerate}
\end{lemma}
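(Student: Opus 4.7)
The plan is to construct $A'$ explicitly as a two-component mixture $A' = (1-\eta) D' + \eta Q'$ where $D'$ is Gaussian (hence subgaussian) with variance $1-\alpha$ and $Q'$ is a symmetric, finitely-supported distribution on a bounded interval. Concretely, I would take $D' = \cN(0,1-\alpha)$, which immediately settles properties (i) and (ii). Property (iii), i.e., matching the first $i_*+1$ moments of $\cN(0,1)$ together with being symmetric about $0$, reduces by symmetry to matching the $i_*/2$ even moments $\E X^{2m} = (2m-1)!!$ for $m = 1, \dots, i_*/2$. Each of these is a single linear constraint on $\E_{Q'} X^{2m}$: using $\E_{D'} X^{2m} = (2m-1)!! (1-\alpha)^m$, the required value is
\[
\E_{Q'} X^{2m} \;=\; \frac{(2m-1)!! \bigl[1 - (1-\eta)(1-\alpha)^m\bigr]}{\eta} \;=\; (2m-1)!! \cdot \Theta\!\Paren{1 + \tfrac{m\alpha}{\eta}}.
\]

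Paragraph 2. To realize these prescribed moments by a symmetric distribution $Q'$ supported on a bounded set, I would extend the argument of \Cref{lem:discrete-x-w} (which used \Cref{fact:moment-matching} for three moments) to $i_*/2$ even moments. A symmetric atomic distribution supported on at most $i_*/2 + 1$ symmetric pairs $\pm y_1,\ldots,\pm y_k$ has $i_*/2$ free parameters for its even moments, so by the truncated Hamburger/Hausdorff moment theorem (the natural generalization of \Cref{fact:moment-matching}, which asks for positive semidefiniteness of the associated moment and localizing matrices) such a $Q'$ exists on an interval $[-B,B]$ provided the moment sequence satisfies the corresponding PSD conditions. I would choose $B \asymp \sqrt{\alpha/\eta}$ and check, by a direct computation analogous to the one in \Cref{lem:discrete-x-w} (with constants depending on $i_*$), that for $\alpha, \eta$ in the stated range the normalized moments $\E_{Q'}(X/B)^{2m}$ lie well inside $[0,1]$, guaranteeing feasibility.

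Paragraph 3. With this construction in hand, property (iv) for $j \in [j_*-1]$ is immediate from moment matching together with symmetry (since odd Hermite expectations vanish and the first $i_*$ even Hermite expectations are determined by the first $i_*$ moments). For $j \geq j_* = i_*+2$, I would bound $|\E_{A'} h_j(X)|$ by splitting as $(1-\eta) \E_{D'} h_j(X) + \eta \E_{Q'} h_j(X)$. The first term is controlled by \Cref{item:her-ornstein}: for even $j$, $\E_{\cN(0,1-\alpha)} h_j(X) = \alpha^{j/2} h_j(0)$, with $|h_j(0)| \lesssim 1$ by a direct factorial estimate. The second term is bounded using the monomial expansion of $h_j$ (cf.\ \Cref{fact:hermite_monomial_expansion}) together with $|y_i| \leq B = O(\sqrt{\alpha/\eta})$, giving $|\E_{Q'} h_j(X)| \leq (C\alpha/\eta)^{j/2}/\sqrt{j!}$ (times a low-order polynomial in $j$). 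Adding the two and squaring yields a bound of the form $\eta^2 (C \alpha / \eta)^{j}$ at $j=j_*$, which is exactly what the conclusion requires; for larger $j$, the $1/\sqrt{j!}$ factor from the Hermite normalization provides the additional factorial decay needed to absorb $j$ into the exponent $j/j_*$ claimed in (iv).

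Paragraph 4. The principal technical obstacle is Step 2: verifying the (truncated) moment-matching feasibility conditions for $i_*/2$ even moments on the scale $B = O(\sqrt{\alpha/\eta})$. The PSD conditions on moment and localizing matrices become more delicate as $i_*$ grows, and they are precisely what force the two-sided window $c_{i_*}' \eta^{1 - 2/(i_*+2)} \ll \alpha \ll c_{i_*} \eta^{1 - 2/i_*}$: the lower bound ensures that the required $(i_*+2)$-th moment of $Q'$ (which is not being matched and so is a genuine degree of freedom) can be made small enough in normalized units to preserve positivity, while the upper bound ensures that the prescribed even moments are small enough in normalized units for a valid atomic solution to exist. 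Aside from this, bounding $h_j(0)$ and controlling $\max_{|y| \leq B} |h_j(y)|$ uniformly in $j$ is standard but must be done carefully enough that the polynomial-in-$j$ slack is absorbed into the constant $C$ appearing in the final Hermite bound.
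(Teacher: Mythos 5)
Your plan puts all of the moment-matching freedom into $Q'$ and keeps $D' = \cN(0,1-\alpha)$ exactly Gaussian, but the moment sequence this forces on $Q'$ is not realizable by any probability distribution in the regime of the lemma. With your prescription
\[
\E_{Q'}[X^{2m}] = \frac{(2m-1)!!\,\bigl[1-(1-\eta)(1-\alpha)^m\bigr]}{\eta},
\]
one gets $\E_{Q'}[X^2] \approx \alpha/\eta$ and $\E_{Q'}[X^4]\approx 6\alpha/\eta$ (each up to lower-order terms). Every distribution satisfies $\E[X^4]\geq(\E[X^2])^2$, which here would require $6\alpha/\eta \gtrsim (\alpha/\eta)^2$, i.e., $\alpha/\eta = O(1)$. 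In the regime of \Cref{prop:low-degree-subgaussian-cov} one has $\alpha/\eta \geq c_{i_*}'\eta^{-2/(i_*+2)} \gg 1$, so the prescribed sequence violates Jensen's inequality already at the fourth moment. No choice of $B$ or of localizing-matrix bookkeeping can repair this; it is not a quantitative delicacy of the truncated moment problem but an unconditional infeasibility, and it arises precisely because $Q'$ is being asked to produce a large second moment while keeping all higher even moments of the \emph{same} order.

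The paper sidesteps this by placing the matching freedom on the subgaussian side. There $Q$ is fixed to be $\tfrac12\cN(\delta',1)+\tfrac12\cN(-\delta',1)$ with $\delta'\asymp\sqrt{\alpha/\eta}$; its even moments grow geometrically, $\E_Q[X^{2m}]\approx(\alpha/\eta)^m$, which is Jensen-consistent for a distribution concentrated at scale $\delta'$. The remaining small mismatch in the first $i_*$ moments is then absorbed into $D$ by perturbing the density $\phi_{0,1-\alpha}$ with a compactly supported polynomial $p(x)\1_{[-1,1]}(x)$ (invoking the standard lemma that a degree-$i_*$ polynomial with prescribed small moments on $[-1,1]$ can be taken uniformly small); since the perturbation is bounded and compactly supported, $D$ remains subgaussian with variance $1-\alpha$. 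To salvage something closer to your construction you would have to let $D'$ carry part of the moment discrepancy rather than forcing it all onto $Q'$.
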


We defer the proof of the lemma above to \Cref{sec:proof-univariate-A-subgaussian} and first show that the lemma above suffices for the proof.
Applying \Cref{cor:moment-bounds} with $j_* = i_*+2$, $\kappa = \eta^2$ and $\tau = \left(\frac{\eta}{\alpha}\right)^{i_* + 2}$, we get that the degree-$D$ advantage is $O(1)$ for any $n \ll  \frac{d^{\frac{i_* + 2}{2}} \left( \frac{\eta}{\alpha}\right)^{i_* + 2} }{\poly(D)\eta^2} = \frac{d^{\frac{i_* + 2}{2}} \eta^{i_*}  }{\poly(D){\alpha}^{i_* + 2}} $
and $n \gg \frac{1}{\eta^2}$.

\end{proof}

\subsubsection{Proof of \Cref{lem:univariate-A-subgaussian}}
\label{sec:proof-univariate-A-subgaussian}

\begin{proof}

Let $\phi_{\mu,\sigma^2}(\cdot)$ denote the pdf of $\cN(\mu,\sigma^2)$.
We will again use an NGCA instance, with the hidden distribution $A'$ having pdf equal to $A'(x) = 0.5 A(x) + 0.5 A(-x)$, i.e., the symmetrized version of $A$.
For a degree-$i_*$ polynomial $p$ to be decided soon, the pdf of  $A$ is defined to be 
\begin{align*}
 A(x) = (1-\eta) \left(\phi_{0,1 - \alpha}(x) + p(x) \1_{[-1,1]}(x)\right) + \eta \left(0.5\phi_{\delta',1}(x) + 0.5 \phi_{-\delta',1}(x)\right),
\end{align*}
where 
$\delta' \in \R_+$.
Let $D$ corresponds to the distribution with pdf $\phi_{0,1 - \alpha}(x) + p(x) \1_{[-1,1]}(x)$
and $Q$ correspond to the distribution with pdf $0.5\phi_{\delta',1}(x) + 0.5 \phi_{-\delta',1}(x)$.
That is, the distribution $D$ corresponds to a modification of $\cN(0,1-\alpha)$ in a bounded interval.
Since the tail probabilities of $D$ and $\cN(0,1-\alpha)$ are equal, we get that $D$ is a subgaussian distribution.
We shall choose the polynomial $p(x)$ so that (i) $D$ is non-negative, (ii) $D$ integrates to $1$, (iii) $D$ is zero mean, and (iv) $D$ has variance $1-\alpha$. 

Set $\delta'$ so that the second ``moment'' of $A$ is $1$. 
That is,
\begin{align*}
(1-\eta) \left(1 - \alpha\right) + \eta \left(\delta'^2 + 1 \right) = 1,
\end{align*}
which is satisfied for $\delta' \asymp \sqrt{\alpha/\eta}$, which is bigger than $1$.
We now choose the polynomial $p$ to be the unique (at most) degree-$i_*$ polynomial so that $A$ matches $\{0\} \cup [i_*]$ ``moments'' with $G \sim \cN(0,1)$, which shall establish (ii)-(iv) above (we shall ensure that (i) is also satisfied at the end). 
Defining $a_i := \int_{-1}^1 x^i p(x)dx$ for each $i \in [0,\dots,i]$,  we want  $a_0 =0$ and for $i\in [i_*]$, 
\begin{align*}
(1-\eta)\left(  \E\left[\left(\sqrt{1-\alpha}G\right)^i\right] + a_i\right) + \eta \left(0.5 \E\left[\left(\delta' + G\right)^i\right] + 0.5 \E\left[\left(\delta' - G\right)^i\right]\right) = \E[G^i],
\end{align*}
which is equivalent to $a_i = 0$ for odd $i$ and for even $i$, $a_i$ must satisfy
\begin{align*}
a_i &= \frac{1}{1-\eta}\left(\E[G^i] - (1-\eta) (1 - \alpha)^{\frac{i}{2}} \E[G^i] - \eta \E\left[\left(\delta' + G\right)^i\right]\right) \\
&= \frac{1 - (1-\eta)(1-\alpha)^{\frac{i}{2} }}{1-\eta} \E[G^i] - \frac{\eta}{1-\eta}\E\left[\left(\delta' + G\right)^i\right]\,.
\end{align*}
In particular, if $\alpha \ll \frac{1}{i}$, then the required $a_i$'s satisfy
\begin{align*}
|a_i| &\lesssim \left(\alpha i\right)(Ci)^{i/2} + \eta 2^i \left((\delta')^i + i^{i/2}\right) \lesssim (O(i))^{O(i)} \max(\alpha, \eta \delta'^i)\lesssim (O(i))^{O(i)} \max\left(\alpha, \eta \left(C\frac{\alpha}{\eta}\right)^{\frac{i}{2}}\right)\,.
\end{align*}
Under the assumptions on $i_*$, $\alpha$, and $\eta$ in the theorem statement,
it follows that $|a_i| \leq c_{i_*}''$ for a tiny constant $c_{i_*}''$.
Lemma 8.18 in \cite{DiaKan22-book} implies that there there exists a degree-$i_*$ polynomial $p$ with $\int_{-1}^1 p(x)dx = a_i$ for all $i \in \{0\} \cup [i_*]$ and  
\begin{align*}
\max_{x \in [-1,1]} p(x) \leq O_{i_*}(\max_{i\leq i_*}|a_i|) \leq O_{i_*}(|a_{i_*}|) \leq 10^{-5}\,.
\end{align*}
Since the pdf $\phi_{0,1-\alpha}$ is at least $0.0001$ in the interval $[-1,1]$, we get that $D$ and hence $A$ is a valid distribution.
So far, we have shown that $A$ matches the first $[i_*]$ moments with $\cN(0,1)$ and assigns sufficient mass to a subgaussian distribution.
Since $A'$ is a symmetrized version of $A$ and $\cN(0,1)$,
we see that $A'$ also matches first $[i_*]$ moments with $\cN(0,1)$ and also assigns enough probability to a subgaussian component.
In fact, $A'$ matches first $[i_* +1]$-moments with $\cN(0,1)$ since both of them would be zero by symmetry.

We now calculate the Hermite coefficients of $A'$.
Recall that $h_j(\cdot)$ is an odd function for odd $j$ and an even function for an even $j$.
By symmetry of $A'$, $\E_{A'}[h_j(X)] = 0$ for odd j.
For any even $j\geq j_*$,
we have
\begin{align*}
\E[h_j(X)] 
&= (1-\eta) \E_{\cN(0,1-\alpha)}[h_j(X)] + (1-\eta) \int_{-1}^1 h_j(x)p'(x) dx + \eta \E_{\cN(0,1)}[h_j(X + \delta')]\\
&= (1-\eta) (\alpha)^{\frac{j}{2}} h_j(0) + (1-\eta)\int_{-1}^1 h_j(x)p'(x) dx + \eta \frac{\delta'^{j}}{\sqrt{j!}},
\end{align*}
which implies that 
\begin{align*}
\left|\E[h_j(X)]\right|^2 &\lesssim 1 + \eta^2 \left(C \frac{\alpha}{\eta}\right)^{j} 
\lesssim \eta^2 \left(C^j \frac{\alpha^j}{\eta^j}\right)^{\frac{j}{j_*}},
\end{align*}
where we use that for $j\geq j^*$, the second term dominates the first since $\alpha \gg \eta^{1 - \frac{2}{j_*}}$.

\end{proof}

\section*{Acknowledgements}
SBH thanks Larry Guth for enlightening conversations which led us to consider the consequences of our results for sparse PCA and $2\rightarrow p$ norm certification.
ST thanks Sidhanth Mohanty for helpful conversations about graph matrices and Jun-Ting (Tim) Hsieh for helpful discussions about subspace distortion.

\printbibliography

\appendix

\section*{Appendix}

\section{Sum-of-Squares Facts}
\label{sec:sos_facts}

The following fact is taken from~\cite[Lemma A.1]{boaz_dictionary}
\begin{fact}[SoS AM-GM]
    \label{fact:sos_am_gm}
    Let $x_1, \ldots, x_n$ be indeterminates for $n$ even.
    Then,
    \[
        \Set{x_1 \geq 0, \ldots, x_n \geq 0} \proves_{n}^{x_i} \prod_{i=1}^n x_i \leq \tfrac 1 n\sum_{i=1}^n x_i^n \,.
    \]
\end{fact}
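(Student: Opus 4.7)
The plan is to proceed by induction on $n$; the ``$n$ even'' hypothesis is in fact incidental, as the argument goes through for every $n \geq 2$. The base case $n = 2$ is immediate from the SoS identity $\tfrac{1}{2}(x_1^2 + x_2^2) - x_1 x_2 = \tfrac{1}{2}(x_1 - x_2)^2$, which does not even require the non-negativity constraints and is already of degree $2$.

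For the inductive step $n-1 \to n$, I would start from the inductive hypothesis $\prod_{i=1}^{n-1} x_i \leq \tfrac{1}{n-1}\sum_{i=1}^{n-1} x_i^{n-1}$ and multiply both sides by $x_n$; this is a legal SoS operation under the constraint $x_n \geq 0$, costs exactly one unit of degree, and produces
\[
\Set{x_i \geq 0} \proves_n^{x_i} \prod_{i=1}^n x_i \;\leq\; \tfrac{1}{n-1}\sum_{i=1}^{n-1} x_i^{n-1} x_n.
\]
To close the loop I then need the two-variable weighted AM-GM $\Set{a \geq 0, b \geq 0} \proves_n^{a,b} n\, a^{n-1} b \leq (n-1) a^n + b^n$; summing the instance $a = x_i$, $b = x_n$ over $i=1,\dots,n-1$ and dividing by $n-1$ collects the right-hand side into exactly $\tfrac{1}{n}\sum_{i=1}^n x_i^n$, which is the target.

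The key lemma is therefore the two-variable weighted AM-GM, which I plan to certify by an explicit polynomial factorization of $(n-1) a^n + b^n - n\, a^{n-1} b$ as $(a-b)^2$ times a polynomial in $a,b$ with non-negative integer coefficients. This factorization can be read off by rewriting the left-hand side as $(n-1) a^{n-1}(a-b) - b(a^{n-1} - b^{n-1})$ and then factoring $a^{n-1} - b^{n-1} = (a-b)\sum_{k=0}^{n-2} a^{n-2-k} b^k$ in the usual way. Each resulting summand is of the form $(a-b)^2 \cdot a^{n-2-j} b^j$; the first factor is already a perfect square, and the monomial $a^{n-2-j} b^j$ is either itself a square (both exponents even) or a square times at most one stray $a$ and one stray $b$, each handled by invoking $a \geq 0$ or $b \geq 0$. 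Hence every summand is SoS modulo the two constraints and the total degree is exactly $n$.

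The main obstacle is essentially bookkeeping rather than anything deep: verifying the telescoping factorization, and tracking degrees along the induction so that the composed proof lands exactly at $\proves_n^{x_i}$. Because each inductive step multiplies by one constraint $x_n \geq 0$ (adding one to the degree) and the weighted AM-GM substitution is itself a degree-$n$ SoS rewrite, the induction terminates at step $n$ with a proof of degree exactly $n$, as promised by the fact.
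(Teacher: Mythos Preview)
Your inductive argument is correct. The base case, the multiplication by the constraint $x_n \geq 0$, the telescoping factorization
\[
(n-1)a^n + b^n - n\,a^{n-1}b \;=\; (a-b)^2 \sum_{j=0}^{n-2} (n-1-j)\,a^{n-2-j} b^j,
\]
and the parity bookkeeping on the monomials $a^{n-2-j}b^j$ all check out, and the degree lands exactly at $n$ as you claim. The observation that the ``$n$ even'' hypothesis is incidental is also right: for $n$ even the two exponents $n-2-j$ and $j$ share the same parity, so either zero or two constraints are consumed; for $n$ odd exactly one is consumed. Either way each summand sits in the degree-$n$ SoS cone modulo $\{a\geq 0, b\geq 0\}$.

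As for comparison with the paper: the paper does not give a proof of this fact at all. It is stated as a black-box Fact with a citation to \cite{boaz_dictionary} (Lemma~A.1 in Barak's lecture notes) and used downstream, e.g.\ in the proof of \cref{fact:sos_square_root}. So there is nothing in the paper to compare your argument against; your write-up would in fact supply a self-contained proof that the paper currently outsources.
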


\begin{fact}[SoS Cauchy-Schwarz]
    \label{fact:sos_cs}
    Let $x,y$ be vector-valued indeterminates (of size $n$).
    Then,
    \[
        \proves_4^{x,y} \iprod{x,y}^2 \leq \norm{x}^2 \norm{y}^2 \,.
    \]
\end{fact}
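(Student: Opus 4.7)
The plan is to exhibit an explicit sum-of-squares identity (Lagrange's identity) that expresses $\norm{x}^2\norm{y}^2 - \iprod{x,y}^2$ as a sum of squares of degree-$2$ polynomials in the indeterminates $x_1,\dots,x_n,y_1,\dots,y_n$. Concretely, I would verify the identity
\[
\norm{x}^2\norm{y}^2 - \iprod{x,y}^2 \;=\; \frac{1}{2}\sum_{i,j=1}^{n} (x_i y_j - x_j y_i)^2 \, .
\]
Since each summand on the right-hand side is the square of a polynomial of degree $2$ in the variables $(x,y)$, the entire right-hand side is manifestly a sum of squares, and each squared polynomial has degree $2$, so the total degree of the SoS certificate is $4$. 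This immediately yields $\proves_4^{x,y} \norm{x}^2\norm{y}^2 - \iprod{x,y}^2 \geq 0$, which is the desired statement.

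The verification of the identity itself is a short algebraic calculation: expanding the right-hand side gives
\[
\tfrac{1}{2}\sum_{i,j}(x_i^2 y_j^2 - 2 x_i x_j y_i y_j + x_j^2 y_i^2) \;=\; \sum_{i,j} x_i^2 y_j^2 - \sum_{i,j} x_i x_j y_i y_j \;=\; \Big(\sum_i x_i^2\Big)\Big(\sum_j y_j^2\Big) - \Big(\sum_i x_i y_i\Big)^2,
\]
which matches the left-hand side.

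There is no real obstacle here: the fact follows entirely from the Lagrange identity, and the only thing to check is the bookkeeping of the degree (each polynomial $x_iy_j - x_jy_i$ has degree $2$, so its square has degree $4$, and there are no side conditions since the constraint set is empty). Consequently the single displayed identity above constitutes the full proof, and no induction, composition of SoS proofs, or auxiliary lemmas are needed.
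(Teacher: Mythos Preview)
Your proof is correct. The paper takes a slightly different presentational route: it first proves the degree-$2$ inequality $\iprod{x,y} \leq \tfrac12\norm{x}^2 + \tfrac12\norm{y}^2$ coordinatewise (from $X Y \leq \tfrac12 X^2 + \tfrac12 Y^2$, i.e., $\tfrac12(X-Y)^2 \geq 0$), and then applies that inequality to the pair $(x\otimes y,\; y\otimes x)$ to conclude $\iprod{x,y}^2 = \iprod{x\otimes y,\, y\otimes x} \leq \norm{x\otimes y}^2 = \norm{x}^2\norm{y}^2$. If you unwind that argument, the resulting SoS certificate is exactly your Lagrange identity $\tfrac12\sum_{i,j}(x_iy_j - x_jy_i)^2$, so the two proofs produce the same degree-$4$ certificate. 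Your version has the virtue of being completely explicit and self-contained; the paper's version is more modular, since the degree-$2$ ``inner product bounded by average of squared norms'' step is a reusable lemma.
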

\begin{proof}
    First note that since $XY \leq \tfrac 1 2 X^2 + \tfrac 1 2 Y^2$ has a sum-of-squares proof, it holds that
    \[
        \proves_2^{x,y} \iprod{x,y} = \sum_{i=1}^n x_i y_i \leq \tfrac 1 2 \sum_{i=1}^n x_i^2 + \tfrac 1 2 \sum_{i=1}^n y_i^2 = \tfrac 1 2 \norm{x}^2 + \tfrac 1 2 \norm{y}^2 \,.
    \]
    Applying this to $x \otimes y$ and $y \otimes x$ (overloading notation), we obtain
    \[
        \proves_4^{x,y} \iprod{x,y}^2 = \iprod{x \otimes y,y \otimes x} \leq \norm{x \otimes y}^2 = \norm{x}^2 \norm{y}^2 \,.
    \]
\end{proof}

\begin{fact}
    \label{fact:spectral_upper_bound}
    Let $A \in \R^{n \times n}$ be a matrix and $x,y$ be vector-valued indeterminates (of size $n$).
    Then,
    \[\proves_{4} (x^\top A y)^2 \leq \norm{A}^2 \norm{x}^2 \norm{y}^2 \,.\]
\end{fact}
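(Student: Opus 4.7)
The plan is to reduce the claim to the previously established SoS Cauchy-Schwarz inequality (\cref{fact:sos_cs}) together with a straightforward PSD-matrix fact. First I would introduce the auxiliary vector-valued quantity $z = Ay$, viewed as a linear form in the indeterminate $y$, and write
\[
(x^\top A y)^2 = \iprod{x,z}^2.
\]
Applying \cref{fact:sos_cs} to $x$ and $z$ (both of which are degree-$1$ vector-valued polynomials in the indeterminates) yields the degree-$4$ SoS proof
\[
\proves_4^{x,y} \; (x^\top A y)^2 \;\leq\; \norm{x}^2 \norm{Ay}^2.
\]

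The remaining task is to provide a degree-$2$ SoS proof in $y$ of the inequality $\norm{Ay}^2 \leq \norm{A}^2 \norm{y}^2$. For this, I would use that the matrix
\[
B \;:=\; \norm{A}^2 I_n - A^\top A
\]
is positive semidefinite by the definition of the operator norm. Any PSD real symmetric matrix $B$ admits a Cholesky-type factorization $B = C^\top C$, and therefore
\[
\norm{A}^2 \norm{y}^2 - \norm{Ay}^2 \;=\; y^\top B \, y \;=\; \norm{Cy}^2,
\]
which is manifestly a sum of squares of linear forms in $y$, giving a degree-$2$ SoS proof. Composing the two SoS proofs (which is legal since both have degree at most $4$ and the composition only multiplies nonnegative polynomials) yields the desired degree-$4$ SoS certificate.

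The only step that is slightly non-routine is the passage from ``PSD matrix'' to ``SoS quadratic form''; this is immediate over $\R$ via the Cholesky/eigendecomposition of $B$, so I do not expect any real obstacle. The entire argument is essentially two lines, and its main purpose here is just to record the spectral bound in an SoS-friendly form so that it can be invoked in the applications (e.g., in the proof of \cref{lem:bound-on-equals-term}).
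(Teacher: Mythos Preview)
Your proposal is correct and matches the paper's own proof essentially line-for-line: apply SoS Cauchy--Schwarz (\cref{fact:sos_cs}) to obtain $(x^\top A y)^2 \leq \norm{x}^2 \norm{Ay}^2$, then use that $\norm{A}^2 I_n - A^\top A \succeq 0$ to deduce the degree-$2$ SoS inequality $\norm{Ay}^2 \leq \norm{A}^2 \norm{y}^2$. The paper states this in a single displayed chain; your version just spells out the Cholesky step explicitly.
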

\begin{proof}
    Note that the matrix $A^TA$ is PSD-dominated by $\norm{A}^2 \cdot I_n$.
    Thus, by SoS Cauchy-Schwarz (cf.~\cref{fact:sos_cs}) it follows that
    \[
        \proves_4^{x,y} (x^\top A y)^2 \leq \norm{Ay}^2 \norm{x}^2  =  y^T(A^T A )y \norm{x}^2  \leq \norm{A}^2 \norm{x}^2 \norm{y}^2 \,.
    \]
\end{proof}

\begin{fact}
    \label{fact:sos_square_root}
    Let $X$ be an indeterminate and $C \geq 0$ be an absolute constant.
    Then, for every $p \geq 2$ 
    \[\Set{X \geq 0, X^p \leq C^p} \proves_p^X X \leq C \,.\]
    Further, for even $p$, we can drop the constraint $\Set{X \geq 0}$.
\end{fact}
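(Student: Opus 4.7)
The plan is to reduce both parts of the Fact to a single polynomial identity coming from the binomial (Taylor) expansion of $X^p$ around $C$. Set $R_p(X,C) := \sum_{k=2}^{p} \binom{p}{k}\, C^{p-k} (X - C)^{k-2}$, a polynomial of degree $p-2$ in $X$. Writing $X^p = (C + (X-C))^p$ and collecting terms yields the polynomial identity
\begin{align*}
p\, C^{p-1}\,(C - X) \;=\; (C^p - X^p) \;+\; (X - C)^2 \, R_p(X, C), \tag{$\star$}
\end{align*}
which holds with no axioms at all. Treating $C > 0$ as the main case (the boundary $C = 0$ forces $X = 0$ on the feasible set and does not arise in the applications of this Fact), dividing $(\star)$ by the positive constant $p\, C^{p-1}$ reduces the problem to decomposing $R_p$ as a sum of squares, possibly using only the axiom $X \geq 0$: the resulting summands are the axiom $C^p - X^p \geq 0$ scaled by a positive constant, a genuine SoS times $(X - C)^2$, and (where applicable) $X \geq 0$ times another SoS times $(X-C)^2$, with total degree at most $p$.

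To certify $R_p \geq 0$ I study the companion polynomial $q(X) := (X-C)^2 R_p(X, C) = X^p - p\, C^{p-1} X + (p-1) C^p$, which by construction satisfies $q(C) = q'(C) = 0$. If $p$ is even, $q'(X) = p(X^{p-1} - C^{p-1})$ vanishes only at $X = C$ (since $p-1$ is odd), so $X = C$ is the unique global minimum of $q$, giving $q \geq 0$ on $\mathbb{R}$ and hence $R_p \geq 0$ on $\mathbb{R}$ (with $R_p(C,C) = \binom{p}{2} C^{p-2} > 0$ confirming non-negativity at the critical point). By the classical Markov--Lukács theorem that every univariate non-negative polynomial is a sum of two squares, $R_p$ admits an SoS decomposition in $X$ of degree $\leq p-2$, and $(\star)$ becomes a degree-$p$ SoS proof of $C - X \geq 0$ that does \emph{not} use $X \geq 0$; this handles the ``even $p$'' strengthening. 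If $p$ is odd, the same critical-point analysis shows $q \geq 0$ (and $R_p \geq 0$) only on $[0, \infty)$, since a second critical point $X = -C$ lies outside this interval; by the univariate Positivstellensatz for the half-line, I can write $R_p(X, C) = \sigma_0(X) + X \sigma_1(X)$ for SoS univariate polynomials $\sigma_0, \sigma_1$ of (even) degree at most $p - 3$, and substituting into $(\star)$ yields
\begin{align*}
p\, C^{p-1}\,(C - X) \;=\; (C^p - X^p) \;+\; (X - C)^2 \sigma_0(X) \;+\; X\,(X - C)^2 \sigma_1(X),
\end{align*}
whose right-hand side is non-negative using both axioms at total degree $\leq p$.

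The only point requiring care is the invocation of the two classical univariate Positivstellensatz results (non-negativity on $\mathbb{R}$ equals SoS; non-negativity on $[0, \infty)$ equals $\sigma_0 + X \sigma_1$ with SoS $\sigma_i$); I do not intend to write out explicit formulas for $\sigma_0, \sigma_1$ for general $p$, since existence at the stated degree is all that is needed. Small-$p$ computations such as $R_2 = 1$, $R_3 = X + 2C$, $R_4 = (X+C)^2 + 2 C^2$ make the decompositions transparent and provide sanity checks on the degree bound.
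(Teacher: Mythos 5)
Your argument is correct and takes a genuinely different route from the paper. The paper proves the fact via the SoS AM-GM inequality (\cref{fact:sos_am_gm}): it writes $X$ as a product of $p$ factors $\tfrac{X}{\gamma},\, \gamma^{1/(p-1)},\ldots,\gamma^{1/(p-1)}$ with $\gamma = C^{(p-1)/p}$, applies AM-GM, and then feeds in the axiom $X^p \leq C^p$; the ``even $p$'' strengthening is handled by reducing to the $p=2$ case after first deriving $X^2 \leq C^2$. You instead start from the Taylor/binomial identity $(\star)$, which packages the entire argument into the single question of certifying $R_p \geq 0$ (on $\R$ for even $p$, on $[0,\infty)$ for odd $p$), settle that by a clean critical-point analysis of $q(X) = X^p - pC^{p-1}X + (p-1)C^p$, and then invoke Markov--Lukács / the $[0,\infty)$ univariate Positivstellensatz for the existence of the certificates. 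Both yield degree-$p$ SoS proofs. The paper's approach is more self-contained and fully explicit (the AM-GM lemma appears earlier in the paper with a constructive proof), whereas yours rests on classical existence theorems; in compensation, your decomposition $(\star)$ makes the structure more transparent --- the proof is literally ``expand $X^p$ around $C$ and certify the Taylor remainder'' --- and handles both parities from a single identity, with the degree bookkeeping falling out automatically. One minor remark: both proofs implicitly require $C>0$ (the paper divides by $\gamma$, you divide by $pC^{p-1}$); you at least flag this explicitly, and the $C=0$ case is degenerate and irrelevant to the paper's applications, so this is not a real gap. Your small-$p$ sanity checks $R_2 = 1$, $R_3 = X+2C$, $R_4 = (X+C)^2 + 2C^2$ all check out.
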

\begin{proof}
    Let $ \gamma = C^{(p-1)/p}$.
    Then by~\cref{fact:sos_am_gm} it holds that (where the last factors are repeated $p-1$ times)
    \begin{align*}
        \Set{X \geq 0\,,X^p \leq C^p} \proves_2^X X &= \Paren{\tfrac 1 \gamma X} \cdot \gamma^{1/(p-1)} \cdot \ldots \cdot\gamma^{1/(p-1)} \leq \tfrac 1 p\Paren{\frac {X^p} {\gamma^p} + (p-1) \cdot \gamma^{p/(p-1)}} \\
        &\leq \tfrac 1 p \Paren{ \tfrac {C^p} {\gamma^p} + (p-1) \cdot \gamma^{p/(p-1) }} = C \,.
    \end{align*}
    To see the claim for even $p$, we start with $p = 2$.
    Note that instead of using~\cref{fact:sos_am_gm}, we can use that
    \[
        \emptyset \sststile{2}{X}\Paren{\tfrac 1 \gamma X} \cdot \gamma \leq \tfrac 1 2 \Paren{\frac {X^2} {\gamma^2} + \gamma^2} \,,
    \]
    and then continue as before.
    For general even $p$, we can first derive that $X^2 \leq C^2$ using that $X^2 \geq 0$ and then apply the proof of $p = 2$ above.
\end{proof}
\begin{fact}
    \label{fact:sos_triangle}
    Let $x_1, \ldots, x_n$ be indeterminates.
    Then,
    \[
        \proves_2^{x_1,\ldots,x_n} \Paren{\sum_{i=1}^n x_i}^2 \leq n \sum_{i=1}^n x_i^2 \,.
    \]
\end{fact}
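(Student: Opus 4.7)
The plan is to exhibit an explicit sum-of-squares decomposition of the polynomial $n \sum_{i=1}^n x_i^2 - \Paren{\sum_{i=1}^n x_i}^2$, which is a degree-$2$ certificate witnessing the desired inequality (no hypotheses are needed, so the proof uses only trivial ``$0 \geq 0$'' premises).

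The key identity I would write down is
\[
  n \sum_{i=1}^n x_i^2 - \Paren{\sum_{i=1}^n x_i}^2 \;=\; \sum_{1 \leq i < j \leq n} (x_i - x_j)^2.
\]
To verify this, expand the left-hand side as $(n-1)\sum_i x_i^2 - 2\sum_{i<j} x_i x_j$ (using $n \sum_i x_i^2 - \sum_i x_i^2 = (n-1)\sum_i x_i^2$ and $(\sum_i x_i)^2 = \sum_i x_i^2 + 2\sum_{i<j} x_i x_j$), and compare with the right-hand side expansion $\sum_{i<j}(x_i^2 - 2 x_i x_j + x_j^2) = (n-1)\sum_i x_i^2 - 2\sum_{i<j} x_i x_j$. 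The two match term-for-term.

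Since the right-hand side is manifestly a sum of squares of polynomials of degree $1$, it has an SoS proof of degree $2$, which is exactly what is claimed by $\proves_2^{x_1,\dots,x_n}$. There is no real obstacle here; the only thing to be careful about is the degree bookkeeping (each $(x_i - x_j)^2$ is a square of a degree-$1$ polynomial, so the total certificate has degree $2$, matching the notation $\proves_2$). The whole proof is a one-line identity followed by the observation that the right-hand side is visibly SoS.
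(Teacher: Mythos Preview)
Your proof is correct and essentially the same as the paper's. The paper phrases it as applying the degree-$2$ SoS inequality $2x_ix_j \leq x_i^2 + x_j^2$ to each cross term in the expansion of $(\sum_i x_i)^2$, but the underlying SoS certificate is exactly your identity $\sum_{i<j}(x_i-x_j)^2$; your version just writes down the decomposition in one shot.
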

\begin{proof}
    Using that $2x_ix_j \leq x_i^2 + x_j^2$ has a degree-2 SoS proof, we obtain
    \[
        \proves_2^{x_1,\ldots,x_n} \Paren{\sum_{i=1}^n x_i}^2 = \sum_{i=1}^n x_i^2 + \sum_{\substack{i,j \in [n],\\ i \neq j}} x_ix_j \leq n \sum_{i=1}^n x_i^2 \,.
    \]
\end{proof}

The next fact is taken from~\cite[Lemma A.2]{KotSS18}.
\begin{fact}
    \label{fact:sos_triangle_large_power}
    Let $X,Y$ be indeterminates and $p$ be even.
    Then, $\proves_p^{X,Y} (X+Y)^p \leq 2^p(X^p + Y^p)$.
\end{fact}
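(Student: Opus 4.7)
My plan is to factor the proof into two SoS inequalities that compose cleanly: a base inequality $(X+Y)^2 \le 2(X^2+Y^2)$ raised to the $(p/2)$-th power, followed by an application of the SoS AM-GM inequality (\Cref{fact:sos_am_gm}) to bound $(X^2+Y^2)^{p/2}$ in terms of $X^p + Y^p$. The resulting composition in fact yields the tighter bound $2^{p-1}(X^p+Y^p)$, from which the claimed bound $2^p(X^p+Y^p)$ follows immediately.

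First I would establish the base inequality via the one-line identity $2(X^2+Y^2) - (X+Y)^2 = (X-Y)^2$, giving a degree-$2$ SoS proof. Then I would invoke the following elementary lemma: if $A$ and $B$ are SoS polynomials with $B - A$ also SoS, then $B^m - A^m$ is SoS for every $m \geq 1$. This is a consequence of the factorization
\[
B^m - A^m \;=\; (B-A)\sum_{i=0}^{m-1} A^i B^{m-1-i},
\]
where each summand $A^i B^{m-1-i}$ is a product of SoS polynomials (hence SoS, since the product of two sums of squares is again a sum of squares), and multiplying the SoS sum by the SoS factor $(B-A)$ keeps it SoS. Applying this lemma with $A = (X+Y)^2$, $B = 2(X^2+Y^2)$, and $m = p/2$ (all three of which are manifestly SoS) yields a degree-$p$ SoS proof of $(X+Y)^p \le 2^{p/2}(X^2+Y^2)^{p/2}$.

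Next I would expand $(X^2+Y^2)^{p/2} = \sum_{j=0}^{p/2} \binom{p/2}{j} X^{2j} Y^{p-2j}$ and bound each term using \Cref{fact:sos_am_gm}. Substituting $x_1 = \cdots = x_j = X^2$ and $x_{j+1} = \cdots = x_{p/2} = Y^2$ into the fact (with $n = p/2$), the non-negativity hypotheses $x_i \ge 0$ hold unconditionally because $X^2$ and $Y^2$ are themselves squares, so the conditional SoS proof specializes to an unconditional one of
\[
\tfrac{p}{2}\,X^{2j} Y^{p-2j} \;\le\; j\,X^p + (p/2 - j)\, Y^p.
\]
Summing over $j$ with weights $\binom{p/2}{j}$ and using the identity $\sum_{j=0}^{p/2} j\binom{p/2}{j} = (p/4)\,2^{p/2}$ yields the SoS inequality $(X^2+Y^2)^{p/2} \le 2^{p/2 - 1}(X^p + Y^p)$, since the coefficients of $X^p$ and $Y^p$ come out symmetric.

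Composing the two steps gives $(X+Y)^p \le 2^{p/2}\cdot 2^{p/2-1}(X^p+Y^p) = 2^{p-1}(X^p+Y^p) \le 2^p(X^p+Y^p)$, with an overall degree-$p$ SoS proof. I expect the main conceptual step to be the raise-to-a-power lemma, as it is the only point where we convert the degree-$2$ base inequality into a degree-$p$ one without destroying the SoS structure; everything else is either a direct sum-of-squares identity (the base case) or a routine application of SoS AM-GM combined with binomial bookkeeping.
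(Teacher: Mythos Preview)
The paper does not prove this fact at all; it simply cites \cite[Lemma A.2]{KotSS18}. Your self-contained argument is correct and even yields the sharper constant $2^{p-1}$. The raise-to-a-power lemma (if $A$, $B$, and $B-A$ are each SoS then so is $B^m - A^m$, via the telescoping factorization) is exactly the right device, and the AM-GM bookkeeping in your second step checks out; the resulting proof has degree $p$ as claimed.

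One small caveat worth flagging: you invoke \Cref{fact:sos_am_gm} with $n = p/2$, but the paper states that fact only for $n$ even, which fails when $p \equiv 2 \pmod 4$. This is a citation technicality rather than a genuine obstruction: the SoS AM-GM inequality is standard for all $n$, and since you substitute squares $x_i \in \{X^2, Y^2\}$ the non-negativity hypotheses are discharged automatically, so each term-wise bound $X^{2j}Y^{p-2j} \le \tfrac{2j}{p}X^p + \tfrac{p-2j}{p}Y^p$ has an unconditional degree-$p$ SoS proof regardless of the parity of $p/2$. If you want to stay strictly within the paper's stated toolbox, just note that the general-$n$ version of SoS AM-GM is equally classical, or supply a short direct argument for this two-variable instance.
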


Recall that for a matrix $M$, $\norm{M}_t^t = \trace M^t$.
\begin{claim}
    \label{clm:SoS_op_to_schatten_p}
    Let $v$ be a vector-valued indeterminate (in dimension $d$) and $M$ be a matrix-valued indeterminate (in dimension $d \times d$).
    Then for $t \geq 2$ a power of $2$,
        \begin{align*}
        \Set{M^\top = M} \proves_{O(t)}^{v, M} \left(v^\top M v\right)^t \leq \norm{M}_t^t \norm{v}^{2t} \,.
        \end{align*}
    \end{claim}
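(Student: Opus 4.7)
The plan is to reduce $(v^\top M v)^t$ to $v^\top M^t v$ by iterating SoS Cauchy--Schwarz (stage 1), and then bound $v^\top M^t v$ by $\trace(M^t)\|v\|^2 = \|M\|_t^t\|v\|^2$ via one final Cauchy--Schwarz (stage 2). The hypothesis that $t$ is a power of $2$ enters essentially in stage 1, since each application of Cauchy--Schwarz squares the exponent, so the reduction completes in exactly $\log_2 t$ steps.

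For stage 1, I would set $a_j := v^\top M^{2^j} v$ for $j = 0,1,\dots,\log_2 t$ and first establish the single-step inequality
\[
a_j^2 = \langle v, M^{2^j} v\rangle^2 \leq \|v\|^2 \cdot \|M^{2^j} v\|^2 = \|v\|^2 \cdot v^\top M^{2^{j+1}} v = \|v\|^2\, a_{j+1}\,,
\]
using SoS Cauchy--Schwarz (\cref{fact:sos_cs}) together with the constraint $M^\top = M$ (which turns $\|M^{2^j} v\|^2$ into $v^\top M^{2^{j+1}} v$). Iterating this by induction on $j$ would give $a_0^{2^j} \leq \|v\|^{2(2^j-1)} a_j$; specializing to $j = \log_2 t$ yields $(v^\top Mv)^t \leq \|v\|^{2t-2} \cdot v^\top M^t v$. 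The squaring step in the induction uses the general principle that if $A \leq B$ is an SoS-certified inequality between two polynomials $A,B$ that are \emph{themselves} sums of squares, then $A^2 \leq B^2$ follows from $B^2 - A^2 = (B-A)(B+A)$ being a product of two SoS polynomials (hence itself SoS). In our setting both $a_0^{2^j} = ((v^\top M v)^{2^{j-1}})^2$ and $\|v\|^{2(2^j - 1)} a_j = \|v\|^{2(2^j - 1)}\|M^{2^{j-1}} v\|^2$ are manifestly SoS.

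For stage 2, I would set $N := M^{t/2}$; from $M^\top = M$ it follows that $N = N^\top$, so $v^\top M^t v = v^\top N^2 v = \|Nv\|^2 = \sum_i(\sum_j N_{ij} v_j)^2$. Applying SoS Cauchy--Schwarz row-by-row gives $(\sum_j N_{ij} v_j)^2 \leq (\sum_j N_{ij}^2)\,\|v\|^2$, and summing over $i$ yields
\[
v^\top M^t v \leq \|N\|_F^2\,\|v\|^2 = \trace(N^2)\,\|v\|^2 = \trace(M^t)\,\|v\|^2 = \|M\|_t^t\,\|v\|^2\,.
\]
Chaining stage 1 and stage 2 gives $(v^\top Mv)^t \leq \|v\|^{2t-2}\cdot \|M\|_t^t\|v\|^2 = \|M\|_t^t\|v\|^{2t}$, with every SoS derivation of degree $O(t)$ in the combined variables $(v,M)$. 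The main delicacy is the squaring step in stage 1: it is what forces $t$ to be a power of $2$ (so that the intermediate quantities $a_0^{2^j}$ remain perfect squares and hence SoS), and it relies on the easy but crucial observation that the product of two SoS polynomials is SoS.
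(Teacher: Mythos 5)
Your proposal is correct and is essentially the same argument as the paper's: both iterate SoS Cauchy--Schwarz $\log_2 t$ times (which is where the power-of-two hypothesis enters) and then invoke a row-wise Cauchy--Schwarz to pass to $\trace(M^t)\,\|v\|^2$. The paper organizes the iteration as an induction that substitutes $M \mapsto M^\top M$ into the claim for exponent $t/2$, whereas you track the chain $a_j = v^\top M^{2^j} v$ directly and defer the trace bound to the end; your version has the small merit of spelling out the squaring lemma (that an SoS-certified $A \leq B$ with $A, B$ both SoS modulo the constraints yields an SoS-certified $A^2 \leq B^2$ via $B^2 - A^2 = (B-A)(B+A)$), which the paper's inductive step uses implicitly when raising $(v^\top Mv)^2 \leq \|v\|^2\|Mv\|^2$ to the power $2^\ell$.
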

    \begin{proof}
        We start with the case $t = 2$ and then use an induction.
        By SoS Cauchy-Schwarz (\cref{fact:sos_cs}) it follows that
        \[
            \proves_{O(1)}^{v,M} (v^\top M v)^2 \leq \norm{v}^2 \norm{M v}^2 \,.
        \]
        Denote by $M_i$ the rows of $M$.
        Then,
        \[
            \proves_{O(1)}^{v,M} \norm{M v}^2 = \sum_{i=1}^n \iprod{M_i, v}^2 \leq \sum_{i=1}^n \norm{M_i}^2 \norm{v}^2 = \Paren{\trace M^2} \norm{v}^2 \,.
        \]
        Thus, $(v^\top M v)^2 \leq (\trace M^2) \norm{v}^4$.

        Suppose we have shown the statement for some $t=2^\ell$.
        Then it follows that
        \begin{align*}
            \proves_{O(t)}^{v,M} \Paren{v^\top M v}^{2^{\ell+1}} = \Paren{\Paren{v^\top M v}^2 }^{2^\ell} \leq \Paren{ \norm{v}^2 \norm{Mv}^2 }^{2^\ell} = \norm{v}^{2^{\ell +1}} \Paren{v^\top \Paren{M^\top M} v}^{2^\ell} \,.
        \end{align*}
        By induction, this is at most (where we also use the constraint $M^\top = M$)
        \[
            \norm{v}^{2^{\ell +1}} \cdot \norm{v}^{2^{\ell +1}} \norm{M^\top M}_{2^\ell}^{2^\ell} = \norm{M}_{2^{\ell+1}}^{2^{\ell+1}} \norm{v}^{2\cdot 2^{\ell +1}}\,.
        \]
    \end{proof}

    \begin{fact}
        \label{fact:modified_pE_CS}
        Let $\pE$ be a degree-3 pseudo-distribution over variables $a_1, b_1, \ldots, a_n, b_n$ and $c$, that satisfies $\Set{c \geq 0}$.
        Then, it holds that
        \[
            \pE \sum_{i=1}^n a_i b_i c \leq \sqrt{\pE \sum_{i=1}^n a_i^2 c} \sqrt{\pE \sum_{i=1}^n b_i^2 c} \,.
        \]
    \end{fact}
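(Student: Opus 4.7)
\textbf{Proof plan for \Cref{fact:modified_pE_CS}.}
My plan is to reduce the statement to the standard discriminant argument underlying Cauchy--Schwarz, with $c$ playing the role of a ``weight'' that is nonnegative in the pseudo-expectation sense. Concretely, for a real parameter $\lambda \in \R$, consider the polynomial
\[
    q_\lambda(a,b,c) \;=\; c \cdot \sum_{i=1}^n (\lambda a_i - b_i)^2 \;=\; c \cdot \Paren{\lambda^2 \textstyle\sum_{i} a_i^2 - 2\lambda \sum_i a_i b_i + \sum_i b_i^2}.
\]
The factor $\sum_i (\lambda a_i - b_i)^2$ is manifestly a sum of squares (of linear polynomials in the $a_i, b_i$) of degree $2$, and $c$ enters with degree $1$, so $\deg(q_\lambda) = 3$, matching the degree of $\pE$.

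The key observation is that, since $\pE$ satisfies the constraint system $\{c \geq 0\}$, the definition of satisfaction (applied to the SoS polynomial $\sum_i(\lambda a_i - b_i)^2$ of degree $2$ multiplied by the constraint $c$ of degree $1$) gives $\pE\, q_\lambda \geq 0$ for every $\lambda \in \R$. Denoting $A = \pE \sum_i a_i^2 c$, $B = \pE \sum_i a_i b_i c$, and $C = \pE \sum_i b_i^2 c$, linearity of $\pE$ thus yields the pointwise inequality
\[
    A\lambda^2 - 2B\lambda + C \;\geq\; 0 \qquad \text{for every } \lambda \in \R.
\]
Note also that $A, C \geq 0$ by plugging in $\lambda = 0$ and $\lambda \to \infty$ (or equivalently by applying the same reasoning with $b_i = 0$ or $a_i = 0$).

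It remains to extract $B^2 \leq AC$ from this family of quadratic inequalities. If $A > 0$, optimizing over $\lambda$ (taking $\lambda = B/A$) produces $C - B^2/A \geq 0$, i.e., $B^2 \leq AC$, which is the desired bound after taking square roots. If $A = 0$, the inequality degenerates to $-2B\lambda + C \geq 0$ for all $\lambda \in \R$, forcing $B = 0$, in which case the conclusion holds trivially. I do not anticipate any real obstacle here: the only thing to check carefully is the degree bookkeeping (the polynomial $q_\lambda$ has total degree $3$, which is exactly what the degree-$3$ pseudo-expectation can handle when multiplied by the degree-$1$ constraint $c \geq 0$), and the fact that the discriminant trick works uniformly in the real parameter $\lambda$ even though $\pE$ is only a pseudo-expectation, since linearity and the nonnegativity above are both exact.
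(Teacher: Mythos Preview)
Your proof is correct and is essentially the same argument as the paper's: both rest on the SoS identity $c\cdot(\lambda a_i - b_i)^2 \geq 0$ under the constraint $c\geq 0$ and then optimize over the real parameter. The paper writes it in the AM--GM form $a_ib_ic \leq \tfrac{1}{2\gamma}a_i^2c + \tfrac{\gamma}{2}b_i^2c$ and plugs in the optimal $\gamma = \sqrt{A/C}$ directly, whereas you use the discriminant formulation and handle the degenerate case $A=0$ explicitly---a minor cosmetic difference.
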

    \begin{proof}
        For any $\gamma > 0$ it holds for any $i \in [n]$ that
        \[
            \Set{c \geq 0} \sststile{3}{a_i,b_i, c} a_i b_i c \leq \tfrac 1 {2 \gamma} a_i^2 c + \tfrac {\gamma} 2 b_i^2 c \,.
        \]
        And thus also that $\Set{c \geq 0} \sststile{3}{a_i,b_i,c} \sum_{i=1}^n a_i b_i c \leq \tfrac 1 {2 \gamma} \sum_{i=1}^n a_i^2 c + \tfrac {\gamma} {2} \sum_{i=1}^n b_i^2 c$.
        Choosing $\gamma^2 = \tfrac {\pE \sum_{i=1}^n a_i^2 c}{\pE \sum_{i=1}^n b_i^2 c}$, the claim follows after applying $\pE$.
    \end{proof}

\end{document}